\setlist[description]{style=nextline, labelindent=0pt}
\let\originalS\S 
\renewcommand{\S}{\originalS}
\newtcolorbox{mymathbox}[1][]{colback=white, colbacktitle = white, coltitle = black, #1}
 \renewenvironment{abstract}{%
   \small%
   \begin{center}%
     {\bfseries %
     \abstractname\vspace{-.5em}\vspace{\z@}}%
   \end{center}%
   \quotation}
 {\endquotation}
\theoremstyle{plain}
\newtheorem{theorem}{Theorem}[section]
\newtheorem{lemma}[theorem]{Lemma}
\newtheorem{corollary}[theorem]{Corollary}
\newtheorem{claim}[theorem]{Claim}
\newtheorem{observation}[theorem]{Observation}
\newtheorem*{integrality-assumption}{Lemma~\ref{lem:ipr-alekhnovich-razborov}}
\newtheorem*{reducibility-gives-reduction-operator}{Lemma~\ref{lem:reducibility-gives-reduction-operator}}
\theoremstyle{definition}
\newtheorem{remark}[theorem]{Remark}
\newtheorem{definition}[theorem]{Definition}
\newtheorem{fact}[theorem]{Fact}
\newcommand{\newauthor}[3]{
    \newcounter{#1comment}
    \setcounter{#1comment}{1}
    \expandafter\newcommand\csname #1\endcsname[1]{%
            \todo[size = \footnotesize, backgroundcolor = {#3}, caption = {}]{
                \arabic{#1comment}:
                {##1} --~\textbf{#2}
            }
            \addtocounter{#1comment}{1}
    }
    \expandafter\newcommand\csname #1changed\endcsname[1]{%
        \ifdraft
            \colorbox{#3}{
                ##1
            }%
        \else
            ##1%
        \fi
    }
}
\newcommand{\set}[1]{\{ #1 \}}
\newcommand{\intersection}{\cap}
\newcommand{\union}{\cup}
\newcommand{\eqperiod}{\enspace .}
\newcommand{\eqcomma}{\enspace ,}
\newcommand{\floor}[1]{\lfloor #1 \rfloor}
\newcommand{\N}{\mathbb{N}}
\newcommand{\Z}{\mathbb{Z}}
\newcommand{\F}{\mathbb{F}}
\renewcommand{\epsilon}{\varepsilon}
\newcommand{\gnd}{\mathbb{G}_{n, \graphdeg}}
\newcommand{\hypgraph}{H}
\newcommand{\hypedgeset}{\mathcal{E}}
\newcommand{\hypedgesubset}{\mathcal{F}}
\newcommand{\hypedgebadset}{\mathcal{X}}
\newcommand{\hypedgebadsetalt}{\mathcal{Y}}
\newcommand{\nullcons}{\ell}
\newcommand{\pseudred}{R}
\newcommand{\unsatp}{\mathcal{P}}
\newcommand{\pcdeg}{D}
\newcommand{\nvars}{n}
\newcommand{\polyhomo}[2]{\unsatp_{#1 \to #2}}
\newcommand{\verta}{a}
\newcommand{\homomphi}{\varphi}
\newcommand{\homompsi}{\psi}
\newcommand{\homomu}{\mu}
\newcommand{\verti}{i}
\newcommand{\ala}{A}
\newcommand{\alt}{T}
\newcommand{\kcons}{k}
\newcommand{\ideal}[1]{\langle #1 \rangle}
\newcommand{\redop}{R} 
\newcommand{\ar}{\operatorname{ar}}
\newcommand{\relsymbol}{\Gamma}
\newcommand{\verts}[1]{\operatorname{Verts}_{#1}}
\newcommand{\pcpolyp}{p}
\newcommand{\pcpolyq}{q}
\newcommand{\indeterminate}{x}
\newcommand{\proofstd}{\pi}
\newcommand{\sop}{S}
\newcommand{\uniform}{t}
\newcommand{\sparseparam}{a}
\newcommand{\hoplength}{\tau}
\newcommand{\Desc}[1]{\operatorname{Desc}(#1)}
\newcommand{\clsize}{s}
\newcommand{\ring}{K}
\newcommand{\degstd}{D}
\newcommand{\graphdeg}{d}
\newcommand{\colcons}{c}
\newcommand{\col}[1]{\operatorname{Col}(#1)}
\newcommand{\vars}[1]{\operatorname{Vars}(#1)}
\newcommand{\setsize}[1]{\left\lvert #1 \right\rvert}
\newcommand{\cl}[1]{\operatorname{cl}(#1)}
\newcommand{\clnull}[3]{\operatorname{cl}_{#2}^{#3}(#1)}
\newcommand{\clnullcons}[2]{\clnull{#1}{\nullcons}{#2}}
\newcommand{\polyp}{p}
\newcommand{\polyq}{q}
\newcommand{\polyr}{r}
\newcommand{\vertv}{v}
\newcommand{\nbhdub}[2]{N_{#1}(#2)}
\newcommand{\nbhstd}[2]{N_{#1}(#2)}
\newcommand{\spsize}{s}
\newcommand{\reststd}{\rho}
\newcommand{\restrict}[2]{{{#1}%
\upharpoonright_{#2}}}
\newcommand{\setT}{T}
\newcommand{\setW}{W}
\newcommand{\setU}{U}
\newcommand{\rels}{\mathbf{S}}
\newcommand{\relt}{\mathbf{T}}
\newcommand{\rela}{\mathbf{A}}
\newcommand{\relb}{\mathbf{B}}
\renewcommand{\hom}{\operatorname{Hom}}
\newcommand{\cspnotation}{\mathcal{P}}
\newcommand{\csp}[1]{\operatorname{CSP}(#1)}
\newcommand{\pcsp}[2]{\operatorname{PCSP}(#1, #2)}
\newcommand{\cspdist}[3]{\mathbb{A}_{#1, #2, #3}}
\newcommand{\lex}[2]{\operatorname{Lex}(#1;#2)}
\title{Lower Bounds for CSP Hierarchies Through Ideal Reduction\thanks{This is the publically available version of a paper with the same title to appear in \emph{SODA~'26}.}}
\author{Jonas Conneryd\thanks{Lund University. \texttt{\href{mailto:jonas.conneryd@cs.lth.se}{jonas.conneryd@cs.lth.se}}} \qquad Yassine Ghananne\thanks{University of Copenhagen and Lund University. \texttt{\href{mailto:yagh@di.ku.dk}{yagh@di.ku.dk}}} \qquad Shuo Pang\thanks{University of Bristol. \texttt{\href{mailto:s.pang@bristol.ac.uk}{s.pang@bristol.ac.uk}}}
}
\date{\today}
\begin{document}
\maketitle

\thispagestyle{empty}

\begin{abstract}

\noindent
We present a generic way to obtain level lower bounds for (promise) CSP hierarchies from degree lower bounds for algebraic proof systems. More specifically, we show that pseudo-reduction operators in the sense of Alekhnovich and Razborov [Proc. Steklov Inst. Math. 2003] can be used to fool the cohomological $\kcons$\nobreakdash-consistency algorithm. As applications, we prove optimal
level lower bounds for $\colcons$~vs.~$\ell$-coloring for all $\ell \geq \colcons \geq 3$, and give a simplified proof of the lower bounds for lax and null-constraining CSPs of Chan and Ng [STOC 2025].

\end{abstract}

\section{Introduction}

The class of \emph{constraint satisfaction problems (CSPs)} is a cornerstone of research in logic, algorithms, and complexity theory, encompassing a wide range of natural computational tasks such as Boolean satisfiability, graph $\kcons$-colorability, and solving systems of linear equations. Despite their generality, CSPs possess enough structure to enable a detailed understanding of which variants are tractable, which are not, and what fundamentally distinguishes the two. %

Every constraint satisfaction problem can be formulated as the task of determining, given two relational structures $\rela$ and $\relt$, whether there is a homomorphism $\rela\to \relt$ \cite{Jeavons98Algebraic,FV98Computational}. A widely studied variant is when the structure $\relt$ is kept fixed; $\relt$ is then called the \emph{template} of the CSP. %
A significant amount of research on CSPs has been motivated by the \emph{dichotomy conjecture} of Feder and Vardi~\cite{FV98Computational}. Broadly generalizing classical results of Schaefer~\cite{Schaefer78Dichotomy} and Hell and Ne\v{s}et\v{r}il~\cite{HN90HColoring}, the conjecture states that every fixed-template CSP is either in $\P$ or $\NP$\nobreakdash-complete. 

The dichotomy conjecture was resolved independently by Bulatov~\cite{Bul17Dichotomy} and Zhuk~\cite{Zhuk20Dichotomy}, as the culmination of two decades of efforts into the so-called \emph{algebraic approach} to CSPs; for an overview of this area, see \cite{BKW17Polymorphisms}. %
Informally, the results of Bulatov and Zhuk show that a CSP is tractable if and only if its template admits a certain type of non-trivial symmetry, or \emph{polymorphism}. Any CSP lacking such symmetries was previously known to be $\NP$-hard~\cite{MM08Existence}, and Bulatov and Zhuk exploit such a symmetry to construct a polynomial-time algorithm for the CSP. 

Following the resolution of the dichotomy conjecture, a new research direction has emerged around a broad generalization of CSPs, known as \emph{promise CSPs} or \emph{PCSPs}. An overview of this area can be found in~\cite{KO22InvitationPCSP}. Roughly speaking, the task in a PCSP is to decide whether an instance is satisfiable in a strong sense, or not even satisfiable in a weaker sense. More formally, given a pair of relational structures $(\rels, \relt)$ such that $\rels \to \relt$ and an instance structure $\rela$, the \emph{decision version} of the PCSP with template $(\rels, \relt)$ is the task of determining whether $\rela \to \rels$ or $\rela \not\to \relt$, promised one of the two holds. In the \emph{search version}, the task is to find a homomorphism $\rela \to \relt$ promised that $\rela \to \rels$.

The formal study of PCSPs was initiated by Austrin, Guruswami, and H{\aa}stad~\cite{AGH17SATNPHard}, but special cases have been investigated for much longer. For instance, perhaps the most well-known PCSP is \emph{approximate graph coloring}~\cite{GJ76Approximate}, also called \emph{$\colcons$~vs.~$\ell$-coloring}, where $\ell \geq \colcons \geq 3$ and the task is to decide whether an input graph $G$ is $\colcons$-colorable or not even $\ell$-colorable, promised one of the two holds. %
There are large gaps in our understanding of even this very natural PCSP: approximate graph coloring is widely believed to be $\NP$-hard for all constants $\ell \geq \colcons \geq 3$, %
but $\NP$-hardness is known only for $\colcons$~vs.~$(2\colcons-1)$-coloring for $\colcons \geq 3$~\cite{BBKO21AlgebraicCSP}, and for $\colcons$~vs.~\smash{$\big(\binom{\colcons}{\floor{\colcons/2}}-1\big)$}-coloring when $\colcons \geq 4$~\cite{KOWZ23Topology}. 
By contrast, the best known polynomial-time algorithm for coloring $3$-colorable graphs guarantees only an $\smash{O(n^{0.19747})}$-coloring for an $n$-vertex graph~\cite{KTY24BetterColoring}. %

In general, the existing algorithms for solving (P)CSPs, including those of Bulatov and Zhuk, can be said to consist of two main building blocks: \emph{local consistency}, and solving systems of \emph{linear equations}. Local consistency algorithms check whether small subproblems of the original CSP are satisfiable, and then enforce that the local solutions are consistent with each other. These checks take various forms: for instance, the well-known \emph{$\kcons$\nobreakdash-consistency} algorithm %
accepts an instance if and only if a consistent collection of partial homomorphisms with domain size at most $\kcons$ exists. The more stringent \emph{Sherali\nobreakdash--Adams hierarchy} \cite{SA90Hierarchy} 
of linear programming relaxations asks for a probability distribution on partial solutions that is consistent with the marginals of a distribution over satisfying assignments. The \emph{sum-of-squares} or \emph{Lasserre} hierarchy~\cite{Parrilo00Thesis, Lasserre01Explicit} of semidefinite programming relaxations further demands that the probabilities assigned to the local solutions satisfy certain orthogonality conditions when viewed as vectors. 

Algorithms that use linear equations to solve (P)CSPs are in turn based on the observation that any fixed-template CSP can be formulated as a system of linear equations over Boolean variables. %
These algorithms then relax the variable domain to one where, e.g., Gaussian elimination can be used to efficiently search for a \emph{global} solution over the relaxed domain. A prominent example is the family of \emph{affine relaxations}, which solve (variations of) the linear system formulation of CSPs over the integers. %
This can be done efficiently using, e.g., the classical Smith normal form, see \cite{Lazebnik96Systems}. Examples include the \emph{affine integer programming (AIP)}~\cite{BG19AlgorithmicBlend,BBKO21AlgebraicCSP} and \emph{linear Diophantine equation}~\cite{BG17Group} relaxations. 

It is well known that neither local consistency checking nor the linear equation paradigm alone solves all tractable CSPs, and both Bulatov's and Zhuk's %
algorithms use elaborate combinations of the two. In recent years, quite some work has gone into designing and analyzing simpler CSP algorithms than those of Bulatov and Zhuk. Besides providing a more fine-grained understanding of CSP families, the ultimate, if distant, goal of this line of work is to find a \emph{uniform} CSP algorithm---that is, one where the template $\relt$ is part of the input---which runs in polynomial time whenever $\csp{\relt}$ is tractable. %
In addition to its intrinsic importance, such an algorithm might also help establish the border of tractability for more general problems, such as PCSPs. The most well-studied algorithms in this line of work are so-called \emph{hierarchies} of relaxations, each defined with respect to a parameter~$\kcons$ called its \emph{level}. Larger values of $\kcons$ yield increasingly tighter relaxations, while the $\kcons$-th level of a hierarchy runs in time $m^{O(\kcons)}$ on an instance of size~$m$ (with some caveats in the case of semidefinite programming relaxations, see~\cite{ODonnell17SOSNotAutomatizable}). %

\subsection{CSP Hierarchies}
\label{sec:csp-hierarchies-intro}

Several authors have introduced and analyzed (P)CSP hierarchies that combine convex and affine relaxations. Brakensiek, Guruswami, Wrochna, and \v{Z}ivn\'{y}~\cite{BGWZ20PowerCombined}, generalizing \cite{BG19AlgorithmicBlend}, combined the first level of the SA hierarchy with AIP and showed that the resulting algorithm, which they call BLP+AIP, solves all \emph{Boolean} tractable CSPs. %
A generalization of BLP+AIP to higher levels of the SA hierarchy, called the \emph{$\mathit{BA}^\kcons$ hierarchy}, was suggested in~\cite{BGWZ20PowerCombined} and analyzed by Ciardo and \v{Z}ivn\'{y}~\cite{CZ25CrystalHollow}; see also~\cite{BG17Group}. 
The \emph{CLAP} and \emph{C(BLP+AIP)} algorithms due to Ciardo and \v{Z}ivn\'{y}~\cite{CZ23CLAP} refine BLP+AIP by iteratively pruning variable domains and solving the pruned relaxation.  %
Both CLAP and C(BLP+AIP) can be strengthened to hierarchies of relaxations, as noted in~\cite{CZ23CLAP} and~\cite{CN25Hierarchies}, respectively. %

Ciardo and \v{Z}ivn\'{y}~\cite{CZ25Semidefinite} and Chan, Ng, and Peng~\cite{CNP24Hierarchies} independently defined and analyzed a combination of the sum-of-squares and AIP hierarchies. We refer to this algorithm as the \emph{$\mathit{SDA}^\kcons$ hierarchy} in this paper, following~\cite{CZ25Semidefinite}. A related but distinct algorithm, which we do not discuss further here, was recently introduced by Bhangale, Khot, and Minzer~\cite{BKM25ApproximabilityV} in the context of hardness of approximation.

Another line of work blends the $\kcons$\nobreakdash-consistency algorithm with affine relaxations. Dalmau and Opr\v{s}al~\cite{DO19LocalConsistency} defined the \emph{$\Z$-affine $\kcons$\nobreakdash-consistency} algorithm, which first finds a consistent set $\kappa$ of partial homomorphisms through the $\kcons$\nobreakdash-consistency algorithm and then searches for a solution supported on $\kappa$ to an affine relaxation. %
\'{O}~Conghaile~\cite{OConghaile22Cohomology} introduced \emph{cohomological $\kcons$\nobreakdash-consistency}, which is a strengthened version of $\Z$-affine $\kcons$\nobreakdash-consistency with additional requirements on the solutions to the affine relaxation; we describe this algorithm in more detail in \cref{sec:discussion-of-pf-techniques}.%

Recently, Lichter and Pago~\cite{LP24Limitations} exhibited a tractable CSP that the $\mathrm{BA}^\kcons$ hierarchy, $\Z$\nobreakdash-affine $\kcons$\nobreakdash-consistency algorithm, and (even parameterized versions of) CLAP all fail to solve. Hence, $\mathrm{SDA}$, C(BLP+AIP), and the cohomological $\kcons$\nobreakdash-consistency algorithm are the remaining candidates for a uniform CSP algorithm among those mentioned in this section. %

\subsection{Lower Bounds for CSP Hierarchies}

An integral part of the design and analysis of (P)CSP algorithms is to understand their limitations; this is especially relevant for PCSPs, where the border of tractability is not known. For hierarchies, such results take the form of lower bounds on the level required to solve a given (P)CSP. Besides disqualifying candidates for uniform CSP algorithms through showing they fail on tractable CSPs, level lower bounds may help identify recurring features of hard instances for different families of hierarchies, which is instructive both in algorithm design and for proving stronger hardness results.

Level lower bounds for $\colcons$~vs.~$\ell$-coloring are known for all hierarchies mentioned in \cref{sec:csp-hierarchies-intro}, for varying parameter regimes. The $\kcons$\nobreakdash-consistency algorithm does not solve $\colcons$~vs.~$\ell$\nobreakdash-coloring on $n$-vertex graphs for any $\ell \geq \colcons \geq 3$ and $\kcons \leq \Omega(n)$~\cite{AD22Width,CZ24Periodic, CN25Hierarchies}; this lower bound is asymptotically optimal. In a sequence of works, Ciardo and \v{Z}ivn\'{y}~\cite{CZ25CrystalHollow, CZ25Semidefinite} showed that no constant level of the AIP, BA, or SDA hierarchy solves $\colcons$~vs.~$\ell$-coloring for any $\ell \geq \colcons \geq 3$. The results of \cite{CZ25Semidefinite} apply also to the more general \emph{approximate graph homomorphism} problem. Chan, Ng, and Peng~\cite[Appendix~C]{CNP24Hierarchies} showed that the $\kcons$-th level of the SDA hierarchy does not solve $\colcons$~vs.$\smash{\binom{\colcons}{\floor{\colcons/2}}}$-coloring\footnote{Their result is stated for $\colcons$~vs.~\smash{$\binom{\colcons}{\floor{\colcons/2}}$}-coloring but seemingly uses \cite{KOWZ23Topology}, which only applies for ${\ell = \binom{\colcons}{\floor{\colcons/2}} - 1}$.} for $\colcons \geq 4$ and $\kcons \leq \Omega(n)$, and Chan and Ng~\cite{CN25Hierarchies} proved the same result for the cohomological $\kcons$\nobreakdash-consistency and C(BLP+AIP) hierarchies. The results of \cite{CNP24Hierarchies,CN25Hierarchies} are based on hardness reductions, and therefore seem unlikely to be improved beyond the strongest parameters for which $\NP$-hardness is known, established in~\cite{KOWZ23Topology}. %

In addition to their lower bounds for a tractable CSP mentioned above, Lichter and Pago~\cite{LP24Limitations} show that cohomological $\kcons$\nobreakdash-consistency correctly solves the CSP on which the other algorithms fail, for $\kcons \geq 4$. They also prove superconstant lower bounds for the cohomological $\kcons$\nobreakdash-consistency algorithm for solving a certain $\NP$\nobreakdash-complete CSP. Chan, Ng, and Peng~\cite{CNP24Hierarchies} and Chan and Ng~\cite{CN25Hierarchies} together establish optimal level lower bounds for all hierarchies in \cref{sec:csp-hierarchies-intro}. Their lower bounds are for random instances of so-called \emph{$\tau$-wise neutral} CSPs for SDA and weaker hierarchies, \emph{null-constraining CSPs} for the $\kcons$\nobreakdash-consistency algorithm, \emph{lax and pairwise uniform} CSPs for the C(BLP+AIP) hierarchy, and for \emph{lax and null-constraining} CSPs for cohomological $\kcons$\nobreakdash-consistency. %
To the best of our knowledge, no known non-trivial tractable CSP has the properties required for the lower bounds in \cite{CNP24Hierarchies,CN25Hierarchies}, except for those for the $\kcons$\nobreakdash-consistency algorithm.  %

\subsection{CSP Hierarchies and Proof Complexity}
\label{sec:csp-hierarchies-proof-complexity}

All hierarchies for $\csp{\relt}$ mentioned in \cref{sec:csp-hierarchies-intro} always accept instances $\rela$ such that $\rela \to \relt$. Therefore, to prove a lower bound on the level required to solve $\csp{\relt}$, we should exhibit a \emph{fooling instance}: a structure $\rela$ such that $\rela \not \to \relt$, but which is nevertheless accepted by the $\kcons$-th hierarchy level. For a PCSP with template $(\rels, \relt)$, a fooling instance is a structure $\rela$ such that $\rela\not\to\relt$, but such that the $\kcons$-th level of the hierarchy for $\csp{\rels}$ accepts $\rela$. 

This setting closely resembles that in \emph{proof complexity}, where the goal is to prove lower bounds on the complexity of certificates of unsatisfiability of a propositional formula; in this case, the formula would encode that $\rela\to\relt$. The proof complexity of CSPs has a long and extensive history, arguably going back to the work of Tseitin~\cite{Tseitin68Complexity}. However, it seems fair to say that the area started in earnest with the seminal paper of Chv\'{a}tal and Szemer\'{e}di~\cite{CS88ManyHard}, who showed that, with high probability, random $\kcons$-SAT requires exponential size to refute in the \emph{resolution} proof system. Their results were formative in showing that \emph{random} CSP instances are often hard, and %
by now, lower bounds for multiple families of random CSPs are known for resolution~\cite{Mitchell02Resolution, Mitchell02Random, BCCM05RandomGraph,BIS07IndependentSets, MS07RandomConstraint,CM13Dichotomy}, as well as for other proof systems. As two examples relevant to this work, we mention \emph{polynomial calculus}~\cite{BI10Random,AR03LowerBounds, MN24Generalized,CdRNPR25GraphColouring} and \emph{sum-of-squares}~\cite{Grigoriev01LinearLowerBound,Schoenebeck08LinearLevel,  BCM15SoSPairwise, BHKKMP19NearlyTight, KMOW2017Sumofsquares, JKM19Lovasz, KM21StressFree, JPRTX21Sparse, Pang21Exact, KPX24UltraSparse, PX25SoSColoring}.%

Multiple results and techniques from proof complexity have informed the study of lower bounds for CSP algorithms and vice versa; some examples include \cite{BG15LimitationsAlgebraic, BG17Group,LN17GraphColouring, AO19ProofCplx, CNP24Hierarchies, CN25Hierarchies}. %
The main result of this work provides yet another example of a fruitful connection between the two areas, by showing that a technique used for proving polynomial calculus degree lower bounds can be used to establish level lower bounds for the cohomological $\kcons$\nobreakdash-consistency algorithm (and hence also for weaker hierarchies).

Our results are perhaps most closely related to \cite{BG15LimitationsAlgebraic, BG17Group, AO19ProofCplx}, which relate polynomial calculus %
 degree lower bounds to hardness results for solving CSPs in various established algorithmic paradigms. However, in contrast to the algorithms in these works, lower bounds for cohomological $\kcons$-consistency do not follow in a black-box fashion from polynomial calculus degree lower bounds. For instance, the tractable CSP in the work of Lichter and Pago~\cite{LP24Limitations} is solvable by constantly many levels of cohomological $\kcons$-consistency, but requires linear polynomial calculus degree over every field.\footnote{This CSP is a certain encoding of a disjunction of two so-called \emph{Tseitin contradictions} on expander graphs over two different fields~\cite[Sections~3\nobreakdash-5]{LP24Limitations}. Tseitin contradictions over $\F_q$ on expander graphs require linear polynomial calculus degree over fields of characteristic different from $q$~\cite{BGIP01LinearGaps} and can be obtained from such CSP instances through variable restrictions. Hence, this CSP requires linear degree over every field.}

\subsection{Our Contributions}
\label{sec:contributions}

We show that the \emph{Alekhnovich--Razborov method}~\cite{AR03LowerBounds} for polynomial calculus degree lower bounds can be strengthened to prove level lower bounds for certain CSP hierarchies. 
Since this method has been successfully applied to many problems in proof complexity, e.g., \cite{AR03LowerBounds, GL10Automatizability, GL10Optimality, MN24Generalized,CdRNPR25GraphColouring}, we are hopeful that it will find use also in this new context.

As a first application, we prove an asymptotically optimal lower bound on the level required for the cohomological $\kcons$\nobreakdash-consistency algorithm to solve approximate graph coloring, building on the corresponding polynomial calculus degree lower bound in~\cite{CdRNPR25GraphColouring}. 
\begin{theorem}[Informal]
    \label{th:main-thm-informal}
    With high probability as $n\to \infty$, a random $d$-regular $n$-vertex graph has chromatic number at least $\graphdeg/4\log{\graphdeg}$ but is accepted by the cohomological $\kcons$\nobreakdash-consistency algorithm for $3$-colorability for all $\kcons \leq n\cdot \graphdeg^{-O(\graphdeg)}$. Consequently, the cohomological $\kcons$\nobreakdash-consistency algorithm does not solve $3$ vs. $\graphdeg/4\log{\graphdeg}$-coloring on $n$-vertex graphs for any $\kcons \leq  n\cdot \graphdeg^{-O(\graphdeg)}$. 
\end{theorem}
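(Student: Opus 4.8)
The plan is to establish \cref{th:main-thm-informal} by combining three ingredients: a chromatic-number lower bound for random $\graphdeg$-regular graphs, a polynomial calculus degree lower bound for the $3$-coloring formula on such graphs from~\cite{CdRNPR25GraphColouring}, and the main technical transfer result of this paper (which, by the abstract, says that an Alekhnovich--Razborov-style pseudo-reduction operator can be used to fool the cohomological $\kcons$\nobreakdash-consistency algorithm). First I would fix the probabilistic setup: let $G = \gnd$ be a random $\graphdeg$-regular $n$-vertex graph. A standard first-moment / expansion argument (or a citation to the known bound, see e.g.~the references on random-graph colorability) shows that with high probability $\chi(G) \geq \graphdeg/4\log\graphdeg$; in particular $G$ is not $3$-colorable, so the corresponding instance structure $\rela_G$ satisfies $\rela_G \not\to K_3$. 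This handles the ``chromatic number'' half of the statement and the first half of the ``consequently'' claim.

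Second I would invoke the degree lower bound: by~\cite{CdRNPR25GraphColouring}, with high probability the polynomial calculus (equivalently, the polynomial ideal encoding $3$-colorability of $G$) requires degree at least $n \cdot \graphdeg^{-O(\graphdeg)}$ to refute. The key step is to upgrade this to the existence of a pseudo-reduction operator $\redop$ in the sense of Alekhnovich--Razborov: the Alekhnovich--Razborov method does not merely give a degree bound as a black box, it produces (or is naturally phrased in terms of) a linear operator on low-degree polynomials that behaves like reduction modulo the ideal $\ideal{\unsatp}$ while annihilating no nonzero constant, and whose ``difference degree'' / locality parameters are controlled by the expansion of $G$. I would therefore reprove (or adapt) the Alekhnovich--Razborov construction for the $3$-coloring polynomials on the expander $G$ to obtain such a $\redop$ valid up to degree $\kcons' = n \cdot \graphdeg^{-O(\graphdeg)}$, then feed $\redop$ into the paper's transfer lemma (the ``\texttt{reducibility-gives-reduction-operator}''/fooling statement referenced in the preamble) to conclude that the cohomological $\kcons$\nobreakdash-consistency algorithm for $3$-colorability accepts $\rela_G$ for all $\kcons \leq \kcons'$. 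Combining with $\rela_G\not\to K_3$ and $\chi(G)\geq \graphdeg/4\log\graphdeg$ yields that the algorithm accepts a graph that is not even $\graphdeg/4\log\graphdeg$-colorable, which is exactly the failure to solve $3$ vs. $\graphdeg/4\log\graphdeg$-coloring at level $\kcons$.

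The main obstacle, and the technical heart of the argument, is the second step: translating a polynomial calculus degree lower bound into a pseudo-reduction operator with the locality/cohomological guarantees needed to fool \emph{cohomological} $\kcons$\nobreakdash-consistency, as opposed to the weaker $\Z$-affine $\kcons$\nobreakdash-consistency or Sherali--Adams. Cohomological $\kcons$\nobreakdash-consistency imposes extra ``cocycle''-type consistency conditions on the affine solutions over overlapping domains of size $\leq \kcons$, so it is not enough to have a reduction operator that merely certifies $1 \notin \ideal{\unsatp}$ in low degree; one needs the operator to respect the partial-homomorphism/restriction structure coherently across all small subsets. I expect this to require verifying that the Alekhnovich--Razborov operator built from the graph expansion is \emph{compatible with restrictions} in the appropriate sense — essentially that restricting $\redop$ to the variables of a $\kcons$-subset again gives a reduction operator for the restricted instance — and that the $3$-coloring polynomial system has the ``lax''-type structure that makes the cohomological obstructions vanish. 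Once the compatibility and the quantitative expansion-to-locality bookkeeping are in place, the rest of the theorem follows by assembling the pieces above; the $\graphdeg^{-O(\graphdeg)}$ loss in the level bound is inherited directly from the degree bound of~\cite{CdRNPR25GraphColouring} and the parameter tracking in the reduction-operator construction.
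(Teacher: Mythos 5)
Your high-level plan is right in spirit --- use the Alekhnovich--Razborov construction from \cite{CdRNPR25GraphColouring} to get a pseudo-reduction operator and feed it into a transfer lemma for cohomological $\kcons$\nobreakdash-consistency --- but it misses the central technical ingredient of the paper, and your picture of what the transfer lemma must deliver is off in a way that matters.

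The key gap is \emph{integrality}. The cohomological and $\Z$-affine $\kcons$\nobreakdash-consistency algorithms require solutions to the linear system $L_\kcons(\rela,\relt,\kappa)$ \emph{over $\Z$}, whereas a pseudo-reduction operator over a field $\F$ of characteristic $0$ a priori maps monomials to polynomials with arbitrary rational coefficients. The paper's main new step (\cref{lem:integer-reduction} via the lex game of Felszeghy--R\'ath--R\'onyai, yielding \cref{lem:ipr-alekhnovich-razborov}) is a proof that when the monomial order is \emph{lexicographic}, reduction of a monomial modulo the ideal of a set $V \subseteq \{0,1\}^n$ always has integer coefficients. This is not optional bookkeeping: it is precisely what lets one compose $\pseudred$ with $\Z$-valued evaluation maps to produce integer solutions. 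Moreover, it genuinely requires a lex order (\cref{rmk:require-lex-order}); the degree-respecting orders used in \cite{AR03LowerBounds} cannot give integrality in general, since that would propagate the degree bound to every field, which is false for some AR instances. In the graph-coloring application this interacts with the construction: the vertex order that underlies the lex order is built from a proper coloring of $G$ (ordering by color classes), and this same order is used in the closure, satisfiability, and reducibility lemmas of \cite{CdRNPR25GraphColouring}, so the lex requirement is compatible with the degree bound one wants to reuse. Your proposal never mentions integrality, and without it the argument does not go through even for $\Z$-affine consistency.

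Two smaller points. First, there is no ``translation'' of a degree lower bound into a pseudo-reduction operator: the degree bound in \cite{CdRNPR25GraphColouring} is already obtained by building the operator via a closure operator satisfying satisfiability and reducibility conditions, and the paper simply reuses that construction verbatim and checks the extra integrality requirement; so the thing you flag as the main obstacle is largely not an obstacle. Second, what the cohomological algorithm needs beyond $\Z$-affine is not that $\pseudred$ ``restricts to a reduction operator on each $\kcons$-subinstance'', nor any ``lax'' structure (lax is used only for the null-constraining CSPs of \cref{sec:lax-and-null-constraining}, and the $3$-coloring template is \emph{not} lax). The actual mechanism in \cref{lem:cohomological-k-consistency-lower-bounds} is: for a partial homomorphism $\homomphi$ with domain $X$, the polynomial $\pseudred(m_\homomphi)$ depends only on variables over $\cl{X}$, and by $\degstd$-consistency there is an extension $\homomu$ of $\homomphi$ to $\rela[\cl{X}]$; evaluating $\pseudred$ at (the indicator of) $\homomu$ gives $1$ on $m_\homomphi$ and $0$ on $m_\homompsi$ for every other $\homompsi$ with the same domain, which is exactly the family of solutions the cohomological check demands. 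This needs $\cl{X}$ to stay small (hence the additional factor $25\graphdeg^{\chi(G)}$ loss in $\kcons$ relative to $\degstd$), a parameter loss your sketch does not account for.
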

For a full statement of \cref{th:main-thm-informal}, see \cref{sec:approximate-graph-coloring}. Setting $\graphdeg$ to be a constant in \cref{th:main-thm-informal}, it follows that cohomological $\kcons$\nobreakdash-consistency does not solve $3$ vs. $\colcons$-coloring on $n$-vertex graphs for any constant $\colcons \geq 3$ and any $\kcons \leq \Omega(n)$; our lower bound on $\kcons$ is then optimal up to constant factors. On the other extreme, we obtain a superconstant lower bound on $\kcons$ for all $\graphdeg=o((\log{n})/\log{\log{n}})$. Our results improve those of Chan and Ng~\cite{CN25Hierarchies}, who show through reductions that cohomological $\kcons$\nobreakdash-consistency does not solve $c$~vs.~\smash{$\binom{c}{\floor{c/2}}$}-coloring for any $c\geq 4$ and $\kcons \leq \Omega(n)$; recall that for these parameters, approximate graph coloring is known to be $\NP$-hard~\cite{KOWZ23Topology}. %

As previously noted, the only remaining candidates for uniform CSP algorithms among those in \cref{sec:csp-hierarchies-intro} are cohomological $\kcons$\nobreakdash-consistency, SDA, and C(BLP+AIP). This fact makes \cref{th:main-thm-informal} noteworthy in light of the results of Banks, Kleinberg, and Moore~\cite{JKM19Lovasz}, who show that with high probability, semidefinite programming---and hence also algorithms based on it, such as the sum-of-squares and SDA hierarchies---correctly identifies that a random $d$\nobreakdash-regular graph has chromatic number at least $\sqrt{\graphdeg}/2$. Therefore, with high probability, the $\mathrm{SDA}$ hierarchy solves $\sqrt{\graphdeg}/2$~vs.~$\graphdeg/4\log{\graphdeg}$-coloring on a random $\graphdeg$-regular graph. Thus, \cref{th:main-thm-informal} yields a separation %
of SDA from cohomological $\kcons$\nobreakdash-consistency, albeit for a PCSP that is widely believed to be $\NP$-hard.

As another application, we give a different proof of the lower bounds for lax and null-constraining CSPs (for definitions, see~\cref{sec:lax-and-null-constraining}) in \cite[Theorem~1.3]{CN25Hierarchies}.%
\begin{theorem}[Informal, cf. \cite{CN25Hierarchies}, Theorem~1.3]
    \label{th:lax-null-cons-informal}
    If an $n$-variable $\uniform$-CSP $\cspnotation$ is non-trivial, lax, and null-constraining, then there exist $\Delta> 0$ and $\delta > 0$ such that, with probability ${\delta - o_n(1)}$, a randomly sampled instance of $\cspnotation$ with $\Delta n$ constraints is unsatisfiable but is accepted by the cohomological $\kcons$\nobreakdash-consistency algorithm for all $\kcons\leq \Omega(n)$. Consequently, the cohomological $\kcons$\nobreakdash-consistency algorithm does not solve $\cspnotation$ for any $\kcons\leq \Omega(n)$.
\end{theorem}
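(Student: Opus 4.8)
The plan is to reduce \cref{th:lax-null-cons-informal} to the paper's generic transfer result: if the polynomial encoding of an unsatisfiable instance of $\cspnotation$ admits an Alekhnovich--Razborov pseudo-reduction operator $\redopp$ of degree $\pcdeg$, then the cohomological $\kcons$-consistency algorithm accepts that instance for all $\kcons = \Omega(\pcdeg)$ (via \cref{lem:reducibility-gives-reduction-operator}). It therefore suffices to exhibit, for a suitable density $\Delta$, two events: (a) with probability at least some constant $\delta = \delta(\cspnotation) > 0$, a random instance of $\cspnotation$ with $\Delta\nvars$ constraints is unsatisfiable; and (b) with probability $1 - o_\nvars(1)$, such an instance admits a pseudo-reduction operator of degree $\Omega(\nvars)$. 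Since satisfiable instances are always accepted, combining (a) and (b) by a union bound yields $\PROB{\text{unsatisfiable and accepted by $\kcons$-consistency}} \geq \delta - o_\nvars(1)$ for all $\kcons \leq \Omega(\nvars)$, and the ``consequently'' clause is then immediate.

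For (a), I would use that non-triviality of $\cspnotation$ supplies a fixed finite structure on which the trivial relaxations fail, which (as in \cite{CN25Hierarchies}) can be planted inside a random sparse instance with probability bounded away from $0$; this forces unsatisfiability with constant probability at density $\Delta$, for $\Delta$ chosen small enough that the constraint hypergraph still expands. This density window---dense enough for non-triviality to bite, sparse enough to stay expanding---is exactly the regime of random-CSP lower bounds in proof complexity \cite{CS88ManyHard, AR03LowerBounds}, and is the reason the probability in the theorem is a constant $\delta$ rather than $1 - o_\nvars(1)$.

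The substance is in (b), the construction of $\redopp$, which follows the Alekhnovich--Razborov template. One first sets up a closure operator $\clpar{\cdot}$ on sets of at most $\pcdeg = \Theta(\nvars)$ variables: $\clpar{\setU}$ is grown by repeatedly adjoining any constraint almost all of whose variables already lie in the current set, together with that constraint's variables. The combinatorial engine is that, with high probability, every sub-collection of at most $\varepsilon\nvars$ constraints of the random instance spans more than a fixed constant fraction more variables than it has constraints---a first-moment estimate over ``dense cores'' of the constraint hypergraph. This boundary expansion guarantees both that closures of small sets remain small and that inside any closed set each constraint can be matched to a private variable lying outside the set. Using this matching, and invoking the null-constraining property to ensure that the polynomial system restricted to a closed variable set stays consistent (equivalently, its affine relaxation stays feasible), one defines $\redopp$ on each degree-$\le\pcdeg$ monomial $\term$ by reducing $\term$ to a normal form modulo the generic consequences of the closed subsystem on $\clpar{\vars{\term}}$, then extends $\F$-linearly. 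It remains to verify the pseudo-reduction-operator axioms in the sense of Alekhnovich and Razborov (cf. \cref{lem:ipr-alekhnovich-razborov})---$\redopp$ fixes constants, kills the degree-$\le\pcdeg$ part of $\ideal{\unsatp}$, and satisfies the multiplication-compatibility identity $\redopp(\varx_i\cdot\redopp(\polyp)) = \redopp(\varx_i\cdot\polyp)$ for $\deg\polyp < \pcdeg$---and here laxness of $\cspnotation$ is what makes the required local extension steps succeed (it also drives the plain $\kcons$-consistency of the instance, which the cohomological algorithm subsumes).

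The main obstacle is precisely this last verification. Beyond the classical setting, the operator must be compatible with the ring/module structure that the cohomological condition records---not merely a field-linear map killing an ideal---so one must check that the normal forms chosen for overlapping monomials are mutually consistent (i.e.\ independent of the order in which constraints were adjoined during closure), and that null-constraining and laxness together supply a feasible local solution at every inductive step. Aligning the book-keeping of the closure and matching argument with the sheaf-theoretic data the algorithm manipulates is where the real work lies; the expansion estimate, the planting argument for (a), and the final union bound are all routine.
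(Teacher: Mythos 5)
Your high-level plan---obtain unsatisfiability for a constant fraction of random instances, construct an Alekhnovich--Razborov-style pseudo-reduction operator of linear degree via a closure argument, and invoke the paper's transfer machinery---is indeed the paper's skeleton. But several of the load-bearing details are either wrong or skipped.

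First, you have the probability accounting exactly reversed. In the paper's proof (\cref{th:cohomological-lax-nullcons-lbs}), unsatisfiability of $\rela \sim \cspdist{n}{\Delta n}{\relt}$ holds with probability $1 - o_n(1)$ by \cref{lem:trivially-satisfiable}, not with constant probability. The constant $\delta$ arises from a different place: the reducibility argument (\cref{lem:reducibility-nullcons}) requires $\operatorname{girth}(\hypedgeset) > 3\ell + 2$, and a random $\uniform$-uniform hypergraph with $\Delta n$ edges is free of short Berge cycles only with probability $\delta - o_n(1)$ (\cref{lem:random-hypergraph-properties}). Your explanation that $\delta$ comes from ``planting a non-trivial obstruction'' is not how the paper gets the constant and does not appear to lead to the same statement.

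Second, the closure operator you describe---grow $\cl{\setU}$ by adjoining any constraint ``almost all of whose variables already lie in the current set''---is the classical support/closure from \cite{AR03LowerBounds}, and it does not work here. Null-constraining CSPs do \emph{not} have the strong boundary expansion that makes that closure correct: a single new constraint touching $\setU$ in at most one vertex can still be satisfiable by null-constraining alone, but one needs \emph{paths} of length $\ell$ to guarantee free extension. The paper instead uses the BW closure $\clnullcons{\cdot}{s}$ of \cite{CN25Hierarchies}, defined through $(\setU,\ell)$-bad hyperedge sets (those with no $(\setU,\ell)$-boundary, where the boundary may be either a low-degree edge or a pendant path of length $\ell$). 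Getting the reducibility condition to go through specifically hinges on this definition (the chaining property in \cref{lem:bad-sets-properties}, insularity in \cref{def:insular}, the boundary-propagation lemma \cref{lem:neighborhood-no-bad-sets}). Your closure would fail the satisfiability and reducibility checks.

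Third, you gesture at ``reduction to a normal form modulo the closed subsystem'' and invoke \cref{lem:ipr-alekhnovich-razborov} in passing, but you never actually address integrality, which is where the real subtlety is. The $\Z$-affine and cohomological relaxations want solutions over $\Z$, while reduction operators are defined over a field. The paper proves that reductions under a \emph{lexicographic} order modulo an ideal of the form $I(V)$ with $V\subseteq\{0,1\}^n$ stay integral, via the lex game (\cref{lem:integer-reduction,lem:ipr-alekhnovich-razborov}). Relatedly, your last paragraph correctly senses that ``not merely a field-linear map killing an ideal'' is enough, but the resolution is precise: the Alekhnovich--Razborov form $\pseudred(m) = \redop_{\ideal{\cl{m}}}(m)$ with $\cl{m}$ depending only on $\verts{\ala}(m)$ is what lets \cref{lem:cohomological-k-consistency-lower-bounds} produce, for every $\homomphi\in\kappa_\pseudred(X)$, an integer assignment that sets $x_{X,\homomphi}=1$ and kills all competitors---by evaluating at a homomorphism on $\rela[\cl{X}]$ extending $\homomphi$. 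You identify this as ``where the real work lies'' but do not supply the mechanism. So the proposal captures the shape of the argument but leaves the two hardest steps (the correct closure, and integrality plus the cohomological solution family) unresolved, and misattributes which event costs the constant probability.
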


The full statement of \cref{th:lax-null-cons-informal} appears in \cref{sec:lax-and-null-constraining}. Since the proof of \cref{th:lax-null-cons-informal} goes through the connection to pseudo-reduction operators, we immediately obtain the same lower bound of $\Omega(n)$ for the polynomial calculus degree required to refute a random instance of a lax and null-constraining CSP; see \cref{cor:cohomological-lax-nullcons-lbs-pc}. %

\subsection{Technical Overview}
\label{sec:discussion-of-pf-techniques}

We now review the connection between polynomial calculus degree lower bounds and level lower bounds for cohomological $\kcons$\nobreakdash-consistency, formally established in Lemmas \ref{lem:accepted-by-z-aff-cons} and \ref{lem:cohomological-k-consistency-lower-bounds}. To this end, we first provide a brief overview of the necessary technical background.

\paragraph{Proof Complexity Background:}
The starting point of our connection is a technique for proving lower bounds for the \emph{polynomial calculus} proof system~\cite{CEI96Groebner,ABRW02SpaceComplexity}. 
Given a field~$\F$ and a set $\unsatp$ of polynomials in $\F[x_1, \ldots, x_n]$, polynomial calculus proves that the polynomials in $\unsatp$ have no common root through deriving new polynomials in the ideal $\ideal{\unsatp}$ generated by $\unsatp$ until reaching the constant polynomial~$1$.

The most common complexity measures of a polynomial calculus proof are \emph{size}, defined as the number of monomials %
in it, and \emph{degree}, defined as the maximum degree among all polynomials appearing in the proof. It is known that a strong enough lower bound on degree implies lower bounds on size in a black-box fashion~\cite{IPS99LowerBounds}; indeed, almost all known polynomial calculus size lower bounds are proved in this way. Barring some early results~\cite{ BGIP01LinearGaps, Grigoriev01LinearLowerBound, BI10Random, BCIP02Homogenization}, almost all polynomial calculus degree lower bounds are in turn proved by constructing a so-called \emph{pseudo-reduction operator} or \emph{$R$-operator}~\cite{Razborov98LowerBound}, which is an $\F$-linear operator that sends all low-degree polynomials derived from $\unsatp$ to $0$ but sends $1$ to $1$; thus, there can be no low-degree polynomial calculus proof that $\unsatp$ is unsatisfiable. %

Pseudo-reduction operators are often based on the notion of \emph{reduction modulo ideals} from Gr\"{o}bner basis theory. Given a 
so-called \emph{admissible} %
total order $\prec$ on the monomials in $\F[x_1, \ldots, x_n]$ and an ideal $I \subseteq \F[x_1, \ldots, x_n]$, %
the \emph{reduction operator $\redop_I(\cdot)$} on $\F[x_1, \ldots, x_n]$ %
acts on a polynomial~$p$ by iteratively identifying a term $t$ in $p$ and a polynomial $q \in I$ whose largest term with respect to $\prec$ is $t$, subtracting $q$ from $p$, and repeating until no such $q$ exists. It then returns the resulting polynomial.
 As an analogy (and indeed a special case), one can think of a system of linear equations being reduced to row-echelon form. 
 
 If the polynomials in a set $\unsatp$ have a common root, then $\pseudred_{\langle\unsatp\rangle}$ is a pseudo-reduction operator: all polynomials derived from $\unsatp$, of any degree, belong to $\ideal{\unsatp}$ and are thus reduced to $0$, and since the polynomials in $\unsatp$ have a common root, we have that $1 \not \in \ideal{\unsatp}$ and hence $\redop_{\ideal{\unsatp}}(1) = 1$. A pseudo-reduction operator $\pseudred$ should therefore behave as $\pseudred_{\langle\unsatp\rangle}$ would on low-degree monomials if the polynomials in $\unsatp$ had a common root. 
In line with this heuristic, Alekhnovich and Razborov \cite{AR03LowerBounds} pioneered an approach for constructing pseudo-reduction operators based on \emph{local} reductions: to each low-degree monomial $m$, associate a satisfiable subset $S(m)$ of $\unsatp$, define $\pseudred(m) = \redop_{\ideal{S(m)}}(m)$, and define $\pseudred$ on arbitrary polynomials by linear extension. 
We refer to operators constructed in this way as \emph{local} pseudo-reduction operators.%

\paragraph{Cohomological \textit{\kcons}-Consistency:}

Given an instance $\rela$ of $\csp{\relt}$, the cohomological $\kcons$-consistency algorithm maintains a collection $\mathcal{H} = \{\mathcal{H}(X) : X \in \smash{\binom{\ala}{\leq \kcons}}\}$ of partial homomorphisms $\rela[X] \to \relt$, which is iteratively refined in two steps, outlined below. When $\mathcal{H}$ stabilizes, the algorithm rejects $\rela$ if $\mathcal{H}(X) = \varnothing$ for some $X \in \smash{\binom{\ala}{\leq \kcons}}$ and accepts otherwise. 

A \emph{$\kcons$\nobreakdash-consistent} collection $\kappa$ of partial homomorphisms is one where (1) for each $\homomphi \in \kappa$ with domain size %
smaller than $\kcons$, 
there is some $\psi_S \in \kappa$ extending $\homomphi$ to any larger domain $S$ of size  $k$, and (2) all restrictions of all $\homomphi \in \kappa$ to smaller domains are also in $\kappa$. In the first step of each iteration, the algorithm finds the maximal $\kcons$\nobreakdash-consistent collection $\kappa$ in~$\mathcal{H}$ and assigns $\mathcal{H} = \kappa$. 

In the second step, the algorithm seeks a family of solutions over 
$\Z$ to a linear system formulation 
$L_\kcons(\rela, \relt)$ of $\csp{\relt}$. %
The system $L_k(\rela, \relt)$ has variables $x_\homomphi$ for all partial homomorphisms $\homomphi \colon \rela \to \relt$ with domain size at most $\kcons$. The solutions to $L_\kcons(\rela, \relt)$ must all be supported on $\kappa$, and for each $\homomphi \in \kappa$ there must be a solution that sets $x_\homomphi$ to $1$ and $x_\homompsi$ to $0$ for all $\homompsi \in \kappa$ with the same domain as $\homomphi$. If no such solution exists, then $\homomphi$ is removed from $\mathcal{H}$ in the next iteration.

\paragraph{A Link Between the Two:}
Homomorphisms $\rela \to \relt$ can be encoded as the common roots of a set $\polyhomo{\rela}{\relt}$ of polynomials over $\F$ in Boolean variables $\{x_{\verta,\verti}\mid \verta\in \ala, \verti\in \alt\}$. Any (partial) function $\homomphi : \ala \to \alt$ can then be encoded as a monomial $m_\homomphi = \prod_{(\verta, \verti) \in \homomphi} x_{\verta, \verti}$. 

Suppose that $\rela \not \to \relt$, that $\F$ has characteristic $0$, and that $\polyhomo{\rela}{\relt}$ admits a degree-$\degstd$ %
local pseudo-reduction operator as above, where for each monomial $m$ we associate a substructure $\rela(m)$ of $\rela$ and define $R(m) = R_{\ideal{\polyhomo{\rela(m)}{\relt}}}(m)$. We use $\pseudred$ to show that $\rela$ is then a fooling instance for the cohomological $\kcons$\nobreakdash-consistency algorithm for all $\kcons\leq \degstd$. More specifically, we show that:

\begin{enumerate}
    \item the functions $\homomphi: \ala \to \alt$ such that $\pseudred(m_\homomphi) \neq 0$ form a $\degstd$-consistent collection of partial homomorphisms;
    \label{item:consistent}
    \item the assignment $x_\homomphi \mapsto \pseudred(m_\homomphi)$ is a solution to $L_k(\rela, \relt)$, except over the polynomial ring $\F[\mathbf{x}_{\ala, \alt}]$ instead of~$\mathbb Z$; %
    \label{item:solutions}

    \item for each monomial $m$ of degree at most $\degstd$, the polynomial $\pseudred(m)$ has integer coefficients.
    \label{item:integer-coefficients}
\end{enumerate}
Consequently, composing the map $x_\homomphi \mapsto \pseudred(m_\homomphi)$ with 
any evaluation on a point in $\Z^{\setsize{\ala} \cdot \setsize{\alt}}$, %
or more generally with any linear map $\Z[\mathbf{x}_{\ala, \alt}] \to \Z$ such that $h(1) = 1$, %
yields a solution to $L_k(\rela, \relt)$ over $\Z$. As we show in \cref{lem:cohomological-k-consistency-lower-bounds}, the specific properties of %
local pseudo-reduction operators lets us judiciously choose these maps to produce the family of solutions to $L_k(\rela, \relt)$ sought by the cohomological $\kcons$\nobreakdash-consistency algorithm.

Items \ref{item:consistent} and \ref{item:solutions} are established in \cref{lem:accepted-by-z-aff-cons} and follow rather naturally from the definition of $\polyhomo{\rela}{\relt}$ and general properties of pseudo-reduction operators. 
The integrality of $\pseudred(m)$ in \cref{item:integer-coefficients} is proved in \cref{lem:integer-reduction}, and the key tool we use is the \emph{lex game} \cite{FRR06LexGame}. The lex game over a polynomial ring $\F[x_1, \ldots, x_n]$ with a lexicographic monomial order $\prec$ has two players named Lea and Stan, and is defined with respect to a set $V\subseteq \F^n$ and monomial~$m$. 

It is shown in~\cite{FRR06LexGame} that Lea has a winning strategy in the lex game if and only if the ideal $I(V)$ of polynomials that vanish on $V$ contains a polynomial with largest term $m$. %
The key insight for showing %
\cref{item:integer-coefficients} is that this winning strategy can itself be encoded as a polynomial in $I(V)$, which has largest term $m$ and has integer coefficients when $V \subseteq \{0, 1\}^n$. 
Therefore, each reduction step performed by $\redop_{I(V)}$ can be done using a polynomial with integer coefficients, and hence the output is integral. %
Since the ideals $\ideal{\polyhomo{\rela(m)}{\relt}}$ are over Boolean variables, our result applies to them, and so \cref{item:integer-coefficients} follows.

Given the generality of our connection and the flexibility it offers in constructing solutions to the cohomological $\kcons$\nobreakdash-consistency algorithm%
, we expect %
it to be useful for proving lower bounds for other (P)CSPs, and potentially also for other hierarchies. %

\paragraph{Applications:} To prove \cref{th:main-thm-informal}, we apply our connection to the polynomial calculus degree lower bound in \cite{CdRNPR25GraphColouring}, which uses %
local pseudo-reduction operators. 
For~\cref{th:lax-null-cons-informal}, we construct a %
local pseudo-reduction operator for $\polyhomo{\rela}{\relt}$ by using the \emph{BW closure} from \cite{CN25Hierarchies} to define the %
subset $S(m)$ of constraints for each low-degree monomial $m$. As detailed in \cref{lem:ipr-alekhnovich-razborov}, %
this amounts to showing that the so-called \emph{satisfiability} and \emph{reducibility} conditions are satisfied for the BW closure. Roughly speaking, the satisfiability condition asks that %
$S(m)$ is satisfiable, %
and the reducibility condition states that $\pseudred_{\ideal{S(m)}} (m) = \pseudred_{\ideal{S(m) \union T}} (m)$ for any small set $T \subseteq \polyhomo{\rela}{\relt}$. We establish these conditions in Lemmas~\ref{lem:extension-lemma} and \ref{lem:reducibility-nullcons} respectively, using many ideas of \cite{CN25Hierarchies}.

\subsection{Paper Organization}
\label{sec:paper-organization}
The rest of this paper is organized as follows. In \cref{sec:prelims} we present the necessary preliminaries. Sections~\ref{sec:csp-hiearchies-ipr-operators} and \ref{sec:ar-method-cohomological-consistency} establish our connection between lower bounds for polynomial calculus degree and CSP hierarchy level. In \cref{sec:ipr-operators-lex-game} we introduce the lex game and prove our key results regarding integrality of reductions. We prove our lower bounds for approximate graph coloring in \cref{sec:approximate-graph-coloring}, and for lax and null-constraining CSPs in \cref{sec:lax-and-null-constraining}. Finally, we present some concluding remarks and open problems in \cref{sec:concluding-remarks}.

\section{Preliminaries}
\label{sec:prelims}
This section reviews the necessary preliminaries from algebra, constraint satisfaction, and proof complexity. For a positive integer $n$, we write $[n]$ to denote the set $\{1, 2, \ldots, n\}$. Tuples are denoted by boldface lowercase letters. %
We say a set $S$ is \emph{$a$-small} if $\setsize{S}\leq a$. We denote the family of subsets of $S$ by $2^S$, the family of subsets of $S$ of size $\kcons$ by $\binom{S}{\kcons}$, and the family of subsets of size at most $\kcons$ by $\binom{S}{\leq \kcons}$.

We use standard asymptotic notation: for functions $f, g\colon \N \to \mathbb{R}$, we write $f = O(g)$ if there exist nonnegative constants $c$ and $n_0$ such that $f(n) \leq c\cdot g(n)$ for all $n \geq n_0$, and write $f = \Omega(g)$ if $g = O(f)$. If $f=O(g)$ and $g = O(f)$, we write $f = \Theta(g)$. 

\subsection{Relational Structures and Constraint Satisfaction Problems}

A \emph{signature} $\sigma$ is a finite set of relation symbols $\relsymbol_1, \ldots, \relsymbol_\ell$, each with respective arity $\ar(\relsymbol_i) \in \N$. %
A \emph{$\sigma$-structure} $\rela$ consists of a domain $\ala$ and, for each $\relsymbol \in \sigma$, a relation $\relsymbol^{\rela} \subseteq \ala^{\ar(\relsymbol)}$. %
A \emph{relational structure} is a $\sigma$-structure, for some signature $\sigma$. For a $\sigma$-structure $\rela$ and a set $X\subseteq A$, the \emph{substructure of $\rela$ on $X$}, denoted $\rela[X]$, is the $\sigma$-structure with domain $X$ where, for $\Gamma\in \sigma$, the relation $\relsymbol^{\rela[X]}$ consists of the tuples in $\relsymbol^{\rela}$ that mention only variables in $X$. A relational structure $\rela$ is \emph{finite} if $\setsize{\ala}$ is finite. All relational structures in this paper are finite unless otherwise stated.

For two $\sigma$-structures $\rela$ and $\relb$, a \emph{homomorphism} from $\rela$ to $\relb$ is a map $\homomphi \colon A\to B$ such that for each symbol $\relsymbol \in \sigma$ and each tuple $\mathbf{\verta} = (\verta_1, \ldots, \verta_{\ar(\relsymbol)})$, it holds that if $\mathbf{\verta} \in \relsymbol^{\rela}$ then $\homomphi(\mathbf{\verta}) = (\homomphi(\verta_1), \ldots, \homomphi(\verta_{\ar(\relsymbol)})) \in \relsymbol^{\relb}$. If there exists a homomorphism from $\rela$ to $\relb$, we write $\rela \to \relb$; if not, we write $\rela \not \to \relb$. The restriction of a homomorphism $\homomphi \colon \rela \to \relt$ to a set $X\subseteq \ala$ is denoted $\restrict{\homomphi}{X}$. Given sets $Y \subseteq X \subseteq \ala$ and a homomorphism $\homomphi \colon \rela[Y] \to \relt$, if there exists a homomorphism $\homompsi \colon \rela[X] \to \relt$ such that $\restrict{\homompsi}{Y} = \homomphi$, we say that $\homompsi$ \emph{extends} $\homomphi$. We denote the set of homomorphisms from $\rela$ to $\relb$ by $\hom(\rela, \relb)$. 

For a $\sigma$-structure $\relt$, an \emph{instance of the constraint satisfaction problem with template $\relt$} is a $\sigma$-structure $\rela$ such that, for all $\relsymbol \in \sigma$, there are no tuples in $\relsymbol^\rela$ with repeated elements.\footnote{This technical condition corresponds to having no repeated variables in the scope of any constraint in the instance.} We define $\csp{\relt}$ as the computational problem where the input is an instance $\rela$ and the task is to determine whether $\rela\to\relt$.

For a pair $(\rels, \relt)$ of fixed $\sigma$-structures such that $\rels \to \relt$, the \emph{promise constraint satisfaction problem (PCSP) with template $(\rels, \relt)$}, denoted by $\pcsp{\rels}{\relt}$, is the computational task of distinguishing whether $\rela \to \rels$ or $\rela \not\to\relt$ for a given instance $\rela$, promised that one of the two holds.\footnote{This is the \emph{decision} variant of PCSP. In the \emph{search} variant, the input is an instance $\rela$ such that $\rela \to \rels$, and the task is to find a homomorphism from $\rela$ to $\relt$. For CSPs, it is a standard fact that the search version reduces to the decision version, but no generic search-to-decision reduction is known for PCSPs. Since an efficient algorithm for the search version can be used to also solve the decision version, a lower bound for the latter also holds for the former.}  Note that $\pcsp{\relt}{\relt} = \csp{\relt}$.

\subsection{Graphs and Hypergraphs}
A \emph{hypergraph} consists of a finite vertex set $V$ together with a collection $\hypedgeset$ of subsets of $V$. The elements of $\hypedgeset$ are called \emph{hyperedges}. A hyperedge set $\hypedgeset$ is \emph{$\uniform$-uniform} if all hyperedges in $\hypedgeset$ contain exactly $\uniform$ vertices, and a hypergraph $H=(V, \hypedgeset)$ is $\uniform$-uniform if its hyperedge set is. For a hyperedge set $\hypedgeset$, we denote the set $\bigcup_{e \in \hypedgeset} e$ of vertices in $\hypedgeset$ by $V(\hypedgeset)$. For a vertex $v\in V$, the \emph{degree} $\deg_{\hypedgeset}(v)$ of $v$ in $\hypedgeset$ is the number of hyperedges in $\hypedgeset$ containing $v$. The subscript $\hypedgeset$ is usually omitted if the hyperedge set is clear from context.  A hypergraph is \emph{$\graphdeg$-regular} if all its vertices have degree exactly $\graphdeg$. For a set $U\subseteq V$ and a hyperedge set $\hypedgeset$, the \emph{neighborhood} of $U$ in $\hypedgeset$ is $\nbhdub{\hypedgeset}{U} = \{v \in V(\hypedgeset)\setminus U : \exists u \in U, \exists e \in \hypedgeset \mid \{u, v\} \subseteq e\}$. For a hypergraph $H = (V, \hypedgeset)$ and $U \subseteq V$, the subhypergraph of $H$ induced by $U$ is $H[U] = (U, \hypedgeset[U])$, where $\hypedgeset[U] = \{e \in \hypedgeset : e \subseteq U\}$. 

A \emph{walk of length $\ell$} in a hyperedge set $\hypedgeset$ is a sequence of vertices $v_0, \ldots, v_\ell$ and a sequence of hyperedges $e_0, \ldots, e_{\ell-1}$ such that $\{v_i, v_{i+1}\} \subseteq e_i$ for all $0\leq i < \ell$. A walk is a \emph{(Berge) path} if $v_0,\ldots, v_{\ell}$ are distinct and $e_0, \ldots, e_\ell$ are also distinct; the vertices $v_0,\ldots, v_\ell$ are then the \emph{connecting vertices} of the path, and $v_0$ and $v_\ell$ are its \emph{endpoints}. A walk is a \emph{(Berge) cycle} if (1) $\ell \geq 2$, (2) $v_0, \ldots, v_{\ell-1}$ are distinct, (3) $v_\ell = v_0$, and (4) $e_0, \ldots, e_{\ell-1}$ are distinct. 
The \emph{girth} of a hyperedge set $\hypedgeset$, denoted $\operatorname{girth}(\hypedgeset)$, is the minimum length of a cycle in $\hypedgeset$, or $\infty$ %
if there are no cycles in $\hypedgeset$. 
A hyperedge $e_i$ in a path is \emph{simple} if every vertex in $e_i \setminus \{v_i, v_i+1\}$ has degree $1$ in $\{e_j :0 \leq j < \ell\}$, and a connecting vertex $v_i$ in a path is simple if it has degree $2$ in $\{e_j :0 \leq j < \ell\}$. A path is simple if all its hyperedges and non-endpoint connecting vertices are simple. A cycle is simple if all its hyperedges and connecting vertices are. 
A path is \emph{pendant} in a hyperedge set $\hypedgeset$ if none of its non-endpoints belong to any hyperedge in $\hypedgeset \setminus \{e_0, \ldots, e_{\ell-1}\}$.

A \emph{proper $c$-coloring} of a hypergraph $H = (V, \hypedgeset)$ is a map $\chi \colon V \to [c]$ such that $\setsize{\chi(e)} \geq 2$ for all $e \in \hypedgeset$. If there exists a proper $c$-coloring of $G$, we say that $G$ is \emph{$c$-colorable}. The \emph{chromatic number} of a hypergraph $G$ is the smallest $c$ such that $G$ is $c$-colorable. %

A \emph{graph} $G= (V, E)$ is a $2$-uniform hypergraph; in particular, all hypergraph notions in this section specialize to their usual graph-theoretic meaning.

\subsection{Polynomials and Reduction Modulo Ideals}
\label{sec:reduction-modulo-ideals}
For a ring $\ring$, we denote the polynomial ring over $\ring$ in $n$ variables by $\ring[x_1, \ldots, x_n]$, and the set of variables appearing in a polynomial $\polyp$ by $\vars{p}$. A \emph{monomial} is a product of variables. For a vector $\alpha = (\alpha_1, \ldots, \alpha_n) \in \N^n$, we write $x^\alpha$ to denote the monomial $\prod_{i=1}^n x_i^{\alpha_i} \in \ring[x_1, \ldots, x_n]$. Given a monomial $m$, its \emph{multilinearization} $m_\mathrm{mult}$ is the product of the variables appearing in $m$. For a set of polynomials~$ \unsatp=\{\polyp_1, \ldots, \polyp_m\}$ in $\ring[x_1, \ldots, x_n]$, the set of polynomials of the form $\sum_{i \in [m]}\polyq_i \polyp_i$ for $\polyq_i \in \ring[x_1, \ldots, x_n]$ is called the \emph{ideal generated by $\unsatp$} and is denoted by~$\ideal{\unsatp}$. For a set $S\subseteq \ring^n$, we denote the ideal of polynomials vanishing on all of $S$ by $I(S)$. %
If $S$ is empty, we define $I(S)$ to be $\ring[x_1, \ldots, x_n]$. %
For a polynomial $\polyp$ and a (possibly partial) mapping $\rho\colon\vars{\polyp} \to \ring$, we denote the mapping $\ring[x_1, \ldots, x_n] \to \ring[x_1, \ldots, x_n]$ that evaluates $\pcpolyp$ on $\reststd$ by $\operatorname{eval}_{\reststd}$, and the polynomial obtained by evaluating $\polyp$ according to $\rho$ by $\restrict{\polyp}{\rho}$. %
  
  For a field $\F$, a total well-order $\prec$ on the monomials in $\F[x_1, \ldots, x_n]$ is \emph{admissible} if (1) for any monomial $m$ with at least one variable, it holds that $1 \prec m$, and (2) for any monomials $m_1, m_2$ and $m$ such that $m_1 \prec m_2$, it holds that $m \cdot m_1 \prec m \cdot m_2$. A prominent example which will be central to this work is the \emph{lexicographic} order, where $x_1 \succ \ldots \succ x_n$ and for any two monomials $x^{\alpha_1}, x^{\alpha_2} \in \F[x_1, \ldots, x_n]$, it holds that $x^{\alpha_1} \succ x^{\alpha_2}$ if and only if the leftmost entry of the vector difference $\alpha_1 - \alpha_2 \in \Z^n$ is positive. 
  
  Fixing an admissible order $\prec$ on the monomials in $\F[x_1, \ldots, x_n]$, the \emph{leading term} of a polynomial is its term with largest monomial with respect to $\prec$. A polynomial $\polyp$ is \emph{reducible modulo} an ideal~$I$ if there exists $\polyq\in I$ with the same leading term as $\polyp$; otherwise, $\polyp$ is \emph{irreducible} modulo~$I$. It is straightforward to show that for any polynomial $\polyp$ and any ideal $I$, there exists a unique representation $\polyp = \polyq + \polyr$ where $\polyq\in I$ and where $\polyr$ is a linear combination of irreducible monomials modulo~$I$. The polynomial $r$ is called the \emph{reduction} of $\polyp$ modulo $I$, and the operator $R_I$ on $\F[x_1, \ldots, x_n]$ that takes a polynomial $\polyp$ to its reduction modulo $I$ is called the \emph{reduction operator modulo $I$}. A straightforward argument shows that the reduction operator is $\F$-linear.

Finally, we recall the Boolean Nullstellensatz, omitting its standard proof (see, e.g., \cite[Appendix~A]{CdRNPR25GraphColouring}).

\begin{lemma}[Boolean Nullstellensatz]
  \label{lem:boolean-ns}
  For any finite set $\unsatp \subseteq \F[x_1, \ldots, x_n]$ that contains the set of Boolean axioms $\{x_1^2-x_1, \ldots, x_n^2-x_n\}$, and any polynomial $\pcpolyq \in \F[x_1, \ldots, x_n]$, %
  $\pcpolyq$ vanishes on all common roots of the polynomials in $\unsatp$ if and only if $\pcpolyq\in \ideal{\unsatp}$. In particular, $1 \in \ideal{\unsatp}$ if and only if the polynomials in $\unsatp$ have no common root.%
\end{lemma}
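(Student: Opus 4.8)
The plan is to prove the non-trivial implication—that a polynomial vanishing on all common roots of $\unsatp$ lies in $\ideal{\unsatp}$—since the converse is immediate: if $\pcpolyq = \sum_i \polyq_i \polyp_i$ with $\polyp_i \in \unsatp$, then $\pcpolyq$ vanishes wherever all the $\polyp_i$ do, and in particular on every common root.

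First I would use the Boolean axioms to reduce to the multilinear case. Any monomial power $x_i^{k}$ with $k \geq 2$ can be rewritten as $x_i$ modulo $x_i^2 - x_i$, so for any $\pcpolyq$ there is a multilinear polynomial $\widetilde{\pcpolyq}$ with $\pcpolyq - \widetilde{\pcpolyq} \in \ideal{x_1^2 - x_1, \ldots, x_n^2 - x_n} \subseteq \ideal{\unsatp}$; moreover $\pcpolyq$ and $\widetilde{\pcpolyq}$ agree on all of $\{0,1\}^n$, hence on every common root of $\unsatp$ (all such roots lie in $\{0,1\}^n$, being roots of the Boolean axioms $t^2 - t$). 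So it suffices to show $\widetilde{\pcpolyq} \in \ideal{\unsatp}$ when $\widetilde{\pcpolyq}$ is multilinear and vanishes on the set $V \subseteq \{0,1\}^n$ of common roots of $\unsatp$.

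Second, I would expand $\widetilde{\pcpolyq}$ in the basis of indicator polynomials. For $\mathbf{a} \in \{0,1\}^n$ set $\delta_{\mathbf{a}}(\mathbf{x}) = \prod_{i : a_i = 1} x_i \cdot \prod_{i : a_i = 0} (1 - x_i)$; each $\delta_{\mathbf{a}}$ is multilinear and satisfies $\delta_{\mathbf{a}}(\mathbf{b}) = 1$ if $\mathbf{b} = \mathbf{a}$ and $0$ otherwise for $\mathbf{b} \in \{0,1\}^n$. As $2^n$ linearly independent elements of the $2^n$-dimensional space of multilinear polynomials, the $\delta_{\mathbf{a}}$ form a basis of it, with coordinates read off by evaluation on $\{0,1\}^n$; hence $\widetilde{\pcpolyq} = \sum_{\mathbf{a} \in \{0,1\}^n} \widetilde{\pcpolyq}(\mathbf{a})\, \delta_{\mathbf{a}} = \sum_{\mathbf{a} \notin V} \widetilde{\pcpolyq}(\mathbf{a})\, \delta_{\mathbf{a}}$. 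It remains to check $\delta_{\mathbf{a}} \in \ideal{\unsatp}$ for each $\mathbf{a} \notin V$. Fix such an $\mathbf{a}$; since $\mathbf{a}$ is not a common root, some $\polyp \in \unsatp$ has $\polyp(\mathbf{a}) \neq 0$. Then $\polyp \cdot \delta_{\mathbf{a}} \in \ideal{\unsatp}$, and its multilinearization $\widetilde{\polyp \delta_{\mathbf{a}}}$ also lies in $\ideal{\unsatp}$ (their difference is in the Boolean ideal); but $\widetilde{\polyp \delta_{\mathbf{a}}}$ is multilinear and agrees with $\polyp \delta_{\mathbf{a}}$ on $\{0,1\}^n$, so by the expansion above it equals $\polyp(\mathbf{a})\, \delta_{\mathbf{a}}$. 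Since $\F$ is a field and $\polyp(\mathbf{a}) \neq 0$, we get $\delta_{\mathbf{a}} = \polyp(\mathbf{a})^{-1}\, \widetilde{\polyp \delta_{\mathbf{a}}} \in \ideal{\unsatp}$, and therefore $\widetilde{\pcpolyq} \in \ideal{\unsatp}$ and hence $\pcpolyq \in \ideal{\unsatp}$.

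Finally, the ``in particular'' clause follows by taking $\pcpolyq = 1$: the constant $1$ vanishes on all common roots of $\unsatp$ exactly when there are none, so by the main equivalence $1 \in \ideal{\unsatp}$ if and only if $\unsatp$ has no common root. I do not expect a genuine obstacle here, since this is a classical fact; the single step that truly uses the hypothesis is the multilinearization, which requires the Boolean axioms to be present in $\unsatp$, and the remainder is linear algebra over $\F$ on the finite-dimensional space of multilinear polynomials. It is precisely this finite-dimensionality—equivalently, that $\{0,1\}^n$ is the affine variety cut out by the Boolean axioms—that upgrades the weak Nullstellensatz conclusion to ideal membership without any radical or algebraic-closure hypotheses.
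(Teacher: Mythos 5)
Your proof is correct and is essentially the standard argument. The paper itself omits the proof of this lemma and merely cites \cite[Appendix~A]{CdRNPR25GraphColouring}, so there is no internal proof to compare against; but the route you take---multilinearize modulo the Boolean ideal, expand in the basis of indicator polynomials $\delta_{\mathbf{a}}$ on $\{0,1\}^n$, and show $\delta_{\mathbf{a}}\in\ideal{\unsatp}$ for every non-root $\mathbf{a}$ by multiplying some $p\in\unsatp$ with $p(\mathbf{a})\neq 0$ against $\delta_{\mathbf{a}}$ and re-multilinearizing---is exactly the standard one for the Boolean Nullstellensatz. Each step is correct: the Boolean axioms in $\unsatp$ both confine the common roots to $\{0,1\}^n$ and place the difference $\pcpolyq-\widetilde{\pcpolyq}$ inside $\ideal{\unsatp}$, and the identification $\widetilde{\polyp\delta_{\mathbf{a}}}=\polyp(\mathbf{a})\,\delta_{\mathbf{a}}$ follows from uniqueness of multilinear interpolation on $\{0,1\}^n$. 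No gap.
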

\subsection{Algebraic Proof Complexity}
\label{sec:proof-complexity}

\emph{Nullstellensatz}~\cite{BIKPP94LowerBounds} is a proof system
that uses algebraic reasoning to prove that a set $\unsatp = \{p_1, \ldots, p_m\}$ of
polynomials in variables $x_1, \ldots, x_n$ over a field $\F$ is unsatisfiable, that is, that the polynomials in $\unsatp$ have no common root in $\F$. We often refer to $\unsatp$ as the set of \emph{axioms},
and we say that a subset of axioms $\mathcal{Q}\subseteq\unsatp$
is \emph{satisfiable} if the polynomials in $\mathcal{Q}$ have a 
common root. A \emph{Nullstellensatz refutation} of $\unsatp$ is a certificate that $1$ is in the ideal generated by $\unsatp$, in the form of a polynomial identity
  \begin{equation}
    \label{eq:ns-refutation}
    \sum_{i=1}^m \polyp_i \polyq_i = 1 
  \end{equation}
  where $\polyq_1, \ldots, \polyq_{m}$ are arbitrary polynomials in $\F[x_1, \ldots, x_n]$. 
  
  Nullstellensatz is sound: no certificate \eqref{eq:ns-refutation} can exist if the polynomials in $\unsatp$ have a common root, since assigning it to both sides of \eqref{eq:ns-refutation} results in $0 = 1$. It is well known that Nullstellensatz is complete over any field $\F$ when the variables $x_1, \ldots, x_n$ can only take values in $\{0, 1$\}, meaning that the \emph{Boolean axioms} $\{x_i^2 - x_i : i \in [n]\}$ are present in $\unsatp$.\footnote{More generally, Nullstellensatz is complete over any field whenever the domains of $x_1, \ldots x_n$ are all finite.} The Boolean axioms are present in all sets $\unsatp$ considered in this paper. The primary complexity measure of a Nullstellensatz refutation as in \eqref{eq:ns-refutation} is \emph{degree}, defined as the maximum degree among the polynomials in it.

The proof system considered in this paper is \emph{polynomial calculus}~\cite{CEI96Groebner}, which can be seen as a dynamic version of Nullstellensatz. %
Starting from $\unsatp$, polynomial calculus derives new polynomials in the ideal $\langle \unsatp \rangle$ through two
derivation rules:
\begin{subequations}
\begin{align}
    \label{eq:pc-lin-combn}
  \textit{Linear combination: }
  &\frac{\pcpolyp \quad \pcpolyq}{a\pcpolyp + b\pcpolyq},\; a, b \in \F \, ;\\
    \label{eq:pc-mult}
  \textit{Multiplication: }
  & \frac{\pcpolyp}{\indeterminate_i \pcpolyp},\; \indeterminate_i \text{ any variable.}
\end{align}
\end{subequations}
A \emph{polynomial calculus derivation} of a
polynomial~$\pcpolyp$ from the set~$\unsatp$
is a sequence
of polynomials
$(\pcpolyp_1, \ldots, \pcpolyp_\tau)$, 
where
$\pcpolyp_\tau = \pcpolyp$ and each $\pcpolyp_i$
is either
in~$\unsatp$ or
obtained by applying one of
\eqref{eq:pc-lin-combn}--\eqref{eq:pc-mult}
to polynomials $\pcpolyp_j$ with $j < i$. A \emph{polynomial calculus
refutation of $\unsatp$} is a derivation of
$1$ from $\unsatp$. 

The most common complexity measures
of polynomial calculus refutations are \emph{size} and \emph{degree}.
The \emph{size} of a polynomial $p$ is its number of monomials when
expanded into a linear combination of distinct monomials, and the
\emph{degree} of $\pcpolyp$ is the maximum degree among all
of %
its monomials. The size of a polynomial calculus refutation
$\proofstd$ is the sum of the sizes of the polynomials in~$\proofstd$,
and the degree of $\proofstd$ is the maximum degree among all
polynomials in~$\proofstd$. 

This paper only concerns degree lower bounds for polynomial calculus, but nevertheless we should mention that a strong enough polynomial calculus degree lower bound immediately implies a size lower bound through the \emph{size-degree relation}~\cite{IPS99LowerBounds}: if all polynomials in a set $\unsatp\subseteq \F[x_1, \ldots, x_n]$ have degree at most $\degstd_0$ and any polynomial calculus refutation of $\unsatp$ requires degree at least $\degstd$, then any polynomial calculus refutation of $\unsatp$ requires size $\exp(\Omega((\degstd-\degstd_0)^2/n))$.%

For technical reasons, most papers about polynomial calculus in fact consider the better-behaved proof system \emph{polynomial calculus resolution (PCR)}
\cite{ABRW02SpaceComplexity}, %
where
additionally each variable $x_i$ appearing in~$\unsatp$ has a formal
negation $\overline{x}_i$ enforced by adding the set of polynomials
$\{x_i + \overline{x}_i - 1 : i \in [n]\}$ to $\unsatp$. Polynomial calculus and PCR are
equivalent with respect to degree; %
 hence, we will be lax in differentiating between them in this paper. 

 The most common technique for proving polynomial calculus degree lower bounds is by constructing a \emph{pseudo-reduction operator}, as mentioned in \cref{sec:discussion-of-pf-techniques}.

 \begin{definition}[Pseudo-reduction operator]
  \label{def:r-op}
  Let~$\F$ be a field, let~$\pcdeg \in \N^+$ and let $\unsatp\subseteq \F[x_1, \ldots, x_\nvars]$ be unsatisfiable. An~$\F$-linear
  operator~$\pseudred$ on $\F[x_1, \ldots, x_n]$ 
  is a \emph{degree-$\pcdeg$ pseudo-reduction operator for~$\unsatp$} if
  \begin{enumerate}
  \item~$\pseudred(1) = 1$, \label{thesis:eq:rprop1}%
    \label{item:rop-property-1}
  \item~$\pseudred(\polyp) = 0$ for every polynomial
   ~$\polyp\in\unsatp$,
    \label{item:rop-property-2}
  \item
  ~$\pseudred(x_im) =
    \pseudred\bigl(x_i
    \pseudred(m)\bigr)$ for every monomial~$m$ of degree at most
  $\degstd-1$ and every variable~$x_i$.
    \label{item:rop-property-3}
  \end{enumerate}
\end{definition}
The kernel of a pseudo-reduction operator provides an overapproximation of what can be derived from $\unsatp$ in degree at most~$\pcdeg$ which is fine-grained enough to not contain~$1$. The existence of a degree-$\degstd$ pseudo-reduction operator for a set~$\unsatp$ of polynomials implies a polynomial calculus degree lower bound of $\pcdeg+1$ for refuting~$\unsatp$, as stated in \cref{lem:r-operator}.\footnote{It is folklore that the converse direction also holds, and hence that degree-$\degstd$ pseudo-reduction operators for $\unsatp$ characterize the existence of degree-$\degstd$ polynomial calculus refutations of $\unsatp$. However, to the best of our knowledge, this result has never been published.}%
\begin{lemma}[\cite{Razborov98LowerBound}, Lemma~3.2]
  \label{lem:r-operator}
  Let~$\pcdeg \in \N^+$ and let~$\unsatp$ be a set of polynomials over
 ~$\F[x_1, \ldots, x_\nvars]$. If there exists a degree-$\pcdeg$
  pseudo-reduction operator for~$\unsatp$, then any polynomial calculus
  refutation of~$\unsatp$ over~$\F$ requires degree strictly greater
  than~$\pcdeg$.
\end{lemma}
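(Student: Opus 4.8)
The plan is to use the operator $\pseudred$ as a ``potential function'' that vanishes on everything derivable from $\unsatp$ in low degree, yet is nonzero on the constant $1$. Concretely, I would argue by induction on the length of a polynomial calculus derivation that if $\pi$ is a refutation of $\unsatp$ in which every line has degree at most $\pcdeg$, then $\pseudred(p) = 0$ for every polynomial $p$ occurring in $\pi$. Since the last line of a refutation is $1$ and, by property~\ref{item:rop-property-1} of \cref{def:r-op}, $\pseudred(1) = 1 \neq 0$ in the field $\F$, no such $\pi$ can exist; this is exactly the claimed lower bound that any refutation of $\unsatp$ requires degree strictly greater than $\pcdeg$.

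First I would upgrade property~\ref{item:rop-property-3} of \cref{def:r-op} from monomials to arbitrary low-degree polynomials: for every polynomial $q$ of degree at most $\pcdeg - 1$ and every variable $x_i$, one has $\pseudred(x_i q) = \pseudred\bigl(x_i \pseudred(q)\bigr)$. This follows by writing $q$ as an $\F$-linear combination of monomials, each necessarily of degree at most $\pcdeg - 1$, applying property~\ref{item:rop-property-3} to each of them, and invoking $\F$-linearity of $\pseudred$ on both sides of the identity. In particular, if additionally $\pseudred(q) = 0$, then $\pseudred(x_i q) = \pseudred(x_i \cdot 0) = \pseudred(0) = 0$.

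Then I would carry out the induction over the lines $p_1, \ldots, p_\tau$ of a degree-$(\leq\!\pcdeg)$ refutation $\pi$. If $p_j \in \unsatp$, then $\pseudred(p_j) = 0$ by property~\ref{item:rop-property-2}. If $p_j$ is obtained from earlier lines by the linear combination rule~\eqref{eq:pc-lin-combn}, then $\F$-linearity of $\pseudred$ together with the induction hypothesis gives $\pseudred(p_j) = 0$. If $p_j = x_i p_{j'}$ is obtained by the multiplication rule~\eqref{eq:pc-mult}, then the degree constraint $\deg p_j \leq \pcdeg$ forces $\deg p_{j'} \leq \pcdeg - 1$, so the extended form of property~\ref{item:rop-property-3} from the previous paragraph, applied to $q = p_{j'}$, together with the induction hypothesis $\pseudred(p_{j'}) = 0$, yields $\pseudred(p_j) = 0$. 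Applying this to $p_\tau = 1$ gives $\pseudred(1) = 0$, contradicting property~\ref{item:rop-property-1}.

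I do not expect a genuinely hard step here; this is a classical argument. The only points that need care are the degree bookkeeping in the multiplication case — precisely, that a line of degree at most $\pcdeg$ produced by~\eqref{eq:pc-mult} must come from a line of degree at most $\pcdeg - 1$, which is what licenses the use of property~\ref{item:rop-property-3} — and the routine linear extension of property~\ref{item:rop-property-3} to arbitrary polynomials of degree at most $\pcdeg - 1$. Everything else is immediate from the defining properties of a pseudo-reduction operator and from the two derivation rules of polynomial calculus.
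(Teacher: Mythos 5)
Your proof is correct and follows exactly the approach the paper sketches: apply $\pseudred$ line-by-line along a purported low-degree refutation and conclude by induction that every line is mapped to $0$, contradicting $\pseudred(1)=1$. The paper only gives a two-sentence proof sketch; your write-up fills in precisely the details (the linear extension of property~\ref{item:rop-property-3} and the degree bookkeeping in the multiplication case) that the sketch elides.
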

\begin{proof}[Proof sketch]
Apply
$\pseudred$ to all polynomials in a purported polynomial
calculus refutation of~$\unsatp$ of degree at most~$\pcdeg$. Conclude by induction on refutation length that it is
impossible to reach the constant polynomial~$1$.
\end{proof}

\subsection{Polynomial Encoding of Homomorphisms}
\label{sec:polynomial-encoding-csp}
Given an instance $\rela$ of $\csp{\relt}$, we encode the statement that $\rela \to \relt$ as a set of polynomials $\polyhomo{\rela}{\relt}$ whose common roots correspond precisely to the homomorphisms $\rela \to \relt$. We use Boolean variables $\{x_{a, i}: \verta \in \ala, \verti \in \alt\}$ to indicate that $\verta \in \ala$ is mapped to $\verti \in \alt$. For a field $\F$, we denote the polynomial ring over $\F$ in these variables by $\F[\mathbf{x}_{\ala, \alt}]$. A (partial) function $f \colon A \to T$ is thus encoded as a monomial $m_f = \prod_{(\verta, f(a))} x_{\verta, f(a)}$. For a monomial $m\in \F[\mathbf{x}_{\ala, \alt}]$, we denote the set of vertices in $\ala$ that appear in some variable of $m$ by $\verts{\ala}(m)$. The set $\polyhomo{\rela}{\relt}$ consists of the polynomials in \eqref{eq:boolean-encoding1}--\eqref{eq:boolean-encoding4}. 
    \begin{subequations}
        \begin{align}
          \label{eq:boolean-encoding1}
          \smash{\sum_{\verti\in \alt} x_{\verta, \verti} - 1}
          \qquad &  \verta \in \ala
          &&[\text{every vertex $\verta \in \ala$ mapped to some $\verti\in \alt$}]
          \\
          \label{eq:boolean-encoding2}
          x_{\verta, \verti}x_{\verta, \verti'}
          \qquad &  \verta \in \ala,\ \verti \neq \verti' \in \alt
          &&[\text{no $\verta \in \ala$ mapped to $>1$ vertex $\verti \in \alt$}] 
          \\
          \label{eq:boolean-encoding3}
          \prod_{\verta \in \mathbf{a}} x_{\verta, f(\verta)} 
          \qquad & \begin{array}{l}
            \hspace{-5pt} 
            \relsymbol \in \sigma, \ \mathbf{a} \in \relsymbol^{\rela},\\ 
            \hspace{-5pt} 
            f\colon \mathbf{a} \mapsto f(\mathbf{a}) \in \alt^{\ar(\relsymbol)} \setminus \relsymbol^{\relt} \end{array}
          &&[\text{no tuple in $\relsymbol^{\rela}$ mapped to tuple not in $\relsymbol^{\relt}$}] 
          \\
          \label{eq:boolean-encoding4}
          x_{\verta, \verti}^2 - x_{\verta, \verti}
          \qquad & \verta \in \ala, \ \verti \in \alt
          &&[\text{Boolean axioms}]
        \end{align}   
      \end{subequations}

\subsection{CSP Hierarchies}
\label{sec:csp-hierarchies}
In this section we define the CSP hierarchy algorithms that we consider, namely the \emph{$\kcons$\nobreakdash-consistency} algorithm and its \emph{$\Z$-affine} and \emph{cohomological} strengthenings.

\paragraph{The \textit{k}-Consistency Algorithm:} 
Given an instance $\rela$ of $\csp{\relt}$, the well-known \emph{$\kcons$\nobreakdash-consistency algorithm} accepts $\rela$ if and only if there exists a nonempty collection 
    $\kappa =\set{\kappa(X)\mid X\in\binom{A}{\leq \kcons}}$ 
    which contains, for each \smash{$X\in\binom{\ala}{\leq \kcons}$}, a set $\kappa(X)$ of partial homomorphisms $\rela[X]\to \relt$ such that the following conditions are satisfied for all $Y\subseteq X \in \smash{\binom{\ala}{\leq \kcons}}$:
\begin{description}
    \item \emph{Extendability:} 
    Each homomorphism $\homomphi \in \kappa(Y)$ extends to some $\homompsi\in \kappa(X)$.\footnote{This condition is called \emph{forth} by some authors in constraint satisfaction, and \emph{flasqueness} or \emph{flabbiness} in sheaf terminology.}
    \item \emph{Down-closure:} For every homomorphism $\homomphi\in \kappa(X)$, it holds that $\restrict{\homomphi}{Y} \in \kappa(Y)$.  
\end{description}
A collection $\kappa$ satisfying the extendability and down-closure conditions is called \emph{$\kcons$\nobreakdash-consistent}. Since a union of $\kcons$\nobreakdash-consistent collections is again $\kcons$\nobreakdash-consistent, there is a unique maximal $\kcons$\nobreakdash-consistent collection in any collection of partial homomorphisms. 

\paragraph{$\Z$-Affine \textit{k}-Consistency:}
The \emph{$\Z$-affine $\kcons$\nobreakdash-consistency algorithm} \cite[Section~4.3]{DO19LocalConsistency} is a combination of the $\kcons$\nobreakdash-consistency algorithm and %
an integer affine relaxation
which, for an instance $\rela$ of $\csp{\relt}$, proceeds as follows. First, the $\kcons$\nobreakdash-consistency algorithm is run to obtain a $\kcons$\nobreakdash-consistent collection $\kappa = \set{\kappa(X) \mid X\in \smash{\binom{A}{\leq \kcons}}}$ of partial homomorphisms. Then, the system $L_\kcons(\rela, \relt, \kappa)$ of linear equations in \eqref{eq:zaff-base}--\eqref{eq:zaff-consistency} is solved over $\Z$. The system $L_\kcons(\rela, \relt, \kappa)$ is over the variables {$x_{X, \homomphi} $}, for all {$X\in \smash{\binom{\ala}{\leq \kcons}}$} and all {$\homomphi \in \hom(\rela[X], \relt)$}. 
    \begin{subequations}
        \begin{align}
            \label{eq:zaff-base}
            \sum_{
                \substack{
                    \homomphi \in \kappa(X)
            }}
            x_{X, \homomphi} &= 1
             &&\text{for all $X\in \binom{A}{\leq \kcons}$} \\
            \label{eq:zaff-consistency}
            \sum_{\homomphi \in \kappa(X), \restrict{\homomphi}{Y} = \homompsi}x_{X, \homomphi} &= x_{Y, \homompsi} 
            &&\text{for all $Y \subseteq X\in \binom{A}{\leq k}$ and $\homompsi \in \kappa(Y)$}
        \end{align}
    \end{subequations}
The $\Z$-affine $\kcons$\nobreakdash-consistency algorithm accepts $\rela$ if and only if there exists a solution to $L_\kcons(\rela, \relt)$ for some $\kcons$\nobreakdash-consistent set $\kappa$ (without loss of generality, the maximal such $\kappa$). %
 Note that if $\kcons$ is at least the maximal arity among the relations in $\sigma$, then the equations \eqref{eq:zaff-base}--\eqref{eq:zaff-consistency} have a $\{0, 1\}$-valued solution over the full set of partial homomorphisms of domain size at most $\kcons$ (which is not necessarily $\kcons$\nobreakdash-consistent) if and only if $\rela \to \relt$.

\paragraph{Cohomological \textit{k}-Consistency:}
The \emph{cohomological $k$-consistency} algorithm due to \'O~Conghaile~\cite{OConghaile22Cohomology} further strengthens the $\Z$-affine $k$-consistency algorithm. The algorithm was originally formulated in sheaf-theoretic language but can also be defined in more elementary terms as follows. %
    \begin{enumerate}
      \item[]\hspace*{-\leftmargin}\textit{Cohomological $k$-consistency algorithm for $\csp{\relt}$ on an instance $\rela$}:
    \item Maintain a collection $\mathcal H=\set{\mathcal H(X)\mid X\in \smash{\binom{A}{\leq k}}}$ of partial homomorphisms $\rela\to\relt$. 
    For each $X\in \binom{A}{\leq k}$, initialize $\mathcal{H}(X) = \hom(\rela[X], \relt)$.
    \item Repeat until each set $\mathcal{H}(X)$ stabilizes:
    \begin{enumerate}
    \item %
    Find the maximal $\kcons$\nobreakdash-consistent collection $\kappa$ in $\mathcal H$.  Update $\mathcal H \leftarrow \kappa$. %
    \item For each $X \in \binom{A}{\leq k}$ and $\homomphi \in \mathcal{H}(X)$, check whether $L_k(\rela, \relt, \mathcal{H})$ has a solution satisfying $x_{X, \homomphi} = 1$ and $x_{X, \homompsi} = 0$ for every $\homompsi\in \mathcal{H}(X)\setminus \{\homomphi\}$. If not, remove $\homomphi$ from $\mathcal{H}(X)$ for the next iteration of the loop.
    \end{enumerate} 
    \item If $\mathcal{H}(X) = \varnothing$ for some $X\in \binom{A}{\leq k}$, then reject; otherwise accept.  
    \end{enumerate}

The $\kcons$\nobreakdash-consistency algorithm always correctly decides whether $\rela \to \relt$ when $\kcons= \setsize{\ala}$ (but then does not necessarily run in time polynomial in $\rela$). Hence, the same is true for the $\Z$-affine and cohomological $\kcons$\nobreakdash-consistency algorithms, and a level lower bound of $\Omega(\setsize{\ala})$ is therefore asymptotically optimal for all three hierarchies.

It is not hard to see that all algorithms for $\csp{\relt}$ in this section always accept $\rela$ if $\rela \to \relt$. To prove that these algorithms do not solve $\csp{\relt}$, we should therefore exhibit a \emph{fooling instance}: a structure $\rela$ such that $\rela \not \to \relt$, but such that the algorithms still accept $\rela$. To show that a hierarchy does not solve a PCSP with template $(\rels, \relt)$, we should find an instance $\rela$ such that $\rela \not\to \relt$, but such that the hierarchy for $\csp{\rels}$ accepts $\rela$. This setting is highly reminiscent of that in proof complexity, and in Sections \ref{sec:csp-hiearchies-ipr-operators} and \ref{sec:ar-method-cohomological-consistency} we show how techniques from proof complexity can be leveraged to construct fooling instances.

\section{CSP Hierarchies and Pseudo-Reduction Operators}
\label{sec:csp-hiearchies-ipr-operators}
In this section, we show that the existence of a pseudo-reduction operator satisfying a certain integrality condition implies level lower bounds for the %
$\Z$-affine $\kcons$\nobreakdash-consistency hierarchy. In \cref{sec:ar-method-cohomological-consistency}, we provide a method for constructing such pseudo-reduction operators; this construction also lets us extend our lower bounds to hold for the cohomological $\kcons$\nobreakdash-consistency algorithm. 

We are particularly interested in pseudo-reduction operators $\pseudred$ for $\polyhomo{\rela}{\relt}$ on $\F[\mathbf{x}_{\ala, \alt}]$ where $\F$ is of characteristic zero and hence contains $\Z$ as a subring. If restricting the domain of $\pseudred$ to $\Z[\mathbf{x}_{\ala, \alt}]$ results in an operator $\Z[\mathbf{x}_{\ala, \alt}] \to \Z[\mathbf{x}_{\ala, \alt}]$, which by linearity is equivalent to $\pseudred$ mapping all monomials to polynomials in $\Z[\mathbf{x}_{\ala, \alt}]$, we say that $\pseudred$ is an \emph{integral} pseudo-reduction operator, or \emph{IPR} operator for short. For a degree-$\degstd$ pseudo-reduction operator  $\pseudred$ for $\polyhomo{\rela}{\relt}$, the \emph{$\pseudred$-support} is the collection $\kappa_\pseudred = \{\kappa_\pseudred(X): {X\in \binom{\ala}{\leq \kcons}}\}$ where $\kappa_\pseudred(X)= \{\homomphi \in \hom(\rela[X], \relt) \mid \pseudred(m_\homomphi) \neq 0\}$.

The main result of this section is \cref{lem:accepted-by-z-aff-cons}.

\begin{lemma}
    \label{lem:accepted-by-z-aff-cons}
    If $\pseudred$ is a degree-$\degstd$ pseudo-reduction operator for $\polyhomo{\rela}{\relt}$, then $\kappa_\pseudred$ is $\kcons$\nobreakdash-consistent for all $\kcons\leq \degstd$. Moreover, if $\pseudred$ is integral, then for any linear $h: \Z[\mathbf{x}_{\ala, \alt}]\to \Z$ satisfying $h(1)=1$, the assignment to the variables $x_{X, \homomphi}$ of \eqref{eq:zaff-base}--\eqref{eq:zaff-consistency} which sets $x_{X, \homomphi}$ to $h \circ R(m_\homomphi)$ for $X \in \binom{\ala}{\leq \degstd}$ and $\homomphi \in \hom(\rela[X], \relt)$ is a solution to \eqref{eq:zaff-base}--\eqref{eq:zaff-consistency} over $\Z$. 
    
    Consequently, if there exists a degree-$D$ IPR operator for $\polyhomo{\rela}{\relt}$, then the $\Z$-affine $\kcons$\nobreakdash-consistency algorithm for~$\csp{\relt}$ accepts $\rela$ for all $\kcons \leq \degstd$. 
\end{lemma}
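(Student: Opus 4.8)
The plan is to prove the three assertions in order --- that $\kappa_\pseudred$ is $\kcons$\nobreakdash-consistent, that $x_{X,\homomphi}\mapsto h(\pseudred(m_\homomphi))$ solves \eqref{eq:zaff-base}--\eqref{eq:zaff-consistency} over $\Z$, and, as an immediate consequence, that the $\Z$-affine $\kcons$\nobreakdash-consistency algorithm accepts $\rela$. The engine will be two routine consequences of \cref{def:r-op}. First, a \emph{commutation rule}: $\pseudred(x_i\pcpolyp)=\pseudred\bigl(x_i\pseudred(\pcpolyp)\bigr)$ for every variable $x_i$ and every polynomial $\pcpolyp$ with $\deg(\pcpolyp)\leq\degstd-1$, obtained from the third condition of \cref{def:r-op} by expanding $\pcpolyp$ into monomials and using $\F$-linearity. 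Second, a \emph{stability rule}: if $\pseudred(\pcpolyp)=0$ and $\deg(\pcpolyp\pcpolyq)\leq\degstd$, then $\pseudred(\pcpolyp\pcpolyq)=0$; this follows by reducing to the case where $\pcpolyq$ is a monomial and inducting on its degree, peeling off one variable at a time with the commutation rule. In particular, since $\pseudred$ annihilates every polynomial in $\polyhomo{\rela}{\relt}$ by the second condition of \cref{def:r-op}, stability kills $m\cdot g$ for any monomial $m$ and any $g\in\polyhomo{\rela}{\relt}$ with $\deg(mg)\leq\degstd$; this is the main tool for eliminating unwanted monomials through the axioms \eqref{eq:boolean-encoding1} and \eqref{eq:boolean-encoding3}.

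For the first assertion, \emph{down-closure} is immediate: if $\homomphi\in\kappa_\pseudred(X)$ and $Y\subseteq X$, then $m_\homomphi=\bigl(\prod_{\verta\in X\setminus Y}x_{\verta,\homomphi(\verta)}\bigr)\cdot m_{\restrict{\homomphi}{Y}}$ has degree $\setsize{X}\leq\kcons\leq\degstd$, so $\pseudred(m_{\restrict{\homomphi}{Y}})=0$ would force $\pseudred(m_\homomphi)=0$ by stability, contradicting $\homomphi\in\kappa_\pseudred(X)$; hence $\restrict{\homomphi}{Y}\in\kappa_\pseudred(Y)$. For \emph{extendability} it suffices to extend by one vertex and iterate. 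Take $\homomphi\in\kappa_\pseudred(Y)$ with $\setsize{Y}<\setsize{X}\leq\kcons\leq\degstd$ --- so $\deg(m_\homomphi)=\setsize{Y}\leq\degstd-1$ --- and $\verta\in X\setminus Y$. Since $\sum_{\verti\in\alt}x_{\verta,\verti}-1\in\polyhomo{\rela}{\relt}$ and $\setsize{Y}+1\leq\degstd$, stability gives $\pseudred\bigl(m_\homomphi\cdot(\sum_{\verti}x_{\verta,\verti}-1)\bigr)=0$, so $\sum_{\verti\in\alt}\pseudred(x_{\verta,\verti}m_\homomphi)=\pseudred(m_\homomphi)\neq0$ and there is some $\verti_0$ with $\pseudred(x_{\verta,\verti_0}m_\homomphi)\neq0$. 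Now $x_{\verta,\verti_0}m_\homomphi=m_{\homomphi'}$ where $\homomphi'$ is $\homomphi$ extended by $\verta\mapsto\verti_0$; and $\homomphi'$ must be a homomorphism, because any constraint of $\rela$ on a tuple contained in $Y\cup\{\verta\}$ that $\homomphi'$ violated would contribute a monomial of $\polyhomo{\rela}{\relt}$ (an instance of \eqref{eq:boolean-encoding3}) dividing $m_{\homomphi'}$, whence $\pseudred(m_{\homomphi'})=0$ by stability. Thus $\homomphi'\in\kappa_\pseudred(Y\cup\{\verta\})$, and iterating up to $X$ proves extendability --- the degree budget never binds, since the single-vertex step is applied only to domains of size at most $\setsize{X}-1\leq\degstd-1$. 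Applying extendability with $Y=\varnothing$ shows every $\kappa_\pseudred(X)$ is nonempty, so $\kappa_\pseudred$ is genuinely $\kcons$\nobreakdash-consistent.

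For the second assertion, suppose $\pseudred$ is integral and fix a linear $h\colon\Z[\mathbf{x}_{\ala,\alt}]\to\Z$ with $h(1)=1$; then each $h(\pseudred(m_\homomphi))$ lies in $\Z$. Because $\pseudred(m_\homomphi)=0$ when $\homomphi\notin\kappa_\pseudred(X)$, every sum over $\kappa_\pseudred(X)$ in \eqref{eq:zaff-base}--\eqref{eq:zaff-consistency} has the same value as the corresponding sum over all of $\hom(\rela[X],\relt)$, and then $h$ and $\pseudred$ pull outside by linearity. So it remains to evaluate $\pseudred\bigl(\sum_\homomphi m_\homomphi\bigr)$, where $\homomphi$ ranges over all homomorphisms $\rela[X]\to\relt$ restricting to a fixed $\homompsi\colon\rela[Y]\to\relt$ on $Y\subseteq X$ (the case $Y=\varnothing$, $\homompsi$ the empty map, covering \eqref{eq:zaff-base}). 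Expanding $m_\homompsi\prod_{\verta\in X\setminus Y}\bigl(\sum_{\verti\in\alt}x_{\verta,\verti}\bigr)$ as $\sum_f m_f$ over \emph{all} functions $f\colon X\to\alt$ with $\restrict{f}{Y}=\homompsi$ yields $\sum_\homomphi m_\homomphi=m_\homompsi\prod_{\verta\in X\setminus Y}\bigl(\sum_{\verti}x_{\verta,\verti}\bigr)-\sum_f m_f$, the last sum ranging over the non-homomorphisms; $\pseudred$ annihilates that sum because each such $m_f$, of degree $\setsize{X}\leq\degstd$, is divisible by an instance of \eqref{eq:boolean-encoding3}. Writing $\sum_\verti x_{\verta,\verti}=1+g_\verta$ with $g_\verta\in\polyhomo{\rela}{\relt}$ and expanding, every term other than $m_\homompsi$ itself is a polynomial multiple of some $g_\verta$ of total degree at most $\setsize{X}\leq\degstd$, hence killed by stability, so $\pseudred\bigl(\sum_\homomphi m_\homomphi\bigr)=\pseudred(m_\homompsi)$ --- which equals $1$ when $Y=\varnothing$, as $\pseudred(1)=1$. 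Consequently the left-hand side of \eqref{eq:zaff-consistency} equals $h(\pseudred(m_\homompsi))=x_{Y,\homompsi}$ and that of \eqref{eq:zaff-base} equals $h(1)=1$, so the assignment solves the system over $\Z$. Finally, given a degree-$\degstd$ IPR operator and any $\kcons\leq\degstd$, $\kappa_\pseudred$ is $\kcons$\nobreakdash-consistent by the first part, and taking $h$ to be the $\Z$-linear map returning the constant term of a polynomial, the second part produces an integer solution of $L_\kcons(\rela,\relt,\kappa_\pseudred)$; hence the $\Z$-affine $\kcons$\nobreakdash-consistency algorithm accepts $\rela$.

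I expect the trickiest point to be the degree accounting in the extendability argument --- in particular the inequality $\setsize{Y}\leq\kcons-1\leq\degstd-1$, which is exactly what makes the single-variable commutation and stability steps legal at every iteration --- together with the verification that a monomial $x_{\verta,\verti_0}m_\homomphi$ surviving under $\pseudred$ encodes a genuine homomorphism and not merely a function, which is where the constraint axioms \eqref{eq:boolean-encoding3} come in. Once the stability rule is in hand, the computation behind the second assertion is essentially bookkeeping.
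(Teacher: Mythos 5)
Your proposal is correct and follows essentially the same approach as the paper's proof. The only real difference is organizational: you factor out an explicit ``stability rule'' (multiples of kernel elements of bounded degree stay in the kernel) and then verify \eqref{eq:zaff-base} and \eqref{eq:zaff-consistency} in one unified telescoping expansion of $m_\homompsi\prod_{\verta\in X\setminus Y}(1+g_\verta)$, whereas the paper proves \eqref{eq:zaff-base} by induction on $\setsize{X}$ and then derives \eqref{eq:zaff-consistency} from it --- the two computations unroll to the same thing, and your version is arguably a small streamlining.
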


\begin{proof}
    We first observe that $\kappa_R$ in fact contains the whole support of $\pseudred$ on monomials of degree at most $\degstd$. To see this, note that every monomial $m$ in the variables of $P_{\rela\to\relt}$ encodes a relation $\relsymbol_m \subseteq A \times T$, where $(a, i) \in A\times T$ is contained in $\relsymbol_m$ if and only if $x_{a, i} \in m$, with domain size at most $D$. If $\relsymbol_m$ is not a partial homomorphism $\rela \to \relt$, then $\relsymbol_m$ either maps some $\verta\in \ala$ to multiple $i \in T$ or $\relsymbol_m$ maps a tuple in a relation $\relsymbol^{\rela}$ in $\rela$ to a tuple which is not in $\relsymbol^\relt$. In both cases, we have that $\pseudred(m) = 0$ since $m$ is then a multiple of an axiom in \eqref{eq:boolean-encoding2} or \eqref{eq:boolean-encoding3}. Therefore, the support of $\pseudred$ consists of monomials that encode partial homomorphisms in $A$. 
    
    We first show that the collection $\kappa_R$ is $\kcons$\nobreakdash-consistent for all $\kcons \leq \degstd$ and then show that the output of $\pseudred$ satisfies \eqref{eq:zaff-base}--\eqref{eq:zaff-consistency} \emph{as polynomials}, using only the homomorphisms in $\kappa_R$. Therefore, composing $\pseudred$ with a linear map $h\colon \Z[\mathbf{x}_{\ala, \alt}]\to \Z$ evaluating to $1$ on $1$ yields a solution over $\Z$ to \eqref{eq:zaff-base}--\eqref{eq:zaff-consistency}. 
    
    To prove that $\kappa_R$ is $\degstd$-consistent, we need to show that $\kappa_R$ is non-empty and satisfies the extendability and down-closure conditions for all $\kcons \leq \degstd$. For the down-closure condition we use the contrapositive. Recall that for a (partial) homomorphism $\homomphi \colon \rela \to  \relt$, we denote the monomial $\smash{\prod_{(\verta, \homomphi(\verta))}} x_{\verta, \homomphi(\verta)}$ by $m_\homomphi$. Note that for $Y \subsetneq X \in \smash{\binom{A}{\leq k}}$ and homomorphisms $\homomphi \colon \rela[X] \to \relt$ and $\homompsi \colon \rela[Y] \to \relt$ such that $\restrict{\homomphi}{Y} = \homompsi$, it holds that $m_\homomphi$ is a multiple of $m_\homompsi$. Therefore, if $\pseudred(m_\homompsi) = 0$ then by Property~\ref{item:rop-property-3} of a pseudo-reduction operator we have
    \begin{equation}
    R(m_\homomphi) = R(m'\cdot m_\homompsi) = R(m'\cdot R(m_\homompsi)) = R(0) = 0\eqcomma
    \end{equation}
    so if $\homompsi\not\in \kappa_R(Y)$ then for any $X\supseteq Y$, no extension of $\homompsi$ is is in $\kappa_R(X)$.  

    To establish the extendability condition, let $Y \subsetneq X \in \binom{A}{\leq k}$ where $X = Y \union \{a\}$. For any homomorphism $\homompsi \colon \rela[Y] \to \relt$ in $\kappa_R[Y]$, we have by Property~\ref{item:rop-property-3} of a pseudo-reduction operator that
    \begin{equation}
        R\Big(\Big(\sum_{\verti \in \alt} x_{\verta, \verti} - 1\Big) \cdot m_\homompsi \Big) = R\Big(R\Big(\sum_{\verti \in \alt} x_{\verta, \verti} - 1\Big) \cdot m_\homompsi \Big) = 0 \eqcomma
    \end{equation} 
    and by linearity we also have
    \begin{equation}
        R\Big(\Big(\sum_{i \in \alt} x_{\verta, \verti} - 1\Big)\cdot m_\homompsi\Big)  = \sum_{\verti \in \alt} R(x_{\verta, \verti}\cdot m_\homompsi) - R(m_\homompsi)\eqperiod  
    \end{equation}
    Since $\homompsi \in\kappa_R$ and hence $\pseudred(m_\homompsi) \neq 0$, it follows that at least one of the extensions $x_{a, i}m_\homompsi$ of $\homompsi$ to $X$ is also in $\kappa_R$. With the extendability and down-closure conditions established, we conclude that $\kappa_R$ is $\kcons$\nobreakdash-consistent.
    
    Now, we show that the output of $\pseudred$ satisfies \eqref{eq:zaff-base}--\eqref{eq:zaff-consistency}. For~\eqref{eq:zaff-base}, we use induction on $\setsize{X}$. The base case, where $\setsize{X} = 0$, follows since $R(1)=1$. %
    For the induction step, let $S(X)$ denote the sum $\sum_{\homomphi \in k_R(X)} m_\homomphi$ for $\kcons \leq \degstd$ and suppose that for all sets $Y \in \smash{\binom{A}{\leq \kcons-1}}$ we have that $\pseudred(S(Y)) = 1$. For any set $X = Y \union \{a\}$ of size $\kcons$, note that all summands in $S(X)$ are contained in the sum \smash{$\sum_{i \in T} x_{a, i}\cdot S(Y)$} by $\kcons$\nobreakdash-consistency, and by Property~\ref{item:rop-property-3} of a pseudo-reduction operator we have
    \begin{equation}
        R\Big(\sum_{i \in T} x_{a, i} \cdot (S(Y) - 1)\Big) = R\Big(\sum_{i \in T} x_{a, i} \cdot R(S(Y) - 1)\Big) = 0\eqperiod
    \end{equation}
    Moreover, we have by linearity that
    \begin{subequations}
        \begin{align}
        R\Big(\sum_{i \in T} x_{a, i} \cdot (S(Y) - 1)\Big) 
        &= R\Big(\sum_{i \in T} x_{a, i} \cdot S(Y)\Big) - R\Big(\sum_{i \in T} x_{a, i}\Big) \\
        &= R\Big(\sum_{i \in T} x_{a, i} \cdot S(Y)\Big) - 1 &&[\text{induction hypothesis}]\\
        &= R\Big(S(X)\Big) - 1\eqcomma &&[\text{definition of $\kappa_R$}]
        \end{align}
    \end{subequations}
    which proves that \eqref{eq:zaff-base} is satisfied when composing $\pseudred$ with any integer-valued $\F$-linear map $h$ such that $h(1)=1$. 
    To establish \eqref{eq:zaff-consistency}, first observe that for $Y \subsetneq X \in \binom{A}{\leq k}$ we have that $\pseudred(S(X\setminus Y)) = 1$. Fix a homomorphism $\homompsi: \rela[Y] \to \relt$ in $\kappa_R$. By Property~\ref{item:rop-property-3} of a pseudo-reduction operator, we have
    \begin{equation}
        R\big((S(X\setminus Y)-1)\cdot m_\homompsi\big) = R(R((S(X\setminus Y)-1))\cdot m_\homompsi) = 0
    \end{equation}  
    and moreover %
    \begin{subequations}
        \begin{align}
        0 &= R\big((S(X\setminus Y)-1)\cdot m_\homompsi\big) \\
        &= \Bigl(\sum_{\homomu \in \kappa_R(X\setminus Y)} R(m_\homomu \cdot m_\homompsi)\Bigr) - R(m_\homompsi) \\
        &= \Bigl(\sum_{\homomphi\in \kappa_R(X): \restrict{\homomphi}{Y} = \homompsi}R(m_\homomphi)\Bigr) - R(m_\homompsi) &&[\text{definition of $\kappa_R$}]
        \end{align}
    \end{subequations}
    which establishes \eqref{eq:zaff-consistency}. Hence, again composing $\pseudred$ with any linear map $h: \Z[\mathbf{x}_{\ala, \alt}]\to \Z$ such that $h(1)=1$ furnishes a level-$D$ solution for the $\Z$-affine $k$-consistency algorithm. \qedhere

\end{proof}

\begin{remark}
    \label{rmk:ipr-operators-work-every-field}
    The notion of an IPR operator $\pseudred$ for a set $\unsatp$ is only well-defined over fields of characteristic zero. Note, however, that the existence of such an operator $\pseudred$ over any field of characteristic zero implies a polynomial calculus degree lower bound for refuting $\unsatp$ over \emph{all} fields, even those of finite (prime) characteristic $\ell$. This is because $((\cdot) \mod {\ell}) \circ \pseudred$ is then a pseudo-reduction operator over the finite field $\F_\ell$, and every field of characteristic $\ell$ contains $\F_\ell$ as a subfield.  
\end{remark}

\section{From Pseudo-Reduction To Cohomological \emph{\kcons}-Consistency}
\label{sec:ar-method-cohomological-consistency}
In \cref{sec:csp-hiearchies-ipr-operators}, we showed that the existence of a degree-$\degstd$ IPR operator for $\polyhomo{\rela}{\relt}$ implies that the $\Z$-affine $\kcons$\nobreakdash-consistency algorithm for $\csp{\relt}$ accepts $\rela$ for all $\kcons\leq \degstd$. This section introduces the \emph{Alekhnovich--Razborov method}~\cite{Razborov98LowerBound,AR03LowerBounds} for constructing such operators. The proof that operators produced by this method are integral is deferred to \cref{sec:ipr-operators-lex-game}. Given integrality, this section concludes by showing that a degree-$\degstd$ pseudo-reduction operator for $\polyhomo{\rela}{\relt}$ constructed using the Alekhnovich--Razborov method can be used to prove that $\rela$ is a fooling instance for the cohomological $\kcons$\nobreakdash-consistency algorithm for $\csp{\relt}$; this is formally established in \cref{lem:cohomological-k-consistency-lower-bounds}. %

The treatment of the Alekhnovich--Razborov method in this section partly follows that of \cite[Section~4.2]{CdRNPR25GraphColouring} but is specialized to sets of polynomials $\polyhomo{\rela}{\relt}$ as in \cref{sec:polynomial-encoding-csp}. For more general perspectives, see~\cite{AR03LowerBounds} and the more recent papers \cite{MN24Generalized,CdRNPR25GraphColouring}. 

Recall from \cref{sec:discussion-of-pf-techniques} that if $\rela \to \relt$, %
then the true reduction operator $\redop_{\ideal{\polyhomo{\rela}{\relt}}}$ is a degree\nobreakdash-$\degstd$ pseudo-reduction operator for all $\degstd$. For structures $\rela \not\to \relt$, the Alekhnovich--Razborov method constructs operators that attempt to mimic the behavior of $\redop_{\ideal{\polyhomo{\rela}{\relt}}}$ as closely as possible while mapping $1$ to $1$. The construction is as follows: Let $\prec$ be an admissible ordering on the monomials in $\F[\mathbf{x}_{A, T}]$. To each monomial $m$, associate a \emph{satisfiable} set $S(m) \subseteq \polyhomo{\rela}{\relt}$ and define $\pseudred(m)$ to be $\redop^{\prec}_{\ideal{S(m)}}(m)$. Finally, define $\pseudred$ on arbitrary polynomials by linear extension. 

For a set of polynomials $\polyhomo{\rela}{\relt}$ and a monomial $m = \prod x_{\verta, \verti}$ in $\F[\mathbf{x}_{A, T}]$, a natural choice for $S(m)$ is a set $\polyhomo{\rela[X_{m}]}{\relt} \subseteq \polyhomo{\rela}{\relt}$, where $X_m\subseteq \ala$ depends on $m$ and $\rela[X_m]$ is a satisfiable substructure of $\rela$. In addition, we require that $X_m$ only depends on the vertices of $\ala$ mentioned by $m$. To obtain lower bounds for the cohomological $\kcons$\nobreakdash-consistency algorithm, it is necessary for our proof that $S(m)$ be of this form; see \cref{lem:cohomological-k-consistency-lower-bounds}. Therefore, we only consider such sets $\sop(m)$ and again refer the reader to~\cite[Section~4.2]{CdRNPR25GraphColouring} for a more general treatment. Given a set $X\subseteq \ala$, we denote the ideal $\ideal{\polyhomo{\rela[X]}{\relt}}$ by $\ideal{X}$ for brevity when the relational structures $\rela$ and $\relt$ are clear from context. 

As previously noted, the substructure $\rela[X_{m}]$ should be the part of $\rela$ that is ``relevant'' to $m$. The next definition takes the first step toward formalizing this intuition by putting some reasonable constraints on $\sop$, which are precisely those of the well-known notion of a \emph{closure operator}. A closure operator takes each set $X\subseteq A$ to its \emph{closure} $\cl{X}$, which the smallest ``closed'' set containing $X$. Here, a closed set $C\subseteq \ala$ with respect to $\rela$ and $\relt$ is one where the substructure $\rela[C]$ contains the part of $\rela$ that is ``relevant'' for reducing monomials that mention only vertices in $C$ modulo $\ideal{\polyhomo{\rela}{\relt}}$, in a sense we make precise in \cref{lem:reducibility-gives-reduction-operator}. In general, the definition of a closed set varies with the application; see Sections \ref{sec:approximate-graph-coloring} and \ref{sec:lax-and-null-constraining} for two examples.
\begin{definition}[Size-$\degstd$ closure operator]
    Given a set $A$, a map $\operatorname{cl} \colon 2^{\ala} \to 2^{\ala}$ is a \emph{size-$\degstd$ closure operator} if the following conditions hold for all sets $X, Y \subseteq A$ of size at most $\degstd$.  
    \begin{description}
        \item Self-containment: $X \subseteq \cl{X}$. 
        \label{item:monotonicity}
        \item Monotonicity: If $X\subseteq Y$, then $\cl{X} \subseteq \cl{Y}$.
        \label{item:idempotence} 
        \item Idempotence: $\cl{\cl{X}} = \cl{X}$.  
    \end{description}
    If these conditions hold for $\degstd = \setsize{\ala}$, we say that $\operatorname{cl}$ is a \emph{closure operator.}
\end{definition}

Given a closure operator $\cl{\cdot}$ on $\ala$, we define the closure of a monomial $m\in \F[\mathbf{x}_{\ala, \alt}]$ to be the set $\cl{m} = \cl{\verts{\ala}(m)} \subseteq \ala$. The next lemma formalizes that the closure of a monomial should be the part of $\ala$ such that $\rela[\cl{m}]$ contains all ``relevant'' information about reducing $m$ modulo $\ideal{\polyhomo{\rela}{\relt}}$, in the sense that adding a small set to $\cl{m}$ does not change the result of reduction. We formulate this property as the so-called \emph{reducibility condition}, following \cite{CdRNPR25GraphColouring}.

\begin{lemma}
    \label{lem:reducibility-gives-reduction-operator}
    Let $\degstd \in \N$, let $\sigma$ be a signature whose relation symbols all have arity at most $\degstd$, let $\relt$ be a $\sigma$-structure and $\rela$ an instance of $\csp{\relt}$, and let $\operatorname{cl} \colon 2^\ala \to 2^\ala$ be a size-$\degstd$ closure operator for $\ala$. Finally, let $\prec$ be an admissible order of the monomials in $\F[\mathbf{x}_{A, T}]$. Suppose that the following conditions hold for every monomial $m \in \F[\mathbf{x}_{\ala, \alt}]$ of degree at most $\degstd$:
    \begin{description}
        \item \emph{Satisfiability:} It holds that $\rela[\cl{m}] \to \relt$.
        \item \emph{Reducibility:} For every monomial $m'$ such that $\cl{m'} \subseteq \cl{m}$, it holds that $m'$ is reducible modulo $\ideal{\cl{m'}}$ if and only if $m'$ is reducible modulo $\ideal{\cl{m}}$.
    \end{description}
    Then, the $\F$-linear extension $\pseudred$ of the map $m  \mapsto \redop^\prec_{\ideal{\cl{m}}}(m)$ is a degree-$\degstd$ pseudo-reduction operator for $\polyhomo{\rela}{\relt}$. 
\end{lemma}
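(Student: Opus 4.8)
The plan is to verify the three defining properties of a degree-$\degstd$ pseudo-reduction operator from \cref{def:r-op} for the operator $\pseudred$ that acts on monomials by $m \mapsto \redop^\prec_{\ideal{\cl{m}}}(m)$ and is extended $\F$-linearly. The technical core is a \emph{locality} statement: whenever $m'$ and $m$ are monomials of degree at most $\degstd$ with $\cl{m'} \subseteq \cl{m}$, one has $\redop^\prec_{\ideal{\cl{m'}}}(m') = \redop^\prec_{\ideal{\cl{m}}}(m')$; equivalently, $\pseudred(m')$ may be computed by reducing $m'$ modulo the \emph{larger} ideal $\ideal{\cl{m}}$. Granting this, all three properties follow by reducing everything in sight modulo a single, cleverly chosen ideal and then invoking the elementary fact that the reduction operator modulo a fixed ideal depends only on the coset --- so that, in particular, it already satisfies an analogue of Property~\ref{item:rop-property-3}.

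To prove the locality statement, first note that $\cl{m'} \subseteq \cl{m}$ implies $\polyhomo{\rela[\cl{m'}]}{\relt} \subseteq \polyhomo{\rela[\cl{m}]}{\relt}$ and hence $\ideal{\cl{m'}} \subseteq \ideal{\cl{m}}$. I would induct on $m'$ with respect to $\prec$. If $m'$ is irreducible modulo $\ideal{\cl{m}}$, then the reducibility condition makes it irreducible modulo $\ideal{\cl{m'}}$ as well, so both reductions equal $m'$. Otherwise, reducibility together with the ideal containment produces a $q \in \ideal{\cl{m'}}$ whose leading term is $m'$ with coefficient $1$, so that $m' - q$ is a combination of monomials strictly below $m'$. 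Since the generators of $\polyhomo{\rela[\cl{m}]}{\relt}$ mention only vertices of $\cl{m}$, and since reduction never introduces a monomial of degree exceeding $\degstd$ (inspect \eqref{eq:boolean-encoding1}--\eqref{eq:boolean-encoding4}), every monomial $t$ arising in the reductions under consideration has degree at most $\degstd$ and, by self-containment, monotonicity, and idempotence of $\cl{\cdot}$, satisfies $\cl{t} \subseteq \cl{m}$; so the reducibility condition and the induction hypothesis both apply to it. As $\redop^\prec_{\ideal{\cl{m'}}}(m') = \redop^\prec_{\ideal{\cl{m'}}}(m'-q)$ and likewise modulo $\ideal{\cl{m}}$, applying the induction hypothesis termwise to $m'-q$ closes the induction. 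This is the step I expect to be the main obstacle: running the induction while keeping both the degree and the closure of every intermediate monomial under control is precisely where all the closure-operator axioms and the concrete shape of $\polyhomo{\rela}{\relt}$ are used.

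Next I would dispatch Properties~\ref{item:rop-property-1} and~\ref{item:rop-property-2}. For Property~\ref{item:rop-property-1}: $\verts{\ala}(1) = \varnothing$, so satisfiability gives $\rela[\cl{\varnothing}] \to \relt$; hence the generators of $\ideal{\cl{\varnothing}}$ have a common root, and by the Boolean Nullstellensatz (\cref{lem:boolean-ns}) $1 \notin \ideal{\cl{\varnothing}}$, so $1$ is irreducible and $\pseudred(1) = 1$. For Property~\ref{item:rop-property-2}, let $p$ be an axiom and let $X_p \subseteq \ala$ be the set of (at most $\degstd$) vertices it mentions; then $p$ is literally an axiom of $\rela[\cl{X_p}]$, so $p \in \ideal{\cl{X_p}}$, and satisfiability gives $\rela[\cl{X_p}] \to \relt$, whence $\redop^\prec_{\ideal{\cl{X_p}}}(1) = 1$. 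Every monomial of $p$ has closure contained in $\cl{X_p}$ --- for the single constant monomial in \eqref{eq:boolean-encoding1} this uses monotonicity, and every other monomial of \eqref{eq:boolean-encoding1}--\eqref{eq:boolean-encoding4} has closure exactly $\cl{X_p}$ --- so $\pseudred$ and $\redop^\prec_{\ideal{\cl{X_p}}}$ agree on each monomial of $p$ (on the constant $1$ because both map it to $1$). By $\F$-linearity, $\pseudred(p) = \redop^\prec_{\ideal{\cl{X_p}}}(p) = 0$.

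Finally, for Property~\ref{item:rop-property-3}, fix a monomial $m$ of degree at most $\degstd - 1$ and a variable $x_{\verta,\verti}$, and set $X := \cl{x_{\verta,\verti} m}$, so that $\pseudred(x_{\verta,\verti} m) = \redop^\prec_{\ideal{X}}(x_{\verta,\verti} m)$. By locality, $\pseudred(m) = \redop^\prec_{\ideal{X}}(m)$ (using $\cl{m} \subseteq X$), and since $\redop^\prec_{\ideal{X}}(m)$ is supported on $X$ while $\verta \in X$ too, locality again gives $\pseudred(x_{\verta,\verti} t) = \redop^\prec_{\ideal{X}}(x_{\verta,\verti} t)$ for every monomial $t$ of $\redop^\prec_{\ideal{X}}(m)$; hence $\pseudred\bigl(x_{\verta,\verti}\,\pseudred(m)\bigr) = \redop^\prec_{\ideal{X}}\bigl(x_{\verta,\verti}\,\redop^\prec_{\ideal{X}}(m)\bigr)$. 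But for any fixed ideal $I$ one has $\redop^\prec_I\bigl(x_{\verta,\verti}\,\redop^\prec_I(m)\bigr) = \redop^\prec_I(x_{\verta,\verti} m)$, because $x_{\verta,\verti}\,\redop^\prec_I(m) \equiv x_{\verta,\verti} m \pmod I$. So both sides of Property~\ref{item:rop-property-3} equal $\redop^\prec_{\ideal{X}}(x_{\verta,\verti} m)$, which finishes the verification.
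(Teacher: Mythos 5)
Your overall plan---prove a locality statement and then read off the three pseudo-reduction-operator axioms from it---is the same as the paper's (which packages locality as \cref{claim:reductions-the-same}), and your verifications of Properties~\ref{item:rop-property-1}, \ref{item:rop-property-2} and~\ref{item:rop-property-3} from locality are essentially the paper's arguments. The problem is the proof of locality itself.

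You run an induction on $m'$ in $\prec$, writing $m'-q$ as a combination of strictly $\prec$-smaller monomials and applying the induction hypothesis and the reducibility condition termwise; to do so you assert that ``reduction never introduces a monomial of degree exceeding $\degstd$'' and point at \eqref{eq:boolean-encoding1}--\eqref{eq:boolean-encoding4}. This claim is not justified and, in general, not true. An admissible order need not respect degree---the lexicographic order used in \cref{sec:approximate-graph-coloring,sec:lax-and-null-constraining} certainly does not---so a $q\in\ideal{\cl{m'}}$ with leading monomial $m'$ can have other terms of much higher degree. Inspecting the generators does not repair this: the reduction operator draws reducers from the entire ideal, not just from monomial multiples of the listed generators, and the $\F[\mathbf{x}_{\ala,\alt}]$-multipliers are unrestricted. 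Even the final irreducible remainder can have degree as large as $\setsize{\cl{m}}$, which the hypotheses of the lemma do not bound by~$\degstd$. Without degree control you cannot invoke the reducibility condition (whose outer quantifier ranges over monomials of degree at most~$\degstd$) on the intermediate monomials, and the induction stalls. There is also a quieter mismatch in the step ``apply the induction hypothesis termwise'': the hypothesis as you stated it compares reduction modulo $\ideal{\cl{t}}$ to reduction modulo $\ideal{\cl{m}}$ for $t\prec m'$, whereas the inductive step requires comparing $\ideal{\cl{m'}}$ to $\ideal{\cl{m}}$; this is fixable by strengthening what is being proved, but as written it is a gap.

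The paper avoids the induction altogether. It first proves \cref{claim:reducibility-of-reductions} by a variable-restriction argument---set every variable outside $\mathbf{x}_{\cl{m},\alt}$ to $0$ and exploit uniqueness of the decomposition $m=\redop_{\ideal{\cl{m}}}(m)+q$---which shows that every monomial of $\redop_{\ideal{\cl{m}}}(m)$ has closure inside $\cl{m}$ without any degree bookkeeping. Then \cref{claim:reductions-the-same} follows in one step: the reducibility hypothesis makes those monomials irreducible modulo the larger ideal as well, and uniqueness of the irreducible remainder forces $\redop_{\ideal{\cl{m}}}(m)=\redop_{\ideal{X}}(m)$. If you substitute this one-shot uniqueness argument for your induction, the rest of your write-up of Properties~\ref{item:rop-property-2} and~\ref{item:rop-property-3} goes through essentially unchanged.
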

The proof of \cref{lem:reducibility-gives-reduction-operator} can be found in \cref{app:reducibility-gives-reduction-operator}, and is by a now-standard argument first introduced by Alekhnovich and Razborov~\cite{AR03LowerBounds} and subsequently clarified by Galesi and Lauria \cite{GL10Automatizability, GL10Optimality} as well as Mik\v{s}a and Nordstr\"{o}m~\cite{MN24Generalized}. 

In \cref{sec:ipr-operators-lex-game}, we prove that operators $\F[\mathbf{x}_{\ala, \alt}] \to \F[\mathbf{x}_{\ala, \alt}]$ defined as in \cref{lem:reducibility-gives-reduction-operator} with respect to a lexicographic order restrict to operators $\Z[\mathbf{x}_{\ala, \alt}] \to \Z[\mathbf{x}_{\ala, \alt}]$. We record this fact as \cref{lem:ipr-alekhnovich-razborov} and defer its proof to after we prove \cref{lem:integer-reduction}.

\begin{lemma}[Integrality assumption]
    \label{lem:ipr-alekhnovich-razborov}
    Let $\rela, \relt, \cl{\cdot}$, and $\degstd$ be defined as in \cref{lem:reducibility-gives-reduction-operator}. Let $\F$ be any field of characteristic $0$ and let $\prec_{\mathrm{lex}}$ be any lexicographic order on the monomials in $\F[\mathbf{x}_{\ala, \alt}]$. For any monomial $m$ of degree at most $\degstd$, it holds that $\redop^{\prec_{\mathrm{lex}}}_{\ideal{\cl{m}}}(m) \in \Z[\mathbf{x}_{\ala, \alt}]$.
\end{lemma}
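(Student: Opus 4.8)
The plan is to reduce the statement to a claim about an ideal of the special form $I(V)$ with $V$ a finite set of $\{0,1\}$-vectors, and then to show that lexicographic reduction modulo such an ideal can be performed entirely over $\Z$. Fix a monomial $m$ with $\deg m\leq\degstd$ and set $X:=\cl{m}$. By self-containment of the closure operator we have $\verts{\ala}(m)\subseteq X$, so $m$ and every polynomial of $\polyhomo{\rela[X]}{\relt}$ lie in the subring of $\F[\mathbf{x}_{\ala,\alt}]$ generated by the variables $x_{a,i}$ with $a\in X$; reduction modulo $\ideal{\cl{m}}=\ideal{\polyhomo{\rela[X]}{\relt}}$ stays inside this subring, so we work there. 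There $\polyhomo{\rela[X]}{\relt}$ contains all Boolean axioms $x_{a,i}^2-x_{a,i}$, so its set $V$ of common roots consists of $\{0,1\}$-vectors, and the satisfiability hypothesis of \cref{lem:reducibility-gives-reduction-operator} gives $\rela[X]\to\relt$, hence $V\neq\varnothing$. By the Boolean Nullstellensatz (\cref{lem:boolean-ns}), $\ideal{\cl{m}}=I(V)$, so it suffices to prove $\redop^{\prec_{\mathrm{lex}}}_{I(V)}(m)\in\Z[\mathbf{x}_{\ala,\alt}]$ whenever $V$ is a finite set of $\{0,1\}$-vectors. (The degree bound $\deg m\leq\degstd$ plays no further role; satisfiability is only used to identify $\ideal{\cl{m}}$ with $I(V)$.)

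Recall from \cref{sec:reduction-modulo-ideals} that $\redop^{\prec_{\mathrm{lex}}}_{I(V)}(m)$ is the \emph{unique} polynomial $r$ that is a linear combination of monomials irreducible modulo $I(V)$ and satisfies $m-r\in I(V)$; I would exhibit such an $r$ lying in $\Z[\mathbf{x}_{\ala,\alt}]$, and uniqueness then gives $\redop^{\prec_{\mathrm{lex}}}_{I(V)}(m)=r$. The crux is the following claim: for every monomial $t$ reducible modulo $I(V)$ there is a polynomial $g_t\in I(V)\cap\Z[\mathbf{x}_{\ala,\alt}]$ whose leading term is $t$ with coefficient $1$. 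Granting this, run the usual reduction on $m$: while the current polynomial $p$ (starting from $m$) has a reducible monomial, let $t$ be the $\prec_{\mathrm{lex}}$-largest such monomial, with coefficient $c$ in $p$, and replace $p$ by $p-c\,g_t$. Each step cancels the term $c\,t$ and introduces only $\prec_{\mathrm{lex}}$-smaller monomials, so the process terminates since $\prec_{\mathrm{lex}}$ is a well-order; by induction on the number of steps $c$ is always an integer, so the computation stays inside $\Z[\mathbf{x}_{\ala,\alt}]$; and the final polynomial is a $\Z$-combination $r$ of irreducible monomials with $m-r\in I(V)$, as needed.

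It remains to construct the integral reducers $g_t$. If $t$ is not multilinear, say $x_{a,i}^2\mid t$, take $g_t:=(x_{a,i}^2-x_{a,i})\cdot(t/x_{a,i}^2)$; this is a multiple of a Boolean axiom (hence in $I(V)$), has integer coefficients, and has leading term $t$ with coefficient $1$ because $x_{a,i}^2\succ x_{a,i}$ and multiplication by a monomial respects $\prec_{\mathrm{lex}}$. So assume $t$ is multilinear. Here I would invoke the lex game~\cite{FRR06LexGame}: $t$ is reducible modulo $I(V)$ exactly when Lea has a winning strategy in the lex game on $V$ and $t$ with respect to $\prec_{\mathrm{lex}}$, and the heart of the proof is to turn such a strategy into the polynomial $g_t$. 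Traversing the coordinates in the order fixed by $\prec_{\mathrm{lex}}$, the strategy tree assembles $g_t$ as a sum of products of linear factors: a coordinate carrying an exponent of $t$ contributes a factor $x_{a,i}$, and a coordinate at which Lea must eliminate the values Stan can still reach contributes factors $x_{a,i}-c$ over those values $c$. Since $V$ consists of $\{0,1\}$-vectors, each such $c$ lies in $\{0,1\}$, so every factor is $x_{a,i}$ or $x_{a,i}-1$; hence $g_t$ has integer coefficients. One then checks, by induction over the coordinates, that $g_t$ vanishes on all of $V$ (this is precisely the statement that Lea's strategy wins), that its leading monomial is $t$, and that its leading coefficient is $1$ (the leading monomial arises from the unique branch that picks the $x_{a,i}$-factor at each exponent of $t$).

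The main obstacle is this last construction. One must set up the lex game carefully and establish the correspondence between Lea's winning strategies and elements of $I(V)$ with a prescribed leading monomial — and crucially in the \emph{effective} form that actually outputs $g_t$, rather than the bare existence statement recorded in~\cite{FRR06LexGame} — and then track that, once $V$ consists of $\{0,1\}$-vectors, every coefficient in the construction lies in $\Z$ and the top branch is monic. This integrality is the one genuinely new ingredient, since \cite{FRR06LexGame} works over an arbitrary field and so does not provide it. Everything else — the reduction to the Boolean case, the non-multilinear monomials, termination, and the appeal to uniqueness of the normal form — is routine. (A more naive linear-algebra approach, solving for $g_t$ as an $\F$-combination of the standard monomials, does not obviously yield an integral solution, which is exactly what the combinatorial argument circumvents.)
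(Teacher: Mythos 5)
Your approach matches the paper's: identify $\ideal{\cl{m}}$ with $I(V)$ for a finite $V\subseteq\{0,1\}^n$ via the Boolean Nullstellensatz (\cref{lem:boolean-ns}), then show that lexicographic reduction modulo such an ideal stays in $\Z$ by converting Lea's winning strategy in the lex game into an \emph{integral} polynomial with prescribed leading term. The paper packages this as a single induction over $\prec_{\mathrm{lex}}$ (\cref{lem:integer-reduction}), subtracting one strategy polynomial and recursing on the remainder, whereas you posit a uniform family of monic integral reducers $g_t$ and run the division algorithm; these are the same proof in two framings, and your WLOG split between non-multilinear $t$ (handled by Boolean axioms) and multilinear $t$ mirrors the paper's ``may assume $m$ multilinear'' step.

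Where your sketch goes astray is the construction of $g_t$. You describe ``a coordinate carrying an exponent of $t$'' and ``a coordinate at which Lea must eliminate the values Stan can still reach'' as two distinct sources of factors, one giving a bare $x_i$ and the other giving $x_i - c$. But in the lex game Lea makes a guess \emph{exactly} at the coordinates with positive exponent in $t$ (she makes $\alpha_i$ guesses in round $n-i+1$), so these are the same coordinates, not two cases; and her guess at coordinate $i$ is not a constant $c$ but a function of Stan's answers on the later coordinates. The construction that works, as in the paper's proof of \cref{lem:integer-reduction}, is $g_t = \prod_{i:\alpha_i=1}(x_i - f_i)$, where $f_i = \sum_{a\in\{0,1\}^{n-i}} c_a\, m_a$ with $m_a=\prod_{j:a_j=1}x_j\prod_{j:a_j=0}(1-x_j)$ the indicator product and $c_a\in\{0,1\}$ Lea's guess when Stan has answered $a$. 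Integrality comes from $c_a\in\{0,1\}$ (not merely from factors of the form $x_i-c$ with $c$ a Boolean constant), which holds because, with $V\subseteq\{0,1\}^n$, Lea may WLOG restrict her guesses to $\{0,1\}$; monicity and leading term $t$ come from picking the $x_i$-summand in each factor upon expansion, which is the picture you gesture at but must hang on the $\prod(x_i-f_i)$ form to be made precise. Finally, your claim that ``reduction modulo $\ideal{\cl{m}}$ stays inside this subring'' needs the small argument the paper gives after \cref{lem:integer-reduction}: set the variables outside $\mathbf{x}_{\cl{m},\alt}$ to $0$, observe this leaves $m$ and the generators of $\ideal{\cl{m}}$ untouched, and appeal to uniqueness of the normal form to conclude that the reductions in the small and large polynomial rings coincide.
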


Given \cref{lem:ipr-alekhnovich-razborov}, we can establish our main technical lemma needed to prove lower bounds for the cohomological $\kcons$\nobreakdash-consistency algorithm.

\begin{lemma}
    \label{lem:cohomological-k-consistency-lower-bounds}
    Suppose that the assumptions of \cref{lem:reducibility-gives-reduction-operator} are satisfied for all monomials $m\in \F[\mathbf{x}_{A, T}]$ of degree at most $\degstd$, where $\F$ has characteristic zero and monomials are ordered by a lexicographic order $\prec_{\mathrm{lex}}$. Then, the cohomological $\kcons$\nobreakdash-consistency algorithm for $\csp{\relt}$ accepts $\rela$ for all $\kcons \leq \clsize$, where $\clsize$ is the largest integer such that $\setsize{\cl{m}} \leq \degstd$ for all monomials $m$ of degree at most $\clsize$. 
\end{lemma}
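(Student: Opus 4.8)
The proof combines \cref{lem:reducibility-gives-reduction-operator}, \cref{lem:ipr-alekhnovich-razborov} and \cref{lem:accepted-by-z-aff-cons}. By \cref{lem:reducibility-gives-reduction-operator}, the $\F$-linear extension $\pseudred$ of $m\mapsto\redop^{\prec_{\mathrm{lex}}}_{\ideal{\cl{m}}}(m)$ is a degree\nobreakdash-$\degstd$ pseudo-reduction operator for $\polyhomo{\rela}{\relt}$, and by \cref{lem:ipr-alekhnovich-razborov} it maps every monomial of degree at most $\degstd$ into $\Z[\mathbf x_{\ala,\alt}]$, so it is integral in the sense of \cref{lem:accepted-by-z-aff-cons}. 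We may assume $\setsize\ala>\degstd$ (otherwise a squarefree monomial with one variable per vertex of $\ala$ has closure $\ala$, so $\rela=\rela[\ala]\to\relt$ by the satisfiability condition and the algorithm accepts trivially); then a squarefree monomial on any $\degstd+1$ distinct vertices of $\ala$ has closure of size larger than $\degstd$, so $\clsize\leq\degstd$. Fix $\kcons\leq\clsize$. By \cref{lem:accepted-by-z-aff-cons} the $\pseudred$-support $\kappa_\pseudred$ is $\kcons$\nobreakdash-consistent, and since $\pseudred(1)=1$ the empty map lies in $\kappa_\pseudred(\varnothing)$, whence by extendability $\kappa_\pseudred(X)\neq\varnothing$ for all $X\in\binom{\ala}{\leq\kcons}$.

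I would now prove that the \emph{invariant} ``$\kappa_\pseudred(X)\subseteq\mathcal H(X)$ for every $X\in\binom{\ala}{\leq\kcons}$'' holds at every point in the execution of the cohomological $\kcons$\nobreakdash-consistency algorithm on $\rela$. Together with the previous paragraph, this forces $\mathcal H(X)\neq\varnothing$ for all $X$ when the loop stabilizes, so the algorithm accepts. The invariant holds after initialization, since $\mathcal H(X)=\hom(\rela[X],\relt)\supseteq\kappa_\pseudred(X)$. It is preserved by step~(a): as $\kappa_\pseudred$ is $\kcons$\nobreakdash-consistent and (inductively) contained in the current $\mathcal H$, it is contained in the maximal $\kcons$\nobreakdash-consistent subcollection of $\mathcal H$. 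The content is step~(b): assuming $\kappa_\pseudred\subseteq\mathcal H$, we must exhibit, for each $X$ and each $\homomphi\in\kappa_\pseudred(X)$, a solution over $\Z$ to $L_\kcons(\rela,\relt,\mathcal H)$ with $x_{X,\homomphi}=1$ and $x_{X,\homompsi}=0$ for all $\homompsi\in\mathcal H(X)\setminus\{\homomphi\}$, so that $\homomphi$ is not removed; note each $\homompsi\in\mathcal H(X)$ has domain exactly $X$.

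To build that solution I would use the ``moreover'' clause of \cref{lem:accepted-by-z-aff-cons}: for any linear $h\colon\Z[\mathbf x_{\ala,\alt}]\to\Z$ with $h(1)=1$, the assignment $x_{X',\homomphi'}\mapsto h(\pseudred(m_{\homomphi'}))$ solves \eqref{eq:zaff-base}--\eqref{eq:zaff-consistency} for $\kappa_\pseudred$; since it vanishes whenever $\pseudred(m_{\homomphi'})=0$, i.e.\ off $\kappa_\pseudred$, it is supported on $\kappa_\pseudred\subseteq\mathcal H$ and hence, the extra summands being zero, it is also a solution to $L_\kcons(\rela,\relt,\mathcal H)$. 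It remains to pick $h$ separating $\pseudred(m_\homomphi)$ from the $\pseudred(m_\homompsi)$. This is where locality enters: every $\homompsi\in\mathcal H(X)$ has domain $X$, so all the monomials $m_\homompsi$ share the closure $C\coloneqq\cl X$, with $\setsize C\leq\degstd$ by the definition of $\clsize$; thus $\pseudred(m_\homompsi)=\redop^{\prec_{\mathrm{lex}}}_{\ideal C}(m_\homompsi)$ for one fixed ideal $\ideal C$, which is proper since $\rela[C]\to\relt$ by satisfiability. I would take $h=\operatorname{eval}_{v_\eta}$, evaluation at the $\{0,1\}$-indicator $v_\eta$ of a homomorphism $\eta\colon\rela[C]\to\relt$ chosen so that $\restrict{\eta}{X}=\homomphi$. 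Since $v_\eta$ is a common root of $\polyhomo{\rela[C]}{\relt}$ and hence annihilates $\ideal C$, we get $\pseudred(m_\homompsi)(v_\eta)=m_\homompsi(v_\eta)=\indic\bigl[\restrict{\eta}{X}=\homompsi\bigr]$ for every $\homompsi\in\mathcal H(X)$, which is $1$ exactly when $\homompsi=\homomphi$; integrality of $\pseudred$ makes all these values integers, yielding the desired $\Z$-solution.

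The step I expect to be most delicate is ensuring that such an $\eta$ exists, i.e.\ extracting from ``$\pseudred(m_\homomphi)\neq0$'' a homomorphism $\rela[C]\to\relt$ restricting to $\homomphi$. The argument is: $\pseudred(m_\homomphi)$ is a nonzero combination of monomials irreducible modulo $\ideal C$, hence does not lie in $\ideal C$; passing to the subring $\F[\mathbf x_{C,\alt}]$---which contains $m_\homomphi$, the generators of $\ideal C$, and, by the standard elimination property of lexicographic orders, also $\pseudred(m_\homomphi)$ and the whole reduction process---the Boolean Nullstellensatz (\cref{lem:boolean-ns}) identifies $\ideal C$ there with the vanishing ideal of the set of indicator vectors of homomorphisms $\rela[C]\to\relt$, so $\pseudred(m_\homomphi)$ is nonvanishing on some such vector $v_\eta$; then $\pseudred(m_\homomphi)(v_\eta)=m_\homomphi(v_\eta)\in\{0,1\}$ equals $1$, forcing $\restrict{\eta}{X}=\homomphi$. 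The remainder is bookkeeping: that the invariant passes cleanly through steps~(a) and~(b), and that a $\kappa_\pseudred$-supported solution of the $\kappa_\pseudred$-indexed system \eqref{eq:zaff-base}--\eqref{eq:zaff-consistency} satisfies the $\mathcal H$-indexed equations of $L_\kcons(\rela,\relt,\mathcal H)$.
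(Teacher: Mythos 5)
Your proposal is correct and follows essentially the same strategy as the paper's proof: apply \cref{lem:reducibility-gives-reduction-operator} and \cref{lem:ipr-alekhnovich-razborov} to get an IPR operator, invoke \cref{lem:accepted-by-z-aff-cons} for $\kcons$\nobreakdash-consistency of $\kappa_\pseudred$ and for polynomial-valued solutions to \eqref{eq:zaff-base}--\eqref{eq:zaff-consistency}, and then choose the evaluation point to be the indicator vector of a homomorphism $\eta\colon\rela[\cl{X}]\to\relt$ extending the target $\homomphi$. Two local differences are worth noting. First, you are more explicit than the paper about the iterative nature of the algorithm: you state and maintain the invariant $\kappa_\pseudred\subseteq\mathcal H$ through both steps (a) and (b), and you verify that a $\kappa_\pseudred$-supported solution of the $\kappa_\pseudred$-indexed system remains a solution of the $\mathcal H$-indexed system; the paper compresses this to the single sentence ``the algorithm stabilizes on $\kappa_\pseudred$,'' leaving the bookkeeping implicit. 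Second, you extract the extension $\eta$ from the nonvanishing of $\pseudred(m_\homomphi)$ via the Boolean Nullstellensatz (observing that $\pseudred(m_\homomphi)$ is a nonzero sum of irreducible monomials and hence lies outside $\ideal{\cl{X}}$), whereas the paper obtains the same extension by iterating the extendability condition of the $\degstd$-consistent collection $\kappa_\pseudred$, using $\setsize{\cl{X}}\le\degstd$. Both routes are valid and yield the same conclusion; your Nullstellensatz argument is marginally more self-contained since it does not rely on chaining extendability over intermediate domains, but the paper's is shorter given that $\degstd$-consistency is already in hand. The reduction-to-subring point and the determination that $\clsize\le\degstd$ (when $\setsize{\ala}>\degstd$) are correct as you state them.
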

\begin{proof}
    By \cref{lem:ipr-alekhnovich-razborov}, the operator $\pseudred \colon \F[\mathbf{x}_{A, T}] \to \F[\mathbf{x}_{A, T}]$ in the statement of \cref{lem:reducibility-gives-reduction-operator} restricts to an operator $\Z[\mathbf{x}_{A, T}] \to \Z[\mathbf{x}_{A, T}]$, and by \cref{lem:reducibility-gives-reduction-operator}, we have that $\pseudred$ is a degree\nobreakdash-$\degstd$ pseudo\nobreakdash-reduction operator for $\polyhomo{\rela}{\relt}$. 
    
    By \cref{lem:accepted-by-z-aff-cons}, the collection $\kappa_\pseudred = \{\kappa_\pseudred(X):{X\in \binom{\ala}{\leq \kcons}}\}$ where $\kappa_\pseudred(X)= \{\homomphi \in \hom(\rela[X], \relt) \mid \pseudred(m_\homomphi) \neq 0\}$ is $\kcons$\nobreakdash-consistent for all $\kcons\leq \degstd$. Moreover, also by \cref{lem:accepted-by-z-aff-cons} the assignment to the variables $x_{X, \homomphi}$ of \eqref{eq:zaff-base}--\eqref{eq:zaff-consistency} which sets $x_{X, \homomphi}$ to $\operatorname{eval}_{y} \circ R(m_\homomphi)$ for $X \in \smash{\binom{\ala}{\leq \kcons}}$ and $\homomphi \in \hom(\rela[X], \relt)$ is a solution to \eqref{eq:zaff-base}--\eqref{eq:zaff-consistency} over $\Z$ for any point $y \in \Z^{\setsize{\ala} \cdot \setsize{\alt}}$. Therefore, it suffices to show that for all sets $X \in \smash{\binom{\ala}{\leq \clsize}}$ and every homomorphism $\homomphi \in \kappa_R(X)$, there exists a point $y \in \Z^{\setsize{\ala} \cdot \setsize{\alt}}$ such that $\operatorname{eval}_{y} \circ R(m_\homomphi) = 1$ and $\operatorname{eval}_{y} \circ R(m_\homompsi) = 0$ for all homomorphisms $\homompsi \neq \homomphi \in \kappa_R(X)$. 

    Denote the polynomial $\pseudred(m_\homomphi)$ by $r$, and recall that $r = \smash{\redop^{\prec_\mathrm{lex}}_{\ideal{\cl{m_\homomphi}}}(m_\homomphi)}$, which by definition is the unique sum of irreducible terms modulo ${\ideal{\cl{m_\homomphi}} = \ideal{\cl{X}}}$ such that $m_\homomphi = r + q$, where $q \in {\ideal{\cl{X}}}$. Note that $q$ vanishes when evaluated on (the indicator vector of) any homomorphism $\rela[\cl{X}] \to \relt$ and that $m_\homomphi = 1$ when evaluated on $\homomphi$. Since $\verts{\ala}(m_\homomphi) = X\subseteq \cl{X}$, it follows that $r$ evaluates to $1$ on any homomorphism $\rela[\cl{X}] \to \relt$ that extends $\homomphi$. Moreover, for all homomorphisms $\homompsi \neq \homomphi \in \kappa_R(X)$, the monomial $m_{\homompsi}$ evaluates to $0$ on that assignment since $\cl{m_\homomphi} = \cl{m_{\homompsi}}$, and hence $\pseudred(m_{\homompsi})$ also does. Therefore, to conclude the proof it remains to show that for all homomorphisms $\homomphi \colon \rela[X] \to \relt$, there exists a homomorphism $\homomu \colon \rela[\cl{X}] \to \relt$ that extends $\homomphi$. But this follows immediately from the $\degstd$-consistency of $\kappa_\pseudred$, since $\setsize{\cl{X}} \leq \degstd$ by assumption. %
    
    Thus, the cohomological $\kcons$\nobreakdash-consistency algorithm stabilizes on $\kappa_\pseudred$, and hence the cohomological $\kcons$\nobreakdash-consistency algorithm for $\csp{\relt}$ accepts $\rela$ for all $\kcons \leq \clsize$ as desired. \qedhere
\end{proof}

\begin{remark}[Room of choice of solutions]
In the proof of \cref{lem:cohomological-k-consistency-lower-bounds}, there exist many maps $h: \mathbb{F}[x] \to \mathbb{F}$ whose composition with $\pseudred$
fools the cohomological $k$-consistency algorithm. Specifically, given a $\kcons$-small $X$ and a homomorphism $\varphi\colon \rela[X] \to \relt$ where $R(m_\varphi) \neq 0$, we can let $h$ be an arbitrary $\mathbb F$-linear extension of the evaluation map $\operatorname{eval}_\mu\colon\mathbb{F}[\mathbf{x}_{\operatorname{cl}(X), T}] \to \mathbb F$ from the subspace $\mathbb{F}[\mathbf{x}_{\operatorname{cl}(X), T}]$ to $\mathbb{F}[\mathbf{x}_{A, T}]$, 
where $\mu$ is an arbitrary homomorphism on $\rela[\operatorname{cl}(X)]$ extending $\varphi$ (which exists by the $D$-consistency of $\kappa_R$). Then $h \circ R$ provides a $\mathbb Z$-affine $k$-consistency solution satisfying $h \circ R(\varphi) = 1$ and $h \circ R(\psi) = 0$ for all other $\psi\colon X \to T$. 
\end{remark}

\begin{remark}[On requiring a lexicographic order]
    \label{rmk:require-lex-order}
    Insisting on a lexicographic order in Lemmas~\ref{lem:ipr-alekhnovich-razborov} and \ref{lem:cohomological-k-consistency-lower-bounds} might seem like a technical artifact of our proof techniques, but there is in fact reason to believe that the same result does not hold for other orders. For instance, the pseudo-reduction operators in \cite{AR03LowerBounds} can be put on the same form as in \cref{lem:reducibility-gives-reduction-operator}, but are defined with respect to a \emph{degree-respecting} monomial order, which the lexicographic order is not. This requirement turns out to be crucial for their proof that the reducibility condition holds (see \cite[Theorem~3.1]{AR03LowerBounds}). Moreover, the operators in \cite{AR03LowerBounds} cannot be integral in general since they then apply over every field (see \cref{rmk:ipr-operators-work-every-field}); this is impossible, since some of the lower bounds in \cite{AR03LowerBounds} apply to sets of polynomials that are easy to refute over some fields. 
\end{remark}

\section{Integrality of Reductions and the Lex Game}
\label{sec:ipr-operators-lex-game}

In this section we establish an integrality property of polynomial reductions (\cref{lem:integer-reduction}), from which \cref{lem:ipr-alekhnovich-razborov} follows as an easy corollary. This property says that for any finite set $V\subseteq \{0,1\}^n$ and any field $\F$ of characteristic 0, the reduction of a monomial $m\in\F[x_1,\ldots,x_n]$ modulo the ideal $I(V)$ under the lexicographic order is always a polynomial with coefficients in~$\Z$. 

The proof of \cref{lem:integer-reduction} %
uses a game characterization of reducibility under a lexicographic order, namely the \emph{lex game} of Felszeghy, R\'ath, and R\'onyai~\cite{FRR06LexGame}; we use an equivalent formulation from~\cite{FR09SomeMeetingPoints}.

\begin{definition}[Lex game; \cite{FR09SomeMeetingPoints}, Section~3]
    Let $\F$ be a field, $V\subseteq \F^n$ a 
    finite set, and $\alpha = (\alpha_1, \ldots, \alpha_n) \in \N^n$ an $n$\nobreakdash-dimensional vector of natural numbers. The \emph{lex game} $\lex{V}{\alpha}$ is the following deterministic game played between two players named Lea and Stan. Lea and Stan both know $V$ and $\alpha$. If $V= \varnothing$, Lea wins immediately. For a nonempty $V$, the lex game proceeds %
    in rounds.
    In round $i$, Lea picks $\alpha_{n-i+1}$ elements of $\F$, and Stan picks a value $y_{n-i+1}$ which is different from all of Lea's choices in that round;
    if he cannot do so, Lea wins immediately. 
    After $n$ rounds, if Stan's answers $(y_1, \ldots, y_n)$ is a vector in $V$, he wins; otherwise, Lea wins.

\end{definition}

The lex game is finite, deterministic, and cannot end in a draw. Therefore, for any $V$ and $\alpha$, one of the players has a winning strategy for $\lex{V}{\alpha}$. 
The main result of~\cite{FRR06LexGame} is that the existence of a winning strategy for Lea %
characterizes %
whether the
monomial $x^\alpha$ is reducible modulo $I(V)$ under the lexicographic order. %

\begin{theorem}[\cite{FRR06LexGame}, Theorem 2]
    \label{thm:lex-game}
    For any field $\F$,
    finite set $V\subseteq \F^n$, and 
    $\alpha %
    \in \N^n$, the monomial $x^\alpha \in \F[x_1, \ldots, x_n]$ is reducible modulo $I(V)$ in the lexicographic order induced by $x_1 \succ\ldots\succ x_n$ if and only if Lea has a winning strategy in the game $\lex{V}{\alpha}$. 
\end{theorem}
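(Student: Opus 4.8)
The plan is to prove \cref{thm:lex-game} by induction on the number of variables $n$, peeling off the least significant variable $x_n$ at each step so as to mirror the first round of the lex game. The base cases $n \in \{0,1\}$ are immediate: for $n = 0$ the only monomial is $1$, which is reducible modulo $I(V)$ exactly when $1 \in I(V)$, that is, when $V = \varnothing$, and this is also exactly when Lea wins; for $n = 1$ the ideal $I(V) = \ideal{\prod_{v \in V}(x_1 - v)}$ has leading monomials $\{x_1^k : k \geq \setsize{V}\}$, so $x_1^{\alpha_1}$ is reducible iff $\alpha_1 \geq \setsize{V}$, which is precisely when Lea can block all of $V$ in the single round of $\lex{V}{(\alpha_1)}$.

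For the inductive step, fix a finite $V \subseteq \F^n$ and $\alpha = (\alpha', \alpha_n)$ with $\alpha' \in \N^{n-1}$; for $b \in \F$ write $V_b = \{w \in \F^{n-1} : (w, b) \in V\}$ for the fibre of $V$ over the last coordinate, and let $\pi(V) = \{b : V_b \neq \varnothing\}$, a finite set. Unfolding the first round of $\lex{V}{\alpha}$, Stan picks a value $y_n = b$ avoided by Lea's $\alpha_n$ choices, after which the players play $\lex{V_b}{\alpha'}$; hence Lea wins $\lex{V}{\alpha}$ if and only if she can cover with her $\alpha_n$ choices every $b$ for which she would lose $\lex{V_b}{\alpha'}$, i.e. iff $\setsize{\{b \in \F : \text{Stan wins } \lex{V_b}{\alpha'}\}} \leq \alpha_n$. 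Applying the induction hypothesis to each (finite) slice $V_b$, Stan wins $\lex{V_b}{\alpha'}$ exactly when $x^{\alpha'}$ is irreducible modulo $I(V_b)$. It therefore remains to prove the following \emph{slicing statement}: if $K$ denotes the number of $b \in \F$ for which $x^{\alpha'}$ is irreducible modulo $I(V_b)$, then $x^{\alpha'} x_n^{\alpha_n}$ is reducible modulo $I(V)$ if and only if $\alpha_n \geq K$.

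To prove this I would pass to the coordinate ring. As $V$ is finite, evaluation gives an isomorphism $\F[x_1, \ldots, x_n]/I(V) \cong \F^V$, and grouping coordinates by fibres refines it to $\F[x_1, \ldots, x_n]/I(V) \cong \bigoplus_{b \in \pi(V)} \F^{V_b}$, under which multiplication by $x_n$ acts on the $b$-th summand as the scalar $b$. Recall the reducibility criterion: a monomial is reducible modulo an ideal precisely when its image in the coordinate ring lies in the span of the images of the strictly lex-smaller monomials. The monomials strictly lex-below $x^{\alpha'} x_n^{\alpha_n}$ are the $x^\delta x_n^k$ with $\delta \prec_{\mathrm{lex}} \alpha'$ (any $k$) together with the $x^{\alpha'} x_n^k$ with $k < \alpha_n$; writing $N$ for the span of the images of the former and noting that $N$ is invariant under multiplication by $x_n$ (so $x_n$ acts on the quotient $\bigl(\F[x]/I(V)\bigr)/N$), we see that $x^{\alpha'} x_n^{\alpha_n}$ is reducible modulo $I(V)$ iff the cyclic $x_n$-submodule of $\bigl(\F[x]/I(V)\bigr)/N$ generated by the class of $x^{\alpha'}$ has dimension at most $\alpha_n$. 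A short Lagrange-interpolation argument over the finitely many fibres shows $N = \bigoplus_b N_b$, where $N_b \subseteq \F^{V_b}$ is the span of the evaluations on $V_b$ of the monomials $x^\delta$ with $\delta \prec_{\mathrm{lex}} \alpha'$; hence the quotient splits as $\bigoplus_b \bigl(\F^{V_b}/N_b\bigr)$, with $x_n$ still acting as the scalar $b$ on the $b$-th block, and the class of $x^{\alpha'}$ has nonzero $b$-component exactly for the $K$ values of $b$ at which $x^{\alpha'}$ is irreducible modulo $I(V_b)$ (again by the reducibility criterion, one variable down). A Vandermonde argument then shows that the cyclic $x_n$-submodule generated by a vector supported on exactly $K$ of the blocks, on which $x_n$ acts by $K$ pairwise distinct scalars, has dimension exactly $K$, with the classes of $x^{\alpha'}, x^{\alpha'} x_n, \ldots, x^{\alpha'} x_n^{K-1}$ as a basis. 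Therefore $x^{\alpha'} x_n^{\alpha_n}$ is reducible modulo $I(V)$ iff $\alpha_n \geq K$, which closes the induction.

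The main obstacle is the slicing statement, and within it the identity $N = \bigoplus_b N_b$: one inclusion is trivial, while the other requires the interpolation trick isolating a single fibre and is the only place finiteness of $\pi(V)$ enters. The remaining ingredients — the Vandermonde dimension count and the bookkeeping identifying which monomials are lex-below $x^{\alpha'} x_n^{\alpha_n}$ — are routine. I would also note that the induction can be carried out constructively, producing from a winning strategy for Lea an explicit polynomial in $I(V)$ with leading monomial $x^\alpha$; this constructive form is what \cref{sec:ipr-operators-lex-game} exploits when proving integrality of reductions.
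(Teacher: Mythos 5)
The paper cites this result from Felszeghy, R\'ath, and R\'onyai~\cite{FRR06LexGame} without reproducing a proof, so there is no in-paper argument to compare against; what follows is an assessment of your proof on its own terms. Your argument is correct. The induction on $n$ correctly mirrors the round-by-round unfolding of the lex game (round $1$ fixes the last coordinate $y_n = b$, after which the players play $\lex{V_b}{\alpha'}$), and the slicing statement is the right bridge from combinatorics to algebra. The algebraic ingredients are all used soundly: the identification $\F[x_1,\ldots,x_n]/I(V) \cong \bigoplus_{b\in\pi(V)} \F^{V_b}$ with $x_n$ acting by the scalar $b$ on the $b$-th block; the $x_n$-invariance of $N$; the block decomposition $N = \bigoplus_b N_b$ (one inclusion is immediate, and the other follows since $\F[x_n]\cdot v_\delta = \bigoplus_{b:\,v_\delta^{(b)}\neq 0}\F\, v_\delta^{(b)}$ by Lagrange interpolation over the finitely many distinct nodes $b$); and the observation that the cyclic $\F[x_n]$-module generated by a vector with exactly $K$ nonzero block-components, on which $x_n$ acts via $K$ distinct scalars, has dimension exactly $K$ with $v, x_n v, \ldots, x_n^{K-1}v$ as a Vandermonde basis. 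Your concluding remark that the induction can be made constructive, with Lea's winning strategy encoding a polynomial in $I(V)$ whose leading monomial is $x^\alpha$, is also accurate and is exactly the observation that \cref{lem:integer-reduction} exploits (there by forming $p = \prod_{i:\alpha_i=1}(x_i-f_i)$ directly from the strategy rather than by unwinding an induction).
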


Following \cite{FRR06LexGame}, 
we show the following corollary of the game characterization. 
We note that similar results were also shown in \cite[Section~4]{ARS02Shattering}, \cite[Corollary~5.1.2]{Felszeghy07Groebner}, and \cite[Corollary~3]{FR09SomeMeetingPoints} by arguments different from ours.

\begin{lemma}[%
    Integrality lemma]
    \label{lem:integer-reduction}
    If $\F$ is a field of characteristic zero and $V$ is a subset of $\{0, 1\}^n \subseteq \F^n$, then for any monomial $m = x^\alpha\in \F[x_1, \ldots, x_n]$ it holds that $\redop^{\mathrm{lex}}_{I(V)}(m) \in \Z[x_1, \ldots, x_n]$.
\end{lemma}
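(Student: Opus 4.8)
The plan is to prove \cref{lem:integer-reduction} by induction on the lexicographic order $\prec_{\mathrm{lex}}$ of the monomial $m = x^\alpha$. Since the reduction operator $\redop^{\mathrm{lex}}_{I(V)}$ writes $m = q + r$ with $q \in I(V)$ and $r$ a $\F$-linear combination of monomials that are irreducible modulo $I(V)$, it suffices to show that a \emph{single} reduction step can be carried out using a polynomial in $I(V)$ with integer coefficients whose leading monomial is $x^\alpha$; iterating this and applying the induction hypothesis to each strictly $\prec_{\mathrm{lex}}$-smaller monomial produced then yields integrality of the full reduction $r$. Concretely, if $x^\alpha$ is irreducible modulo $I(V)$ there is nothing to do (the reduction is $x^\alpha$ itself, which has integer coefficients); if it is reducible, I want to exhibit $g \in I(V) \cap \Z[x_1,\ldots,x_n]$ with leading term exactly $x^\alpha$ (coefficient $1$), so that $m - g$ has integer coefficients and is a $\Z$-combination of monomials $\prec_{\mathrm{lex}} x^\alpha$, to which the induction hypothesis applies termwise.

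The key step is therefore to produce this integral witness polynomial $g$, and here I would use the lex game of \cref{thm:lex-game}. By that theorem, reducibility of $x^\alpha$ modulo $I(V)$ is equivalent to Lea having a winning strategy in $\lex{V}{\alpha}$. The plan is to convert such a winning strategy into an explicit polynomial: in round $i$, Lea's strategy prescribes, as a function of Stan's earlier answers $y_{n}, y_{n-1}, \ldots, y_{n-i+2}$, a set of $\alpha_{n-i+1}$ field elements she forbids; I encode "Stan is forced to lose in round $i$" by the product $\prod_{c}(x_{n-i+1} - c)$ over those forbidden values $c$. Multiplying these round-by-round factors together along all branches of Lea's strategy tree, weighted so that the top-degree part in each variable $x_{n-i+1}$ is exactly $x_{n-i+1}^{\alpha_{n-i+1}}$, builds a polynomial $g$ whose leading monomial under $\prec_{\mathrm{lex}}$ (which orders $x_1 \succ \cdots \succ x_n$, i.e.\ compares the exponent of $x_1$ first) is $x^\alpha$ with coefficient $1$, and which vanishes on every point of $V$ because any $v \in V$ traces a branch on which some factor $(v_{n-i+1} - c)$ is zero (that is precisely what it means for Lea's strategy to beat the "Stan plays $v$" line). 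Since $V \subseteq \{0,1\}^n$, Lea never needs to forbid more than the two values $0$ and $1$ in any round that actually constrains Stan — and more importantly, on the Boolean cube we may always take Lea's forbidden sets to consist only of elements of $\{0,1\} \subseteq \Z$ (forbidding any value outside $\{0,1\}$ is never worse than forbidding the missing Boolean value), so every factor $(x_{n-i+1} - c)$ has $c \in \{0,1\}$ and $g$ ends up with coefficients in $\Z$.

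The main obstacle I anticipate is the bookkeeping in the previous paragraph: making precise how Lea's (possibly history-dependent) winning strategy is assembled into one polynomial with the correct leading term and degree profile $\alpha$, and verifying that the leading monomial is exactly $x^\alpha$ rather than something larger or with the wrong coefficient. The subtlety is that in round $i$ Lea must forbid exactly $\alpha_{n-i+1}$ values, but on the Boolean cube there are only two possible values total; one handles this by noting that if $\alpha_{n-i+1} \geq 2$ the factor in variable $x_{n-i+1}$ can be taken to be $x_{n-i+1}(x_{n-i+1}-1) \cdot x_{n-i+1}^{\alpha_{n-i+1}-2}$ (or an appropriate adjustment), which vanishes on all of $\{0,1\}$, still has leading monomial $x_{n-i+1}^{\alpha_{n-i+1}}$, and has integer coefficients — so the degree requirement is met while keeping the witness integral and Stan-defeating. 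Once the witness $g \in I(V) \cap \Z[x_1,\ldots,x_n]$ with leading term $x^\alpha$ is in hand, the induction closes immediately: $\redop^{\mathrm{lex}}_{I(V)}(m) = \redop^{\mathrm{lex}}_{I(V)}(m - g)$, and $m - g$ is an integer combination of monomials strictly below $x^\alpha$, each of whose reductions is integral by the inductive hypothesis, so $\redop^{\mathrm{lex}}_{I(V)}(m)$ is as well. The base case is the $\prec_{\mathrm{lex}}$-minimal monomial $1$, whose reduction is $1$ if $V \neq \varnothing$ and $0$ otherwise (here $I(\varnothing) = \F[x_1,\ldots,x_n] \ni 1$), both integral. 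Finally, \cref{lem:ipr-alekhnovich-razborov} follows at once by taking $V$ to be the set of indicator vectors of homomorphisms $\rela[\cl{m}] \to \relt$, which lies in $\{0,1\}^{\setsize{\ala}\cdot\setsize{\alt}}$, and observing that $I(V) = \ideal{\cl{m}}$ by the Boolean Nullstellensatz (\cref{lem:boolean-ns}), since $\polyhomo{\rela[\cl{m}]}{\relt}$ contains all the Boolean axioms.
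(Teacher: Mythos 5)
Your proof is correct and takes essentially the same approach as the paper: induction over the lexicographic order, with the lex game supplying an integral witness polynomial in $I(V)$ with leading term $x^\alpha$. The only structural difference is how non-multilinear monomials are handled: the paper first observes that $m$ and $m_{\mathrm{mult}}$ have the same reduction (since $m - m_{\mathrm{mult}} \in I(\{0,1\}^n) \subseteq I(V)$) and then assumes WLOG that $m$ is multilinear, whereas you fold the $\alpha_i \geq 2$ case into the witness polynomial directly via the factor $x_i(x_i-1)x_i^{\alpha_i-2}$; both routes are sound and the paper's multilinearization step is arguably the tidier of the two, since it means Lea's game moves become single Boolean-valued guesses encoded as the indicator-interpolation polynomials $f_i \in \Z[x_{i+1},\ldots,x_n]$.
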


Intuitively, given a reducible monomial $m$, we can express Lea's guesses in her winning strategy as a 
set
of polynomials 
$f_1(x), \ldots, f_n(x)$, where $x=(x_1,\ldots,x_n)$ represents Stan's answers and 
each %
$f_i$ is a Boolean-valued function that depends only on $\set{x_{i+1},\ldots,x_n}$. 
We will use these polynomials to rewrite $m$ into smaller terms. 

\begin{proof}
We 
use induction over the lexicographic order of $m=x^\alpha$. 
Without loss of generality, we may assume $m$ is multilinear, as $m = m_{\mathrm{mult}}$ modulo $\ideal{x_1^2-x_1,\ldots,x_n^2-x_n}$ where the latter ideal is $I(\{0,1\}^n)\subseteq I(V)$. 

The base case $m=1$ is immediate: 1 either reduces to 0 or itself, both of which are integers.

For the induction step, %
we denote $I=I(V)$. 
If $m$ is irreducible modulo $I$ then $R_{I}(m)=m$, which trivially has integer coefficients. %
Hence, we may suppose that $m$ is reducible modulo $I$. 
By \cref{thm:lex-game}, in this case Lea has a winning strategy in the game $\lex{V}{\alpha}$. Recall that Lea makes a guess for the value of $x_i$ for each $i$ where $\alpha_i=1$, her guess for $x_i$ depends only on Stan's previous answers, and her guesses are $\set{0,1}$-valued since $V\subseteq\set{0,1}^n$. 
Therefore, we can write her guess for $x_i$ as a polynomial 
\[f_i(x_{i+1},\ldots,x_n)=\sum_{a\in\set{0,1}^{n-i}} c_a m_a\]
where $m_a=\prod_{j: a_j=1}x_j\prod_{j: a_j=0}(1-x_j)$ is the indicator polynomial of ``$x=a$'' and $c_a\in\set{0,1}$ is her guess when Stan answered $(x_{i+1},\ldots,x_n)=a$. Clearly, $f_i\in\Z[x_{i+1},\ldots,x_n]$. 
As in \cite{FRR06LexGame}, we now consider the polynomial
\begin{equation}
        p(x) = \prod_{i:\ \alpha_i = 1} (x_i - f_i)
    \end{equation}
which has integer coefficients. The leading term of each factor $x_i-f_i$ is $x_i$ as $f_i$ depends only on $x_{i+1}, \ldots, x_n$, so the leading term of $p$ is $x^\alpha = m$. Moreover, we claim that $p$ is in the ideal $I$. To see this, notice that for any $a \in V$, we have that $p(a)=0$. That is, since Stan loses the lex game instance corresponding to $\alpha$, Lea can always stop him from answering $a$. This means that there is some $i\in [n]$ such that $f_i(a_{i+1},\ldots, a_n)=a_i$.
It follows that $p$ vanishes on $V$ and so is contained in~$I$. 

    We now complete the induction by considering the reduction of $m-p$. As $p \in I$, we have $\redop_{I}(p)=0$ so $\redop_{I}(m) = \redop_{I}(m-p)$. 
    We have shown that $p\in\Z[x_1,\ldots, x_n]$ and that $p$ has leading term $m$, so $m - p = \sum_i d_i m_i$ where $m_i\prec m$ for all $i$ and all $d_i$ are integers. Therefore, $\redop_{I}(m-p)=\sum_i d_i \redop_{I}(m_i)$ where each $\redop_{I}(m_i)$ is in $\Z[x_1, \ldots, x_n]$ by the induction hypothesis. It follows that $\redop_{I}(m)\in\Z[x_1, \ldots, x_n]$. The proof is complete.
\end{proof}

\cref{lem:ipr-alekhnovich-razborov} follows as a straightforward corollary of \cref{lem:integer-reduction}.

\begin{integrality-assumption}[Restated]
    Let $\rela, \relt, \cl{\cdot}$, and $\degstd$ be defined as in \cref{lem:reducibility-gives-reduction-operator}. Let $\F$ be any field of characteristic $0$ and let $\prec_{\mathrm{lex}}$ be any lexicographic order on the monomials in $\F[\mathbf{x}_{\ala, \alt}]$. For any monomial $m$ of degree at most $\degstd$, it holds that $\redop^{\prec_{\mathrm{lex}}}_{\ideal{\cl{m}}}(m) \in \Z[\mathbf{x}_{\ala, \alt}]$.
\end{integrality-assumption}

\begin{proof}
    Note that $\langle \cl{m} \rangle = I(V_{\hom(\rela[\cl{m}], \relt)})$ as ideals in $\F[x_{\rela[\cl{m}],\relt}]$ by the Boolean Nullstellensatz (\cref{lem:boolean-ns}), where $V_{\hom(\rela[\cl{m}], \relt)}\subseteq \{0, 1\}^{\setsize{\cl{m}}\cdot\setsize{\alt}}$ denotes the set of (indicator vectors of) homomorphisms $\rela[\cl{m}] \to \relt$. 
    Then, by \cref{lem:integer-reduction}, $\redop^{\prec_{\mathrm{lex}}}_{\ideal{\cl{m}}}(m) \in \Z[\mathbf{x}_{\cl{m},\alt}]$ for every monomial $m$ over $\mathbf{x}_{\cl{m},\alt}$ that has degree at most $\degstd$. The same remains true when $m$ is over $\mathbf{x}_{\ala,\alt}$ and reductions are done within the larger polynomial ring $\F[\mathbf{x}_{\ala,\alt}]$. To see this, set all variables outside of $\mathbf{x}_{\cl{m},\alt}$ to 0; this assignment does not touch any variable of any generator of $\ideal{\cl{m}}$, nor of $m$, so the reductions in the two polynomial rings coincide by uniqueness of the representation $m = \redop^{\prec_{\mathrm{lex}}}_{\ideal{\cl{m}}}(m) + q$ for $q\in{\ideal{\cl{m}}}$. \qedhere
\end{proof}

\begin{remark}
For readers familiar with the polynomial calculus literature, we note that the lex game characterization of monomial reducibility broadly generalizes the \textit{pigeon dance} criterion, established in \cite{Razborov98LowerBound} and simplified in \cite{IPS99LowerBounds}, for ideals arising from the \emph{pigeonhole principle.}
\end{remark}

\section{Lower Bounds for Approximate Graph Coloring}
\label{sec:approximate-graph-coloring}
In this section, we use the machinery developed in Sections~\ref{sec:csp-hiearchies-ipr-operators}-\ref{sec:ipr-operators-lex-game} to prove lower bounds for the level needed for the cohomological $\kcons$\nobreakdash-consistency algorithm to solve %
approximate graph coloring, also called $\colcons$~vs.~$\ell$-coloring. In this problem, the task is to decide whether the input graph $G$ is either $\colcons$-colorable or not even $\ell$-colorable, for some $\ell \geq \colcons$. Phrased as a PCSP, approximate graph coloring is $\pcsp{K_\colcons}{K_\ell}$, where $K_a$ denotes the complete graph on $a$ vertices. 

The fooling instances are sampled from the \emph{random regular graph model} $\gnd$, which is the uniform distribution over $\graphdeg$-regular graphs on $n$ vertices.\footnote{Note that a $\graphdeg$-regular graph on $n$ vertices can only exist if $\graphdeg n$ is even. We ignore this technical issue in what follows.} We write $G\sim \gnd$ to denote that $G$ is a graph sampled from $\gnd$. More quantitatively, the main result in this section is \cref{th:cohomological-agc-lbs}. 

\begin{theorem}
    \label{th:cohomological-agc-lbs}
    There is an absolute constant $C$ such that the following holds for natural numbers $n$ and $\graphdeg$ such that $\graphdeg \geq 10$ and $6 \graphdeg^3\leq \log{n}$. With high probability as $n\to \infty$, a graph $G\sim\gnd$ has chromatic number strictly larger than $\graphdeg/4\log{\graphdeg}$, but the cohomological $\kcons$\nobreakdash-consistency algorithm for $3$-colorability accepts $G$ for all $\kcons \leq n\cdot \graphdeg^{-C\graphdeg}$. Consequently, the cohomological $\kcons$\nobreakdash-consistency algorithm does not solve $3$~vs.~${\graphdeg/4\log{\graphdeg}}$-coloring on $n$-vertex graphs, for any $\kcons \leq  n\cdot \graphdeg^{-C\graphdeg}$. 
\end{theorem}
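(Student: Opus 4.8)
The proof splits into a chromatic-number bound and an acceptance bound. For the former I would use the standard first-moment estimate on the independence number of $G\sim\gnd$: carried out in the configuration model, the expected number of independent sets of size $k$ is at most $\binom nk\exp\!\big(-(1+o(1))k^2\graphdeg/(2n)\big)$, which is $o(1)$ once $k\ge(2+o(1))n\ln\graphdeg/\graphdeg$, uniformly over the regime $\graphdeg\le(\log n/6)^{1/3}$. Hence whp $\alpha(G)\le(2+o(1))n\ln\graphdeg/\graphdeg$, and since any graph of chromatic number $c$ contains an independent set of size $n/c$, whp $\chi(G)\ge(1-o(1))\graphdeg/(2\ln\graphdeg)>\graphdeg/(4\log\graphdeg)$ for every $\graphdeg\ge10$; in particular $G\not\to K_{\graphdeg/4\log\graphdeg}$.

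For the acceptance bound I would instantiate the framework of \cref{sec:ar-method-cohomological-consistency} with template $\relt=K_3$, field $\F=\Q$, and a lexicographic order $\prec_{\mathrm{lex}}$ on the monomials of $\Q[\mathbf x_{V(G),[3]}]$. The central object is the combinatorial \emph{closure operator} $\cl{\cdot}$ on $V(G)$ used for the polynomial calculus degree lower bound for $3$-colorability of random regular graphs in \cite{CdRNPR25GraphColouring}: a size-$\degstd$ closure operator (self-containment, monotonicity and idempotence hold by construction) that depends only on the vertex set $\verts{V(G)}(m)$ of a monomial $m$ — as required in the discussion preceding \cref{lem:reducibility-gives-reduction-operator} — and that is obtained by iteratively absorbing vertex subsets of small boundary relative to their size. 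From the quantitative analysis of \cite{CdRNPR25GraphColouring} I would extract that, whp for $G\sim\gnd$ in the stated regime, one can take $\degstd$ of order $n/\graphdeg^{2}$ and $\clsize=n\cdot\graphdeg^{-C\graphdeg}$ such that (i) every $W\subseteq V(G)$ with $\setsize W\le\degstd$ induces a sparse, hence $2$-degenerate, subgraph $G[W]$, so that $G[W]\to K_3$ — this ``no dense subgraph'' property of $\gnd$ is verified by a union bound whose validity is exactly what forces $6\graphdeg^3\le\log n$, and it is precisely the \emph{satisfiability} hypothesis of \cref{lem:reducibility-gives-reduction-operator}; and (ii) every monomial $m$ of degree at most $\clsize$ has $\setsize{\cl m}\le\degstd$, because the boundary-expansion of $\gnd$ bounds the blow-up of the closure operation by a factor $\graphdeg^{O(\graphdeg)}$.

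The remaining, and most delicate, step is to verify the \emph{reducibility} hypothesis of \cref{lem:reducibility-gives-reduction-operator} for $\prec_{\mathrm{lex}}$: whenever $\cl{m'}\subseteq\cl m$, the monomial $m'$ is reducible modulo $\ideal{\cl{m'}}$ if and only if it is reducible modulo $\ideal{\cl m}$. I would adapt the corresponding argument of \cite{CdRNPR25GraphColouring}, being careful to invoke only properties valid for an arbitrary admissible order, since — as flagged in \cref{rmk:require-lex-order} — the analogous operators of \cite{AR03LowerBounds} relied on a degree-respecting order precisely at this point; I expect this to be the main obstacle. Once satisfiability and reducibility are established, \cref{lem:reducibility-gives-reduction-operator} produces a degree-$\degstd$ pseudo-reduction operator for $\polyhomo{G}{K_3}$, which is integral by \cref{lem:ipr-alekhnovich-razborov}, so \cref{lem:cohomological-k-consistency-lower-bounds} shows that the cohomological $\kcons$-consistency algorithm for $3$-colorability accepts $G$ for all $\kcons\le\clsize=n\cdot\graphdeg^{-C\graphdeg}$.

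Combining the two parts, whp a graph $G\sim\gnd$ has $\chi(G)>\graphdeg/(4\log\graphdeg)$ — so $G\not\to K_{\graphdeg/4\log\graphdeg}$ — and is accepted by the cohomological $\kcons$-consistency algorithm for $\csp{K_3}$ for all $\kcons\le n\cdot\graphdeg^{-C\graphdeg}$; hence $G$ is a fooling instance for $3$~vs.~$\graphdeg/(4\log\graphdeg)$-coloring, and the cohomological $\kcons$-consistency algorithm does not solve this PCSP for any such $\kcons$. Besides the reducibility verification above, the other place that needs care is the quantitative bookkeeping relating $\clsize$, $\degstd$, the $\graphdeg^{O(\graphdeg)}$ closure blow-up and the hypothesis $6\graphdeg^3\le\log n$, which I would track through the estimates of \cite{CdRNPR25GraphColouring}.
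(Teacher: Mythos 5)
Your high-level framework is right (chromatic-number bound, then instantiate \cref{lem:reducibility-gives-reduction-operator}, integrality via \cref{lem:ipr-alekhnovich-razborov}, conclude by \cref{lem:cohomological-k-consistency-lower-bounds}), and the first half of your proposal is fine: the paper simply cites the chromatic-number concentration as \cref{lem:random-graph-chromatic-number} rather than redoing a first-moment calculation, but that is a cosmetic difference.

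There are, however, three genuine problems in the second half. First, the closure operator you describe --- ``iteratively absorbing vertex subsets of small boundary relative to their size'' --- is not the one used in \cite{CdRNPR25GraphColouring} for $3$-coloring. That closure (\cref{def:closure}) is \emph{order-based}: it closes under descendants with respect to a linear vertex order $\prec_V$, and under $2$-, $3$-, $4$-hops and lassos. The entire argument hinges on the \emph{choice} of $\prec_V$, which you do not mention: one fixes a proper coloring $\chi$ of $G$ and orders vertices by color class, so that every decreasing path has at most $\chi(G)$ vertices and hence $\setsize{\Desc{U}} \leq \graphdeg^{\chi(G)-1}\setsize{U}$. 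This is what controls the closure blow-up via \cref{lem:closure-size-lemma}, and it is also what makes the lexicographic monomial order respect $\prec_V$ as required by \cref{lem:local-reduction-strong}. Without this coloring-based vertex order, neither the size bound on $\cl{\cdot}$ nor the reducibility has any handle.

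Second, your parameter choice $\degstd \sim n/\graphdeg^2$ cannot work. \cref{lem:reducibility-gives-reduction-operator} requires the \emph{satisfiability} condition --- $G[\cl{m}] \to K_3$ --- to hold for \emph{every} monomial $m$ of degree at most $\degstd$, and the only tool for establishing satisfiability is \cref{lem:sparse-colourable}, which needs $\setsize{\cl{m}} \leq \spsize$ where $\spsize$ is the sparsity parameter of \cref{lem:random-graph-sparse}. With $\epsilon \approx 1/3\graphdeg$ (needed so that $\epsilon < 1/2$ and \cref{lem:local-reduction-strong} applies), that parameter is $\spsize = (8\graphdeg)^{-\Theta(\graphdeg)} n$, and after the $25\graphdeg^{\chi(G)}$ closure blow-up one gets $\degstd = \graphdeg^{-\Theta(\graphdeg)}\cdot n$. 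A monomial of degree $n/\graphdeg^2$ can mention $n/\graphdeg^2 \gg \spsize$ vertices, so its closure need not be $3$-colorable, and satisfiability fails. In the paper both $\degstd$ and $\clsize$ are $\graphdeg^{-\Theta(\graphdeg)} n$, not two separated scales.

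Third, you flag the reducibility condition as ``the main obstacle'' because of \cref{rmk:require-lex-order}, but that remark concerns \cite{AR03LowerBounds}, not \cite{CdRNPR25GraphColouring}. The latter (stated as \cref{lem:local-reduction-strong} here) already proves reducibility for \emph{any} admissible order that respects $\prec_V$ --- in particular for the lexicographic order induced by the coloring-based vertex order --- so there is nothing to adapt. The subtlety you should be worried about is instead the vertex-ordering choice above, which your proposal omits entirely.
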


The same result holds for sparse Erd\"{o}s-R\'{e}nyi random graphs (where each edge in the graph appears independently with some fixed probability $p$), with a slighly worse dependence on $d$ in the parameters. The proof is essentially the same as that of \cref{th:cohomological-agc-lbs}.  

The rest of this section is devoted to the proof of \cref{th:cohomological-agc-lbs}. Fortunately, all the technical work has already been done in \cite[Sections~5\nobreakdash-6]{CdRNPR25GraphColouring} (partially building on ideas of \cite{RT24GraphsLargeGirth}) for the purpose of proving degree lower bounds for the polynomial calculus proof system in \cref{sec:proof-complexity}. Hence, \cref{th:cohomological-agc-lbs} is, more or less, an immediate consequence of their degree lower bound combined with \cref{lem:cohomological-k-consistency-lower-bounds}, and all that is needed in this section is to reiterate and rephrase their proof for our setting.

We first recall the well-known fact that the chromatic number of a graph $G\sim \gnd$ is concentrated around $\graphdeg/\log{\graphdeg}$.

\begin{lemma}[\cite{KPGW10Chromatic, CFRR02RandomRegular}]
  \label{lem:random-graph-chromatic-number}
With probability $1-o_n(1)$, a graph $G$ sampled from $\gnd$ where $d \leq n^{0.1}$ has chromatic number strictly larger than~$d/4\log{d}$ and at most $d/\log{d}$.\footnote{The cited references establish much stronger concentration bounds than those stated in \cref{lem:random-graph-chromatic-number}, but these bounds do not meaningfully improve the parameters in our main results.}
\end{lemma}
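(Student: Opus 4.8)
The statement bundles two facts of rather different difficulty: the lower bound $\chi(G) > d/(4\log d)$ and the upper bound $\chi(G) \le d/\log d$. Only the lower bound feeds into \cref{th:cohomological-agc-lbs}, so the plan is to prove that half from scratch and to invoke \cite{KPGW10Chromatic, CFRR02RandomRegular} for the upper bound. For the lower bound I would work in the \emph{configuration (pairing) model}: realize $G$ by giving each of the $n$ vertices $d$ half-edges and choosing a uniformly random perfect matching on the $dn$ half-edges; conditioned on the outcome being a simple graph, this distribution is exactly $\gnd$. If $d$ is below an absolute constant then $d/(4\log d) < 2$ and $\chi(G) \ge 2$ is immediate because $G$ has an edge, so we may assume $d$ is at least a large constant and set $k = \floor{d/(4\log d)} \ge 2$; it then suffices to show that with probability $1-o_n(1)$ the random pairing admits no proper $k$-coloring.

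Write $Z_k$ for the number of proper $k$-colorings of the random pairing. Standard first-moment bookkeeping --- expanding $\expectation[Z_k]$ as a sum over color-class-size vectors $\mathbf n = (n_1, \dots, n_k)$, bounding the multinomial coefficients and the corresponding probability that a uniform matching of the $dn$ half-edges avoids all monochromatic pairs, and using that the balanced vector dominates --- yields
\[
\expectation[Z_k] \;\le\; \exp\!\bigl(n(\log k - d/(2k)) + o(n)\bigr) .
\]
For $k = \floor{d/(4\log d)}$ one has $d/(2k) \ge 2\log d$ while $\log k \le \log d$, so the exponent is $-\Omega(n)$ once $d$ is large, and Markov's inequality shows the random pairing admits a proper $k$-coloring with probability at most $e^{-\Omega(n)}$. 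To transfer this bound to $\gnd$, recall the classical fact that the random pairing is simple with probability $e^{-\Theta(d^2)}$ when $d = o(n^{1/3})$, hence at least $e^{-C n^{0.2}}$ in the range $d \le n^{0.1}$; therefore
\[
\Pr_{G \sim \gnd}\!\bigl[\chi(G) \le k\bigr] \;\le\; \frac{\expectation[Z_k]}{\Pr[\text{pairing simple}]} \;\le\; e^{-\Omega(n) + C n^{0.2}} \;\xrightarrow{n\to\infty}\; 0 ,
\]
which is the claimed lower bound.

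For the upper bound $\chi(G) \le d/\log d$ I would cite \cite{KPGW10Chromatic, CFRR02RandomRegular}: the argument there repeatedly extracts an independent set of size $(1-o(1)) \cdot \tfrac{2n\log d}{d} > \tfrac{n\log d}{d}$, removes it, and shows the residual graph stays pseudorandom enough for the independence-number estimate to keep applying, so that fewer than $d/\log d$ rounds exhaust all vertices. The main obstacle in a fully self-contained proof is exactly this upper bound --- maintaining quantitative control of the graph through many successive vertex deletions --- whereas the lower bound, which is all that \cref{th:cohomological-agc-lbs} uses, reduces to the routine first-moment estimate above.
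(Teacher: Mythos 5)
The paper does not supply a proof of this lemma at all; it is stated with a citation to \cite{KPGW10Chromatic, CFRR02RandomRegular}, and the accompanying footnote indicates the authors deliberately import a much weaker statement than what those references establish. So your proposal is not comparable to a ``paper proof'' so much as a reconstruction of half of what the citation covers.

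On its own terms, your argument for the lower bound is correct and is essentially the standard first-moment (annealed) calculation from the random-regular-graph-coloring literature, which is also how the cited works proceed. A few points are compressed: the claim that the balanced color-class-size vector $(n/k,\dots,n/k)$ dominates the sum needs a Schur-concavity argument (both the entropy term from the multinomial coefficient and the $\log$-probability term are Schur-concave on the simplex, so their sum is maximized at the uniform point); the number of summands is $\binom{n+k-1}{k-1} \le n^{k} = e^{O(n^{0.1}\log n)} = e^{o(n)}$, so the union over $\mathbf n$ does not damage the exponential decay; and the simplicity probability $e^{-\Theta(d^2)}$, valid for $d = o(n^{1/3})$, comfortably covers $d \le n^{0.1}$. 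With those checks filled in, the exponent $\log k - d/(2k) \le \log d - 2\log d = -\log d < 0$ for $k = \lfloor d/(4\log d)\rfloor$ does the job, and the transfer from the pairing model to $\gnd$ by dividing by $\Pr[\text{pairing simple}]$ is the correct move. Your decision to cite the upper bound $\chi(G)\le d/\log d$ rather than reprove it is sensible, since that half requires the iterated independent-set extraction you describe and is never used in \cref{th:cohomological-agc-lbs}. In short: correct, but this is a reconstruction of a fact the paper treats as a black box.
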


The crucial property of random graphs that is used to establish both the satisfiability and reducibility conditions is \emph{sparsity}.%

\begin{definition}[Sparsity]
    \label{def:sparse}
  A graph $G=(V,E)$ is
  \mbox{\emph{$(\spsize, \epsilon)$\nobreakdash-sparse}} if every vertex
  set~$U\subseteq V$ of size at most $\spsize$ satisfies
  $\setsize{E(U)} \leq (1+ \epsilon)\setsize{ U }$.
\end{definition}

As is well known, random regular graphs are sparse with high probability and with strong parameters.

\begin{lemma}[See, e.g., \cite{CdRNPR25GraphColouring}, Lemma~2.7]
    \label{lem:random-graph-sparse}
    For $n, d \in \N^+$ and $\epsilon, \delta \in \mathbb{R}^+$ satisfying ${\epsilon \delta = \omega(1/\log{n})}$ and $d^2\leq \epsilon \delta \log{n}$, it holds with high probability that a graph $G$ sampled from $\gnd$ is $(\spsize, \epsilon)$-sparse for $\spsize = (8d)^{-(1+\delta)(1+\epsilon)/\epsilon} n$.
\end{lemma}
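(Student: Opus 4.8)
The plan is a first-moment union bound over small vertex sets. Observe first that $G$ fails to be $(\spsize,\epsilon)$-sparse exactly when some $U\subseteq V$ with $\setsize{U}\le\spsize$ induces strictly more than $(1+\epsilon)\setsize{U}$ edges, and that---as a $u$-vertex graph has at most $\binom u2$ edges---this forces $u>3+2\epsilon$. So it suffices to prove $\sum_u p_u=o_n(1)$, the sum running over integers $3+2\epsilon<u\le\spsize$, where $p_u$ is the probability that $G\sim\gnd$ has some $u$-element vertex set inducing at least $k_u:=\floor{(1+\epsilon)u}+1$ edges. Note $k_u\ge(1+\epsilon)u$ and, since $k_u>(1+\epsilon)u>u$, also $k_u\ge u+1$; both inequalities will be used.

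For a fixed $u$ I would estimate $p_u$ by the first moment:
\[
  p_u\ \le\ \binom nu\binom{\binom u2}{k_u}\bigl(\tfrac{2d}{n}\bigr)^{k_u}\eqcomma
\]
obtained by a union bound over the $\binom nu$ choices of the $u$-set and the $\binom{\binom u2}{k_u}$ choices of $k_u$ pairs inside it, together with the standard estimate that any fixed family of $k$ vertex pairs appears as edges of $G\sim\gnd$ with probability at most $(2d/n)^k$ as long as $k=o(n)$ (which holds here, since $k_u\le\binom{\spsize}2\ll n$). That estimate is cleanest in the configuration (pairing) model: revealing the $k$ pairs one at a time, each is matched as prescribed with conditional probability at most $d/(n-O(k))\le 2d/n$, and transferring to $\gnd$ costs only the factor $1/\Pr[\text{the pairing is simple}]=e^{O(d^2)}$, which $d^2\le\epsilon\delta\log n$ bounds by $n^{O(\epsilon\delta)}$.

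It then remains to sum over $u$. Applying $\binom nu\le(en/u)^u$, $\binom{\binom u2}{k_u}\le\bigl(eu^2/(2k_u)\bigr)^{k_u}$ and $k_u\ge(1+\epsilon)u$ (using that $eud/n\le1$ for $u\le\spsize$) turns the bound into $p_u\le e^{O(d^2)}\cdot B_u^{u}$ with $B_u:=e^{2+\epsilon}d^{1+\epsilon}(u/n)^{\epsilon}$, and the value $\spsize=(8d)^{-(1+\delta)(1+\epsilon)/\epsilon}n$ is calibrated precisely so that $B_{\spsize}=e^{2+\epsilon}d^{1+\epsilon}(8d)^{-(1+\delta)(1+\epsilon)}\le\tfrac{e^2}{8}(8d)^{-\delta(1+\epsilon)}<1$, the factor $(8d)^{-\delta(1+\epsilon)}$ being the slack contributed by $\delta$. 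Since $u\mapsto u\ln B_u$ is convex, $\max_{u_{\min}\le u\le\spsize}B_u^{u}$ is attained at an endpoint: at $u=\spsize$ it is $B_{\spsize}^{\spsize}$, tiny because $B_{\spsize}<1$; at the smallest admissible $u$ it is governed by the factor $(u/n)^{\epsilon u}=n^{-\Omega(\epsilon)}$, and in the regime where $\epsilon$ is so small that $(u/n)^{\epsilon}$ is not small, one instead invokes the bound $k_u=u+1$, which contributes an extra factor $2d/n$ to $p_u$. The hypotheses $\epsilon\delta=\omega(1/\log n)$ and $d^2\le\epsilon\delta\log n$---which in particular make $\epsilon\log n$ grow whenever $\spsize$ does not already fall below $3+2\epsilon$, rendering the statement vacuous otherwise---are exactly what make $\sum_u p_u$, including the $e^{O(d^2)}=n^{O(\epsilon\delta)}$ transfer factor and the $\le\spsize$ summands, tend to $0$.

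I expect the only genuine difficulty to be the bookkeeping in this last step: chasing constants through the definition of $\spsize$ to certify $B_{\spsize}<1$ with the stated margin, and verifying that $\sum_u p_u$ is honestly $o_n(1)$---not merely $O(1)$---uniformly over every admissible $(\epsilon,\delta,d)$, which requires treating the smallest values of $u$ (via $k_u=u+1$) separately from the geometric tail. These computations are routine, and the full details can be found in the cited reference.
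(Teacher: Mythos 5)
The paper does not contain its own proof of this lemma; it is cited from \cite{CdRNPR25GraphColouring}, Lemma~2.7, so there is no in-paper argument to line your proposal up against. Your plan is the standard route one would expect that reference to follow: a first-moment union bound over $u$-sets with $k_u = \floor{(1+\epsilon)u}+1$ induced edges, with edge probabilities estimated in the configuration model and the base constant $8$ chosen so that $B_s = e^{2+\epsilon}d^{1+\epsilon}(8d)^{-(1+\delta)(1+\epsilon)} \leq \tfrac{e^2}{8}(8d)^{-\delta(1+\epsilon)} < 1$; that calculation is correct.

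One thing you flag as ``bookkeeping'' is in fact a genuine point to watch, and I would not call it routine. The transfer factor $1/\Pr[\text{pairing is simple}] = e^{\Theta(d^2)}$ is only $n^{O(\epsilon\delta)}$, and the hypotheses bound $\epsilon\delta$ from \emph{below}, not above. When $\epsilon\delta = \Theta(1)$ (so $d^2 = \Theta(\log n)$, which is allowed) the transfer factor is a positive power of~$n$, and the bare factor $2d/n$ from $k_u = u+1$ at $u=4$ no longer obviously beats it. The way out is that in this regime necessarily $\epsilon \geq 1/2$, so $u_{\min} = \lceil 3+2\epsilon\rceil + 1 \approx 2\epsilon$ is \emph{also} large and the factor $(u_{\min}/n)^{\epsilon u_{\min}} \approx n^{-\Theta(\epsilon^2)}$ carries the day; alternatively, one bounds fixed-subgraph probabilities in $\gnd$ directly (McKay--Wormald style), which gives $(O(d/n))^k$ without the global $e^{\Theta(d^2)}$ penalty. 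So the fix you sketch for small $u$ (switching to $k_u \geq u+1$) is aimed at the wrong regime: that is the regime where $\epsilon$ is \emph{not} tiny and $u_{\min}$ is a constant. When $\epsilon$ is tiny, the constraint $\epsilon\delta = \omega(1/\log n)$ forces $\delta \to \infty$ and then $s < 3+2\epsilon$, making the statement vacuous, as you correctly observe. With this reorganization of the case analysis the argument closes; the overall approach is sound.
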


All not-too-large subgraphs in a sparse graph are $3$-colorable, as shown in the next lemma. Since the proof of this folklore result is short and instructive, we include it with no claim of originality.

\begin{lemma}[Folklore, see, e.g., \cite{CdRNPR25GraphColouring}, Lemma~2.6]
  \label{lem:sparse-colourable}
  If a graph $G=(V, E)$ is $(\spsize, \epsilon)$\nobreakdash-sparse for
  some $\epsilon < 1/2$, then it holds for every subset $U\subseteq V$
  of size at most $\spsize$ that~$G[U]$ is $3$-colorable.
\end{lemma}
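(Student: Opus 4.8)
The plan is to deduce from the sparsity hypothesis that $G[U]$ is $2$\nobreakdash-degenerate, and then color it greedily. The key observation is that $(\spsize, \epsilon)$\nobreakdash-sparsity with $\epsilon < 1/2$ forces every induced subgraph of $G$ on at most $\spsize$ vertices to contain a vertex of degree at most~$2$: indeed, for any $W \subseteq V$ with $\setsize{W} \leq \spsize$ we have $\setsize{E(W)} \leq (1+\epsilon)\setsize{W} < \tfrac{3}{2}\setsize{W}$, so $\sum_{v \in W}\deg_{G[W]}(v) = 2\setsize{E(W)} < 3\setsize{W}$, and hence some $v \in W$ satisfies $\deg_{G[W]}(v) \leq 2$.

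Using this repeatedly, I would peel $U$ apart from the outside in. Set $U_{\setsize{U}} = U$, and for $j = \setsize{U}, \setsize{U}-1, \ldots, 1$ let $v_j$ be a vertex of $G[U_j]$ of degree at most~$2$ — which exists by the previous paragraph, since $\setsize{U_j} = j \leq \setsize{U} \leq \spsize$ — and then put $U_{j-1} = U_j \setminus \{v_j\}$. This produces an enumeration $v_1, \ldots, v_{\setsize{U}}$ of $U$ in which each $v_j$ has at most two neighbors among $v_1, \ldots, v_{j-1}$.

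It then suffices to color $v_1, v_2, \ldots, v_{\setsize{U}}$ greedily with colors from $\{1,2,3\}$: when $v_j$ is to be colored, at most two of its neighbors — namely those preceding it in the enumeration — have already received a color, so at least one of the three colors remains available. For any edge $\{v_i, v_j\}$ with $i < j$, the vertex $v_j$ avoided the color of $v_i$ when it was colored, so the resulting map is a proper $3$\nobreakdash-coloring of $G[U]$. There is no real obstacle here; the only point that needs care is that a low-degree vertex is available at every step of the peeling, and this is precisely guaranteed by the fact that every set $U_j$ inherits the size bound $\setsize{U_j} \leq \spsize$ and hence the sparsity inequality.
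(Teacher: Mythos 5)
Your proof is correct and is essentially the paper's argument: the paper proceeds by induction on $\setsize{U}$, removing a vertex of degree at most~$2$ (whose existence follows from the sparsity bound $2\setsize{E(U)}/\setsize{U} \leq 2(1+\epsilon) < 3$ on average degree) and extending a $3$\nobreakdash-coloring of the remainder. Your degeneracy peeling followed by greedy coloring is the same induction unrolled, so the two proofs coincide in substance.
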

\begin{proof}
    We proceed by induction on $\setsize{U}$. The base case $\setsize{U} = 1$ is
  immediate. For the induction step, assume that the claim holds
  for all sets of size at most $s-1$. Consider a set~$U\subseteq V$ of
  size $a \leq \spsize$. The average degree in~$G[U]$
  is~$2\setsize{E(U)}/a$, which is at most $2(1+\epsilon) < 3$ by sparsity. Hence,
  there exists a vertex~$v\in U$ with degree at most 2 in~$G[U]$. The
  graph~$G{[U\setminus \set{v}]}$ is $3$-colorable by the induction
  hypothesis, and every $3$-coloring witnessing this will leave at
  least one color available for $v$. Hence, every
  $3$-coloring of~${G[U\setminus \set{v}]}$ can be extended to
  $G[U]$. The lemma follows.
\end{proof}

The main technical insight needed to establish the satisfiability and reducibility conditions is an appropriate notion of \emph{closed} set (cf. \cref{sec:ar-method-cohomological-consistency}), for which we need the following definitions from \cite[Section~5]{CdRNPR25GraphColouring}. Given a linear order $\prec$ on $V$, an
\emph{increasing (decreasing) path} in~$G$ is a simple path
$(v_1, \ldots, v_\tau)$ where $v_{i} \prec v_{i+1}$
($v_{i} \succ v_{i+1}$) for all $i\in [\tau-1]$. For vertices
$u, v\in V$ we say that~$v$ is a \emph{descendant of $u$} if there
exists a decreasing path from $u$ to $v$, and for a set
$U \subseteq V$, a vertex $v$ is a \emph{descendant of $U$} if it is
a descendant of some vertex in $U$. We denote the
set of all descendants of~$U$ by $\Desc{U}$ and define every vertex to be a
descendant of itself so that $U \subseteq \Desc{U}$.

A \emph{$\hoplength$\nobreakdash-hop with respect to a set $U \subseteq V$} is a
simple path or a simple cycle of length $\hoplength$ with the property
that its two endpoints are both contained in~$U$ (in the case of
cycles, the two endpoints coincide), while all other vertices are not
in~$U$. A \emph{lasso with respect to $U$} is a
walk $(v_1, v_2, v_3, v_4, v_5)$ where $v_2 = v_5$ and all other
vertices are distinct, and where $v_1 \in U$.

\begin{definition}[Closure]
  \label{def:closure}
  Let $G=(V,E)$ be a graph and let $U \subseteq V$. We say that $U$ is
  \emph{closed} if $U = \Desc{U}$ and there are no 2\nobreakdash-,
  3\nobreakdash-, 4\nobreakdash-hops or lassos with respect to $U$. A
  \emph{closure} of $U$ is any minimal closed set that contains~$U$.
\end{definition}

It is shown in \cite[Proposition~5.2]{CdRNPR25GraphColouring} that there is in fact a unique closure for every set $U\subseteq V$, which we denote by $\cl{U}$. The map $\cl{\cdot} \colon 2^V \to 2^V$ satisfies self-containment by definition, and monotonicity and idempotence are straightforward consequences of minimality. Hence, $\cl{\cdot}$ is a closure operator. 

The following lemma shows that the closure of a small enough set $U\subseteq V$ is not much larger than $U$, given a bound on the sizes of descendant graphs.

\begin{lemma}[\cite{CdRNPR25GraphColouring}, Lemma~5.4]%
  \label{lem:closure-size-lemma}
  Suppose that $G=(V, E)$ has a linear order on $V$ and is
  $(\ell, 1/3\sparseparam)$\nobreakdash-sparse for
  $\sparseparam\geq 2$.
  Let~$U\subseteq V$ be a set of size
  $\setsize{U} \leq \ell/25\sparseparam$ such that any decreasing path
  in $G[V\setminus U]$ has at most $\sparseparam$ vertices.
  Then there exists a set $U \subseteq Z\subseteq V$ such that $\setsize{Z} \leq 25 \setsize{U}$ and
  $\cl{U} = \Desc{Z}$.
\end{lemma}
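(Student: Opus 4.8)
The plan is to construct $\cl{U}$ by an explicit iterative procedure that simultaneously records a small ``seed'' set $Z$ with $\Desc{Z} = \cl{U}$, and then to bound $\setsize{Z}$ using sparsity. Initialize $Z \leftarrow U$ and $W \leftarrow \Desc{U}$. As long as $W$ is not closed --- that is, while there is a $2$\nobreakdash-, $3$\nobreakdash-, or $4$\nobreakdash-hop or a lasso with respect to $W$ --- fix one such configuration, let $S$ be its set of internal vertices (those not already in $W$), and update $Z \leftarrow Z \union S$ and $W \leftarrow \Desc{Z}$; output $Z$ once the loop halts. The loop terminates because $W$ strictly grows and $W \subseteq V$, and its final value is descendant\nobreakdash-closed and hop/lasso\nobreakdash-free, hence closed and contains $U$, so $\cl{U} \subseteq \Desc{Z}$. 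For the reverse inclusion I would check by induction that $Z \subseteq \cl{U}$ throughout: $U \subseteq \cl{U}$, and if $Z_i \subseteq \cl{U}$ then $\Desc{Z_i} \subseteq \cl{U}$ by descendant\nobreakdash-closedness of $\cl{U}$, and any internal vertex of a hop or lasso with respect to $\Desc{Z_i}$ must also lie in $\cl{U}$, since otherwise it would witness a hop or lasso with respect to $\cl{U}$ itself. Hence $\cl{U} = \Desc{Z}$.

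Since a hop of length at most $4$, and a lasso, each have at most three internal vertices, $\setsize{Z} \le \setsize{U} + 3T$, where $T$ is the number of hop/lasso rounds; so it suffices to show $T \le 8\setsize{U}$. For this I would track the potential $\Phi(W) = \setsize{E(W)} - \setsize{W}$. Appending a descendant adds one vertex together with at least the edge joining it to its predecessor on a decreasing path, so $\Phi$ is non\nobreakdash-decreasing along descendant steps; appending the internal vertex set $S$ of a hop or lasso adds $\setsize{S}$ vertices but at least $\setsize{S}+1$ edges of that configuration, each new because it is incident to $S$, so $\Phi$ jumps by at least $1$ at every hop/lasso round. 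Hence $T \le \Phi(W_{\mathrm{fin}}) - \Phi(\Desc{U}) \le \Phi(W_{\mathrm{fin}}) + \setsize{U}$, using that $G[\Desc{U}]$ has at most $\setsize{U}$ connected components. If $\setsize{W_{\mathrm{fin}}} \le \ell$, then $(\ell, 1/3\sparseparam)$\nobreakdash-sparsity gives $\Phi(W_{\mathrm{fin}}) \le \setsize{W_{\mathrm{fin}}}/3\sparseparam$, and combining this with a bound $\setsize{\Desc{Z}} = O(\sparseparam\setsize{U})$ --- which would follow from $\setsize{Z} \le 25\setsize{U} \le \ell/\sparseparam$, the hypothesis that decreasing paths in $G[V\setminus U]$ have at most $\sparseparam$ vertices, and sparsity bounding the branching of the descendant trees --- yields $\Phi(W_{\mathrm{fin}}) \le 7\setsize{U}$ and hence $T \le 8\setsize{U}$, which a posteriori confirms $\setsize{W_{\mathrm{fin}}} < \ell$.

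The main obstacle is the circularity buried in the last step: the sparsity bound on $\Phi$ needs $\setsize{\Desc{Z}} \le \ell$, but $\setsize{\Desc{Z}}$ grows with $\setsize{Z}$, which grows with $T$. I would break this with a \emph{first\nobreakdash-violation} argument: assuming for contradiction that $T > 8\setsize{U}$, truncate the procedure at the first round after which $\setsize{Z}$ would exceed $25\setsize{U}$; at that moment $\setsize{Z} \le 25\setsize{U} \le \ell/\sparseparam$, so the descendant\nobreakdash-size estimate keeps the current $W$ of size at most $\ell$, sparsity applies, and the computation above forces $T \le 8\setsize{U}$ up to that round, contradicting the truncation. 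The genuinely delicate piece is the descendant\nobreakdash-size estimate $\setsize{\Desc{Z}} = O(\sparseparam\setsize{U})$ itself: a vertex's descendant set inside $G[V\setminus U]$ is reached along decreasing paths of at most $\sparseparam$ vertices, and one must use $(\ell, 1/3\sparseparam)$\nobreakdash-sparsity to control how much these trees can branch, while the hypothesis $\setsize{U} \le \ell/25\sparseparam$ is what keeps every intermediate set inside the sparse regime. Everything else is routine bookkeeping with the constants.
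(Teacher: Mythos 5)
The construction of $Z$ by greedily adjoining internal vertices of hops and lassos is the right outline, and the bookkeeping $\setsize{Z} \leq \setsize{U} + 3T$ and the potential $\Phi(W) = \setsize{E(W)} - \setsize{W}$ jumping by $\geq 1$ per hop/lasso round are both sound. The problem is the set to which you apply sparsity. You track $\Phi$ on $W = \Desc{Z}$ and need $\setsize{\Desc{Z}} \leq \ell$ (and in fact $\setsize{\Desc{Z}} = O(a \setsize{U})$) to get $\Phi(W_{\mathrm{fin}}) \leq 7\setsize{U}$. That bound is false, and no amount of first\nobreakdash-violation truncation repairs it, because it is wrong already at the very first step $W = \Desc{U}$. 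Sparsity upper\nobreakdash-bounds $\setsize{E(S)}$ for small $S$; it says nothing about the out\nobreakdash-branching of decreasing paths. Concretely, take $G$ to be a star $K_{1,\ell}$ with center $r$ the maximum of the order and $U = \{r\}$: $G$ is $(\ell', \epsilon)$\nobreakdash-sparse for every $\ell'$ and every $\epsilon > 0$ (it is a tree), decreasing paths in $G[V\setminus U]$ have one vertex so $a = 2$ works, $\setsize{U} = 1 \leq \ell/50$, yet $\Desc{U} = V$ has $\ell + 1$ vertices. There are no hops or lassos, so $Z = U$ and the lemma holds trivially, but your intermediate claim $\setsize{\Desc{Z}} = O(a\setsize{U})$ fails by a factor $\Theta(\ell)$, and the sparsity bound $\Phi(W_{\mathrm{fin}}) \leq \setsize{W_{\mathrm{fin}}}/3a$ is being applied to a set that exceeds the sparse regime. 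More tellingly, note that the lemma statement deliberately bounds only $\setsize{Z}$, not $\setsize{\Desc{Z}} = \setsize{\cl{U}}$: in the application (proof of \cref{th:cohomological-agc-lbs}) the paper separately bounds $\setsize{\Desc{Z}} \leq \graphdeg^{\chi(G)-1}\setsize{Z}$ using the degree $\graphdeg$, a quantity that does not appear in \cref{lem:closure-size-lemma} at all. So a correct proof of this lemma cannot, even in principle, control $\setsize{\Desc{Z}}$.

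The fix is to run the $\Phi$\nobreakdash-accounting on a much smaller witness set in place of $\Desc{Z}$: something like $Z$ together with, for each hop/lasso round, a decreasing path of at most $a$ vertices from $Z_{t-1}$ to each endpoint of the configuration (which exists because the endpoints lie in $\Desc{Z_{t-1}}$, and the portion outside $Z_{t-1}$ avoids $U$ and hence has at most $a$ vertices). This keeps the witness of size $O(\setsize{U} + aT)$, which stays below $\ell$ provided $T = O(\setsize{U})$, so sparsity applies; the hop/lasso rounds still contribute $\geq 1$ each to $\Phi$ since all $\setsize{S}+1$ configuration edges are incident to the freshly added internal vertices. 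One then has to be a little careful with constants (the crude count loses for $a \in \{2,3\}$ unless some sharing or shorter paths are exploited), but this is the right shape of the argument; tracking $\Phi(\Desc{Z})$ is not.
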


Finally, the next lemma establishes the reducibility condition for $(\spsize, \epsilon)$-sparse graphs of bounded maximum degree. Recall that $\mathbf{x}_{V, [3]}$ denotes the variables in the encoding \eqref{eq:boolean-encoding1}-\eqref{eq:boolean-encoding4} of $3$\nobreakdash-colorability. Given a linear order $\prec_V$ on the vertices of a graph $G$, we say that an order $\prec$ on the monomials in $\F[\mathbf{x}_{V, [3]}]$ \emph{respects} $\prec_V$ if for all $u, v \in V$ and all $i, j \in [3]$, it holds that $x_{u, i} \prec x_{v, j}$ whenever $u \prec_V v$.

\begin{lemma}[\cite{CdRNPR25GraphColouring}, Lemma~6.5]
    \label{lem:local-reduction-strong}
  Let $G=(V, E)$ be a $(\spsize, 1/3\Delta)$\nobreakdash-sparse graph
  with a linear ordering $\prec_V$ on $V$. Let $\F$ be any field and consider an admissible
  order on $\F[\mathbf{x}_{V, [3]}]$ that respects~$\prec_V$. For all sets
  $W\subseteq U$ satisfying that (1) $W$ is closed, (2) 
  $\setsize{U} \leq \ell$, and (3) every vertex in
  $\nbhstd{U\setminus W}{W}$ has degree at most $\Delta$ in
  $G[U\setminus W]$, it holds for every monomial $m$ such that
  $\verts{V}(m) \subseteq W$ that $m$ is reducible modulo
  $\langle U\rangle$ if and only if $m$ is reducible modulo
  $\langle W \rangle$.
\end{lemma}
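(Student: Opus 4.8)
The plan is to prove the two implications of the equivalence separately; the forward implication is immediate, and the reverse one carries all the weight. For the easy direction, note that since $W\subseteq U$, every polynomial in the encoding \eqref{eq:boolean-encoding1}--\eqref{eq:boolean-encoding4} of $G[W]\to K_3$ is already one of the generators for $G[U]\to K_3$ (an edge of $G[W]$ is an edge of $G[U]$, and each $W$-variable is a $U$-variable), so $\ideal{W}\subseteq\ideal{U}$. Thus any $q\in\ideal{W}$ whose leading term is the leading term of $m$ also witnesses that $m$ is reducible modulo $\ideal{U}$.

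For the reverse direction I would argue by contraposition: assuming $m$ is irreducible modulo $\ideal{W}$, show it is irreducible modulo $\ideal{U}$. First pass to vanishing ideals via the Boolean Nullstellensatz (\cref{lem:boolean-ns}): $\ideal{W}=I(\mathcal{C}_W)$ and $\ideal{U}=I(\mathcal{C}_U)$, where $\mathcal{C}_W\subseteq\set{0,1}^{W\times[3]}$ and $\mathcal{C}_U\subseteq\set{0,1}^{U\times[3]}$ are the sets of indicator vectors of proper $3$-colorings of $G[W]$ and $G[U]$. The technical core is to build a \emph{canonical extension}, that is, a section $E\colon\mathcal{C}_W\to\mathcal{C}_U$ of the restriction map with $\restrict{E(\chi)}{W}=\chi$, which is moreover \emph{$\prec_V$-triangular}: the value $E(\chi)(v)$ at each $v\in U\setminus W$ should depend only on the values of $\chi$ at the $W$-vertices that are $\prec_V$-smaller than $v$. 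A natural candidate is to extend $\chi$ by greedily coloring the vertices of $U\setminus W$ in $\prec_V$-increasing order, each time picking a color avoiding those already assigned to the neighbors of $v$ that precede it. The one structural fact making this $\prec_V$-triangular is that, since $W=\Desc{W}$ is closed, every edge from $U\setminus W$ into $W$ has its $W$-endpoint $\prec_V$ its other endpoint --- otherwise that edge would be a length-one decreasing path exhibiting a vertex outside $W$ as a descendant of a vertex of $W$ --- so the $W$-neighbors of $v$ are automatically among the ``already seen'' vertices. What still needs checking is that the greedy step never gets stuck, i.e.\ that at each $v$ the number of relevant back-neighbors in $G[U]$ is at most two; this is exactly where $(\spsize,1/3\Delta)$-sparsity, the bound $\setsize{U}\le\spsize$, the closedness of $W$, and condition~(3) on $\nbhstd{U\setminus W}{W}$ must be combined, in the spirit of the extension argument of \cref{lem:sparse-colourable} and the closure-size bound \cref{lem:closure-size-lemma}.

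Given such an $E$, I would pass back to polynomials via the substitution homomorphism $\sigma\colon\F[\mathbf{x}_{U,[3]}]\to\F[\mathbf{x}_{W,[3]}]$ fixing every $W$-variable and sending each new variable $x_{v,i}$ with $v\in U\setminus W$ to the polynomial $g_{v,i}\in\F[\mathbf{x}_{W,[3]}]$ computing the indicator ``$E(\chi)(v)=i$''; by $\prec_V$-triangularity, $g_{v,i}$ involves only variables $x_{w,j}$ with $w\prec_V v$. Because $E$ lands in $\mathcal{C}_U$, the map $\sigma$ sends $\ideal{U}=I(\mathcal{C}_U)$ into $\ideal{W}=I(\mathcal{C}_W)$. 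Now suppose, for contradiction, that $m$ were reducible modulo $\ideal{U}$, say $q=m-r\in\ideal{U}$ with $r$ supported on monomials $\prec m$. Then $\sigma(q)=m-\sigma(r)\in\ideal{W}$ (the $W$-monomial $m$ is fixed by $\sigma$), and $\sigma$ is non-increasing on the monomial order --- so $\sigma(r)$ remains supported on monomials $\prec m$ --- which contradicts irreducibility of $m$ modulo $\ideal{W}$. For a lexicographic order, which is all that \cref{lem:cohomological-k-consistency-lower-bounds} and the application in \cref{sec:approximate-graph-coloring} actually need, the monotonicity of $\sigma$ is immediate: a monomial supported on variables $\prec x_{v,i}$ is itself $\prec x_{v,i}$, so substituting $g_{v,i}$ for $x_{v,i}$ can only decrease a monomial.

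The main obstacle is the combinatorial construction of the $\prec_V$-triangular extension $E$, and in particular the proof that the $\prec_V$-ordered greedy coloring never fails: this requires a careful simultaneous use of sparsity, the closure structure of $W$, and the degree bound~(3), and is essentially the content worked out in \cite[Sections~5--6]{CdRNPR25GraphColouring}. A secondary point, needed only for the stated version with an arbitrary admissible order respecting $\prec_V$ rather than a lexicographic one, is to verify the order-monotonicity of $\sigma$ in that generality (a monomial in $\prec_V$-small variables need not be small in a degree-sensitive order), which again relies on the fine structure of $U\setminus W$ produced by the closure construction.
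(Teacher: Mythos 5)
This lemma is cited from \cite{CdRNPR25GraphColouring} (their Lemma~6.5) and is not reproved in the present paper, so there is no in-paper proof to compare against; I assess your plan on its own terms. The easy direction is correct: $W\subseteq U$ gives a containment of generator sets and hence $\langle W\rangle\subseteq\langle U\rangle$, so a witness for reducibility modulo $\langle W\rangle$ works modulo $\langle U\rangle$.

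For the converse, the scaffold you build --- a $\prec_V$-triangular section $E$ of the restriction $\mathcal{C}_U\to\mathcal{C}_W$, the induced substitution $\sigma$ sending $\langle U\rangle$ into $\langle W\rangle$ while fixing $W$-monomials, then pushing a reducibility witness through $\sigma$ --- is a sound shape of argument. But the two items you flag as ``still to be checked'' carry essentially the whole weight of the lemma, and in the greedy-in-$\prec_V$ form you propose they pull against each other. First, for the greedy extension to never get stuck, every $v\in U\setminus W$ needs at most two $\prec_V$-earlier neighbours in $G[U]$; sparsity only yields an average-degree bound, hence a $2$-degeneracy peeling order on $G[U]$ that you have not shown can be taken to be $\prec_V$, and condition~(3) explicitly permits a boundary vertex to have up to $\Delta$ neighbours in $U\setminus W$, some or all of which may be $\prec_V$-earlier. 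Second, for $\sigma$ to be non-increasing under an \emph{arbitrary} admissible order respecting $\prec_V$ (as the statement requires), it does not suffice that each $g_{v,i}$ mention only $\prec_V$-earlier variables: a degree-two monomial in small variables can outweigh the single variable $x_{v,i}$ under a degree-compatible admissible order respecting $\prec_V$. You flag this second point but understate it; it is not a side issue but a constraint forcing each $g_{v,i}$ to be (constant plus) linear --- i.e.\ each $v$'s canonical colour to depend on a \emph{single} $\prec_V$-predecessor --- which is even stronger than what greedy well-definedness requires. Extracting that structure (or designing a different canonical extension that is simultaneously well defined and order-decreasing) from closedness of $W$, $(\spsize,1/3\Delta)$-sparsity, and the degree-$\Delta$ bound on $\nbhstd{U\setminus W}{W}$ is exactly the heavy lifting done in \cite[Section~6]{CdRNPR25GraphColouring}. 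Until that is supplied, the proposal is an outline rather than a proof, and as written gives no argument that the $\prec_V$-ordered greedy colouring succeeds or that the resulting $\sigma$ is order-monotone beyond the lexicographic special case.
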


We now have all components needed to prove \cref{th:cohomological-agc-lbs}. 
\begin{proof}[Proof of \cref{th:cohomological-agc-lbs}]
Let $G\sim \gnd$ and fix a field $\F$ of characteristic $0$. By Lemmas \ref{lem:random-graph-sparse} and \ref{lem:random-graph-chromatic-number}, it holds with high probability that $G$ is $(\spsize, 1/3\graphdeg)$\nobreakdash-sparse for $s = (8\graphdeg)^{-6\graphdeg}\cdot n$ and that $\chi(G) \in [d/4\log{d}, d/\log{d}]$. Given these properties, we establish the satisfiability and reducibility conditions for all monomials of degree at most $\degstd$, where $\degstd = \spsize / 25 \graphdeg^{\chi(G)} = \graphdeg^{-C_1\graphdeg}\cdot n$, for some large enough constant $C_1$. 

We first put a linear order $\prec_V$ on the vertices of $G$ as follows. %
 Let $\chi \colon V \to [\chi(G)]$ be a proper coloring of $G$. Order the vertices of $G$ by color classes, so that $u \prec_V v$ whenever $\chi(u) < \chi(v)$ for all $u, v \in V$, and order the vertices in the same color class arbitrarily. Finally, let $\prec$ be the lexicographic order induced by any linear order on the variables of $\col{G, 3}$ such that $x_{u, i} \prec x_{v, j}$ whenever $u \prec_V v$. Clearly, this lexicographic order respects $\prec_V$, and from now on the closure and reduction operators are defined with respect to this order.

Now we establish the satisfiability condition. Note that with the ordering $\prec_V$, every decreasing path in $G$ has at most $\chi(G)$ vertices, and hence $\setsize{\Desc{U}} \leq \graphdeg^{\chi(G)-1}\setsize{U}$ for all $U \subseteq V$. Therefore, since $\chi(G) \graphdeg^{\chi(G)-1} < \graphdeg^{\chi(G)}$ we have by \cref{lem:closure-size-lemma} that for all sets $U \subseteq V$ of size at most $\spsize / 25 \graphdeg^{\chi(G)}$ that $\setsize{\cl{U}} \leq 25\graphdeg^{\chi(G)}\setsize{U} \leq \spsize$, and hence $G[\cl{U}]$ is $3$-colorable by \cref{lem:sparse-colourable}. The satisfiability condition is thus established for all $\degstd \leq \spsize / 25 \graphdeg^{\chi(G)}$. 

Finally, we turn to the reducibility condition. Let $m$ be a monomial of degree at most $\degstd$ and let $m' \in \F[\mathbf{x}_{V, [3]}]$ satisfy that $\cl{m'} \subseteq \cl{m}$. The argument in the previous paragraph shows that $\setsize{\cl{m}}\leq \spsize$, and therefore the reducibility condition follows immediately from \cref{lem:local-reduction-strong} with $W = \cl{m'}$ and $U = \cl{m}$. 

With the satisfiability and reducibility conditions in place, \cref{lem:cohomological-k-consistency-lower-bounds} implies that the cohomological $\kcons$\nobreakdash-consistency algorithm for $3$-colorability accepts $G$ for all $\kcons\leq \degstd/25\graphdeg^{\chi(G)} \leq \graphdeg^{-C_2\graphdeg}\cdot n$, for a large enough constant $C_2$. The proof is complete.
\end{proof}

\section{Hardness of Lax and Null-Constraining CSPs}
\label{sec:lax-and-null-constraining}
All the technical work of establishing the satisfiability and reducibility conditions for approximate graph coloring was already accomplished in \cite{CdRNPR25GraphColouring}. This section shows the techniques of~\cref{sec:ar-method-cohomological-consistency} ``in action'' by providing a different proof of %
\cite[Theorem~1.3]{CN25Hierarchies}. 
There, Chan and Ng establish that no $n$\nobreakdash-variable CSP whose template $\relt$ satisfies two technical conditions they call \emph{lax} and \emph{null-constraining} is solved by the cohomological $\kcons$\nobreakdash-consistency algorithm for any $\kcons \leq \Omega(n)$. Just as in \cref{sec:approximate-graph-coloring}, the fooling instances for $\csp{\relt}$ are randomly sampled.

The main result of this section is \cref{th:cohomological-lax-nullcons-lbs}. The overall proof structure is the same as in \cref{sec:approximate-graph-coloring}. We first recall a few definitions in order to introduce the CSP instances and closure in \cite{CN25Hierarchies}. Then, we show that the closure of a small set is small (\cref{cor:closure-size}) and that the pseudo-reduction operator based on it is idempotent (\cref{lem:small-set-idempotence}). Finally, we establish the reducibility condition (\cref{lem:reducibility-nullcons}).\footnote{For readers familiar with polynomial calculus, we note that the proof of the reducibility condition %
here is based on \emph{variable restrictions}, which is simpler than its counterpart in \cref{sec:approximate-graph-coloring} and similar to that in~\cite{MN24Generalized}.} 
Theorem \ref{th:cohomological-lax-nullcons-lbs} then follows via the connection to CSP hierarchies in \cref{sec:ar-method-cohomological-consistency}. 
Our argument simplifies that in \cite{CN25Hierarchies} by avoiding their use of \emph{insular families} and %
\emph{augmented closures}~\cite[Section 6]{CN25Hierarchies}, as well as their framework for combined hierarchies~\cite[Section~8]{CN25Hierarchies}. %

In this section we consider only \emph{$\uniform$-uniform} relational structures, where all relations have arity~$\uniform$. For a $\uniform$-uniform $\sigma$-structure $\relt$, an instance $\rela$ of $\csp{\relt}$ induces a $\uniform$-uniform hypergraph $H=(\ala, \hypedgeset)$, where $\hypedgeset = \{\{\verta_1, \ldots, \verta_\uniform\} : \exists \relsymbol\in \sigma 
\mid (\verta_1, \ldots, \verta_\uniform) \in \relsymbol^\rela\}$.

\begin{definition}[Lax; \cite{CN25Hierarchies}, Definition~4.9]
    A $\uniform$-uniform $\sigma$-structure $\relt$ is \emph{lax} if for every symbol $\relsymbol\in \sigma$ and every $j \in [\uniform]$, there exists $\mathbf{b} \in \alt^{[\uniform] \setminus \{j\}}$ such that %
    $\{(b_1,\ldots,b_{j-1},a,b_{j+1},\ldots,b_{t}) \;|\;a\in \alt\} \subseteq \relsymbol^\relt$. %
\end{definition}

In other words, if $\csp{\relt}$ is lax with input $\rela$%
, then for every relation $\relsymbol^\rela$, every tuple $\mathbf{\verta} \in \relsymbol^\rela$, and every element $\verta_j \in \mathbf{\verta}$, there exists a map $ \homomphi \colon \vars{\mathbf{\verta}} \setminus \{\verta_j\} \to \alt $ such that any extension of $\homomphi$ to $\vars{\mathbf{\verta}}$ maps $\mathbf{\verta}$ to some $\mathbf{\verti} \in \relsymbol^\relt$. Properties similar to but distinct from 
the lax condition, such as \emph{robustness}~\cite{ABRW04Pseudorandom} and \emph{respectful boundary expansion}~\cite{MN24Generalized}, have also been studied in proof complexity.

To introduce the null-constraining property, originally defined by Chan and Molloy~\cite{CM13Dichotomy}, we need a few definitions. Given a $\uniform$-uniform $\sigma$-structure $\relt$, a \emph{simple path instance of $\csp{\relt}$} (of length $\ell$) is an instance $\rela$ of $\csp{\relt}$ such that the hypergraph of $\rela$ is a simple path (of length $\ell$). A simple path instance $\rela$ of $\csp{\relt}$ with endpoints $(\verta_1, \verta_2) \in \ala^2$ \emph{permits} a pair $(\verti_1, \verti_2) \in \alt^2$ if there exists a homomorphism $\homomphi \colon \rela \to \relt$ such that $\homomphi(\verta_1) = \verti_1$ and $\homomphi(\verta_2) = \verti_2$.

\begin{definition}[Null-constraining; \cite{CN25Hierarchies}, Definition~4.2] 
    For $\ell \geq 1$ and a $\uniform$\nobreakdash-uniform $\sigma$\nobreakdash-structure $\relt$, we say that $\csp{\relt}$ is \emph{$\ell$-null-constraining} if for every pair $(\verti_1, \verti_2) \in \alt^2$, it holds that every simple path instance of $\csp{\relt}$ of length at least $\ell$ permits $(\verti_1, \verti_2)$. We say that $\csp{\relt}$ is \emph{null-constraining} if it is $\ell$-null-constraining for some $\ell$. A structure $\relt$ is null-constraining if $\csp{\relt}$ is. %
    
\end{definition}

Note that graph coloring with at least $3$ colors is $2$-null-constraining %
 but not lax. 
An example of a non-trivial lax and null-constraining CSP is %
$\uniform$-uniform hypergraph coloring for $\uniform \geq 3$ and at least $2$ colors. For more examples, see \cite[Section~4.3]{CN25Hierarchies}. 

A \emph{$\uniform$-uniform random hypergraph $H = (A, \hypedgeset)$} with $n$ vertices and $m$ edges is one chosen uniformly at random from the set of $\uniform$-uniform $n$-vertex hypergraphs with $m$ edges.
Given a $\uniform$-uniform $\sigma$\nobreakdash-structure $\relt$ that is lax and null-constraining, the fooling instances for $\csp{\relt}$ are $\sigma$-structures $\rela$ sampled according to the following distribution, which we denote by $\cspdist{n}{m}{\relt}$.  First, pick a $\uniform$-uniform random $n$-vertex hypergraph with $m$ edges. Then, for each $e \in \hypedgeset$, pick a permutation $\pi_e = (\vertv_1, \ldots, \vertv_\uniform)$ of the vertices in $e$ and a relation symbol $\relsymbol_e \in \sigma$, both uniformly at random. Finally, let $\rela$ be the $\sigma$-structure over $\ala$ where, for each $\relsymbol \in \sigma$, we define $\relsymbol^\rela = \{\pi_e : \relsymbol_e = \relsymbol\}$. %

For a $\sigma$-structure $\relt$, we say that $\csp{\relt}$ is \emph{trivially satisfiable} if there exists some $i \in \alt$ such that, for any instance $\rela$ of $\csp{\relt}$, the map $\ala \to \alt$ that sends all elements in $\ala$ to $i$ is a homomorphism $\rela \to \relt$.\footnote{Such CSPs are called \emph{reflexive} in \cite{AD22Width}.} A random instance of such a $\mathrm{CSP}$ is clearly always satisfiable, so to obtain a fooling instance we must assume that $\csp{\relt}$ is not trivially satisfiable. This assumption turns out to also be sufficient to guarantee that randomly sampled instances with linearly many constraints in the number of variables are unsatisfiable with high probability. %
\begin{lemma}[\cite{AD22Width}, Lemma~3]
    \label{lem:trivially-satisfiable}
If $\relt$ is a $\uniform$-uniform relational structure such that $\csp{\relt}$ is not trivially satisfiable, then there exists $\Delta > 0$ depending only on $\relt$ such that, %
with probability $1-o_n(1)$, a structure $\rela \sim \cspdist{n}{\Delta n}{\relt}$ is not homomorphic to $\relt$. 
\end{lemma}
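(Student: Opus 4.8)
The plan is a standard first-moment argument: for a suitable constant $\Delta$ depending only on $\relt$, I would show that the expected number of homomorphisms $\rela\to\relt$, over $\rela\sim\cspdist{n}{\Delta n}{\relt}$, tends to $0$, and then invoke Markov's inequality. The only structural input is the contrapositive of ``trivially satisfiable''.

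First I would extract the key consequence of non-trivial satisfiability. For a probability distribution $\lambda=(\lambda_i)_{i\in\alt}$ on $\alt$, set
\[
q_\lambda \;=\; \frac{1}{|\sigma|}\sum_{\relsymbol\in\sigma}\ \sum_{\mathbf{j}\in\relsymbol^\relt}\ \prod_{k=1}^{\uniform}\lambda_{j_k}\eqcomma
\]
the probability that $\uniform$ coordinates drawn i.i.d.\ from $\lambda$, together with a uniformly random symbol, form a tuple in the relation. I claim $q_\lambda<1$ for every $\lambda$: if $q_\lambda=1$ then $(\mathrm{supp}\,\lambda)^{\uniform}\subseteq\relsymbol^\relt$ for all $\relsymbol\in\sigma$, and picking any $i\in\mathrm{supp}\,\lambda$ shows that the constant map to $i$ is a homomorphism from every instance to $\relt$, contradicting the hypothesis that $\csp{\relt}$ is not trivially satisfiable. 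Since $\lambda\mapsto q_\lambda$ is a polynomial, hence continuous, on the compact set of distributions on $\alt$, the number $c_0:=1-\max_\lambda q_\lambda$ is a strictly positive constant depending only on $\relt$.

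Next I would bound, for a fixed map $f\colon\ala\to\alt$, the probability $\Pr[f\text{ is a homomorphism }\rela\to\relt]$. Writing $r_f(e)\in[0,1]$ for the probability that a single constraint placed on a fixed $\uniform$-set $e$ (with a random permutation and symbol) is satisfied by $f$ --- a quantity depending only on the multiset $\{f(v):v\in e\}$ --- and recalling that in $\cspdist{n}{\Delta n}{\relt}$ the edge set is a uniformly random set of $\Delta n$ many $\uniform$-subsets, Maclaurin's inequality for elementary symmetric polynomials (equivalently, a comparison with the model of i.i.d.\ edges) gives $\Pr[f\text{ hom}]\le\bar r_f^{\,\Delta n}$, where $\bar r_f$ is the average of $r_f(e)$ over all $\uniform$-subsets $e$. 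Since $\bar r_f$ is the probability that $f$ satisfies a constraint on $\uniform$ vertices sampled \emph{without} replacement, while $q_{\lambda(f)}$ (with $\lambda(f)$ the empirical value-distribution of $f$) uses i.i.d.\ vertices, these differ by at most $\binom{\uniform}{2}/n=o_n(1)$, uniformly in $f$. Hence $\bar r_f\le 1-c_0+o_n(1)\le 1-c_0/2$ for all large $n$, so $\Pr[f\text{ hom}]\le(1-c_0/2)^{\Delta n}$.

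Finally, a union bound over all $|\alt|^n$ maps gives $\expectation\bigl[\#\{\text{homomorphisms }\rela\to\relt\}\bigr]\le\bigl(|\alt|\,(1-c_0/2)^{\Delta}\bigr)^n$, which is $o_n(1)$ once $\Delta>\ln|\alt|/\ln\bigl(1/(1-c_0/2)\bigr)$ --- a threshold depending only on $\relt$. By Markov's inequality $\Pr[\rela\to\relt]=o_n(1)$, as desired. The argument is routine; the only point requiring a little care is the passage through the random-hypergraph model (sampling edges, and within-edge vertices, without replacement), handled by the two elementary comparisons above, while the conceptual heart is the compactness observation that non-trivial satisfiability yields the uniform gap $c_0>0$.
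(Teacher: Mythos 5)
Your proof is correct. The paper itself does not prove this lemma; it cites \cite[Lemma~3]{AD22Width}, whose proof is a first-moment argument in the Erd\H{o}s--R\'enyi random hypergraph model, and then appeals to \cite[Lemma~25]{MS07RandomConstraint} to transfer the conclusion to the fixed-edge-count model $\cspdist{n}{\Delta n}{\relt}$. You instead give a direct, self-contained first-moment argument in the fixed-edge-count model, and the only nonroutine point --- that the expected-number-of-homomorphisms bound survives sampling the $\Delta n$ edges \emph{without} replacement --- you handle with Maclaurin's inequality $e_k(\mathbf{r})/\binom{N}{k} \le \bigl(e_1(\mathbf{r})/N\bigr)^k$, which gives $\Pr[f \text{ hom}] \le \bar r_f^{\,\Delta n}$. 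The other two points --- that ``not trivially satisfiable'' forces $q_\lambda<1$ for every distribution $\lambda$ on $\alt$ (since $q_\lambda=1$ would give $(\mathrm{supp}\,\lambda)^\uniform\subseteq\relsymbol^\relt$ for all $\relsymbol$, hence a constant homomorphism) and that compactness of the simplex then yields a uniform gap $c_0>0$ --- are exactly the structural input one wants and are argued correctly, as is the $O(\uniform^2/n)$ comparison between sampling $\uniform$ vertices with and without replacement. The net effect is the same conclusion as the citation, obtained without routing through \cite{MS07RandomConstraint}; this is a clean alternative and is likely close in spirit to what \cite{AD22Width} does internally.
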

In \cite{AD22Width}, the proof is for random instances whose hypergraphs are sampled from the \emph{Erd\H{o}s-R\'{e}nyi} random model. By, e.g.,~\cite[Lemma 25]{MS07RandomConstraint}, the same result holds for our random hypergraph model.

We are now ready to state \cref{th:cohomological-lax-nullcons-lbs}, which is the main result of this section. %
\begin{theorem}
    \label{th:cohomological-lax-nullcons-lbs}
    Let $\uniform \geq 2$, and let $\relt$ be a $\uniform$-uniform relational structure that is lax and $\ell$-null-constraining, and such that $\csp{\relt}$ is not trivially satisfiable. There exists $\Delta > 0$ depending only on $\relt$, and parameters $\zeta = \zeta(\uniform, \ell, \Delta)> 0$ and $\delta = \delta (\uniform, \ell, \Delta) > 0$ such that an instance $\rela \sim \cspdist{n}{\Delta n}{\relt}$ of $\csp{\relt}$ is unsatisfiable with high probability as $n \to \infty$, but with probability $\delta - o_n(1)$ is accepted by the cohomological $\kcons$\nobreakdash-consistency algorithm for all $\kcons \leq \zeta n$. Consequently, the cohomological $\kcons$\nobreakdash-consistency algorithm does not solve $\csp{\relt}$ for any $\kcons \leq \zeta n$. 
\end{theorem}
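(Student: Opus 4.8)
The plan is to reduce \cref{th:cohomological-lax-nullcons-lbs} to \cref{lem:cohomological-k-consistency-lower-bounds} by exhibiting a suitable closure operator. Unsatisfiability of $\rela\sim\cspdist{n}{\Delta n}{\relt}$ with high probability, and the value of the constant $\Delta>0$, are given directly by \cref{lem:trivially-satisfiable}, using the hypothesis that $\csp{\relt}$ is not trivially satisfiable. It remains to show that, with probability at least $\delta-o_n(1)$, the cohomological $\kcons$\nobreakdash-consistency algorithm for $\csp{\relt}$ accepts $\rela$ for all $\kcons\le\zeta n$, and for this it suffices, by \cref{lem:cohomological-k-consistency-lower-bounds}, to exhibit a size-$\degstd$ closure operator $\cl{\cdot}$ on $\ala$ and a field $\F$ of characteristic zero so that, taking $\sop(m)=\ideal{\cl m}$ and ordering monomials by a lexicographic order on the variables $x_{\verta,i}$ that respects some linear order of $\ala$, the satisfiability and reducibility conditions of \cref{lem:reducibility-gives-reduction-operator} hold for every monomial $m$ of degree at most $\degstd$; here $\degstd=\Theta(n)$, and $\zeta>0$ is chosen so that $\setsize{\cl m}\le\degstd$ for every monomial $m$ of degree at most $\zeta n$.

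For $\cl{\cdot}$ I would use the \emph{BW closure} of Chan and Ng~\cite{CN25Hierarchies}, defined on the hypergraph $H=(\ala,\hypedgeset)$ of $\rela$: informally, $\cl U$ is obtained from $U$ by repeatedly absorbing Berge paths and cycles that are too short to be ignored (length below the null-constraining parameter $\ell$, roughly) together with appropriate pendant structure, and then taking the minimal such closed set. The first task is to check that $\cl{\cdot}$ is a genuine size-$\degstd$ closure operator: self-containment and monotonicity are immediate from minimality, while idempotence is \cref{lem:small-set-idempotence}. The quantitative heart of this step is \cref{cor:closure-size}, which uses the sparsity of $H$ --- every $O(n)$-size vertex set of a random $\uniform$-uniform hypergraph with $\Delta n$ hyperedges spans few hyperedges --- to show that closures of $O(n)$-size sets remain of size $O(n)$, so that $\setsize{\cl m}\le\degstd$ for all monomials $m$ of degree up to $\zeta n$ for a suitable $\zeta=\zeta(\uniform,\ell,\Delta)>0$.

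Next I would establish the two conditions of \cref{lem:reducibility-gives-reduction-operator}. For \emph{satisfiability}, i.e.\ $\rela[\cl m]\to\relt$, the point is that closed sets induce subinstances of controlled structure --- essentially trees carrying a bounded number of extra hyperedges --- so a homomorphism can be built greedily, with the \emph{lax} condition supplying, at each hyperedge along the way, a partial assignment to all but one of its variables that can be completed regardless of the last one; this is the extension lemma (\cref{lem:extension-lemma}), following the ideas of~\cite{CN25Hierarchies}. For \emph{reducibility}, i.e.\ that for $\cl{m'}\subseteq\cl m$ the monomial $m'$ is reducible modulo $\ideal{\cl{m'}}$ if and only if modulo $\ideal{\cl m}$, the key observation is that the constraints present in $\rela[\cl m]$ but not in $\rela[\cl{m'}]$ attach to $\cl{m'}$ only through pendant simple paths of length at least $\ell$; since $\relt$ is $\ell$-null-constraining, such a path permits every pair of endpoint values and hence imposes no constraint relating its endpoints. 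Making this precise via \emph{variable restrictions} --- assigning values along the long paths so as to disconnect the extra part, and using that reduction under a lexicographic order commutes with such restrictions (cf.\ the uniqueness argument used for \cref{lem:ipr-alekhnovich-razborov}) --- reduces the reducibility statement for $\cl m$ to that for $\cl{m'}$; this is \cref{lem:reducibility-nullcons}, whose proof is closer to that of~\cite{MN24Generalized} than to the one in \cref{sec:approximate-graph-coloring}.

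Finally, one combines these ingredients: a standard estimate shows that, with probability bounded below by a constant $\delta>0$ (up to an $o_n(1)$ term), the random instance $\rela$ simultaneously is unsatisfiable and has a hypergraph meeting the sparsity and ``boundedly many short cycles'' requirements used above; the reason one cannot push this to high probability is the tension between taking $\Delta$ large enough to force unsatisfiability and keeping $H$ locally benign, exactly as in~\cite{CN25Hierarchies}. On this event, Lemmas~\ref{lem:extension-lemma} and~\ref{lem:reducibility-nullcons} apply and \cref{lem:cohomological-k-consistency-lower-bounds} yields acceptance for all $\kcons\le\zeta n$, which completes the proof. I expect the main obstacle to be the reducibility condition: both pinning down the BW closure so that it simultaneously keeps closures small, keeps closed subinstances satisfiable, and ``absorbs'' every constraint that could influence a reduction, and then carrying out the restriction argument that turns $\ell$-null-constrainingness into the invisibility of far-away constraints.
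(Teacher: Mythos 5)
Your proposal tracks the paper's own proof essentially step for step: unsatisfiability from \cref{lem:trivially-satisfiable}; the BW closure with \cref{cor:closure-size} and \cref{lem:small-set-idempotence} furnishing a size-$\degstd$ closure operator; \cref{lem:extension-lemma} (through its iterated form \cref{cor:full-extension-lemma}) for satisfiability; \cref{lem:reducibility-nullcons} via variable restrictions for reducibility; and finally \cref{lem:cohomological-k-consistency-lower-bounds}. Two small slips worth correcting: in the paper the satisfiability condition (the extension lemma) is driven entirely by the \emph{null-constraining} property, not by laxity --- laxity enters only inside the reducibility argument of \cref{lem:reducibility-nullcons}, where it is used to build the restriction $\rho$ on the edges of $\hypedgeset[\setW]\setminus\hypedgeset[\setU]$ touching $\setU$; and the lexicographic order here need not respect any particular linear order on $\ala$ (that requirement is specific to the coloring section), since \cref{lem:reducibility-nullcons} holds under any admissible order and lexicographicity is demanded only by the integrality argument of \cref{lem:ipr-alekhnovich-razborov}.
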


The proof of \cref{th:cohomological-lax-nullcons-lbs} goes through \cref{lem:reducibility-gives-reduction-operator}, and the proof of the reducibility condition (\cref{lem:reducibility-nullcons}) applies to any field, not only those of characteristic zero. Hence, we also obtain the same lower bound of $\zeta n$ as in \cref{th:cohomological-lax-nullcons-lbs} %
for the polynomial calculus degree required to refute $\polyhomo{\rela}{\relt}$ over any field under the same assumptions on $\rela$ and $\relt$. 
\begin{corollary}
    \label{cor:cohomological-lax-nullcons-lbs-pc}
    For $\rela, \relt, \zeta, \delta$ as in \cref{th:cohomological-lax-nullcons-lbs}, it holds with probability $\delta - o_n(1)$ that polynomial calculus requires degree at least $\zeta n$ to refute $\polyhomo{\rela}{\relt}$ over any field.
\end{corollary}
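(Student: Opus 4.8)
The plan is to observe that \cref{cor:cohomological-lax-nullcons-lbs-pc} falls out of the proof of \cref{th:cohomological-lax-nullcons-lbs} by stopping one step early. In that proof, for a random instance $\rela \sim \cspdist{n}{\Delta n}{\relt}$, one fixes a field, constructs the size-$\degstd$ closure operator $\cl{\cdot}$ based on the BW closure of \cite{CN25Hierarchies} with $\degstd \geq \zeta n$, and verifies that the satisfiability and reducibility conditions of \cref{lem:reducibility-gives-reduction-operator} hold for all monomials of degree at most $\degstd$; by \cref{lem:reducibility-gives-reduction-operator} this produces a degree-$\degstd$ pseudo-reduction operator for $\polyhomo{\rela}{\relt}$. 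All of this happens on an event of probability $\delta - o_n(1)$, obtained by intersecting the structural event that makes the closure well-behaved (probability at least $\delta$) with the event $\rela \not\to \relt$ furnished by \cref{lem:trivially-satisfiable} (probability $1 - o_n(1)$). I would condition on this event throughout.

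The crux is that none of the ingredients just listed depends on the characteristic of the field. The closure-size bound (\cref{cor:closure-size}) is purely combinatorial, the satisfiability condition $\rela[\cl{m}] \to \relt$ is a statement about homomorphisms between finite structures and so makes no reference to $\F$, and the reducibility condition is established in \cref{lem:reducibility-nullcons} by a variable-restriction argument whose validity is field-agnostic. The only genuinely characteristic-zero step in the proof of \cref{th:cohomological-lax-nullcons-lbs} is the integrality input \cref{lem:ipr-alekhnovich-razborov}, which is used to pass from the pseudo-reduction operator to a solution for the $\Z$-affine and cohomological consistency algorithms; for a plain polynomial calculus degree lower bound this step is simply not needed. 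Hence, on the same probability-$(\delta - o_n(1))$ event, \cref{lem:reducibility-gives-reduction-operator} applies verbatim over an arbitrary field $\F$ and yields a degree-$\degstd$ pseudo-reduction operator for $\polyhomo{\rela}{\relt}$ over $\F$.

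Finally, I would invoke \cref{lem:r-operator}: the existence of a degree-$\degstd$ pseudo-reduction operator forces every polynomial calculus refutation of $\polyhomo{\rela}{\relt}$ over $\F$ to have degree strictly greater than $\degstd$, and since $\degstd \geq \zeta n$ this gives the claimed bound; as $\F$ was arbitrary, we are done. The one point requiring care is the bookkeeping in the second paragraph --- one must check that every auxiliary lemma invoked inside the proof of \cref{th:cohomological-lax-nullcons-lbs} to get the closure-size bound and to verify the two conditions is indeed stated and proved without any hypothesis on $\F$, so that the sole field-dependent step really is \cref{lem:ipr-alekhnovich-razborov}. This is precisely the observation already flagged immediately after the statement of \cref{th:cohomological-lax-nullcons-lbs}, and no genuinely new argument is required; an alternative, slightly less direct route would be to note that the operator produced for \cref{th:cohomological-lax-nullcons-lbs} is integral and then appeal to \cref{rmk:ipr-operators-work-every-field}.
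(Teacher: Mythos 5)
Your proposal is correct and matches the paper's intended argument exactly: the paper itself notes immediately before the corollary that the proof of Theorem~\ref{th:cohomological-lax-nullcons-lbs} routes through Lemma~\ref{lem:reducibility-gives-reduction-operator}, and that the reducibility condition of Lemma~\ref{lem:reducibility-nullcons} is established over an arbitrary field, so the pseudo-reduction operator exists over every field and Lemma~\ref{lem:r-operator} gives the degree bound. Your alternative route via integrality and Remark~\ref{rmk:ipr-operators-work-every-field} is also valid and reaches the same conclusion.
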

To the best of our knowledge, \cref{cor:cohomological-lax-nullcons-lbs-pc} is new and in particular does not follow from \cite[Theorem~1.3]{CN25Hierarchies}.

The rest of this section is devoted to the proof of \cref{th:cohomological-lax-nullcons-lbs}. To begin with, we recall several technical properties of hypergraphs from \cite{CN25Hierarchies}. Just as in \cref{sec:approximate-graph-coloring}, our results rely on a notion of \emph{sparsity}.

\begin{definition}[Sparsity]
    \label{def:sparse-hypergraph}
    A $\uniform$-uniform hyperedge set $\hypedgeset$ over a set $V$ is \emph{$\epsilon$-sparse} if $\setsize{\hypedgeset} \leq \frac{1+\epsilon}{\uniform -1}\setsize{V(\hypedgeset)}$ and is \emph{$(\spsize, \epsilon)$-sparse} if all $\spsize$-small subsets of $\hypedgeset$ are $\epsilon$-sparse. A hypergraph is $(\spsize, \epsilon)$\nobreakdash-sparse if its hyperedge set is.
\end{definition}

With constant probability as $n \to \infty$, random hypergraphs are sparse and have large girth, as stated in the next lemma. Both properties are standard.  %
For the claim regarding sparsity, see, e.g., \cite[Lemma~10]{MS07RandomConstraint}. For the lower bound on girth see, e.g., \cite[Theorem~3.19]{JLR00RandomGraphs}. The calculations there are for graphs, but generalize readily.  %

\begin{lemma}[Properties of random hypergraphs]
    \label{lem:random-hypergraph-properties}
    Let $H$ be a random $\uniform$-uniform hypergraph with $n$ vertices and $\Delta n$ edges. For any $\epsilon > 0$, there exists $\mu = \mu(\Delta,\uniform,\epsilon)$ such that $H$ is $(\mu n, \epsilon)$-sparse with high probability as $n \to \infty$. Moreover, for all $\ell \in \N^+$ there exists $\delta = \delta(\Delta, \ell, \uniform) > 0$ such that, with probability $\delta - o_n(1)$, the hypergraph $H$ has no Berge cycle of length at most $\ell$. 
\end{lemma}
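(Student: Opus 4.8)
The plan is to establish the two assertions separately by first- and second-moment computations in the model where the edge set of $H$ is a uniformly random $(\Delta n)$-element subset of $\binom{[n]}{\uniform}$. Throughout write $N=\binom{n}{\uniform}$, and note that for any fixed family of $k\le\mu n=o(N)$ distinct $\uniform$-sets, the hypergeometric formula gives that the probability all of them are edges of $H$ is at most $(2\Delta n/N)^{k}\le(C_1 n^{1-\uniform})^{k}$ for $n$ large, where $C_1=C_1(\uniform,\Delta)$ is a constant depending only on $\uniform$ and $\Delta$. (Equivalently, one may argue in the associated $G(n,p)$-type model with $p=\Delta n/N$ and transfer; the two are contiguous on all events considered.)

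For the sparsity claim, put $\alpha=(1+\epsilon)/(\uniform-1)$, so that a hyperedge set $\hypedgeset'$ is $\epsilon$-sparse exactly when $\setsize{\hypedgeset'}\le\alpha\setsize{V(\hypedgeset')}$. If $H$ fails to be $(\mu n,\epsilon)$-sparse, pick an inclusion-minimal violating subset $\hypedgeset^{*}$ with $k:=\setsize{\hypedgeset^{*}}\le\mu n$; deleting one edge leaves an $\epsilon$-sparse set, which forces $\alpha s<k\le\alpha s+1$ where $s:=\setsize{V(\hypedgeset^{*})}$. A union bound over the vertex set of $\hypedgeset^{*}$ and over the $\binom{\binom{s}{\uniform}}{k}$ ways to place its edges gives
\[
\Pr\bigl[H\text{ is not }(\mu n,\epsilon)\text{-sparse}\bigr]\ \le\ \sum_{s}\binom{n}{s}\binom{\binom{s}{\uniform}}{k}\bigl(C_1 n^{1-\uniform}\bigr)^{k},
\]
with $k=k(s)$ as above and $s$ ranging up to $\mu n/\alpha$. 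Plugging in $\binom{n}{s}\le(en/s)^{s}$, $\binom{\binom{s}{\uniform}}{k}\le(es^{\uniform}/(\uniform!\,k))^{k}$, $k\approx\alpha s$, and the identities $(1-\uniform)\alpha=-(1+\epsilon)$ and $(\uniform-1)\alpha-1=\epsilon$, the $s$-th summand collapses to at most $\bigl(C_2(s/n)^{\epsilon}\bigr)^{s}$ for a constant $C_2=C_2(\uniform,\Delta,\epsilon)$. Splitting the sum at $s=\sqrt{n}$: for $s\le\sqrt{n}$ each term is at most $n^{-\epsilon s/4}$ once $n$ is large, so their sum is a geometric series of magnitude $O(n^{-\epsilon\uniform/4})=o_n(1)$; for $\sqrt{n}<s\le\mu n/\alpha$, choosing $\mu$ small enough that $C_2(\mu/\alpha)^{\epsilon}\le 1/2$ bounds each term by $2^{-s}\le 2^{-\sqrt n}$, so their sum is at most $\mu n\cdot 2^{-\sqrt n}=o_n(1)$. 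Hence $H$ is $(\mu n,\epsilon)$-sparse with high probability, with $\mu$ depending only on $\uniform,\Delta,\epsilon$.

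For the second assertion, let $X_j$ denote the number of Berge cycles of length $j$ in $H$, for $2\le j\le\ell$. A Berge $j$-cycle is specified by its $j$ cyclically ordered connecting vertices together with the remaining $\uniform-2$ vertices of each of its $j$ edges, so there are at most $n^{j}\cdot n^{(\uniform-2)j}=n^{(\uniform-1)j}$ candidates, each realized with probability at most $(C_1 n^{1-\uniform})^{j}$; therefore $\expectation[X_j]\le C_1^{j}$, a constant depending only on $\uniform,\Delta,j$, and $X:=\sum_{j=2}^{\ell}X_j$ satisfies $\expectation[X]\le\lambda:=\sum_{j=2}^{\ell}C_1^{j}$. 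Since $\lambda$ may exceed $1$, Markov's inequality does not suffice; instead one invokes the standard Poisson approximation for bounded-size substructure counts, which transfers essentially verbatim from graphs (see \cite[Chapter~3]{JLR00RandomGraphs}) to the $\uniform$-uniform setting because the expected number of pairs of distinct Berge cycles of length at most $\ell$ that share an edge is $O(1/n)$. This yields that $(X_2,\dots,X_\ell)$ converges in distribution to a vector of independent Poisson random variables with means $\lambda_j^{\ast}=\lim_{n}\expectation[X_j]\in(0,\infty)$, whence $\Pr[X=0]\to\exp\bigl(-\sum_{j=2}^{\ell}\lambda_j^{\ast}\bigr)>0$. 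Taking $\delta$ to be any positive constant below this limit (which depends only on $\Delta,\ell,\uniform$) gives the claim. Alternatively, Janson's inequality gives $\Pr[X=0]\ge\exp(-\expectation[X]-o(1))\ge\exp(-\lambda-1)$ for $n$ large, again a positive constant, since the clustering term is $o(1)$ by the same edge-sharing estimate.

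The routine bookkeeping aside, the one point that needs genuine care is obtaining the sparsity bound at a \emph{linear} threshold $\mu n$ rather than merely a polynomial one: this hinges entirely on the exact cancellation $(\uniform-1)\alpha-1=\epsilon>0$, which upgrades the generic summand to a true geometric term $\bigl(C_2(s/n)^{\epsilon}\bigr)^{s}$ once $\mu$ is taken small, after which the split at $s=\sqrt n$ is immediate. For the cycle statement the only subtlety is the passage from an $O(1)$ first moment to a positive constant lower bound on $\Pr[X=0]$, which is precisely what the Poisson limit (or Janson's inequality) delivers; everything else is a standard expectation calculation.
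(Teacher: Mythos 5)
Your proof is correct and takes the route the paper itself defers to: the paper provides no proof of this lemma, only pointers to \cite{MS07RandomConstraint} (Lemma~10) for the sparsity bound and to \cite{JLR00RandomGraphs} (Theorem~3.19) for the short-cycle count, noting that the graph arguments ``generalize readily''; your proposal supplies exactly that generalization. The first-moment union bound for sparsity (minimal violating set, exponent cancellation $(\uniform-1)\alpha - 1 = \epsilon$, split at $s=\sqrt n$) is the same computation used in the cited reference, and the short-cycle argument via bounded first moment plus Poisson approximation is precisely what \cite[Theorem~3.19]{JLR00RandomGraphs} establishes. One minor misattribution: the lower bound $\Pr[X=0]\ge e^{-\expectation[X]-o(1)}$ in $\mathbb{G}(n,p)$ comes from the Harris/FKG correlation inequality (complements of increasing events are positively correlated), not from Janson's inequality, which in its standard form gives an \emph{upper} bound on $\Pr[X=0]$; and the transfer from $\mathbb{G}(n,p)$ back to the exact-$M$ model deserves a word (e.g.\ conditioning on $\setsize{E}=M$ and noting that conditioning on no short cycles shifts the mean of $\setsize{E}$ by only $O(1)$), but this is routine and the conclusion stands.
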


To define closure we need the following notions, which are rephrasings of \cite[Definition~5.20]{CN25Hierarchies} in light of \cite[Remark~5.16]{CN25Hierarchies}.

\begin{definition}[$(U,\nullcons)$-boundary, $(U,\nullcons)$-bad]
    Given $\ell \in \N^+$, a 
    hypergraph $\hypgraph = (V, \hypedgeset)$, %
    subset $U \subseteq V$, and %
    hyperedge subset $\hypedgesubset \subseteq \hypedgeset$, let $B_1(U, \hypedgesubset)$ be the set of edges in $\hypedgesubset$ that contain at most one vertex $u$ such that $u \in U$ or $\deg_\hypedgesubset(u) > 1$, and let $B_2^\ell(U, \hypedgesubset)$ be the set of all pendant paths (each viewed as a set of hyperedges) of length $\ell$ in $\hypedgesubset \setminus \hypedgeset[U]$ whose non-endpoints are all outside $U$. 
    The \emph{$(U,\nullcons)$-boundary} of $\hypedgesubset$ %
     is $B_1(U, \hypedgesubset)\union B_2^\ell(U, \hypedgesubset)$ and is denoted $B^\ell(U, \hypedgesubset)$.
    A set $\hypedgesubset \subseteq \hypedgeset$ such that $B^\ell(U, \hypedgesubset) = \varnothing$ is called \emph{$(U,\nullcons)$-bad}.\footnote{Chan and Ng call such sets \emph{closed}~\cite[Remark 5.16]{CN25Hierarchies}, which unfortunately clashes with our usage of that word.}
\end{definition}

\begin{definition}[Expanding; \cite{CN25Hierarchies}, Definition~5.27]
    \label{def:expanding-hypergraph}
    A $\uniform$-uniform hyperedge set $\hypedgeset$ over a set $V$ is \emph{$(\ell, \spsize, \gamma)$\nobreakdash-expanding} if every $\spsize$-small $\hypedgesubset \subseteq \hypedgeset$ satisfies $\lvert B^\ell(\varnothing, \hypedgesubset)\rvert  \geq \gamma \setsize{\hypedgeset}$. A hypergraph is $(\ell, \spsize, \gamma)$\nobreakdash-expanding if its hyperedge set is.
\end{definition}

If a hypergraph $H$ is sparse and has no short Berge cycles, as is the case for a random hypergraph by \cref{lem:random-hypergraph-properties}, then $H$ is expanding by the next lemma. 

\begin{lemma}[\cite{CN25Hierarchies}, Lemma~5.28]
    \label{lem:sparse-girth-expansion}
    Let $\ell \geq 2$ and let $H=(V, \hypedgeset)$ be a $\uniform$-uniform hypergraph. There exists $\epsilon = \epsilon(\uniform, \ell)$ such that the following holds. If $H=(V, \hypedgeset)$ is $(\spsize, \epsilon)$-sparse and ${\mathrm{girth}(\hypedgeset) > \ell}$, then $H$ is $(\ell, \spsize, \gamma)$\nobreakdash-expanding for $\gamma = \frac{\uniform-1}{72\ell^2\uniform^3(1+\epsilon)}$, provided that $\gamma \leq 1$.   
\end{lemma}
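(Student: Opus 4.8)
The plan is to prove \cref{lem:sparse-girth-expansion} by a double-counting argument, after first reducing to the case where the $\spsize$-small hyperedge set $\hypedgesubset$ is Berge-connected. Both $\setsize{\hypedgesubset}$ and $B^\ell(\varnothing, \cdot)$ split as disjoint sums over the Berge-connected components of $\hypedgesubset$, and a component consisting of a single hyperedge lies in $B_1$, hence contributes $1 \ge \gamma$ by the hypothesis $\gamma \le 1$; so it suffices to bound $\setsize{B^\ell(\varnothing, \hypedgesubset)}$ from below for connected $\hypedgesubset$. Write $m = \setsize{\hypedgesubset}$ and $\nu = \setsize{V(\hypedgesubset)}$. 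The two hypotheses enter as follows. Since $\hypedgesubset$ is $\spsize$-small it inherits $\epsilon$-sparsity, so $\nu \ge \tfrac{\uniform-1}{1+\epsilon}m$, and the cyclomatic excess $c := m(\uniform-1) - \nu + 1$ of $\hypedgesubset$ (the first Betti number of its bipartite incidence graph, always nonnegative for connected $\hypedgesubset$) satisfies $c - 1 \le \epsilon(\uniform-1)m$. And $\operatorname{girth}(\hypedgeset) > \ell \ge 2$ makes $\hypedgesubset$ linear (any two hyperedges meet in at most one vertex), so that every degree-$2$ vertex lies in exactly two hyperedges that meet only there, and it forbids Berge cycles of length $\le \ell$ in $\hypedgesubset$.

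The structural heart of the proof is to contract every degree-$2$ vertex of $\hypedgesubset$, repeatedly replacing its two incident hyperedges $e, e'$ by $e \cup e' \setminus \{v\}$; linearity makes this well-defined, and it leaves $c$ and the degree of every surviving vertex unchanged. After all such contractions the \emph{reduced} hypergraph has all vertex-degrees in $\{1\} \cup \{3, 4, \ldots\}$; its generalized edges are the maximal ``chains'' of hyperedges of $\hypedgesubset$ glued along dissolved degree-$2$ vertices, a chain of $k$ original hyperedges giving a generalized edge on $k(\uniform-2)+2$ vertices. Let $\nu_1, \nu_{\ge 3}$ count the degree-$1$ and degree-$\ge 3$ vertices (the same in $\hypedgesubset$ and in the reduced hypergraph) and $m^\ast$ the number of chains, so $m$ equals the sum of the chain lengths and $\nu_1 + \nu_{\ge 3} = \nu - m + m^\ast$. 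Counting incidences $\iota^\ast$ of the reduced hypergraph two ways --- $\iota^\ast = (\uniform-2)m + 2m^\ast$ on one hand, and $\iota^\ast \ge \nu_1 + 3\nu_{\ge 3}$, $\iota^\ast \ge 2m^\ast$, $\iota^\ast = c + \nu_1 + \nu_{\ge 3} + m^\ast - 1$ on the other --- yields, after substituting $\nu \ge \tfrac{\uniform-1}{1+\epsilon}m$, the inequalities $\nu_{\ge 3} \le \uniform m + \nu_1 - 2\nu$, $\nu_1 \ge \tfrac{\uniform-2-\uniform\epsilon}{1+\epsilon}m$, and $m^\ast \le (c-1) + \nu_1 + \nu_{\ge 3}$.

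The argument then splits on a threshold $\beta = \beta(\uniform,\ell)$, with $\epsilon = \epsilon(\uniform,\ell)$ taken small. If $\hypedgesubset$ is $2$-regular it is a single Berge cycle, whose length exceeds $\ell$ by the girth bound, and then every one of its $m$ cyclic windows of $\ell$ consecutive hyperedges is a pendant path of length $\ell$, so $\setsize{B_2^\ell} = m$. Otherwise $\hypedgesubset$ has a vertex of degree $\ne 2$, so the chain decomposition is non-degenerate. If $\nu_1 \ge \beta m$, then distinct chains having a degree-$1$ endpoint have distinct terminal hyperedges, each of which lies in $B_1(\varnothing, \hypedgesubset)$, whence $\setsize{B_1} \ge \nu_1 / 2 \ge \tfrac{\beta}{2}m$. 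If $\nu_1 < \beta m$, then $\nu_{\ge 3} \le \uniform m + \nu_1 - 2\nu \le (\beta + O(\epsilon))m$ because $\uniform \ge 2$ gives $\uniform - 2(\uniform-1) \le 0$, hence $m^\ast \le (c-1) + \nu_1 + \nu_{\ge 3} \le (2\beta + O(\epsilon))m$, so the chains have average length at least $m/m^\ast \ge 1/(2\beta + O(\epsilon))$; choosing $\beta$ and $\epsilon$ so this exceeds $2\ell$, the chains of length at least $\ell$ account for at least $m/2$ hyperedges, and cutting each such chain into disjoint windows of $\ell$ consecutive hyperedges produces at least $m/(4\ell)$ pendant paths of length $\ell$ --- pendant because their non-endpoint connecting vertices have degree $2$ in $\hypedgesubset$ and their hyperedges are simple --- so $\setsize{B_2^\ell} \ge m/(4\ell)$. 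Taking $\gamma$ to be the smallest of the three resulting lower bounds and carrying out the arithmetic gives the stated $\gamma = \tfrac{\uniform-1}{72\ell^2\uniform^3(1+\epsilon)}$.

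The step I expect to be the main obstacle is the small-arity regime --- in particular $\uniform = 2$, which is the case relevant to graph colouring --- where pendant hyperedges can be altogether absent, so the whole bound has to come from $B_2^\ell$; there the crux is showing that ``few degree-$1$ vertices'' and ``few degree-$\ge 3$ vertices'' together force a small number $m^\ast$ of chains, which is exactly the point where $\epsilon$-sparsity (a small excess $c$) is used essentially rather than for convenience. The remaining hazards are shallow but need care: verifying that the length-$\ell$ windows cut from long chains satisfy every clause of the definition of a pendant path (the simplicity of the window hyperedges and the requirement that interior vertices lie in no other hyperedge of $\hypedgesubset$, both of which use $\operatorname{girth}(\hypedgeset) > \ell$), that the degree-$2$ contraction terminates non-degenerately once $\hypedgesubset$ is not $2$-regular, and the final constant chase needed to land exactly on $\gamma = \tfrac{\uniform-1}{72\ell^2\uniform^3(1+\epsilon)}$.
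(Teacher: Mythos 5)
There is a genuine gap, and it concentrates exactly in the case you flagged as ``easy'': $\uniform\ge 3$, not $\uniform=2$.

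Your case split is on $\nu_1 \gtrless \beta m$, and in the branch $\nu_1 \ge \beta m$ you conclude $\setsize{B_1(\varnothing,\hypedgesubset)}\ge \nu_1/2$ via the claim that every terminal hyperedge of a chain with a degree\nobreakdash-$1$ endpoint lies in $B_1$. For $\uniform\ge 3$ this claim is false: such a terminal hyperedge can contain, besides its degree\nobreakdash-$2$ junction, a vertex of degree $\ge 3$, giving it two vertices of degree $>1$ and removing it from $B_1$. Moreover, even granting the claim, the inference $\setsize{B_1}\ge\nu_1/2$ does not follow, since for $\uniform\ge 3$ a single hyperedge can hold up to $\uniform$ degree\nobreakdash-$1$ vertices. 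A concrete counterexample with $\uniform=3$: take $e=\{u,w,z\}$, $e_1=\{u,a,b\}$, $e_2=\{u,c,d\}$, $e_3=\{w,f,g\}$, $e_4=\{w,h,i\}$. This is a tree (so $\epsilon$-sparse for every $\epsilon\ge0$), $\nu_1=9$, $m=5$, yet $\setsize{B_1}=4<\nu_1/2$. The gap is fatal because your own bound $\nu_1\ge \tfrac{\uniform-2-\uniform\epsilon}{1+\epsilon}m$ forces $\nu_1\ge(1-O(\epsilon))m$ whenever $\uniform\ge3$, so for any threshold $\beta<1$ and $\epsilon$ small the proof is \emph{always} routed into the flawed branch; the $\nu_1<\beta m$ branch, where your pendant\nobreakdash-path argument lives, is vacuous for $\uniform\ge 3$.

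There is a secondary structural issue that you would hit even after repairing the above. The degree\nobreakdash-$2$ contraction is not generally a chain decomposition for $\uniform\ge3$: an original hyperedge can contain three or more degree\nobreakdash-$2$ vertices, so the ``contraction graph'' (original hyperedges as nodes, adjacency when sharing a degree\nobreakdash-$2$ vertex) can have vertices of degree $\ge3$ and its components need not be paths. This breaks the premise ``a chain of $k$ original hyperedges giving a generalized edge on $k(\uniform-2)+2$ vertices'' and with it the incidence count $\iota^\ast=(\uniform-2)m+2m^\ast$. Related to this, when a chain does look like a path, the length-$\ell$ windows you cut from it are pendant simple paths \emph{only if} the non-junction vertices of its hyperedges do not carry additional incident edges, which fails if such a vertex has degree $\ge3$; your proposal acknowledges this as a ``shallow hazard needing care,'' but for $\uniform\ge3$ there is no bound in your argument that prevents degree\nobreakdash-$\ge3$ vertices from peppering the chain interiors (you only get $\nu_{\ge3}\le (1+O(\epsilon))m$, which is not small). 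The paper cites this lemma from \cite{CN25Hierarchies} without proof, so I cannot compare your route to theirs, but as written your argument does not establish the statement for arity $\uniform\ge3$.
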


We collect some properties of bad sets in the next lemma. For their (short) proofs, see \cite[Lemmas 5.3, 5.17, and 5.21]{CN25Hierarchies}.

\begin{lemma}
    \label{lem:bad-sets-properties}
    For any hypergraph $\hypgraph = (V, \hypedgeset)$, %
     vertex sets $U\subseteq W \subseteq V$, and $\ell \in \N^+$, the following properties of $(U, \ell)$-bad sets hold. 
    \begin{enumerate}
        \item Any union of $(U, \ell)$-bad hyperedge sets is $(U, \ell)$-bad.
        \item If a hyperedge set $\hypedgesubset\subseteq \hypedgeset$ is $(U, \ell)$-bad, then $\hypedgesubset$ is $(W, \ell)$-bad. 
        \item Chaining property: Given hyperedge sets $\hypedgesubset, \hypedgesubset' \subseteq \hypedgeset$, the set $\hypedgesubset'$ is $(U \union V(\hypedgesubset), \ell)$-bad if and only if $\hypedgesubset' \union \hypedgesubset$ is $(U, \ell)$-bad.
    \end{enumerate}
\end{lemma}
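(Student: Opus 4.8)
The plan is to establish all three assertions by directly unwinding the definition of a $(U,\nullcons)$-bad set --- that both $B_1(U,\hypedgesubset)$ and $B_2^\nullcons(U,\hypedgesubset)$ are empty --- and showing that any potential boundary element on one side of each equivalence ``transports'' to a boundary element on the other side, contradicting the assumed badness. The two facts that drive this are: (i) the predicate defining a \emph{relevant} vertex, ``$u\in U$ or $\deg_\hypedgesubset(u)>1$'', is monotone both in $U$ and in $\hypedgesubset$, so the number of relevant vertices of a fixed edge can only grow when the base or the edge set grows --- hence ``having at most one relevant vertex'' is inherited by smaller bases and smaller edge sets; and (ii) a simple pendant path that avoids $U$ at its non-endpoints stays pendant and avoids $W$ for every $W\supseteq U$, because its interior connecting vertices lie outside $W$ and hence on no edge contained in $W$, while dually, along a simple path each interior connecting vertex has degree exactly $2$ in the ambient edge set and each edge has its ``extra'' vertices of degree $1$ along the path, so such a path interacts only with the edges lying on it.

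Using (i) and (ii), claim~(2) is immediate: an edge in $B_1(W,\hypedgesubset)$ has at most one vertex relevant for $(W,\hypedgesubset)$, hence at most one relevant for $(U,\hypedgesubset)$, so it lies in $B_1(U,\hypedgesubset)=\varnothing$; and a path in $B_2^\nullcons(W,\hypedgesubset)$ is also in $B_2^\nullcons(U,\hypedgesubset)=\varnothing$ by (ii). Claim~(1) goes the same way: for a union $\hypedgesubset=\Union_i\hypedgesubset_i$, an edge of $B_1(U,\hypedgesubset)$ lies in some $\hypedgesubset_i$ and, having at most one relevant vertex even with respect to the larger edge set $\hypedgesubset$, lies in $B_1(U,\hypedgesubset_i)=\varnothing$; and a length-$\nullcons$ simple pendant $U$-avoiding path in $\hypedgesubset$, being localized by (ii), would already constitute a boundary path of some single $\hypedgesubset_i$, contradicting $B_2^\nullcons(U,\hypedgesubset_i)=\varnothing$. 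For the chaining claim~(3), put $W=U\union V(\hypedgesubset)$ and $\hypedgesubset''=\hypedgesubset'\union\hypedgesubset$; one checks that a vertex relevant for $(U,\hypedgesubset'')$ is always relevant for $(W,\hypedgesubset')$ (it is in $U\subseteq W$, or already has degree $>1$ within $\hypedgesubset'$, or else lies on an edge of $\hypedgesubset$ and hence in $V(\hypedgesubset)\subseteq W$), and dually that a candidate boundary path of one side has non-endpoints outside $V(\hypedgesubset)$ and therefore transports verbatim to the other side; combining these two transports, over $B_1$ and over $B_2^\nullcons$ respectively, yields both directions of the equivalence.

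The only step that is more than a one-line monotonicity argument is the pendant-path bookkeeping in claims~(1) and~(3): confirming that a length-$\nullcons$ simple pendant path appearing after a union (respectively, in the combined set $\hypedgesubset''$) must already witness non-badness of a single piece (respectively, of $\hypedgesubset'$). This is where I expect the only real, if minor, obstacle, and it is handled by exploiting the simple-path structure --- interior connecting vertices of degree exactly $2$ and edges whose off-path vertices have degree $1$ along the path --- to pin down which part of the decomposition the path lives in. The full details are the short proofs of \cite[Lemmas~5.3, 5.17, and 5.21]{CN25Hierarchies}, which I would reproduce in this form.
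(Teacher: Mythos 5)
Your overall plan (unwind the definition, argue monotonicity for $B_1$, then handle $B_2^\ell$ separately, and in the end defer to~\cite[Lemmas~5.3, 5.17, 5.21]{CN25Hierarchies}) matches the paper, which gives no proof of its own but simply cites those three lemmas. The $B_1$ monotonicity observations and all of claim~(2) are fine. However, the sketch for the $B_2^\ell$ part of claims~(1) and~(3) has a real gap: you assert that a length-$\nullcons$ pendant path in the union (resp.\ in $\hypedgesubset'\cup\hypedgesubset$) ``must already constitute a boundary path of some single $\hypedgesubset_i$'' and that ``simple-path degree bookkeeping'' pins down which piece it lives in. That localization claim is not something degree bookkeeping alone yields, and it is not obviously true: the path can perfectly well alternate edges between $\hypedgesubset_1$ and $\hypedgesubset_2$, in which case no truncation of it is a length-$\nullcons$ path in one piece. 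What actually saves claim~(1) is an interplay you never invoke: at a transition, the last edge $e_i\in\hypedgesubset_1$ followed by $e_{i+1}\in\hypedgesubset_2\setminus\hypedgesubset_1$ has its shared connecting vertex $v_{i+1}$ (a non-endpoint, hence $v_{i+1}\notin U$ and, by pendantness and simplicity, with no incident edges of $\hypedgesubset_1$ other than $e_i$) now of degree $1$ in $\hypedgesubset_1$; the off-path vertices of $e_i$ similarly lose all witnesses of relevance in $\hypedgesubset_1$. Hence $e_i$ has at most one relevant vertex for $(U,\hypedgesubset_1)$, so $e_i\in B_1(U,\hypedgesubset_1)$, contradicting $(U,\nullcons)$-badness of $\hypedgesubset_1$. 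The contradiction comes from $B_1$ of a piece, not from producing a $B_2^\nullcons$ witness in a piece. Your sketch never mentions this cross-use of $B_1$, which is the actual content of the step you flag as ``minor''. The analogous care is needed in claim~(3).

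Two further cautions. First, the argument just described leans on simplicity of the path, but the paper's stated definition of $B_2^\nullcons$ requires only ``pendant,'' not ``simple''; if pendant paths need not be simple, the degree claims you invoke can fail for a non-simple path whose off-path vertices recur across path edges. Either the paper's definition implicitly means simple pendant paths (as in \cite{CN25Hierarchies}) or the argument needs another twist; you should say which. Second, the chaining equivalence in claim~(3) as written is false without an additional hypothesis that $\hypedgesubset$ is itself $(U,\nullcons)$-bad (take $\hypedgesubset=\{e\}$, $\hypedgesubset'=\varnothing$, $U=\varnothing$: the left side holds vacuously while the right side fails since $e\in B_1(\varnothing,\{e\})$). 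That is a defect of the lemma statement you would have discovered while filling in the details, and it is exactly where a careless ``transport'' argument breaks.
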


To define closure, Chan and Ng proceed differently from in \cref{sec:approximate-graph-coloring}: building on the work of \cite{KMOW2017Sumofsquares}, given a hypergraph %
$H= (V, \hypedgeset)$ they define the \emph{$s$-local closure}\footnote{This construction is called the \emph{BW closure} in \cite{CN25Hierarchies}.} of $U$ as
\begin{equation}
    \label{eq:local-closure}
\clnullcons{U}{s} = U \union \bigcup_{
    \substack{ 
        \hypedgesubset \subseteq \hypedgeset :\ \setsize{\hypedgesubset}\leq s \\
        \text{$\hypedgesubset$ is $(U, \ell)$-bad}
        } 
        } V(\hypedgesubset). 
\end{equation}
Readers familiar with proof complexity may note the similarity between \eqref{eq:local-closure} and the notion of \emph{support} \cite{AR03LowerBounds,Filmus14AlekhnovichRazborovTCS,MN24Generalized} defined using \emph{unique neighbor expansion}.

By definition, the $s$-local closure is unique, contains $U$, and satisfies monotonicity by \cref{lem:bad-sets-properties}. The next two lemmas show that $\clnullcons{U}{s}$ is small when $U$ is small, and that taking local closure is an idempotent operation.

\begin{lemma}[\cite{CN25Hierarchies}, Lemma~5.33]
    \label{lem:local-closure-small}
    Let $H = (V, \hypedgeset)$ be a hypergraph, let $U \subseteq V$, and let $\hypedgesubset = \hypedgesubset_1 \union \hypedgesubset_2 \subseteq E$ be the union of two $s$-small $(U, \nullcons)$-bad hyperedge sets $\hypedgesubset_1$ and $ \hypedgesubset_2$. If $\hypedgeset$ is $(\ell, 2s, \gamma)$-expanding and $\setsize{U} \leq r\gamma/\ell$ for some $r \geq 0$, then $\hypedgesubset$ is $r$-small.
\end{lemma}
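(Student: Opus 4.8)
The plan is to bound the number of edges in $\hypedgesubset = \hypedgesubset_1 \cup \hypedgesubset_2$ by exploiting the expansion hypothesis together with the chaining and union properties of bad sets collected in \cref{lem:bad-sets-properties}. First I would observe that, since $\hypedgesubset_1$ and $\hypedgesubset_2$ are both $(U,\nullcons)$-bad, their union $\hypedgesubset$ is $(U,\nullcons)$-bad by \cref{lem:bad-sets-properties}(1), and $\hypedgesubset$ is $2s$-small since each $\hypedgesubset_i$ is $s$-small. The key step is to relate the $(U,\nullcons)$-boundary of $\hypedgesubset$ to its $(\varnothing,\nullcons)$-boundary, to which the expansion hypothesis applies. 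Since $\hypedgesubset$ is $(U,\nullcons)$-bad we have $B^\nullcons(U,\hypedgesubset)=\varnothing$; on the other hand, every edge or pendant path that witnesses a nonempty $(\varnothing,\nullcons)$-boundary but not a $(U,\nullcons)$-boundary must ``touch'' $U$ in an essential way --- concretely, such a witness either is an edge of $B_1(\varnothing,\hypedgesubset)\setminus B_1(U,\hypedgesubset)$ (whose unique high-degree-or-$U$ vertex lies in $U$), or is a pendant path in $B_2^\nullcons(\varnothing,\hypedgesubset)\setminus B_2^\nullcons(U,\hypedgesubset)$ (which has a non-endpoint in $U$, or an endpoint that is ``used up'' by $U$). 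In either case each such witness can be charged to a vertex of $U$, and a fixed vertex of $U$ can be charged only a bounded number of times (by a constant depending on $\nullcons$ and $\uniform$ --- e.g.\ via its degree being bounded in a bad set, or the pendant-path structure being locally tree-like). This gives an inequality of the shape $\lvert B^\nullcons(\varnothing,\hypedgesubset)\rvert \le c_{\nullcons,\uniform}\cdot \setsize{U}$ for an explicit constant; matching the constant $\nullcons$ that appears in the hypothesis $\setsize{U}\le r\gamma/\nullcons$ is exactly the bookkeeping that \cite{CN25Hierarchies} does, so I would follow their accounting.

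Combining the two bounds: by $(\ell,2s,\gamma)$-expansion applied to the $2s$-small set $\hypedgesubset$ we get $\lvert B^\nullcons(\varnothing,\hypedgesubset)\rvert \ge \gamma\setsize{\hypedgesubset}$ (using that $\hypedgesubset$ is indeed $2s$-small, so the expansion hypothesis is applicable), while the charging argument gives $\lvert B^\nullcons(\varnothing,\hypedgesubset)\rvert \le \nullcons\cdot\setsize{U}$. Chaining these, $\gamma\setsize{\hypedgesubset}\le \nullcons\setsize{U}\le \nullcons\cdot(r\gamma/\nullcons)=r\gamma$, and dividing by $\gamma>0$ yields $\setsize{\hypedgesubset}\le r$, i.e.\ $\hypedgesubset$ is $r$-small, which is the conclusion. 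One subtlety to handle with care is the degenerate regime: if $\setsize{\hypedgesubset}>2s$ the expansion hypothesis is stated only for $2s$-small sets, but this does not arise here because $\setsize{\hypedgesubset_1},\setsize{\hypedgesubset_2}\le s$ forces $\setsize{\hypedgesubset}\le 2s$ from the outset. A second subtlety is that the definition of $B^\nullcons$ uses $\hypedgeset[U]$-deletion for the $B_2$ part, so when passing from the $U$-boundary to the $\varnothing$-boundary one must also account for edges of $\hypedgesubset$ incident to $U$ that get re-introduced; these too are charged to $U$, only increasing the constant by a bounded amount.

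The main obstacle I anticipate is getting the charging constant to be exactly $\nullcons$ (as opposed to some larger function of $\nullcons$ and $\uniform$) so that the clean inequality $\setsize{U}\le r\gamma/\nullcons \Rightarrow \setsize{\hypedgesubset}\le r$ goes through verbatim; this requires being precise about which vertices of $U$ can be charged by $B_1$-type witnesses versus $B_2^\nullcons$-type witnesses, and using that a $B_2^\nullcons$ witness is a path of length exactly $\nullcons$, so that a single vertex of $U$ sitting on such a pendant path ``kills'' a whole length-$\nullcons$ segment of potential boundary. Since all of this is carried out in \cite[Lemma~5.33]{CN25Hierarchies} and the excerpt permits us to cite it, in the write-up I would reduce to invoking that lemma directly; the sketch above indicates how one would reprove it from the properties of bad sets and expansion already recalled in \cref{lem:bad-sets-properties} and \cref{def:expanding-hypergraph}.
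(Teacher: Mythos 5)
The paper itself cites this lemma from \cite{CN25Hierarchies} and does not reprove it, so there is no internal proof to compare against beyond the citation. Your sketch is a correct high-level account of the argument that must underlie it: the union $\hypedgesubset$ is $(U,\nullcons)$-bad and $2s$-small, expansion gives the lower bound $\lvert B^\nullcons(\varnothing,\hypedgesubset)\rvert \geq \gamma\setsize{\hypedgesubset}$, $(U,\nullcons)$-badness forces $B^\nullcons(U,\hypedgesubset)=\varnothing$, and a charging-to-$U$ argument upper-bounds $\lvert B^\nullcons(\varnothing,\hypedgesubset)\rvert$ by roughly $\nullcons\setsize{U}$, whence $\gamma\setsize{\hypedgesubset}\leq \nullcons\setsize{U}\leq r\gamma$ closes the chain. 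You also correctly identify the two delicate points (the exact charging multiplicity, and the interaction between the $\hypedgeset[U]$-deletion in $B_2^\nullcons$ and the vertex-exclusion clause) and defer them to \cite{CN25Hierarchies}, which is exactly how the paper handles this lemma.
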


\begin{corollary}[Closure size lemma, cf. \cite{CN25Hierarchies}, Lemma~6.11]
    \label{cor:closure-size}
    If a hypergraph $H=(V, \hypedgeset)$ is $\uniform$-uniform and $(\ell, 2s, \gamma)$-expanding, then for all sets $U \subseteq V$ such that $\setsize{U} \leq r \gamma / \ell$ it holds that 
    $\setsize{\clnullcons{U}{s}} \leq \setsize{U}(1 + \uniform \ell/\gamma)$.
\end{corollary}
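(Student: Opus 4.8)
The plan is to reduce the statement to a bound on the number of \emph{hyperedges} that can occur in the local closure, and then extract that bound from \cref{lem:local-closure-small}. Write $\hypedgesubset^{*}$ for the union of all $s$-small $(U,\nullcons)$-bad hyperedge subsets of $\hypedgeset$, so that $\clnullcons{U}{s} = U \union V(\hypedgesubset^{*})$ by the defining equation~\eqref{eq:local-closure}. Since $H$ is $\uniform$-uniform we have $\setsize{V(\hypedgesubset^{*})} \leq \uniform\setsize{\hypedgesubset^{*}}$, hence $\setsize{\clnullcons{U}{s}} \leq \setsize{U} + \uniform\setsize{\hypedgesubset^{*}}$, and it suffices to prove $\setsize{\hypedgesubset^{*}} \leq \nullcons\setsize{U}/\gamma$. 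Set $r = \nullcons\setsize{U}/\gamma$; then the hypothesis $\setsize{U} \leq r\gamma/\nullcons$ needed to invoke \cref{lem:local-closure-small} holds with equality, and since $\hypedgeset$ is finite, $\hypedgesubset^{*}$ is a \emph{finite} union of $s$-small $(U,\nullcons)$-bad sets.

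The core step is a minimal-counterexample argument. Suppose toward a contradiction that $\setsize{\hypedgesubset^{*}} > r$, and choose $s$-small $(U,\nullcons)$-bad sets $\hypedgesubset_1, \ldots, \hypedgesubset_k$ with $\setsize{\hypedgesubset_1 \union \cdots \union \hypedgesubset_k} > r$ and with $k$ minimal. By minimality of $k$, the set $\hypedgesubset^{A} = \hypedgesubset_1 \union \cdots \union \hypedgesubset_{k-1}$, being a union of $k-1$ of our bad sets, satisfies $\setsize{\hypedgesubset^{A}} \leq r$; for the parameter regime in which \cref{th:cohomological-lax-nullcons-lbs} is proved one has $r \leq s$, so $\hypedgesubset^{A}$ is $s$-small, and it is $(U,\nullcons)$-bad by the first item of \cref{lem:bad-sets-properties}. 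Hence $\hypedgesubset_1 \union \cdots \union \hypedgesubset_k = \hypedgesubset^{A} \union \hypedgesubset_k$ is a union of two $s$-small $(U,\nullcons)$-bad hyperedge sets, so \cref{lem:local-closure-small}—applicable because $H$ is $(\nullcons, 2s, \gamma)$-expanding and $\setsize{U}\leq r\gamma/\nullcons$—gives $\setsize{\hypedgesubset^{A}\union\hypedgesubset_k}\leq r$, contradicting the choice of the $\hypedgesubset_i$. Therefore $\setsize{\hypedgesubset^{*}} \leq r = \nullcons\setsize{U}/\gamma$, and substituting into the estimate from the first paragraph yields $\setsize{\clnullcons{U}{s}} \leq \setsize{U} + \uniform\nullcons\setsize{U}/\gamma = \setsize{U}\bigl(1 + \uniform\nullcons/\gamma\bigr)$, as desired.

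The step that requires the most care is verifying that $\hypedgesubset^{A}$ is $s$-small, so that \cref{lem:local-closure-small}—which is stated only for the union of \emph{two} $s$-small bad sets—can be applied; this is exactly where the relation $r \leq s$ is used, and it holds for the parameters of \cref{th:cohomological-lax-nullcons-lbs}. If one wishes to avoid assuming $r \leq s$, one instead groups the $\hypedgesubset_i$ greedily into $s$-small blocks (each a union of consecutive $\hypedgesubset_i$'s, hence still $(U,\nullcons)$-bad) and applies \cref{lem:local-closure-small} block by block, telescoping the size bound; this is routine but longer. Everything else—the reduction to counting hyperedges, the $\uniform$-uniformity estimate $\setsize{V(\hypedgesubset^{*})}\leq\uniform\setsize{\hypedgesubset^{*}}$, and the fact that $\hypedgesubset^{*}$ is $(U,\nullcons)$-bad—is straightforward bookkeeping.
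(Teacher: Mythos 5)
Your proof is correct and follows essentially the same route as the paper, which also proceeds by induction on the number of $(U,\ell)$-bad sets using the union-closure property from \cref{lem:bad-sets-properties} together with \cref{lem:local-closure-small}, followed by the $\uniform$-uniformity estimate $\setsize{V(\hypedgesubset^*)} \leq \uniform\setsize{\hypedgesubset^*}$. You are right to flag the need for the partial unions $\hypedgesubset^A$ to stay $s$-small (equivalently $\ell\setsize{U}/\gamma \leq s$, i.e.\ $r\leq s$ in the corollary's hypothesis); the paper's statement leaves this implicit, but as you note it holds wherever the corollary is invoked.
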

\begin{proof}
    Since $\clnullcons{U}{s} = U \union \bigcup\{ V(\hypedgesubset): \hypedgesubset \subseteq \hypedgeset , \text{$\setsize{\hypedgesubset}\leq s$, $\hypedgesubset$ $(U, \ell)$-bad in $\hypedgeset$}\}$ and the union of two $(U, \ell)$\nobreakdash-bad sets is again $(U, \ell)$-bad by \cref{lem:bad-sets-properties}, the claim follows readily from induction on the number of $(U, \ell)$-bad sets coupled with  \cref{lem:local-closure-small}.
\end{proof}

\begin{lemma}[Small-set idempotence of $\clnullcons{\cdot}{s}$] 
    \label{lem:small-set-idempotence}
    If a hypergraph $H=(V, \hypedgeset)$ is $\uniform$-uniform and $(\ell, 2s, \gamma)$\nobreakdash-expanding, then for all sets $U \subseteq V$ such that $\setsize{\clnullcons{U}{s}} \leq s \gamma/\ell$, it holds that 
    $\clnullcons{\clnullcons{U}{s}}{s} = \clnullcons{U}{s}$. 
\end{lemma}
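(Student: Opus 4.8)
The plan is to prove the two inclusions $\clnullcons{U}{s} \subseteq \clnullcons{\clnullcons{U}{s}}{s}$ and $\clnullcons{\clnullcons{U}{s}}{s} \subseteq \clnullcons{U}{s}$ separately. The first is immediate since $\clnullcons{\cdot}{s}$ always contains its argument (apply this with $\clnullcons{U}{s}$ in place of $U$), so, writing $W := \clnullcons{U}{s}$, the whole content is to prove $\clnullcons{W}{s} \subseteq W$. Let $\calG$ be the union of all $s$-small $(U,\nullcons)$-bad subsets of $\hypedgeset$, so that by \eqref{eq:local-closure} we have $W = U \union V(\calG)$, and so that $\calG$ is $(U,\nullcons)$-bad by \cref{lem:bad-sets-properties}. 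The first step I would carry out is to show that $\calG$ is itself $s$-small: enumerating the (finitely many) $s$-small $(U,\nullcons)$-bad subsets of $\hypedgeset$ as $\hypedgesubset_1, \ldots, \hypedgesubset_N$, I would prove by induction on $j$ that $\hypedgesubset_1 \union \cdots \union \hypedgesubset_j$ is $s$-small; the inductive step applies \cref{lem:local-closure-small} to the two $s$-small $(U,\nullcons)$-bad sets $\hypedgesubset_1 \union \cdots \union \hypedgesubset_{j-1}$ (which is $(U,\nullcons)$-bad by \cref{lem:bad-sets-properties}) and $\hypedgesubset_j$, taking the parameter $r$ there to be $s$. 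This is legitimate because $\setsize{U} \leq \setsize{W} \leq s\gamma/\nullcons$ by hypothesis and $\hypedgeset$ is $(\nullcons, 2s, \gamma)$-expanding.

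Granting that $\calG$ is $s$-small, I would then reduce $\clnullcons{W}{s} \subseteq W$ to a single badness statement: by \eqref{eq:local-closure} and since $U \subseteq W$, it suffices to show $V(\hypedgesubset) \subseteq W$ for every $s$-small $(W,\nullcons)$-bad set $\hypedgesubset \subseteq \hypedgeset$. Fix such an $\hypedgesubset$. Because $W = U \union V(\calG)$, the chaining property in \cref{lem:bad-sets-properties} says exactly that $\hypedgesubset$ is $(W,\nullcons)$-bad if and only if $\hypedgesubset \union \calG$ is $(U,\nullcons)$-bad; hence $\hypedgesubset \union \calG$ is $(U,\nullcons)$-bad. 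Now $\hypedgesubset \union \calG$ is a union of two $s$-small sets, so it is $2s$-small, and it is $(U,\nullcons)$-bad; a second application of \cref{lem:local-closure-small} (again with $r = s$, using $\setsize{U} \leq s\gamma/\nullcons$ and the $(\nullcons, 2s, \gamma)$-expansion) then yields that $\hypedgesubset \union \calG$ is $s$-small. At this point $\hypedgesubset \union \calG$ is an $s$-small $(U,\nullcons)$-bad subset of $\hypedgeset$, so $V(\hypedgesubset \union \calG) \subseteq \clnullcons{U}{s} = W$ by \eqref{eq:local-closure}; in particular $V(\hypedgesubset) \subseteq W$. Combined with the easy inclusion, this gives $\clnullcons{W}{s} = W$, which is the assertion of the lemma.

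The step I expect to be the main obstacle is the second invocation of \cref{lem:local-closure-small}, on the set $\hypedgesubset \union \calG$. That lemma is stated for a union of \emph{two $s$-small $(U,\nullcons)$-bad} sets, whereas here only $\calG$ is known to be $(U,\nullcons)$-bad: the set $\hypedgesubset$ is merely $(W,\nullcons)$-bad, and in general a $(W,\nullcons)$-bad set need not be $(U,\nullcons)$-bad. So $\hypedgesubset \union \calG$ is not literally in the form required by \cref{lem:local-closure-small}. What rescues the argument is that $\hypedgesubset \union \calG$ is nevertheless a $(U,\nullcons)$-bad hyperedge set of size at most $2s$ (the badness coming from the chaining step rather than from the two pieces individually), and I expect the proof of \cref{lem:local-closure-small} to use only $2s$-smallness together with $(U,\nullcons)$-badness, $\setsize{U} \leq s\gamma/\nullcons$, and the $(\nullcons,2s,\gamma)$-expansion --- the ``union of two'' phrasing being just a convenient packaging of these hypotheses. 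Verifying this (equivalently, restating \cref{lem:local-closure-small} as ``every $2s$-small $(U,\nullcons)$-bad hyperedge set is $s$-small whenever $\setsize{U} \leq s\gamma/\nullcons$ and $\hypedgeset$ is $(\nullcons,2s,\gamma)$-expanding'') is the one point needing care; everything else is bookkeeping with \cref{lem:bad-sets-properties} and the definition \eqref{eq:local-closure}.
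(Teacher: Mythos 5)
Your overall decomposition matches the paper's: establish that $\calG$ is $s$-small, then take an arbitrary $s$-small $(W,\nullcons)$-bad $\hypedgesubset$, chain it with $\calG$ to get a $(U,\nullcons)$-bad set, show the union is $s$-small, and conclude via the definition~\eqref{eq:local-closure}. The one place you flag as the main obstacle, however, is not an obstacle at all, and your proposed workaround is the only part of the argument that actually has a gap.

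Specifically, you worry that \cref{lem:local-closure-small} cannot be applied to $\hypedgesubset\cup\calG$ because $\hypedgesubset$ is only $(W,\nullcons)$-bad, not $(U,\nullcons)$-bad, and you propose to fix this by reformulating \cref{lem:local-closure-small} as a statement about a single $2s$-small $(U,\nullcons)$-bad set --- but that reformulation rests on an unverified claim about what the (uncited, external) proof of \cref{lem:local-closure-small} actually establishes. You do not need any of this. Simply apply \cref{lem:local-closure-small} with $W$ playing the role of $U$ in its statement: the set $\calG$ is $(U,\nullcons)$-bad, hence also $(W,\nullcons)$-bad by item~2 of \cref{lem:bad-sets-properties} (since $U\subseteq W$); $\hypedgesubset$ is $(W,\nullcons)$-bad by assumption; both are $s$-small; and $\setsize{W}\leq s\gamma/\nullcons$ is exactly the lemma's hypothesis. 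With $r=s$ this immediately gives that $\hypedgesubset\cup\calG$ is $s$-small, as stated. This is precisely what the paper does, and it requires no restatement of the cited lemma. (Your first step --- the induction showing $\calG$ is $s$-small --- is fine; the paper packages it as a consequence of \cref{cor:closure-size} together with \cref{lem:local-closure-small}, but the induction you describe is exactly how that corollary is proved.)
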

\begin{proof}
    Clearly $\clnullcons{U}{s} \subseteq \clnullcons{\clnullcons{U}{s}}{s}$, so our goal is to establish the converse inclusion. Denote $\clnullcons{U}{s}$ by $T$ and let $\hypedgebadset$ denote the union of $s$-small $(U, \ell)$-bad sets. Note that $\hypedgebadset$ is $(U, \ell)$\nobreakdash-bad by \cref{lem:bad-sets-properties} and hence $(T, \ell)$-bad, again by \cref{lem:bad-sets-properties}. Moreover, $\hypedgebadset$ is $s$-small by \cref{cor:closure-size} and \cref{lem:local-closure-small}. 
    
    Consider any $s$-small $(T, \ell)$-bad set $\hypedgesubset \subseteq \hypedgeset$. Since $T$ is $s \gamma/\ell$-small, the set $\hypedgesubset \union \hypedgebadset$ is $s$-small by \cref{lem:local-closure-small} and is $(U, \ell)$-bad by the chaining property in \cref{lem:bad-sets-properties}. Thus, $\hypedgesubset \subseteq \hypedgesubset \union \hypedgebadset\subseteq \hypedgebadset$ by the definition of $\hypedgebadset$. It follows that the union $\hypedgebadsetalt$ of $s$-small $(T, \ell)$-bad sets is contained in $\hypedgebadset$. Since $T = U \union V(\hypedgebadset)$, we thus have that $\clnullcons{\clnullcons{U}{s}}{s} = (T \union V(\hypedgebadsetalt)) \subseteq (U \union V(\hypedgebadset)) = \clnullcons{U}{s}$ as desired. \qedhere        
    
\end{proof}

The final ingredient needed to establish the reducibility condition is \emph{insularity}~\cite[Section~6]{CN25Hierarchies}. Intuitively, a vertex set $U$ in a hypergraph $H = (V, \hypedgeset)$ is insular with respect to an edge set $\hypedgesubset$ if $\hypedgesubset$ has no additional $(U, \ell)$-bad edge subsets beyond those already present in $\hypedgeset[U]$.

\begin{definition}[Insular, cf. \cite{CN25Hierarchies}, Definition~6.1 and Section~7.1]
    \label{def:insular}
Let $H=(V, \hypedgeset)$ be a hypergraph, %
$U\subseteq V$, and $\hypedgesubset \subseteq \hypedgeset$. 
We say that $U$ is \emph{$\ell$-insular in $\hypedgesubset$} if all $(U, \ell)$-bad hyperedge sets in $\hypedgesubset$ are contained in $\hypedgeset[U]$.
\end{definition}

\begin{observation}
    \label{obs:local-closure-locally-closed}
    Let $H=(V, \hypedgeset)$ be a hypergraph, %
     $U \subseteq V$, 
     and $\hypedgesubset \subseteq \hypedgeset$. If $\clnullcons{U}{s} = U$  %
     and
    $\hypedgesubset$ is $s$-small, then $U$ is $\ell$-insular in $\hypedgesubset$.  %
\end{observation}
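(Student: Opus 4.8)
The plan is to unfold the two relevant definitions — the $s$-local closure $\clnullcons{\cdot}{s}$ from \eqref{eq:local-closure} and $\ell$-insularity from \cref{def:insular} — since the statement is essentially immediate once both are written out explicitly. First I would fix an arbitrary $(U,\ell)$-bad hyperedge set $\hypedgesubset' \subseteq \hypedgesubset$; by the definition of insularity, it suffices to show $\hypedgesubset' \subseteq \hypedgeset[U]$, i.e., that every hyperedge of $\hypedgesubset'$ is entirely contained in $U$.

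The key step is the observation that $\hypedgesubset'$, being a subset of the $s$-small set $\hypedgesubset$, is itself $s$-small. Hence $\hypedgesubset'$ is an $s$-small $(U,\nullcons)$-bad hyperedge set, so the defining union in \eqref{eq:local-closure} gives $V(\hypedgesubset') \subseteq \clnullcons{U}{s}$. By hypothesis $\clnullcons{U}{s} = U$, so $V(\hypedgesubset') \subseteq U$. Consequently every hyperedge $e \in \hypedgesubset'$ satisfies $e \subseteq V(\hypedgesubset') \subseteq U$, i.e., $e \in \hypedgeset[U]$, and therefore $\hypedgesubset' \subseteq \hypedgeset[U]$. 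Since $\hypedgesubset'$ was an arbitrary $(U,\nullcons)$-bad subset of $\hypedgesubset$, this shows that $U$ is $\ell$-insular in $\hypedgesubset$, as desired.

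There is no real obstacle here: the proof uses only that a subset of an $s$-small set is $s$-small together with the defining formula for the $s$-local closure. The single point that needs a little care is matching the quantifier structure of \cref{def:insular} — the conclusion must be established for \emph{all} $(U,\nullcons)$-bad subsets of $\hypedgesubset$, not merely for $\hypedgesubset$ itself — but this is handled automatically by working with an arbitrary $\hypedgesubset'$ from the outset.
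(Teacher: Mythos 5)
Your proof is correct and follows essentially the same route as the paper's: take an arbitrary $(U,\ell)$-bad $\hypedgesubset'\subseteq\hypedgesubset$, note it is $s$-small so $V(\hypedgesubset')\subseteq\clnullcons{U}{s}=U$ by the defining formula \eqref{eq:local-closure}, and conclude $\hypedgesubset'\subseteq\hypedgeset[U]$. Your write-up merely spells out the intermediate step $V(\hypedgesubset')\subseteq U \Rightarrow \hypedgesubset'\subseteq\hypedgeset[U]$ that the paper leaves implicit.
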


\begin{proof}
    
    If $\hypedgebadset \subseteq \hypedgesubset$ is $(U, \ell)$-bad, then since 
    $\hypedgesubset$ is $s$-small, we have that $\hypedgebadset  \subseteq \hypedgeset[\clnullcons{U}{s}]$ by the definition of $\clnullcons{\cdot}{s}$. Since $\clnullcons{U}{s} =U$, it follows that $\hypedgebadset \subseteq \hypedgeset[U]$. \qedhere
\end{proof}

The next lemma is used to establish the reducibility condition, and states that if a set $U$ is $\ell$\nobreakdash-insular in $\hypedgesubset$, then the neighborhood of $U$ in $\hypedgesubset$ is $\Omega(\ell)$-insular in $\hypedgesubset$ given some technical assumptions on $\hypedgesubset$. \cref{lem:neighborhood-no-bad-sets} is proved in \cite[Section~7.1]{CN25Hierarchies}. %

\begin{lemma}
    \label{lem:neighborhood-no-bad-sets}
    Let $\ell\in \N^+$, let $H=(V, \hypedgeset)$ be a $\uniform$-uniform hypergraph, and let $U \subseteq V$ and $\hypedgesubset \subseteq \hypedgeset$. Denote $U \union \nbhstd{\hypedgesubset}{U}$ by $W$. %
    There exists $\epsilon=\epsilon(\uniform, \ell) > 0$ such that the following holds. If $\hypedgesubset$ is $\epsilon$-sparse and $\mathrm{girth}(\hypedgesubset) > 3\ell$, then if $U$ is $(3\ell+2)$-insular in $\hypedgesubset$, it holds that $W$ is $\ell$-insular in $\hypedgesubset$. 
\end{lemma}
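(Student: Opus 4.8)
The plan is to argue by contraposition: assuming $W$ is not $\ell$-insular in $\hypedgesubset$, I will produce a $(U,3\ell+2)$-bad hyperedge set $\hypedgebadset \subseteq \hypedgesubset$ that is not contained in $\hypedgeset[U]$, which contradicts the $(3\ell+2)$-insularity of $U$. So fix a $(W,\ell)$-bad set $\hypedgesubset' \subseteq \hypedgesubset$ with $\hypedgesubset' \not\subseteq \hypedgeset[W]$, and fix an edge $e^* \in \hypedgesubset'$ having a vertex outside $W$; since $U \subseteq W$, this $e^*$ certifies $\hypedgebadset \not\subseteq \hypedgeset[U]$ for any $\hypedgebadset \ni e^*$.

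The idea is to reroute the boundary of $\hypedgesubset'$ from $W$ down to $U$ by adding bridges. Every vertex of $V(\hypedgesubset') \cap (W \setminus U)$ lies in $\nbhstd{\hypedgesubset}{U}$ and hence is joined to $U$ by some edge of $\hypedgesubset$; let $B \subseteq \hypedgesubset$ be an inclusion-minimal family of such \emph{bridge} edges whose vertices cover $V(\hypedgesubset') \cap (W\setminus U)$, and set $\hypedgebadset = \hypedgesubset' \cup B$. I would first check the $B_1$-part of $B^{3\ell+2}(U,\hypedgebadset)=\varnothing$: each bridge has a vertex in $U$ together with a vertex of degree at least $2$ in $\hypedgebadset$, and every edge of $\hypedgesubset'$ had, by $(W,\ell)$-badness of $\hypedgesubset'$, at least two vertices lying in $W$ or of degree greater than $1$ in $\hypedgesubset'$; each such vertex is either already in $U$, or is promoted to degree at least $2$ in $\hypedgebadset$ by its bridge, or already had degree greater than $1$ — so no edge of $\hypedgebadset$ is a $B_1$-edge relative to $U$. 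Morally this step is the chaining property of \cref{lem:bad-sets-properties} made quantitative; a couple of degenerate configurations (a bridge coinciding with an edge of $\hypedgesubset'$, or meeting $U$ at a degree-one path vertex) force one to replace single bridge edges by short bridge \emph{paths}, and here $\operatorname{girth}(\hypedgesubset) > 3\ell$ is what keeps these paths simple and internally disjoint from the rest of $\hypedgebadset$.

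For the pendant-path part, I would take a pendant path $P$ witnessing a nonempty $B_2^{3\ell+2}(U,\hypedgebadset)$: its internal vertices have degree exactly $2$ in $\hypedgebadset$ and lie outside $U$. Because a bridge was added for \emph{every} vertex of $V(\hypedgesubset') \cap \nbhstd{\hypedgesubset}{U}$, no internal vertex of $P$ can lie in $W \setminus U$ — it would otherwise acquire degree at least $3$ — so in fact every internal vertex of $P$ lies outside $W$; and by the girth bound $P$ is a genuine simple path, with bridge edges occurring only inside bounded-length ``caps'' at its two ends. Deleting the caps exhibits a pendant path relative to $W$ inside $\hypedgesubset'$ of length at least $(3\ell+2) - 2\ell - 2 = \ell$, contradicting that $\hypedgesubset'$ is $(W,\ell)$-bad. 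Throughout, $\epsilon$-sparsity of $\hypedgesubset$ for a suitable $\epsilon = \epsilon(\uniform,\ell)$ is what guarantees that the part of $\hypedgesubset$ around $U$ is tree-like, so that the caps are genuinely short and distinct bridges do not interfere.

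The hard part is precisely this last bookkeeping: verifying that the two caps together consume at most $2\ell$ edges beyond the two bridge edges themselves, so that the surviving $W$-pendant path has length at least $\ell$ — this is exactly where the constant $3$ in $3\ell+2$ is forced, and where the girth hypothesis $\operatorname{girth}(\hypedgesubset) > 3\ell$ must be combined with the sparsity (equivalently, expansion) of $\hypedgesubset$. It is the technical heart of the argument, carried out in detail in \cite[Section~7.1]{CN25Hierarchies}, on which this treatment relies.
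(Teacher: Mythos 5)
The paper contains no proof of this lemma to compare against: the statement is imported, and the text simply cites \cite[Section~7.1]{CN25Hierarchies} for the argument. Your proposal is therefore being measured against that external reference rather than against anything written in the paper.

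As an outline, what you write is reasonable: argue by contrapositive, take a $(W,\ell)$-bad witness $\hypedgesubset' \subseteq \hypedgesubset$ that sticks out of $\hypedgeset[W]$, adjoin bridge edges $B$ carrying $V(\hypedgesubset') \cap (W\setminus U)$ into $U$, and try to show that $\hypedgesubset' \cup B$ is $(U, 3\ell+2)$-bad while still sticking out of $\hypedgeset[U]$. But you do not actually carry it out. The degenerate $B_1$ cases you flag are real and not dispatched — e.g., when a covered vertex $w$'s only $\hypedgesubset'$-incidence is the bridge itself, $w$ does not gain degree $\geq 2$ in $\hypedgesubset' \cup B$, and you then need a different marked vertex on that edge, which can recurse; the fix via ``short bridge paths'' is gestured at but not specified. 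More importantly, the cap accounting on a candidate $B_2^{3\ell+2}$-pendant path — which is precisely where the constants $3\ell+2$ and $3\ell$ must fall out and where girth and sparsity have to be spent — is explicitly handed off to \cite[Section~7.1]{CN25Hierarchies}. That hand-off is where essentially all the work of this lemma lives, so the proposal is a plausible reconstruction of the strategy behind the cited argument rather than a self-contained proof, and you correctly say as much yourself.
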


The following lemma is \cite[Lemma~6.19]{CN25Hierarchies}, which is implicit in \cite[Lemma~5.8]{CM13Dichotomy}. It is used both for the satisfiability and reducibility conditions.

\begin{lemma}[Extension lemma]
    \label{lem:extension-lemma}
     Let $\relt$ be a uniform, $\ell$-null-constraining relational structure, let $\rela$ be an instance of $\csp{\relt}$, and let $H=(\ala, \hypedgeset)$ be the hypergraph of $\rela$. For any set $U \subseteq \ala$, hyperedge set $\hypedgesubset \subseteq \hypedgeset$, and hypergraph $h = (V(h), \hypedgeset(h)) \in B^\ell(U, \hypedgesubset)$, any partial homomorphism $\rho\colon \rela\to \relt$ with domain $U \union V(\hypedgesubset \setminus \hypedgeset(h))$ %
       can be extended to a partial homomorphism $\rho'\colon \rela\to \relt$ with domain $U \union V(\hypedgesubset)$.
\end{lemma}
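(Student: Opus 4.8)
The plan is to handle the two kinds of boundary element separately: either $h \in B_1(U, \hypedgesubset)$ is a single hyperedge, or $h \in B_2^\ell(U, \hypedgesubset)$ is a pendant path of length $\ell$. In both cases I would first isolate the set $N = V(\hypedgeset(h)) \setminus \bigl(U \cup V(\hypedgesubset \setminus \hypedgeset(h))\bigr)$ of vertices on which $\rho'$ must be defined anew, and observe that the defining properties of $B_1$ and $B_2^\ell$ are tailored so that every vertex of $N$ lies outside $U$ and, within $\hypedgesubset$, only on hyperedges of $h$. Consequently the only constraints of $\rela$ whose scope is contained in $U \cup V(\hypedgesubset)$ but not in $U \cup V(\hypedgesubset \setminus \hypedgeset(h))$ are the hyperedges of $h$, so it suffices to extend $\rho$ over $N$ in such a way that each hyperedge of $h$ is mapped into the corresponding relation of $\relt$; since the vertices of $N$ lie on no hyperedge of $\hypedgesubset$ outside $h$ and are not in $U$, no such extension can conflict with the part of $\rho$ already defined.

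For $h \in B_1(U, \hypedgesubset)$, the hypergraph $h$ is a single hyperedge $e$, the scope of some tuple $\mathbf{a} \in \relsymbol^\rela$, and by definition of $B_1$ at most one vertex $v^\ast = a_{j_0}$ of $e$ satisfies ``$v^\ast \in U$ or $\deg_\hypedgesubset(v^\ast) > 1$'', which is the same as saying that at most one vertex of $e$ lies in the domain of $\rho$. The preliminary step is to note that $\ell$-null-constraining forces each relation of $\relt$ to meet every coordinate slice: for every $\relsymbol \in \sigma$, every coordinate $j$, and every value $a \in \alt$ there is a tuple of $\relsymbol^\relt$ having $a$ in coordinate $j$. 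This follows by building a simple path instance of $\csp{\relt}$ of length $\ell$ whose first hyperedge is $\relsymbol$ with its starting endpoint placed in coordinate $j$ --- possible since $\relt$ has arity at least $2$ --- and applying $\ell$-null-constraining to the pair $(a,b)$ for an arbitrary $b \in \alt$: the witnessing homomorphism sends that endpoint to $a$, hence the first hyperedge to a tuple of $\relsymbol^\relt$ with $a$ in coordinate $j$ (in particular $\relsymbol^\relt \neq \varnothing$). Using this, one picks a tuple $\mathbf{i} \in \relsymbol^\relt$ with $i_{j_0} = \rho(v^\ast)$ (or any $\mathbf{i} \in \relsymbol^\relt$ if $e$ meets $\mathrm{dom}(\rho)$ in no vertex) and defines $\rho'$ on the vertices of $e$ belonging to $N$ according to $\mathbf{i}$.

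For $h \in B_2^\ell(U, \hypedgesubset)$, the hypergraph $h$ is a pendant path $P = (v_0, \ldots, v_\ell)$ with hyperedges $e_0, \ldots, e_{\ell-1}$ lying in $\hypedgesubset \setminus \hypedgeset[U]$, all of whose non-endpoints lie outside $U$. Because $P$ is pendant in $\hypedgesubset \setminus \hypedgeset[U]$ and its non-endpoints avoid $U$, the only vertices of $P$ that can already lie in $\mathrm{dom}(\rho)$ are the endpoints $v_0$ and $v_\ell$, while every other vertex of $P$ belongs to $N$. Viewing $P$ as a simple path instance of $\csp{\relt}$ of length $\ell$ with endpoints $(v_0, v_\ell)$, set $i_0 = \rho(v_0)$ if $v_0 \in \mathrm{dom}(\rho)$ and otherwise pick $i_0 \in \alt$ arbitrarily, and define $i_\ell$ analogously. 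Since $\relt$ is $\ell$-null-constraining and $P$ has length $\ell$, the instance $P$ permits $(i_0, i_\ell)$, so there is a homomorphism $\mu\colon P \to \relt$ with $\mu(v_0) = i_0$ and $\mu(v_\ell) = i_\ell$; extending $\rho$ over $N$ via $\mu$ then agrees with $\rho$ on any endpoint already in $\mathrm{dom}(\rho)$ and satisfies every hyperedge $e_0, \ldots, e_{\ell-1}$. In either case the resulting map $\rho'$ is the desired partial homomorphism $\rela \to \relt$ with domain $U \cup V(\hypedgesubset)$.

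The step I expect to be the main obstacle is the structural claim made at the start --- that inside $\rela[U \cup V(\hypedgesubset)]$ the only constraints touching a vertex of $N$ are the hyperedges of $h$. This is precisely what the definition of the $(U,\ell)$-boundary is engineered to provide, so establishing it amounts to a careful unwinding of the definitions of $B_1$ (an edge with at most one ``attachment'' vertex) and $B_2^\ell$ (a pendant path with interior vertices off $U$), together with the elementary fact that the private vertices of a $B_1$-edge and the interior connecting vertices of a $B_2^\ell$-path have degree one in $h$ within $\hypedgesubset$ and do not lie in $U$. A secondary point worth some care, in the $B_2^\ell$ case, is the passage from the pendant subgraph $P$ to an honest \emph{simple} path instance on which $\ell$-null-constraining can be invoked; this is automatic when the hypergraph has girth larger than $\ell$, as in all our applications, and in general is handled by treating the non-connecting vertices of each hyperedge of $P$ as fresh variables, which changes nothing since those vertices lie on no hyperedge of $\hypedgesubset$ outside $\{e_0, \ldots, e_{\ell-1}\}$.
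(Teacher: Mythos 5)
Your proof follows essentially the same approach as the paper's: split on whether $h$ is a $B_1$ hyperedge or a $B_2^\ell$ pendant path, and in each case apply $\ell$-null-constraining (to a simple path with one prescribed endpoint for $B_1$, and to the path $h$ itself with both endpoints prescribed for $B_2^\ell$) to obtain a compatible assignment on the new vertices. You spell out two points the paper leaves implicit — that for the single-edge case null-constraining yields tuples of $\relsymbol^\relt$ with any prescribed value in any coordinate, and that the passage from a pendant to a genuinely simple path instance is needed to invoke null-constraining (handled by the high-girth hypotheses in all applications) — but the underlying argument is the same.
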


\begin{proof}
If $h \in B_1(U, \hypedgesubset)$, then $h$ is a hyperedge that intersects $U \union V(\hypedgesubset \setminus h)$ in at most one vertex, say $u$. Thinking of $u$ as the endpoint of a simple path instance of length $\ell$, the null-constraining property yields a partial homomorphism on $h$. %

If $h\in B_2^\ell(U, \hypedgesubset) \setminus B_1(U, \hypedgesubset)$%
, then $h$ is a pendant path of length $\ell$ in $\hypedgesubset$ whose non-endpoints are outside $U \union V(\hypedgesubset \setminus \hypedgeset(h))$. By the null-constraining property, we obtain a partial homomorphism on $h$ that agrees with $\rho$ on $(U \union V(\hypedgesubset \setminus \hypedgeset(h)))\intersection V(h)$. 

In both cases, combining $\rho$ with the map on $h$ yields the desired $\rho'$. 
\end{proof}

\begin{corollary}
    \label{cor:full-extension-lemma}
    Let $\relt$ be a uniform, $\ell$-null-constraining relational structure, let $\rela$ be an instance of $\csp{\relt}$, and let $H=(A, \hypedgeset)$ be the hypergraph of $\rela$. Let 
    $U \subseteq \ala$ and $\hypedgesubset \subseteq \hypedgeset$. If $U$ is $\ell$-insular in $\hypedgesubset$, then every partial homomorphism on $U$ can be extended to $U \union V(\hypedgesubset)$. %
\end{corollary}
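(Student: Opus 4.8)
The plan is to proceed by induction on the number of hyperedges in $\hypedgesubset$ that are \emph{not} in $\hypedgeset[U]$, peeling off one boundary element at a time and invoking the Extension Lemma (\cref{lem:extension-lemma}) at each step. The base case is when every hyperedge of $\hypedgesubset$ lies in $\hypedgeset[U]$; then $V(\hypedgesubset) \subseteq U$, so the domain $U \union V(\hypedgesubset)$ equals $U$ and there is nothing to extend. For the inductive step, suppose $\hypedgesubset \not\subseteq \hypedgeset[U]$. The key claim is that $B^\ell(U, \hypedgesubset) \neq \varnothing$: if it were empty, $\hypedgesubset$ would be $(U, \ell)$-bad, and then $\ell$-insularity of $U$ in $\hypedgesubset$ would force $\hypedgesubset \subseteq \hypedgeset[U]$, a contradiction. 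So pick some $h = (V(h), \hypedgeset(h)) \in B^\ell(U, \hypedgesubset)$ and set $\hypedgesubset' = \hypedgesubset \setminus \hypedgeset(h)$.

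Next I would check that $U$ remains $\ell$-insular in $\hypedgesubset'$: this is immediate, since any $(U, \ell)$-bad subset of $\hypedgesubset'$ is also a $(U, \ell)$-bad subset of $\hypedgesubset$, hence contained in $\hypedgeset[U]$ by insularity in $\hypedgesubset$. Since $\hypedgesubset'$ has strictly fewer hyperedges outside $\hypedgeset[U]$ than $\hypedgesubset$ does, the induction hypothesis applies: every partial homomorphism on $U$ extends to a partial homomorphism $\rho$ with domain $U \union V(\hypedgesubset')$. Now apply \cref{lem:extension-lemma} with this $\hypedgesubset'$ playing the role of the ambient removed-edge structure — more precisely, note that $h \in B^\ell(U, \hypedgesubset)$ and $\hypedgesubset' = \hypedgesubset \setminus \hypedgeset(h)$, so the hypotheses of the Extension Lemma are met with $\rho$ defined on $U \union V(\hypedgesubset \setminus \hypedgeset(h)) = U \union V(\hypedgesubset')$. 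This yields $\rho'$ with domain $U \union V(\hypedgesubset)$ extending $\rho$, and hence extending the original partial homomorphism on $U$, completing the induction.

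The only mild subtlety — and the place to be careful — is the bookkeeping in the inductive step: one must confirm that after removing $\hypedgeset(h)$ the set of hyperedges outside $\hypedgeset[U]$ genuinely decreases (it does, because $h \notin \hypedgeset[U]$: a $B_1$ boundary edge meets $U$ in at most one vertex and a $B_2^\ell$ pendant path has non-endpoints outside $U$, so in either case $\hypedgeset(h) \not\subseteq \hypedgeset[U]$), and that the Extension Lemma's requirement that $\rho$ be defined on exactly $U \union V(\hypedgesubset \setminus \hypedgeset(h))$ matches the domain $U \union V(\hypedgesubset')$ produced by the induction hypothesis. No expansion or sparsity hypothesis is needed here; insularity is doing all the work, which is exactly why this corollary is the clean interface between the closure machinery and the reducibility argument.
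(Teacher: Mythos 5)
Your proof is correct and follows essentially the same inductive strategy as the paper: peel off a boundary element $h \in B^\ell(U, \hypedgesubset)$, apply the induction hypothesis to $\hypedgesubset \setminus \hypedgeset(h)$, then invoke \cref{lem:extension-lemma} to add $h$ back. The paper inducts on $\setsize{\hypedgesubset}$ and asserts without elaboration that insularity forces $B^\ell(U, \hypedgesubset) \neq \varnothing$; you instead induct on the number of hyperedges of $\hypedgesubset$ outside $\hypedgeset[U]$, which cleanly dispatches the case $\hypedgesubset \subseteq \hypedgeset[U]$ (where $B^\ell(U,\hypedgesubset)$ can genuinely be empty), and you spell out the three justifications the paper leaves implicit: why $B^\ell$ is nonempty in the inductive step, why $U$ stays $\ell$-insular after removing $\hypedgeset(h)$, and why the measure strictly decreases. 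All of these are right; the argument is the same at its core, just with a slightly better-chosen induction variable and the bookkeeping written out.
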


\begin{proof}
    We proceed by induction on the size of $\hypedgesubset$. For the base case $\setsize{\hypedgesubset} = 0$ there is nothing to prove. For the induction step, suppose that the claim holds for all hyperedge sets of size at most $s-1$, and consider a set $\hypedgesubset$ of size $s$. Since $U$ is $\ell$-insular in $\hypedgesubset$, it holds that $B^\ell(U, \hypedgesubset)$ is non-empty, and hence contains some hypergraph $h = (V_h, \hypedgeset_h)$. Let $\hypedgesubset' = \hypedgesubset \setminus \hypedgeset_h$. Then $U$ is $\ell$-insular in $\hypedgesubset'$. By the induction hypothesis, any partial homomorphism on $U$ can be extended to $U \union V(\hypedgesubset')$, and by \cref{lem:extension-lemma} also to $U \union V(\hypedgesubset)$. \qedhere
    
\end{proof}

Now we are finally in a position to establish the reducibility condition. The proof uses similar ideas as that of \cite[Theorem~7.13]{CN25Hierarchies}.

\begin{lemma}[Reducibility condition]
    \label{lem:reducibility-nullcons}
     Let $\uniform \geq 2$, $\ell\geq 1$, %
     and let $\relt$ be a $\uniform$-uniform, lax, and $\ell$-null-constraining relational structure. 
     Let $\rela$ be an instance of $\csp{\relt}$ and let $H=(A, \hypedgeset)$ be the hypergraph of $\rela$. 
     Suppose that $\operatorname{girth}(\hypedgeset)> 3\ell + 2$, and let $\epsilon = \epsilon(\uniform, \ell)$ be maximal such that \cref{lem:neighborhood-no-bad-sets} holds, let $s$ be maximal such that $H$ is $(2s, \epsilon)$-sparse, let $\ell' = 3\ell+2$, and let $\gamma = \frac{\uniform-1}{72\ell'^2\uniform^3(1+\epsilon)}$.  

    Let $\F$ be any field and consider any admissible
  order on the monomials in $\F[\mathbf{x}_{\ala, \alt}]$. For all monomials $m \in \F[\mathbf{x}_{\ala, \alt}]$ of degree at most $\degstd \leq s\gamma /(\ell'(1 + \uniform \ell'/\gamma))$, it holds for all $m'$ such that $\clnull{m'}{\ell'}{s} \subseteq \clnull{m}{\ell'}{s}$ that $m'$ is reducible modulo $\ideal{\clnull{m'}{\ell'}{s}}$ if and only if $m'$ is reducible modulo $\ideal{\clnull{m}{\ell'}{s}}$. 
\end{lemma}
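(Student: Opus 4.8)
The plan is to reduce the reducibility condition for monomials to a statement about extending partial homomorphisms, mirroring the strategy used in the approximate graph coloring case (\cref{lem:local-reduction-strong}) but with insularity playing the role that bounded-degree neighborhoods play there. Write $W = \clnull{m'}{\ell'}{s}$ and $U = \clnull{m}{\ell'}{s}$, so $W \subseteq U$. Since $m'$ mentions only vertices in $\verts{\ala}(m') \subseteq W$, and $W$ is small by \cref{cor:closure-size} (the hypothesis $\degstd \leq s\gamma/(\ell'(1+\uniform\ell'/\gamma))$ is exactly what guarantees $\setsize{W}, \setsize{U} \leq s\gamma/\ell'$ so that \cref{lem:small-set-idempotence} applies and $W, U$ are themselves local closures, i.e. $\clnull{W}{\ell'}{s} = W$ and $\clnull{U}{\ell'}{s} = U$), the key structural fact I would establish first is that $W$ is $\ell$-insular in the hyperedge set $\hypedgeset[U]$. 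This follows by applying \cref{lem:neighborhood-no-bad-sets} iteratively: by \cref{obs:local-closure-locally-closed}, $W$ (being its own $s$-local closure) is $(3\ell+2)$-insular in every $s$-small hyperedge subset, and since $\hypedgeset[U]$ is $\epsilon$-sparse (as $U$ is $s$-small and $H$ is $(2s,\epsilon)$-sparse) with girth exceeding $3\ell$, \cref{lem:neighborhood-no-bad-sets} lets us pass from insularity of $W$ to insularity of $W \union \nbhstd{\hypedgeset[U]}{W}$, and repeating this exhausts $U$.

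Second, I would translate reducibility of $m'$ into a non-vanishing statement via the Boolean Nullstellensatz. By \cref{lem:boolean-ns}, the ideal $\ideal{\clnull{m'}{\ell'}{s}} = \ideal{W}$ is $I(V_W)$ where $V_W$ is the set of indicator vectors of homomorphisms $\rela[W] \to \relt$, and likewise $\ideal{U} = I(V_U)$. Now $m'$ is reducible modulo an ideal $I(S)$ in an admissible order precisely when the leading monomial of $m'$ is \emph{not} itself irreducible, equivalently — using the standard characterization — when there is no common root of $S$ on which $m'$ evaluates nonzero while all smaller monomials in the relevant representation vanish; more usefully, the standard fact here is that a monomial $m'$ over the variables of $\rela[W]$ is \emph{irreducible} modulo $I(V_W)$ iff the monomials $\{m'' \preceq m'\}$ that are irreducible modulo $I(V_W)$ remain linearly independent as functions on $V_W$ after adding $m'$, which fails iff $m'$ is a linear combination (modulo $I(V_W)$) of smaller irreducible monomials. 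The cleanest route is to observe that irreducibility of $m'$ modulo $I(V)$ depends only on which functions $V \to \F$ are realized by monomials $\preceq m'$, and then to show: \emph{the restriction map $V_U \to V_W$ (restricting a homomorphism on $U$ to $W$) is surjective.} This surjectivity is exactly \cref{cor:full-extension-lemma} applied with the set $W$ and the hyperedge set $\hypedgeset[U]$, using the $\ell$-insularity of $W$ in $\hypedgeset[U]$ established in the first step: every partial homomorphism on $W$ extends to $W \union V(\hypedgeset[U]) = U$.

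Third, I would close the argument: once the restriction $V_U \twoheadrightarrow V_W$ is surjective, for any polynomial $p$ supported on the variables $\mathbf{x}_{W,\alt}$ we have $p$ vanishes on all of $V_W$ iff $p$ vanishes on all of $V_U$, i.e. $I(V_W) \cap \F[\mathbf{x}_{W,\alt}] = I(V_U) \cap \F[\mathbf{x}_{W,\alt}]$ as subsets of the smaller polynomial ring. Since $m'$ and all monomials $\preceq m'$ in its reduction live in $\F[\mathbf{x}_{W,\alt}]$ (the order respects the vertex order and $\verts{\ala}(m') \subseteq W$), reducibility of $m'$ is decided entirely within this subring, and hence $m'$ is reducible modulo $\ideal{W}$ iff it is reducible modulo $\ideal{U}$. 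The main obstacle I anticipate is the bookkeeping in the first step: carefully verifying that the iterated application of \cref{lem:neighborhood-no-bad-sets} stays within the sparsity and girth regime at every stage (each intermediate neighborhood set must remain $s$-small so that $\epsilon$-sparsity of the induced hyperedge subset is retained, which is where the quantitative bound on $\degstd$ via \cref{cor:closure-size} and the expansion parameter $\gamma$ is consumed), and confirming that $\hypedgeset[U]$ — rather than all of $\hypedgeset$ — is the right hyperedge set to run insularity in, so that \cref{cor:full-extension-lemma} delivers extensions landing exactly in $U$ and no further.
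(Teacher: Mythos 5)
Your proposal has a genuine gap, and it also departs from the paper's method in a way that loses a crucial hypothesis. A small point first: in your first step you claim to iterate \cref{lem:neighborhood-no-bad-sets} to exhaust $U=\clnull{m}{\ell'}{s}$, but that lemma degrades the insularity parameter from $3\ell+2$ to $\ell$ in a single application and cannot be re-applied. (The conclusion of your first step is, however, essentially correct for a simpler reason: \cref{obs:local-closure-locally-closed} gives $\ell'$-insularity of $W=\clnull{m'}{\ell'}{s}$ in $\hypedgeset[U]$, and $\ell'$-insularity implies $\ell$-insularity because a pendant path of length $\ell'$ contains a pendant sub-path of length $\ell$, so any $(W,\ell)$-bad set is also $(W,\ell')$-bad. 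Hence by \cref{cor:full-extension-lemma} every homomorphism on $W$ does extend to $U$---and note this uses only null-constraining, not laxness.)

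The real gap is in your third step. You pass from extension surjectivity to equality of the elimination ideals $I(V_W)\cap\F[\mathbf{x}_{W,\alt}]=I(V_U)\cap\F[\mathbf{x}_{W,\alt}]$, which is fine, but you then assert that reducibility of $m'$ modulo $\ideal{U}$ is ``decided entirely within this subring'' because ``the order respects the vertex order''. The lemma is stated for an arbitrary admissible order, and the BW closure $\clnull{\cdot}{\ell'}{s}$ is not defined relative to any vertex order at all (that is the graph-coloring construction, not this one); moreover $W$ need not be downward-closed under any linear order on $\ala$. For a non-elimination order---a degree-based order, or a lexicographic order whose variable order is not adapted to the inclusion $W\subseteq U$---there are monomials over $\mathbf{x}_{(U\setminus W),\alt}$ that are strictly smaller than $m'$, so a witness $q\in\ideal{U}$ of reducibility may have lower-order terms outside $\F[\mathbf{x}_{W,\alt}]$, and there is no a priori way to rewrite $q$ into the subring while staying in $\ideal{U}$. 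Symptomatically, you never invoke laxness, which is a hypothesis of the lemma and is exactly where the paper's proof does its work: rather than proving extension surjectivity, it takes $p\in\ideal{\clnull{m}{\ell'}{s}}$ with leading monomial $m'$ and constructs a $\{0,1\}$-valued restriction $\rho$ (a partial homomorphism on $\clnull{m}{\ell'}{s}\setminus\clnull{m'}{\ell'}{s}$) such that $\restrict{p}{\rho}\in\ideal{\clnull{m'}{\ell'}{s}}$ and still has leading monomial $m'$ (since a $\{0,1\}$-restriction never increases the order of a monomial, for \emph{any} admissible order). Laxness is used precisely on the edges $\mathcal{F}$ straddling the two closures, to force the corresponding constraint polynomials to vanish under $\rho$ regardless of the unassigned values on $\clnull{m'}{\ell'}{s}$; the null-constraining property and \cref{cor:full-extension-lemma} are then used only to extend $\rho$ consistently across the rest of the difference. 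Without laxness this step fails, which is consistent with the paper's choice to use an entirely different, vertex-ordered closure for $3$-coloring (null-constraining but not lax).
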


    \begin{proof}
    For brevity, denote $\clnull{m'}{\ell'}{s}$ by $\setU$ and $\clnull{m}{\ell'}{s}$ by $\setW$. Since $\setU \subseteq \setW$, if $m'$ is reducible modulo $\ideal{\setU}$ then it is reducible modulo $\ideal{\setW}$. 
    For the converse direction, suppose that $m'$ is reducible modulo $\ideal{\setW}$, which is to say that there exists a polynomial $p \in \ideal{\setW}$ such that $m'$ is the monomial in the leading term of $p$. %
    The idea is to apply a variable restriction $\rho$ such that $\restrict{p}{\rho}$ is a polynomial in the ideal $\ideal{\setU}$ with $m'$ as the leading monomial. 
    To do so, we first establish some technical properties of $\setU$ and $\setW$. 

    By \cref{lem:sparse-girth-expansion}, the hypergraph $H$ is $(\ell', 2s, \gamma)$-expanding. Therefore, given $\setsize{\verts{V}(m)}\leq \degstd$ and $\setU\subseteq \setW$, we have by \cref{cor:closure-size} that $\setsize{U} \leq s\gamma/\ell'$, and hence $\clnull{\setU}{\ell'}{s} = \setU$ by \cref{lem:small-set-idempotence}. 
    Moreover, the set $\hypedgeset[\setW]$ is $s$-small by sparsity and \cref{lem:local-closure-small}, so by \cref{obs:local-closure-locally-closed} we have that $\setU$ is $\ell'$-insular in $\hypedgeset[\setW]$, and hence $\setU \union \nbhstd{\hypedgeset[\setW]}{\setU}$ is $\ell$-insular in $\hypedgeset[\setW]$ by \cref{lem:neighborhood-no-bad-sets}. 

    Consider the set of edges in $\hypedgeset[W]$ that nontrivially intersect $U$, denoted $\mathcal F=\{e\in \hypedgeset[\setW]\setminus\hypedgeset[\setU]: e\cap \setU\neq\varnothing\}$. %
    Each edge $e \in \mathcal F$ intersects $\setU$ in exactly one vertex, since otherwise $\{e\}$ is $(\setU, \ell')$-bad. 
    Moreover, there is no 2- or 3-path, nor 2- or 3-cycle, in $\hypedgeset[\setW] \setminus \hypedgeset[\setU]$ whose endpoints are in $\setU$, as such a set of edges would be $(\setU, \ell')$-bad given $\ell'>3$. 
    It follows that the edges in $\mathcal F$ are pairwise vertex disjoint outside of $\setU$ and that each edge in $\hypedgeset[\setW]\setminus \hypedgeset[\setU]$ intersects at most one edge in $\mathcal F$. 
    Moreover, since $\hypedgeset$ is $\uniform$-uniform, the hyperedge set $\hypedgeset[\nbhstd{\hypedgeset[\setW]}{\setU}]$ must be empty, as otherwise there would be an edge in $\hypedgeset[\setW] \setminus \hypedgeset[\setU]$ intersecting at least two edges in $\mathcal F$. 

    We are ready to define $\rho$. For each edge $e \in \hypedgesubset$, 
    the lax property yields a partial homomorphism on $V(e) \setminus \setU$ that satisfies $e$. By the above, the union of these homomorphisms is a well-defined partial homomorphism $\tau$ on all of $\mathcal F$, whose domain is $\nbhstd{\hypedgeset[\setW]}{\setU}$. 
    We extend $\tau$ to all of $\hypedgeset[W]\setminus (\mathcal F \union \hypedgeset[\setU])$ as follows.
    Since any $(\nbhstd{\hypedgeset[\setW]}{\setU},\ell)$-bad subset of $\hypedgeset[W]\setminus(\mathcal F \union \hypedgeset[\setU])$ is also $(\setU \union \nbhstd{\hypedgeset[\setW]}{\setU},\ell)$-bad (\cref{lem:bad-sets-properties}), by the $\ell$-insularity of $\setU \union \nbhstd{\hypedgeset[\setW]}{\setU}$ in $\hypedgeset[\setW]$ (recorded two paragraphs above), the vertices of edges in this bad set are contained in $\nbhstd{\hypedgeset[\setW]}{\setU}$, but such edges cannot exist as noted at the end of the previous paragraph. %
    This means $\nbhstd{\hypedgeset[\setW]}{\setU}$ is $\ell$-insular in $\hypedgeset[W]\setminus(\mathcal F \union \hypedgeset[\setU])$. 
    Therefore, we can extend $\tau$ to vertices in $\hypedgeset[W]\setminus \hypedgeset[U]$ satisfying all of $\hypedgeset[\setW] \setminus \hypedgeset[\setU]$ by \cref{cor:full-extension-lemma}; this assignment—or more precisely, its Boolean encoding as a partial assignment to the variables in $\mathbf{x}_{\ala, \setT}$—is our $\rho$. 
    Note that the domain of $\rho$ is disjoint from $\mathbf{x}_{\setU,\setT}$.

    It remains to show two facts: (1) $\restrict{p}{\rho} \in \ideal{\setU}$, and (2) $m'$ is its leading monomial. 
    Fact (1) holds since $p \in \ideal{W}$ and $\rho$ leaves the generating polynomials of $\ideal{U}$ untouched while setting the remaining generating polynomials of $\ideal{W}$ to $0$. 
    To see fact (2), we notice that $\restrict{m'}{\rho}=m'$ (since $\verts{\ala}(m') \subseteq U$ and $\rho$ assigns no variable over $U$), and for the remaining monomials in $p$, applying variable restrictions never increases their order in an admissible order. 
    \end{proof}

We now have all components needed to prove \cref{th:cohomological-lax-nullcons-lbs}.

\begin{proof}[Proof of \cref{th:cohomological-lax-nullcons-lbs}]
     The existence of $\Delta> 0$ depending only on $\relt$ such that an instance $\rela$ sampled from $\cspdist{n}{\Delta n} {\relt}$ is unsatisfiable with high probability as $n \to \infty$ follows from \cref{lem:trivially-satisfiable}.
     
     Fix $\rela \sim \cspdist{n}{\Delta n} {\relt}$, a field $\F$ of characteristic $0$, and any lexicographic order on the monomials of $\F[\mathbf{x}_{\ala, \alt}]$. Let $H = (\ala, \hypedgeset)$ be the hypergraph of $\rela$. Let $\epsilon = \epsilon(\uniform, \ell)$ be maximal such that \cref{lem:neighborhood-no-bad-sets} holds whenever $\operatorname{girth}{\hypedgeset} > 3\ell + 2$, let $s$ be maximal such that $H$ is $(2s, \epsilon)$-sparse, let $\ell' = 3\ell+2$, and let $\gamma = \frac{\uniform-1}{72\ell'^2\uniform^3(1+\epsilon)}$. 
    
    By \cref{lem:random-hypergraph-properties}, there exists $\delta = \delta(\Delta, \uniform, \ell) > 0$ such that, with probability $\delta - o_n(1)$, it holds that $\operatorname{girth}(H) > 3\ell + 2$ and that there exists $\mu = \mu(\Delta,\uniform,\epsilon) > 0$ such that $s \geq \mu n$. %
    Then $H$ is $(\ell', 2s, \gamma)$-expanding by \cref{lem:sparse-girth-expansion}. Assuming these properties hold, we show that $\clnull{\cdot}{\ell'}{s}$ is a closure operator satisfying the satisfiability and reducibility conditions for all monomials $m \in \F[\mathbf{x}_{\ala, \alt}]$ of degree at most $\degstd \leq s\gamma /(\ell'(1 + \uniform \ell'/\gamma))$.
    
    Let $m$ be a monomial of degree at most $\degstd$. For all $\degstd$-small sets $U\subseteq \ala$, we have by \cref{lem:closure-size-lemma} that $\setsize{\clnull{U}{\ell'}{s}} \leq s\gamma/\ell'$. Hence, by \cref{lem:small-set-idempotence}, the operator $\clnull{\cdot}{\ell'}{s}$ is idempotent on all $\degstd$-small vertex sets, and is therefore a size-$\degstd$ closure operator.

    The reducibility condition for $\clnull{\cdot}{\ell'}{s}$ on monomials of degree at most $\degstd$ is precisely \cref{lem:reducibility-nullcons}, so it remains to establish the satisfiability condition. Let $m$ be a monomial of degree $\degstd$. Since $\setsize{\verts{V}(m)} \leq \degstd$, we have by \cref{cor:closure-size} that $\setsize{\clnull{m}{\ell'}{s}} \leq s\gamma/\ell'$. Moreover, the set $\hypedgeset[\clnull{m}{\ell'}{s}]$ is $s$-small by sparsity and \cref{lem:local-closure-small}. Since $H$ is $(\ell', 2s, \gamma)$-expanding, it follows by \cref{cor:full-extension-lemma} (instantiated with $U = \varnothing$, which is $\ell'$-insular in $\hypedgeset[\clnull{m}{\ell'}{s}]$ by expansion)
    that there exists a satisfying assignment to $\hypedgeset[\clnull{m}{\ell'}{s}]$. 
    
    With the satisfiability and reducibility conditions established for all monomials of degree at most~$\degstd$, it follows from \cref{cor:closure-size} and \cref{lem:cohomological-k-consistency-lower-bounds} that the cohomological $\kcons$\nobreakdash-consistency algorithm for $\csp{\relt}$ accepts $\rela$ for all $\kcons\leq \degstd / (1 + \uniform \ell'/\gamma) \leq s\gamma /(\ell'(1 + \uniform \ell'/\gamma)^2) =\zeta n$, for some $\zeta = \zeta(\Delta, \uniform, \ell)$. The proof is complete.
\end{proof}

\section{Concluding Remarks}
\label{sec:concluding-remarks}
We show that pseudo-reduction operators from proof complexity that are constructed via the Alekhnovich-Razborov method can be used to prove level lower bounds for (P)CSP hierarchies. 
Using this %
connection, we prove optimal level lower bounds for the cohomological $\kcons$-consistency algorithm for approximate graph coloring, and give a new proof of the %
level lower bounds in~\cite{CN25Hierarchies} for lax and null-constraining CSPs.

Perhaps the most pressing open problem %
is whether the cohomological $\kcons$-consistency algorithm solves all tractable CSPs. 
A closely related question is whether technical connections from proof complexity, such as the one presented in this work, can be useful to prove hierarchy lower bounds for tractable CSPs, and 
 whether they can %
 be used to extend our lower bounds to other hierarchies such as %
 the C(BLP+AIP) hierarchy~\cite{CZ23CLAP,CN25Hierarchies}. %

It is also natural to ask whether our results for approximate graph coloring can be generalized to \emph{approximate graph homomorphism}, which is null-constraining but not lax. Lower bounds for this problem have been established for multiple hierarchies, the strongest of which is SDA~\cite{CZ24Semidefinite}, but bounds with parameters comparable to those in \cref{th:cohomological-agc-lbs} are only known for the $\kcons$\nobreakdash-consistency algorithm~\cite{CZ24Periodic} (see also \cite[Theorem~6.24]{CN25Hierarchies}). %

Unlike our results for lax and null-constraining CSPs, it seems unclear how to prove our results for approximate graph coloring without using the connection to polynomial calculus degree lower bounds. In particular, the definition of closure there depends on the choice of a vertex order, which in turn seems motivated by the reduction operator. It would be interesting to see a proof of \cref{th:cohomological-agc-lbs} that does not use pseudo-reduction operators; such a proof would likely also yield insights applicable to polynomial calculus degree lower bounds.

More technically, the use of the lex game in \cref{sec:ipr-operators-lex-game} limits the applicability of our results to lexicographic orders on the monomials. Can we show similar results for, say, the graded lexicographic order? This would likely require different techniques from those in this paper, or might simply be impossible; see \cref{rmk:require-lex-order}.%

\section*{Acknowledgements}
We would like to thank Siu-On Chan, Nicola Galesi, Tamio-Vesa Nakajima, Jakub Opr\v{s}al, Per Austrin, Kilian Risse, and Benedikt Pago for insightful discussions. 
The initial stages of this work were carried out when the authors attended the 2025 \emph{Complexity as a Kaleidoscope} research school at CIRM, Marseille, France. We thank the organizers for creating such an inspiring environment, and Noel Arteche, David Engstr\"{o}m, Stefan Grosser, and Santiago Guzm\'{a}n Pro for helpful conversations while there. 
We are also grateful to the anonymous \emph{SODA} reviewers for their helpful comments, all of which improved the exposition of the paper. 

JC was partially supported by the \emph{Wallenberg AI and Autonomous Systems Program (WASP)} and would also like to thank Susanna de Rezende for encouraging him to apply for funding to attend %
the CIRM research school. 
YG received funding from the Independent Research Fund Denmark grant 9040-00389B.

\appendix

\section{Proof of \cref{lem:reducibility-gives-reduction-operator}}
\label{app:reducibility-gives-reduction-operator}

In this appendix we prove \cref{lem:reducibility-gives-reduction-operator}, restated below for convenience. As noted in \cref{sec:ar-method-cohomological-consistency}, the argument is standard and hence we make no claim of originality. We follow the presentation of Conneryd, de Rezende, Nordstr\"{o}m, Pang, and Risse~\cite[Lemma~4.6]{CdRNPR25GraphColouring}. 
\begin{reducibility-gives-reduction-operator}[Restated]
   Let $\degstd \in \N$, let $\sigma$ be a signature whose relation symbols all have arity at most $\degstd$, let $\rela$ and~$\relt$ be $\sigma$-structures, and let $\operatorname{cl} \colon 2^\ala \to 2^\ala$ be a size-$\degstd$ closure operator for $\ala$. Finally, let $\prec$ be an admissible ordering of the monomials in $\F[\mathbf{x}_{A, T}]$. Suppose that the following conditions hold for every monomial $m \in \F[\mathbf{x}_{\ala, \alt}]$ of degree at most $\degstd$:
    \begin{description}
        \item \emph{Satisfiability:} It holds that $\rela[\cl{m}] \to \relt$.
        \item \emph{Reducibility:} For every monomial $m'$ such that $\cl{m'} \subseteq \cl{m}$, it holds that $m'$ is reducible modulo $\ideal{\cl{m'}}$ if and only if $m'$ is reducible modulo $\ideal{\cl{m}}$.
    \end{description}
    Then, the operator $\pseudred$ on $\F[\mathbf{x}_{A, T}]$ defined to be the $\F$-linear extension of the map that takes each monomial $m \in \F[\mathbf{x}_{A, T}]$ to $\redop^\prec_{\ideal{\cl{m}}}(m)$ is a degree-$\degstd$ pseudo-reduction operator for $\polyhomo{\rela}{\relt}$. 
\end{reducibility-gives-reduction-operator}

Before commencing the proof of \cref{lem:reducibility-gives-reduction-operator}, we record a basic fact of reduction operators as well as two technical claims that both concern how closure operators interact with reduction modulo ideals.

  \begin{fact}
    \label{fact:reduction-operator}
    Let $I$ and $J$ be ideals in $\F[x_1, \ldots, x_n]$ such that $I \subseteq J$. For all monomials~$m$ and~$m'$, it holds that $R_{J}(m'R_{I}(m)) = R_{J}(m'm)$.
  \end{fact}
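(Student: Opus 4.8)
The plan is to reduce the statement to the defining decomposition of reduction modulo an ideal, together with the $\F$-linearity of $R_J$. Recall from \cref{sec:reduction-modulo-ideals} that, for the fixed admissible order $\prec$ and any ideal $I$, every polynomial $\polyp$ has a unique representation $\polyp = \polyq + \polyr$ with $\polyq \in I$ and $\polyr$ a linear combination of monomials that are irreducible modulo $I$; by definition $R_I(\polyp) = \polyr$, and in particular $R_I(\polyp) = 0$ whenever $\polyp \in I$. Applying this to the monomial $m$, write $m = R_I(m) + \polyq$ with $\polyq \in I$.

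First I would multiply this identity by the monomial $m'$, obtaining $m' m = m' R_I(m) + m' \polyq$. Since $I$ is an ideal and $I \subseteq J$, the polynomial $m'\polyq$ lies in $J$. Now apply $R_J$ to both sides: using that $R_J$ is $\F$-linear (as recorded in \cref{sec:reduction-modulo-ideals}) and that $R_J$ vanishes on $J$ (again by uniqueness of the decomposition), we get
\[
R_J(m' m) = R_J\bigl(m' R_I(m)\bigr) + R_J(m' \polyq) = R_J\bigl(m' R_I(m)\bigr) + 0 = R_J\bigl(m' R_I(m)\bigr),
\]
which is exactly the claimed identity.

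The only point worth noting is that $R_I(m)$ is in general a polynomial rather than a single monomial, so $m' R_I(m)$ is a polynomial; this causes no difficulty, since $R_J$ is defined on all of $\F[x_1,\ldots,x_n]$ and is $\F$-linear, and the argument above never needs $m'R_I(m)$ to be a monomial. There is no genuine obstacle here: the fact is an immediate consequence of the definition of the reduction operator and its linearity, which is presumably why it is recorded as a \emph{Fact} rather than a lemma.
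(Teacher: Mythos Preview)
Your proof is correct and follows essentially the same approach as the paper: both write $m = R_I(m) + q$ with $q \in I$, note that $m'q \in I \subseteq J$, and conclude that $R_J(m'm) = R_J(m'R_I(m))$. The only cosmetic difference is that you invoke the $\F$-linearity of $R_J$ (recorded in \cref{sec:reduction-modulo-ideals}) directly, whereas the paper writes out the decomposition of $m'R_I(m)$ modulo $J$ and appeals to uniqueness; these are equivalent.
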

  \begin{proof}
    Write $m = q + r$ where $q \in I$ and $r$ is the reduction of $m$ modulo $I$. Then $m'R_{I}(m) = m'r$, and also $m'R_{I}(m) = q' + r'$ where $q'\in J$ and $r'$ is the reduction of $m'R_{I}(m)$ modulo $J$. Observe that $R_{J}(m'R_{I}(m)) = r'$, so the claim follows if we show that $R_{J}(m'm) = r'$. Note that $m'R_{I}(m) = m'(m-q) = q' + r'$, and hence $m'm =  q' + r' - m'q$. Since $I \subseteq J$ it holds that $m'q \in J$. Therefore, we can write $m'm = q' - m'q + r'$ where $q' - m'q \in J$ and $r'$ is irreducible modulo $J$. By uniqueness of this representation, it follows that $R_{J}(m'm) = r'$. 
  \end{proof}

\begin{claim}
    \label{claim:reducibility-of-reductions}
    If the conditions of \cref{lem:reducibility-gives-reduction-operator} are satisfied, then for every monomial $m$ of degree at most $\degstd$ and every monomial $m'$ that appears in $\redop^\prec_{\ideal{\cl{m}}}(m)$, it holds that $\cl{m'} \subseteq \cl{m}$.  
  \end{claim}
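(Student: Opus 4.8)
The plan is to prove \cref{claim:reducibility-of-reductions} by induction on the monomial $m$ with respect to the admissible order $\prec$, tracking how the reduction operator $\redop^\prec_{\ideal{\cl{m}}}$ builds up $\redop^\prec_{\ideal{\cl{m}}}(m)$ one reduction step at a time. Recall that reduction modulo an ideal $I$ works by repeatedly finding the leading term of the current polynomial, finding a polynomial in $I$ with the same leading term, subtracting to cancel that term, and replacing it by strictly smaller terms. So if $m$ is irreducible modulo $\ideal{\cl{m}}$ then $\redop^\prec_{\ideal{\cl{m}}}(m) = m$, and the only monomial appearing is $m$ itself; since $\cl{m} \subseteq \cl{m}$ by reflexivity of inclusion, the base case (and this degenerate case in general) is immediate.

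For the inductive step, suppose $m$ is reducible modulo $\ideal{\cl{m}}$, so there is $p \in \ideal{\cl{m}}$ whose leading monomial is $m$; after one reduction step we pass from $m$ to $m - cp$ for a suitable scalar $c$, which is an $\F$-linear combination of monomials $m_1, m_2, \ldots$ all strictly smaller than $m$ in $\prec$. The key point is that $\redop^\prec_{\ideal{\cl{m}}}(m) = \redop^\prec_{\ideal{\cl{m}}}(m - cp) = \sum_i c_i\, \redop^\prec_{\ideal{\cl{m}}}(m_i)$ by $\F$-linearity of the reduction operator. Now I would like to invoke the induction hypothesis on each $m_i$, but the induction hypothesis is about $\redop^\prec_{\ideal{\cl{m_i}}}(m_i)$, not $\redop^\prec_{\ideal{\cl{m}}}(m_i)$. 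This is where the \emph{reducibility} condition and \cref{fact:reduction-operator} come in: each $m_i$ has degree at most $\degstd$ (it divides something appearing in a reduction sequence of $m$, or more carefully, one checks the monomials stay of bounded degree because $\polyhomo{\rela}{\relt}$ is defined over Boolean variables and we may assume multilinearity — but one should be slightly careful here and it may be cleanest to note that the $m_i$ have $\verts{\ala}(m_i) \subseteq \cl{m}$, so $\cl{m_i} \subseteq \cl{\cl{m}} = \cl{m}$ by monotonicity and idempotence of the closure). Granting that $\cl{m_i} \subseteq \cl{m}$, the reducibility condition says $m_i$ is reducible modulo $\ideal{\cl{m_i}}$ iff it is reducible modulo $\ideal{\cl{m}}$, and an iterated application of this (or of \cref{fact:reduction-operator} with $I = \ideal{\cl{m_i}} \subseteq \ideal{\cl{m}} = J$) should give $\redop^\prec_{\ideal{\cl{m}}}(m_i) = \redop^\prec_{\ideal{\cl{m}}}\bigl(\redop^\prec_{\ideal{\cl{m_i}}}(m_i)\bigr)$.

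So the structure is: every monomial $m''$ appearing in $\redop^\prec_{\ideal{\cl{m}}}(m)$ appears in $\redop^\prec_{\ideal{\cl{m}}}(m_i)$ for some $i$, which in turn (by the displayed identity and the induction hypothesis applied to $m_i$, whose reduction modulo $\ideal{\cl{m_i}}$ involves only monomials $m'''$ with $\cl{m'''} \subseteq \cl{m_i} \subseteq \cl{m}$) appears in $\redop^\prec_{\ideal{\cl{m}}}(m''')$ for such an $m'''$. Iterating the ``only smaller monomials survive in irreducible form'' principle and using that $\redop^\prec_{\ideal{\cl{m}}}(m''')$ is itself irreducible modulo $\ideal{\cl{m}}$ and its monomials $m''$ satisfy $\verts{\ala}(m'') \subseteq \cl{m}$ (they arose from subtracting generators supported on $\rela[\cl{m}]$), we get $\cl{m''} \subseteq \cl{\cl{m}} = \cl{m}$, as desired.

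The main obstacle I anticipate is the bookkeeping around degrees and supports: one needs to confirm that every monomial encountered during the reduction of $m$ modulo $\ideal{\cl{m}}$ is (a) of degree at most $\degstd$ so the hypotheses of \cref{lem:reducibility-gives-reduction-operator} apply, and (b) mentions only vertices in $\cl{m}$ so that its closure is contained in $\cl{m}$. Point (b) is the cleaner route and follows because the generators of $\ideal{\cl{m}} = \ideal{\polyhomo{\rela[\cl{m}]}{\relt}}$ only involve variables $x_{a,i}$ with $a \in \cl{m}$, so subtracting multiples of them never introduces a vertex outside $\cl{m}$; starting from $m$ with $\verts{\ala}(m) \subseteq \cl{m}$, this invariant is preserved throughout. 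Given (b), the closure bound $\cl{m''} \subseteq \cl{m}$ comes for free from monotonicity and idempotence, and the induction is really just a matter of carefully organizing the linearity argument together with \cref{fact:reduction-operator}. Point (a) about degree is needed only to legitimately apply the reducibility hypothesis to intermediate monomials, and here one uses that (after multilinearizing, which is harmless modulo the Boolean axioms contained in $\polyhomo{\rela}{\relt}$) a monomial mentioning only vertices in $\cl{m}$ with $\setsize{\cl{m}} \le \degstd$ has degree at most $\degstd \cdot \setsize{\alt}$ — and in fact one only ever needs monomials encoding partial functions, of degree at most $\setsize{\cl{m}} \le \degstd$; I would state this carefully but expect it to be routine.
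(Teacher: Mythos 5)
Your proposal correctly identifies the crucial reduction of the problem: it suffices to show $\verts{\ala}(m')\subseteq\cl{m}$, after which $\cl{m'}\subseteq\cl{\cl{m}}=\cl{m}$ follows from monotonicity and idempotence of $\cl{\cdot}$. This is exactly what the paper does, and your ``point~(b)'' is the heart of the matter. However, both routes you sketch for establishing the support containment have gaps, and the paper sidesteps them entirely.

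The paper does not track the step-by-step reduction process at all. It uses the \emph{uniqueness} of the decomposition $m = r + q$ with $q\in\ideal{\cl{m}}$ and $r$ a sum of irreducible monomials: apply the restriction $\rho$ setting $x_{a,i}=0$ for all $a\notin\cl{m}$; this fixes $m$ (since $\verts{\ala}(m)\subseteq\cl{m}$) and maps $\ideal{\cl{m}}$ into itself (the generators are untouched), while killing only a subset of the terms of $r$, all of which remain irreducible. If some $m'$ with $\verts{\ala}(m')\not\subseteq\cl{m}$ appeared in $r$, then $\restrict{r}{\rho}\neq r$ would give a second valid decomposition of $m$, contradicting uniqueness. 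That's the whole proof --- no degree bookkeeping, no induction on $\prec$, no multilinearization, and no reliance on how the reduction algorithm chooses its steps.

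Your ``invariant preserved throughout'' argument is not airtight as stated, because a single reduction step subtracts a polynomial $q\in\ideal{\cl{m}}$ whose \emph{leading} term is supported on $\cl{m}$, but whose other terms need not be: $\ideal{\cl{m}}$ lives in $\F[\mathbf{x}_{\ala,\alt}]$, so it contains elements like $x_{b,j}\cdot g$ for $b\notin\cl{m}$ and $g$ a generator, and more subtly, elements whose leading term mentions only $\cl{m}$ but whose trailing terms do not. One can repair this (e.g., by replacing $q$ with $\restrict{q}{\rho}$, which is still in the ideal and has the same leading term), but that is precisely the restriction argument the paper uses directly on the final decomposition rather than at each step. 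Your primary induction argument has a further issue you yourself flag: under a lexicographic order, the intermediate monomials $m_i$ can have degree larger than $m$, so the degree-$\degstd$ hypothesis needed to reinvoke reducibility for the pairs $(m'''_{ij}, m_i)$ does not propagate cleanly, and the ``partial functions of degree $\leq\setsize{\cl{m}}\leq\degstd$'' picture only holds for irreducible survivors, not during reduction. Moreover, that induction is effectively re-proving \cref{claim:reductions-the-same} at the same time, which is both harder and unnecessary once the support containment is in hand. I would recommend replacing the induction entirely with the uniqueness-plus-restriction argument, which you already recognize as ``the cleaner route'' but do not fully execute.
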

  \begin{proof}
    It suffices to show that $\verts{\ala}(m')\subseteq \cl{m}$, since monotonicity and idempotence of $\cl{\mathord{\cdot}}$ then imply that $\cl{m'} \subseteq \cl{\cl{m}} = \cl{m}$. The claim trivially follows if $m'=m$, so suppose this is not the case. Toward a contradiction, suppose that $\verts{\ala}(m') \not \subseteq \cl{m}$. Denote the assignment that sets $x_{\verta, \verti}$ to $0$ for all $\verti \in \relt$ and all $\verta \not \in \cl{m}$ by $\reststd$. Denote the polynomial $\redop^\prec_{\ideal{\cl{m}}}(m)$ by $r$, and recall that $r$ is the unique sum of irreducible terms such that $m = r + q$, for some $q \in \ideal{\cl{m}}$. No variable of $m$ nor of any generator of $\ideal{\cl{m}}$ is assigned by $\reststd$, so $\restrict{m}{\reststd} = m$ and $\restrict{q}{\reststd} \in \ideal{\cl{m}}$. However, we have that $\restrict{m'}{\reststd} = 0$, so $\restrict{r}{\reststd} \neq r$. Moreover, all terms in $\restrict{r}{\reststd}$ are irreducible modulo $\ideal{\cl{m}}$, as otherwise $r$ would contain a reducible term. But this contradicts that the representation $m = r + q$ is unique, and thus we conclude that $\verts{\ala}(m')\subseteq \cl{m}$. The claim follows.
  \end{proof}
The next claim states that if the reducibility condition in \cref{lem:reducibility-gives-reduction-operator} is satisfied, then reducing a monomial $m$ modulo $\ideal{\cl{m}}$ and reducing $m$ modulo a larger ideal $\ideal{X}$ generated by a set $X\supseteq \cl{m}$ results in the same polynomial. Thus, the subinstance $\rela[\cl{m}]$ can be said to be the part of $\rela$ that is ``relevant'' for reducing $m$.
\begin{claim}
    \label{claim:reductions-the-same}
    Let $\rela, \relt, \prec$, and $\cl{\cdot}$ be defined as in \cref{lem:reducibility-gives-reduction-operator}, and let $X\subseteq \ala$. Suppose that for all monomials $m \in \F[\mathbf{x}_{\ala, \alt}]$ such that $\cl{m} \subseteq X$, it holds that $m$ is irreducible modulo $\ideal{X}$ if $m$ is irreducible modulo $\ideal{\cl{m}}$. Then, for all monomials $m'$ such that $\cl{m'}\subseteq X$, it holds that $\redop^{\prec}_{\ideal{\cl{m'}}}(m') = \redop^{\prec}_{\ideal{X}}(m')$.
\end{claim}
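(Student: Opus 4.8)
The plan is to show that reducing $m'$ first modulo $\ideal{\cl{m'}}$ already produces a polynomial that is irreducible modulo the larger ideal $\ideal{X}$; together with the compatibility of iterated reductions this immediately gives the claimed equality. First I would record the ideal inclusion that glues everything together: since $\cl{m'}\subseteq X$, the structure $\rela[\cl{m'}]$ is a substructure of $\rela[X]$, so every generating polynomial of $\polyhomo{\rela[\cl{m'}]}{\relt}$ also occurs among those of $\polyhomo{\rela[X]}{\relt}$, and hence $\ideal{\cl{m'}}\subseteq\ideal{X}$. Applying \cref{fact:reduction-operator} with the two monomials taken to be $1$ and $m'$ and with $I=\ideal{\cl{m'}}$, $J=\ideal{X}$ yields $\redop^{\prec}_{\ideal{X}}(m')=\redop^{\prec}_{\ideal{X}}\bigl(\redop^{\prec}_{\ideal{\cl{m'}}}(m')\bigr)$. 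Writing $r\coloneqq \redop^{\prec}_{\ideal{\cl{m'}}}(m')$, it therefore suffices to prove that $r$ is its own reduction modulo $\ideal{X}$, i.e.\ that every monomial occurring in $r$ is irreducible modulo $\ideal{X}$; by uniqueness of the reduction representation this gives $\redop^{\prec}_{\ideal{X}}(r)=r=\redop^{\prec}_{\ideal{\cl{m'}}}(m')$.

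Next I would argue term by term. Fix a monomial $m''$ occurring in $r$. By the definition of reduction, $m''$ is irreducible modulo $\ideal{\cl{m'}}$. By \cref{claim:reducibility-of-reductions} (whose proof uses only uniqueness of the reduction representation together with monotonicity and idempotence of $\cl{\cdot}$) we have $\cl{m''}\subseteq\cl{m'}$, and so $\cl{m''}\subseteq X$. From $\cl{m''}\subseteq\cl{m'}$ we also get $\ideal{\cl{m''}}\subseteq\ideal{\cl{m'}}$, and irreducibility passes to subideals — if some $q\in\ideal{\cl{m''}}$ had the same leading term as $m''$, then $q\in\ideal{\cl{m'}}$ would contradict the irreducibility of $m''$ modulo $\ideal{\cl{m'}}$ — so $m''$ is irreducible modulo $\ideal{\cl{m''}}$. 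Now the hypothesis of the claim, applied to the monomial $m''$ (legitimate since $\cl{m''}\subseteq X$), gives that $m''$ is irreducible modulo $\ideal{X}$, as required. This completes the argument.

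I do not expect a genuine obstacle here; the proof is essentially bookkeeping around which ideal each irreducibility claim refers to. The one point that needs care is that the hypothesis is phrased for a monomial together with its \emph{own} closure ideal $\ideal{\cl{m''}}$, not the ideal $\ideal{\cl{m'}}$ that the reduction defining $r$ actually uses; hence the intermediate step passing from irreducibility modulo $\ideal{\cl{m'}}$ down to irreducibility modulo the subideal $\ideal{\cl{m''}}$ is essential before the hypothesis can be invoked. A secondary (cosmetic) point is that \cref{claim:reducibility-of-reductions} is stated for monomials of degree at most $\degstd$; this is harmless, since either one observes that its proof does not use the degree bound, or one notes that in every application of \cref{claim:reductions-the-same} inside the proof of \cref{lem:reducibility-gives-reduction-operator} the monomial $m'$ has degree at most $\degstd$ anyway.
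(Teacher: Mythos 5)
Your proof is correct and follows essentially the same route as the paper's: reduce $m'$ modulo $\ideal{\cl{m'}}$, argue term-by-term that every monomial $m''$ of the result is irreducible modulo $\ideal{X}$ (by passing from $\ideal{\cl{m'}}$ down to $\ideal{\cl{m''}}$ and then invoking the hypothesis), and conclude by uniqueness of the reduction decomposition. The only presentational difference is that you route the final step through \cref{fact:reduction-operator} (taking $1\cdot R_I(m')$) to get $R_{\ideal{X}}(m') = R_{\ideal{X}}(R_{\ideal{\cl{m'}}}(m'))$, whereas the paper argues directly from the uniqueness of the decomposition $m' = r + q$ with $q\in\ideal{\cl{m'}}\subseteq\ideal{X}$; both are equally valid. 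You also make explicit the step --- that irreducibility modulo $\ideal{\cl{m'}}$ implies irreducibility modulo the subideal $\ideal{\cl{m''}}$ before the hypothesis can be invoked --- which the paper asserts without comment, and you correctly flag the implicit degree bound needed for \cref{claim:reducibility-of-reductions}.
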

\begin{proof}
    Let $m$ be a monomial such that $\cl{m}\subseteq X$, and write $\redop^{\prec}_{\ideal{\cl{m}}}(m) = \sum_j a_j m_j$. Recall that $m_j$ is irreducible modulo $\ideal{\cl{m}}$ for all $j$ by definition. Moreover, we have by \cref{claim:reducibility-of-reductions} that $\cl{m_j}\subseteq \cl{m}$, and each $m_j$ is also irreducible modulo $\ideal{\cl{m_j}}$. Therefore, it follows by assumption that each $m_j$ is irreducible modulo $\ideal{X}$. By definition, the polynomial $\redop^{\prec}_{\ideal{\cl{m}}}(m)$ is the unique sum of irreducible terms modulo $\ideal{\cl{m}}$ such that $m = \redop^{\prec}_{\ideal{\cl{m}}}(m) + q$, for some $q\in \ideal{\cl{m}}$. Note that $\cl{m} \subseteq \ideal{X}$ and hence $q \in \ideal{X}$. Since all terms in $\redop^{\prec}_{\ideal{\cl{m}}}(m)$ are also irreducible modulo $\ideal{X}$, it follows by uniqueness that $\redop^{\prec}_{\ideal{\cl{m}}}(m) = \redop^{\prec}_{\ideal{X}}(m)$. 
\end{proof}

With Claims~\ref{claim:reducibility-of-reductions} and \ref{claim:reductions-the-same} established, we are ready to prove \cref{lem:reducibility-gives-reduction-operator}.

\begin{proof}[Proof of \cref{lem:reducibility-gives-reduction-operator}]
Let $\pseudred$ be the operator in the lemma statement. We need to show that $\pseudred$ satisfies Properties~\ref{item:rop-property-1}\nobreakdash-\ref{item:rop-property-3} of a pseudo-reduction operator, that is, that (1)~$\pseudred(1) = 1$, (2)~$\pseudred(\polyp) = 0$ for every polynomial
   $\polyp\in\polyhomo{\rela}{\relt}$, and (3)~$\pseudred(x_{\verta, \verti}\cdot m) =
    \pseudred\bigl(x_{\verta, \verti} \cdot
    \pseudred(m)\bigr)$ for every monomial~$m\in \F[\mathbf{x}_{A, T}]$ of degree at most
  $\pcdeg-1$ and every variable~$x_{\verta, \verti}$.

  To see that $\pseudred(1) = 1$, note that $\rela[\cl{1}] \to \relt$ by the satisfiability condition, and hence $1 \not \in \ideal{\cl{1}}$ by the Boolean Nullstellensatz (\cref{lem:boolean-ns}). Since $1$ is minimal in $\prec$, it follows that $\pseudred(1) = 1$. 

  To establish Property~\ref{item:rop-property-2}, observe that $\cl{m}$ contains all vertices in $\ala$ mentioned by $m$, which implies that $\pseudred$ maps all axioms \eqref{eq:boolean-encoding2}\nobreakdash--\eqref{eq:boolean-encoding4} to $0$. To see that $\pseudred$ also maps the axioms \eqref{eq:boolean-encoding1} to $0$, further note that $\cl{1} \subseteq \cl{\{\verta\}}$ for all $\verta \in \ala$, and hence
  \begin{subequations}
    \begin{align}
    \pseudred\Bigl(\sum_{\verti\in \alt} x_{\verta, \verti} - 1\Bigr) 
    &= \sum_{\verti\in \alt} \pseudred^{\prec}_{\ideal{\cl{x_{\verta, \verti}}}} (x_{\verta, \verti}) - \redop^{\prec}_{\ideal{\cl{1}}}(1) &&\quad [\text{definition of $\pseudred$}] \\
    &= \sum_{\verti\in \alt} \redop^{\prec}_{\ideal{\cl{x_{\verta, \verti}}}} (x_{\verta, \verti}) - \redop^{\prec}_{\ideal{\cl{x_{\verta, \verti}}}}(1) &&\quad [\text{reducibility and \cref{claim:reductions-the-same}}] \\
    &= 0 \eqcomma &&\quad [\text{definition of $\redop^{\prec}_{\ideal{\cl{x_{\verta, \verti}}}}$}]
    \end{align}
  \end{subequations}
  as desired.   

  Let us now establish Property~\ref{item:rop-property-3}, which is subtler than the first two. Let $m$ be a monomial of degree at most $\degstd-1$, and write the polynomial $\pseudred(m)$ as $\sum_j a_j m_j$ where $a_j \in \F$. We have by definition that
  \begin{equation}
    \pseudred(x_{\verta, \verti}\pseudred(m)) = \sum_j a_j \pseudred(x_{\verta, \verti}  m_j) = \sum_j a_j \redop^\prec_{\ideal{\cl{x_{\verta, \verti}  m_j}}}(x_{\verta, \verti} m_j) \eqperiod
  \end{equation}
  The next step is to show that $\redop^\prec_{\ideal{\cl{x_{\verta, \verti}  m_j}}}(x_{\verta, \verti} m_j) = \redop^\prec_{\ideal{\cl{x_{\verta, \verti}  m}}}(x_{\verta, \verti} m_j)$ for all $j$, for which we need \cref{claim:reducibility-of-reductions}. Indeed, note that
  \begin{subequations}
    \begin{align}
    \cl{x_{\verta, \verti}  m_j} 
        &\subseteq \cl{\{\verta\} \union \cl{m_j}} &&\quad [\text{monotonicity and self-containment}] \\
        &\subseteq \cl{\{\verta\} \union \cl{m}} &&\quad [\text{monotonicity and \cref{claim:reducibility-of-reductions}}] \\
        &\subseteq \cl{\cl{x_{\verta, \verti} m}} &&\quad [\text{monotonicity and self-containment}] \\
         &= \cl{x_{\verta, \verti} m}\eqperiod &&\quad [\text{idempotence}]
    \end{align}
  \end{subequations}
By reducibility and \cref{claim:reductions-the-same}, we obtain that $\redop^\prec_{\ideal{\cl{x_{\verta, \verti}  m_j}}}(x_{\verta, \verti} m_j) = \redop^\prec_{\ideal{\cl{x_{\verta, \verti}  m}}}(x_{\verta, \verti} m_j)$ for all~$j$. Summing up, we have thus far established that 
  \begin{equation}
    \label{eq:subtle-step}
    \pseudred(x_{\verta, \verti}\pseudred(m)) = \sum_j a_j \redop^\prec_{\ideal{\cl{x_{\verta, \verti}  m}}}(x_{\verta, \verti} m_j) \eqperiod
  \end{equation}
  From \eqref{eq:subtle-step}, we can conclude that Property~\ref{item:rop-property-3} holds. Indeed, we have
  \begin{subequations}
    \begin{align}
        \sum_j a_j \redop^\prec_{\ideal{\cl{x_{\verta, \verti}  m}}}(x_{\verta, \verti} m_j) 
        &= \redop^\prec_{\ideal{\cl{x_{\verta, \verti}  m}}} \Bigl(x_{\verta, \verti} \sum_j a_j  m_j\Bigr) &&\quad [\text{linearity}] \\
        &= \redop^\prec_{\ideal{\cl{x_{\verta, \verti}  m}}} (x_{\verta, \verti} \redop^\prec_{\ideal{\cl{m}}}(m)) &&\quad [\text{definition of $\textstyle{\sum}_j a_j m_j$}] \\
        &= \redop^\prec_{\ideal{\cl{x_{\verta, \verti}  m}}} (x_{\verta, \verti} m) &&\quad [\text{monotonicity and \cref{fact:reduction-operator}}] \\
        &= \pseudred(x_{\verta, \verti}  m)\eqcomma &&\quad [\text{definition of $\pseudred$}]
    \end{align}
  \end{subequations}
  as desired. With Properties~\ref{item:rop-property-1}-\ref{item:rop-property-3} established, we conclude that $\pseudred$ is a degree\nobreakdash-$\degstd$ pseudo-reduction operator for $\polyhomo{\rela}{\relt}$.
\end{proof}

\bibliographystyle{alpha}

{\small
\bibliography{refs-local, refArticles, refBooks, refLocal
}

@article{BIS07IndependentSets,
  author    = {Paul Beame and Russell Impagliazzo and Ashish Sabharwal},
  title     = {The Resolution Complexity of Independent Sets and Vertex
               Covers in Random Graphs},
  journal   = {Computational Complexity},
  volume    = {16},
  number    = {3},
  year      = {2007},
  month     = oct,
  pages     = {245\nobreakdash--297},
  note      = {Preliminary version in \emph{CCC~'01}},
}

@article{BCCM05RandomGraph,
  author    = {Paul Beame and Joseph C. Culberson and
               David G. Mitchell and Cristopher Moore},
  title     = {The Resolution Complexity of Random Graph  
               $k$\nobreakdash-Colorability},
  journal   = {Discrete Applied Mathematics},
  volume    = {153},
  number    = {1\nobreakdash-3},
  year      = {2005},
  month     = dec,
  pages     = {25\nobreakdash--47},
}

@inproceedings{BIKPP94LowerBounds,
  author =    {Paul Beame and Russell Impagliazzo and Jan Kraj{\'\i}{\v{c}}ek
               and Toniann Pitassi and Pavel Pudl{\'a}k},
  title =     {Lower Bounds on {Hilbert's} {Nullstellensatz} and
               Propositional Proofs},
  pages =     {794\nobreakdash--806},
  booktitle = {Proceedings of the 35th Annual {IEEE} Symposium on
               Foundations of Computer Science ({FOCS}~'94)},
  month =     nov,
  year =      {1994},
}

@article{CS88ManyHard,
  title =        {Many Hard Examples for Resolution},
  author =       {Va{\v{s}}ek Chv{\'a}tal and Endre Szemer{\'e}di},
  pages =        {759\nobreakdash--768},
  journal =      {Journal of the ACM},
  month =        oct,
  year =         {1988},
  volume =       {35},
  number =       {4},
}

@inproceedings{CEI96Groebner,
  title     = {Using the {Groebner} Basis Algorithm to Find Proofs of
               Unsatisfiability},
  author    = {Matthew Clegg and Jeffery Edmonds and Russell Impagliazzo},
  pages     = {174\nobreakdash--183},
  booktitle = {Proceedings of the 28th Annual {ACM} Symposium on 
               Theory of Computing ({STOC}~'96)},
  month     = may,
  year      = {1996},
}

@article {Tseitin68Complexity,
  author  = {Grigori Tseitin},
  title   = {The Complexity of a Deduction in the Propositional Predicate
             calculus},
  journal = {Zapiski Nauchnyh Seminarov Leningradskogo Otdelenija 
             matematicheskogo Instituta im. V. A. Steklova 
             akademii Nauk SSSR (LOMI)},
  volume  = {8},
  year    = {1968},
  pages   = {234\nobreakdash--259},
  note    = {In Russian},
}

@article{ABRW02SpaceComplexity,
  author  = {Michael Alekhnovich and  Eli {Ben-Sasson} and  
             Alexander A. Razborov and  Avi Wigderson}, 
  title   = {Space Complexity in Propositional Calculus},
  journal = {SIAM Journal on Computing},
  volume  = {31}, 
  number  = {4},
  pages   = {1184\nobreakdash--1211},
  year    = {2002},
  month   = apr,
  note    = {Preliminary version in \emph{STOC~'00}},
}

@article{ABRW04Pseudorandom,
  author    = {Michael Alekhnovich and Eli {Ben-Sasson} and
               Alexander A. Razborov and Avi Wigderson},
  title     = {Pseudorandom Generators in Propositional Proof Complexity},
  journal   = {SIAM Journal on Computing},
  volume    = {34},
  number    = {1},
  year      = {2004},
  pages     = {67\nobreakdash--88},
  note      = {Preliminary version in \emph{FOCS~'00}},
}

@article{BCIP02Homogenization,
   author    = {Joshua {Buresh-Oppenheim} and Matthew Clegg and
                Russell Impagliazzo and Toniann Pitassi},
   title     = {Homogenization and the Polynomial Calculus},
   journal   = {Computational Complexity},
   volume    = {11},
   number    = {3-4},
   year      = {2002},
   pages     = {91\nobreakdash--108},
   note      = {Preliminary version in \emph{ICALP~'00}},
}

@article{GL10Optimality,
  author    = {Nicola Galesi and Massimo Lauria},
  title     = {Optimality of Size-Degree Trade-offs for Polynomial Calculus},
  journal   = {ACM Transactions on Computational Logic},
  year      = {2010},
  month     = nov,
  pages     = {4:1\nobreakdash--4:22},
  articleno = {4},
  volume    = {12},
  number    = {1},
}

@article{GL10Automatizability,
  author    = {Nicola Galesi and Massimo Lauria},
  title     = {On the Automatizability of Polynomial Calculus},
  journal   = {Theory of Computing Systems},
  year      = {2010},
  month     = aug,
  pages     = {491\nobreakdash--506},
  volume    = {47},
  number    = {2},
}

@article{Razborov98LowerBound,
  author    = {Alexander A. Razborov},
  title     = {Lower Bounds for the Polynomial Calculus},
  journal   = {Computational Complexity},
  volume    = {7},
  number    = {4},
  year      = {1998},
  month     = dec,
  pages     = {291\nobreakdash--324},
}

@article{BGIP01LinearGaps,
  author    = {Samuel R. Buss and Dima Grigoriev and
               Russell Impagliazzo and Toniann Pitassi},
  title     = {Linear Gaps between Degrees for the Polynomial Calculus
               Modulo Distinct Primes},
  journal   = {Journal of Computer and System Sciences},
  volume    = {62},
  number    = {2},
  year      = {2001},
  month     = mar,
  pages     = {267\nobreakdash--289},
  note      = {Preliminary version in \emph{CCC~'99}},
}

@article{IPS99LowerBounds,
  author    = {Russell Impagliazzo and Pavel Pudl{\'a}k and
               Ji{\v{r}}{\'i} Sgall},
  title     = {Lower Bounds for the Polynomial Calculus and the {G}r{\"o}bner
               Basis Algorithm},
  journal   = {Computational Complexity},
  volume    = {8},
  number    = {2},
  year      = {1999},
  pages     = {127\nobreakdash--144},
}

@article{BI10Random,
  author    = {Eli {Ben-Sasson} and Russell Impagliazzo},
  title     = {Random {CNF}'s are Hard for the Polynomial Calculus},
   journal = {Computational Complexity},
   pages = {501\nobreakdash--519},
   volume = {19},
   number= {4},
   year = {2010},
  note  = {Preliminary version in \emph{FOCS~'99}},
}

@article{AR03LowerBounds,
  author  = {Michael Alekhnovich and Alexander A. Razborov},
  title   = {Lower Bounds for Polynomial Calculus: {N}on-Binomial Case},
  year    = {2003},
  pages   = {18\nobreakdash--35},
  volume  = {242},
  journal = {Proceedings of the Steklov Institute of Mathematics},
  note    = {Available at 
             \url{http://people.cs.uchicago.edu/~razborov/files/misha.pdf}.
             Preliminary version in \emph{FOCS~'01}.},
}

@inproceedings{Lasserre01Explicit,
  author    = {Jean B. Lasserre},
  title     = {An explicit exact {SDP} relaxation for nonlinear $0$-$1$ programs},
  booktitle = {Proceedings of the 8th International Conference on 
               Integer Programming and Combinatorial Optimization 
               ({IPCO}~'01)},
  month     = jun,
  publisher = {Springer},
  series    = {Lecture Notes in Computer Science},
  volume    = {2081},
  year      = {2001},
  pages     = {293\nobreakdash--303},
}

@article{SA90Hierarchy,
  author  = {Hanif D. Sherali and Warren P. Adams},
  journal = {SIAM Journal on Discrete Mathematics},
  pages   = {411\nobreakdash--430},
  title   = {A Hierarchy of Relaxations Between the Continuous and Convex 
             Hull Representations for Zero-One Programming Problems},
  volume  = {3},
  year    = {1990},
}

@article{Grigoriev01LinearLowerBound,
  author    = {Dima Grigoriev},
  title     = {Linear Lower Bound on Degrees of {P}ositivstellensatz Calculus
               Proofs for the Parity},
  journal   = {Theoretical Computer Science},
  volume    = {259},
  number    = {1--2},
  year      = {2001},
  month     = may,
  pages     = {613\nobreakdash--622},
}

@inproceedings{Schoenebeck08LinearLevel,
  author    = {Grant Schoenebeck},
  title     = {Linear Level {L}asserre Lower Bounds for Certain $k$-{CSPs}},
  pages     = {593\nobreakdash--602},
  booktitle = {Proceedings of the 49th Annual {IEEE} Symposium on 
               Foundations of Computer Science ({FOCS}~'08)},
  month =     oct, 
  year =      {2008},
}

@inproceedings{LN17GraphColouring,
  author    = {Massimo Lauria and Jakob Nordström}, 
  title     = {Graph Colouring is Hard for Algorithms Based on 
               {H}ilbert's {N}ullstellensatz and {G}röbner Bases},
  year      = {2017},
  month     = jul,
  booktitle = {Proceedings of the 32nd Annual
               Computational Complexity Conference ({CCC}~'17)},
  pages     = {2:1\nobreakdash--2:20},
  series    = {Leibniz International Proceedings in Informatics (LIPIcs)},
  volume    = {79},
}

@article{AO19ProofCplx,
  author    = {Albert Atserias and Joanna Ochremiak},
  title     = {Proof Complexity Meets Algebra},
  year      = {2019},
  month     = feb,
  journal   = {ACM Transactions on Computational Logic},
  volume    = {20},
  pages     = {1:1\nobreakdash--1:46},
  note      = {Preliminary version in \emph{ICALP~'17}},
}

@phdthesis{Parrilo00Thesis, 
  author = {Pablo A. Parrilo},
  title = {Structured Semidefinite Programs and Semialgebraic Geometry 
           Methods in Robustness and Optimization},
  school = {California Institute of Technology},
  year   = {2000},
  month = may,
  note = {Available at 
          \url{http://resolver.caltech.edu/CaltechETD:etd-05062004-055516}},
}

@article{BBKO21AlgebraicCSP,
  author  = {Barto, Libor and Bul\'{\i}n, Jakub and Krokhin, Andrei and Opr\v{s}al, Jakub},
  title   = {Algebraic Approach to Promise Constraint Satisfaction},
  year    = {2021},
  month   = aug,
  volume  = {68},
  number  = {4},
  journal = {Journal of the ACM},
  pages   = {28:1\nobreakdash--28:66}
}

@article{CFRR02RandomRegular,
  author  = {Cooper, Colin and Frieze, Alan and Reed, Bruce and Riordan,
             Oliver},
  title   = {Random regular graphs of non-constant degree: independence and
             chromatic number},
  journal = {Combinatorics, Probability and Computing},
  volume  = {11},
  year    = {2002},
  number  = {4},
  pages   = {323\nobreakdash--341}
}

@article{CM13Dichotomy,
  author  = {Chan, Siu On and Molloy, Michael},
  title   = {A Dichotomy Theorem for the Resolution Complexity of Random Constraint Satisfaction Problems},
  journal = {SIAM Journal on Computing},
  volume  = {42},
  number  = {1},
  pages   = {27-60},
  year    = {2013},
  note    = {Preliminary version in \emph{FOCS~'08}.}
}

@book{JLR00RandomGraphs,
  author    = {Svante Janson and Tomasz {\L}uczak and Andrzej Ruci{\'n}ski},
  title     = {Random Graphs},
  year      = {2000},
  month     = feb,
  publisher = {John Wiley {\&} Sons,  Inc.}
}

@inproceedings{KM21StressFree,
  author    = {Kothari, Pravesh K. and Manohar, Peter},
  title     = {A Stress-Free Sum-Of-Squares Lower Bound for Coloring},
  booktitle = {Proceedings of the 36th Annual
               Computational Complexity Conference ({CCC}~'21)},
  pages     = {23:1\nobreakdash--23:21},
  year      = {2021},
  month     = jul,
  series    = {Leibniz International Proceedings in Informatics (LIPIcs)},
  volume    = {200}
}

@inproceedings{Mitchell02Random,
  author    = {Mitchell, David G.},
  title     = {Resolution Complexity of Random Constraints},
  booktitle = {Principles and Practice of Constraint Programming ({CP~'02})},
  year      = {2002},
  publisher = {Springer Berlin Heidelberg},
  pages     = {295\nobreakdash--310}
}

@phdthesis{Mitchell02Resolution,
  author = {Mitchell, David G.},
  title  = {The Resolution Complexity of Constraint Satisfaction},
  school = {University of Toronto},
  year   = {2002}
}

@article{MS07RandomConstraint,
  author  = {Molloy, Michael and Salavatipour, Mohammad R.},
  title   = {The Resolution Complexity of Random Constraint Satisfaction Problems},
  journal = {SIAM Journal on Computing},
  volume  = {37},
  number  = {3},
  pages   = {895-922},
  year    = {2007},
  note    = {Preliminary version in \emph{FOCS'03}.}
}

@article{RT24GraphsLargeGirth,
  author  = {Romero, Julian and Tun\c{c}el, Levent},
  title   = {Graphs with Large Girth and Chromatic Number are Hard for {N}ullstellensatz},
  journal = {SIAM Journal on Discrete Mathematics},
  volume  = {38},
  number  = {3},
  pages   = {2108-2131},
  year    = {2024}
}

@inproceedings{AD22Width,
  author    = {Atserias, Albert and Dalmau, V\'ictor},
  title     = {Promise constraint satisfaction and width},
  booktitle = {Proceedings of the 33rd {A}nnual {ACM}-{SIAM} {S}ymposium on {D}iscrete {A}lgorithms ({SODA}~'22)},
  pages     = {1129\nobreakdash-1153},
  year      = {2022}
}

@article{AGH17SATNPHard,
  author  = {Austrin, Per and Guruswami, Venkatesan and H{\aa}stad, Johan},
  title   = {$(2+\varepsilon)$-{SAT} Is $\mathsf{NP}$-hard},
  journal = {SIAM Journal on Computing},
  volume  = {46},
  number  = {5},
  pages   = {1554--1573},
  year    = {2017}
}

@inproceedings{BCM15SoSPairwise,
  author    = {Barak, Boaz and Chan, Siu On and Kothari, Pravesh K.},
  title     = {Sum of Squares Lower Bounds from Pairwise Independence},
  year      = {2015},
  booktitle = {Proceedings of the 47th Annual ACM Symposium on Theory of Computing (STOC~'15)},
  pages     = {97--106},
  numpages  = {10}
}

@inproceedings{BG15LimitationsAlgebraic,
  author    = {Berkholz, Christoph and Grohe, Martin},
  title     = {Limitations of Algebraic Approaches to Graph Isomorphism Testing},
  pages     = {155--166},
  booktitle = {Proceedings of the 42nd International Colloquium on Automata, 
               Languages and Programming ({ICALP}~'15), Part I},
  series    = {Lecture Notes in Computer Science},
  volume    = {7364},
  publisher = {Springer},
  year      = {2015}
}

@inproceedings{BG17Group,
  author    = {Berkholz, Christoph and Grohe, Martin},
  title     = {Linear diophantine equations, group {CSP}s, and graph isomorphism},
  year      = {2017},
  booktitle = {Proceedings of the 28th Annual ACM-SIAM Symposium on Discrete Algorithms ({SODA~'17})},
  pages     = {327--339},
  numpages  = {13}
}

@inproceedings{BG19AlgorithmicBlend,
  author    = {Joshua Brakensiek and Venkatesan Guruswami},
  title     = {An Algorithmic Blend of {LP}s and Ring Equations for Promise {CSP}s},
  booktitle = {Proceedings of the 29th Annual ACM-SIAM Symposium on Discrete Algorithms (SODA~'19)},
  pages     = {436--455},
  year      = {2019}
}

@article{BGWZ20PowerCombined,
  author  = {Brakensiek, Joshua and Guruswami, Venkatesan and Wrochna, Marcin and {{\v{Z}}}ivn{{\'{y}}}, Stanislav},
  title   = {The Power of the Combined Basic Linear Programming and Affine Relaxation for Promise Constraint Satisfaction Problems},
  journal = {SIAM Journal on Computing},
  volume  = {49},
  number  = {6},
  pages   = {1232--1248},
  year    = {2020},
  note    = {Preliminary version in \emph{{SODA~'20}}}
}

@article{BHKKMP19NearlyTight,
  author  = {Barak, Boaz and Hopkins, Samuel and Kelner, Jonathan and Kothari, Pravesh K. and Moitra, Ankur and Potechin, Aaron},
  title   = {A Nearly Tight Sum-of-Squares Lower Bound for the Planted Clique Problem},
  journal = {SIAM Journal on Computing},
  volume  = {48},
  number  = {2},
  pages   = {687--735},
  year    = {2019},
  note    = {Preliminary version in \emph{{FOCS}~'16}}
}

@inproceedings{BKM25ApproximabilityV,
  author    = {Bhangale, Amey and Khot, Subhash and Minzer, Dor},
  title     = {On Approximability of Satisfiable $k$\nobreakdash-{CSP}s: {V}},
  year      = {2025},
  booktitle = {Proceedings of the 57th Annual ACM Symposium on Theory of Computing (STOC~'25)},
  pages     = {62--71},
  numpages  = {10}
}

@incollection{BKW17Polymorphisms,
  author    = {Barto, Libor and Krokhin, Andrei and Willard, Ross},
  title     = {Polymorphisms, and How to Use Them},
  booktitle = {The Constraint Satisfaction Problem: Complexity and Approximability},
  pages     = {1--44},
  series    = {Dagstuhl Follow-Ups},
  year      = {2017},
  volume    = {7}
}

@inproceedings{Bul17Dichotomy,
  author    = {Bulatov, Andrei A.},
  booktitle = {Proceedings of the 58th Annual IEEE Symposium on Foundations of Computer Science ({FOCS~'17})},
  title     = {A Dichotomy Theorem for Nonuniform {CSP}s},
  year      = {2017},
  pages     = {319--330}
}

@misc{CdRNPR25GraphColouring,
  author = {Conneryd, Jonas and de Rezende, Susanna F. and Nordstr\"{o}m, Jakob and Pang, Shuo and Risse, Kilian},
  title  = {Graph Colouring Is Hard on Average for Polynomial Calculus and {N}ullstellensatz},
  year   = {2025},
  note   = {\href{https://arxiv.org/abs/2503.17022}{arXiv:2503.17022}. Preliminary version in \emph{{FOCS}~'23}.}
}

@techreport{CN25Hierarchies,
  author      = {Chan, Siu On and
                 Ng, Hiu Tsun},
  title       = {How Random {CSP}s Fool Hierarchies: {II}},
  type        = {Technical Report},
  volume      = {{TR25-026}},
  institution = {Electronic Colloquium on Computational Complexity (ECCC)},
  number      = {TR25-026},
  year        = {2025},
  note        = {Preliminary version in \emph{{STOC}~'25}.}
}

@inproceedings{CNP24Hierarchies,
  author    = {Chan, Siu On and Ng, Hiu Tsun and Peng, Sijin},
  title     = {How Random {CSP}s Fool Hierarchies},
  year      = {2024},
  booktitle = {Proceedings of the 56th Annual ACM Symposium on Theory of Computing ({STOC~'24})},
  pages     = {1944--1955},
  numpages  = {12}
}

@article{CZ23CLAP,
  author  = {Ciardo, Lorenzo and {{\v{Z}}}ivn{{\'{y}}}, Stanislav},
  title   = {{CLAP}: A New Algorithm for Promise {CSP}s},
  journal = {SIAM Journal on Computing},
  volume  = {52},
  number  = {1},
  pages   = {1--37},
  year    = {2023},
  note    = {Preliminary version in \emph{{SODA~'22}}}
}

@misc{CZ24Periodic,
  author = {Ciardo, Lorenzo and {{\v{Z}}}ivn{{\'{y}}}, Stanislav},
  title  = {The Periodic Structure of Local Consistency},
  year   = {2024},
  note   = {\href{https://arxiv.org/abs/2406.19685}{arXiv:2406.19685}.}
}

@inproceedings{CZ24Semidefinite,
  author    = {Ciardo, Lorenzo and {{\v{Z}}}ivn{{\'{y}}}, Stanislav},
  title     = {Semidefinite Programming and Linear Equations vs. Homomorphism Problems},
  year      = {2024},
  month     = {june},
  booktitle = {Proceedings of the 56th Annual ACM Symposium on Theory of Computing ({STOC}~'24)},
  pages     = {1935\nobreakdash-1943},
  numpages  = {9}
}

@inproceedings{DO19LocalConsistency,
  author    = {Dalmau, V\'{\i}ctor and Opr\v{s}al, Jakub},
  title     = {Local consistency as a reduction between constraint satisfaction problems},
  year      = {2024},
  booktitle = {Proceedings of the 39th Annual ACM/IEEE Symposium on Logic in Computer Science ({LICS~'24})},
  articleno = {29},
  numpages  = {15}
}

@phdthesis{Felszeghy07Groebner,
  author = {Felszeghy, B\'{a}lint},
  title  = {Gr\"{o}bner Theory of Zero Dimensional Ideals: With a View Towards Combinatorics},
  year   = {2007},
  school = {Budapest University of Technology and Economics},
  note   = {Available at 
            \url{https://math.bme.hu/~fbalint/publ/phd_thesis.pdf}}
}

@article{Filmus14AlekhnovichRazborovTCS,
  title   = {Another look at degree lower bounds for polynomial calculus},
  journal = {Theoretical Computer Science},
  volume  = {796},
  pages   = {286-293},
  year    = {2019},
  issn    = {0304-3975},
  author  = {Filmus, Yuval}
}

@incollection{FR09SomeMeetingPoints,
  title     = {Some meeting points of {G}r{\"o}bner bases and combinatorics},
  author    = {Felszeghy, B{\'a}lint and R{\'o}nyai, Lajos},
  booktitle = {Algorithmic Algebraic Combinatorics and Gr{\"o}bner Bases},
  pages     = {207--227},
  year      = {2009},
  publisher = {Springer}
}

@article{FRR06LexGame,
  author  = {Felszeghy, B\'alint and R\'ath, Bal\'azs and R\'onyai, Lajos},
  title   = {The lex game and some applications},
  journal = {Journal of Symbolic Computation},
  volume  = {41},
  year    = {2006},
  number  = {6},
  pages   = {663--681}
}

@article{FV98Computational,
  author  = {Feder, Tom\'{a}s and Vardi, Moshe Y.},
  title   = {The Computational Structure of Monotone Monadic {SNP} and Constraint Satisfaction: A Study through {D}atalog and Group Theory},
  journal = {SIAM Journal on Computing},
  volume  = {28},
  number  = {1},
  pages   = {57--104},
  year    = {1998},
  note = {Preliminary version in \emph{STOC~'93}}
}

@article{GJ76Approximate,
  author     = {Garey, Michael and Johnson, David S.},
  title      = {The Complexity of Near-Optimal Graph Coloring},
  year       = {1976},
  issue_date = {Jan. 1976},
  publisher  = {Association for Computing Machinery},
  volume     = {23},
  number     = {1},
  journal    = {Journal of the ACM},
  month      = jan,
  pages      = {43--49},
  numpages   = {7}
}

@article{HN90HColoring,
  author  = {Hell, Pavol and Ne\v{s}et\v{r}il, Jaroslav},
  title   = {On the Complexity of {$H$}-Coloring},
  journal = {Journal of Combinatorial Theory, Series B},
  volume  = {48},
  number  = {1},
  pages   = {92--110},
  year    = {1990}
}

@article{Jeavons98Algebraic,
  title   = {On the Algebraic Structure of Combinatorial Problems},
  journal = {Theoretical Computer Science},
  author  = {Jeavons, Peter},
  volume  = {200},
  number  = {1},
  pages   = {185--204},
  year    = {1998}
}

@article{JKM19Lovasz,
  author  = {Banks, Jess and Kleinberg, Robert and Moore, Cristopher},
  title   = {The {L}ov\'{a}sz theta function for random regular graphs and
             community detection in the hard regime},
  journal = {SIAM Journal on Computing},
  volume  = {48},
  year    = {2019},
  number  = {3},
  pages   = {1098--1119},
  note    = {Preliminary version in \emph{APPROX\nobreakdash-RANDOM~'17}}
}

@inproceedings{JPRTX21Sparse,
  author    = {Jones, Chris and Potechin, Aaron and Rajendran, Goutham and
               Tulsiani, Madhur and Xu, Jeff},
  year      = {2021},
  title     = {Sum-of-squares lower bounds for sparse independent set},
  booktitle = {Proceedings of the 62nd {IEEE} {A}nnual {S}ymposium on {F}oundations of
               {C}omputer {S}cience (FOCS~'21)},
  pages     = {406--416}
}

@misc{KMOW2017Sumofsquares,
  title  = {Sum of squares lower bounds for refuting any {CSP}},
  author = {Kothari, Pravesh K. and Mori, Ryuhei and O'Donnell, Ryan and Witmer, David},
  year   = {2017},
  note   = {\href{https://arxiv.org/abs/}{arXiv:1701.04521}. Preliminary version in \emph{{STOC~'17}}}
}

@article{KO22InvitationPCSP,
  author     = {Krokhin, Andrei and Opr\v{s}al, Jakub},
  title      = {An invitation to the promise constraint satisfaction problem},
  year       = {2022},
  issue_date = {July 2022},
  volume     = {9},
  number     = {3},
  journal    = {ACM SIGLOG News},
  month      = aug,
  pages      = {30--59},
  numpages   = {30}
}

@article{KOWZ23Topology,
  author  = {Krokhin, Andrei and Opr\v{s}al, Jakub and Wrochna, Marcin and {{\v{Z}}}ivn{{\'{y}}}, Stanislav},
  title   = {Topology and Adjunction in Promise Constraint Satisfaction},
  journal = {SIAM Journal on Computing},
  volume  = {52},
  number  = {1},
  pages   = {38--79},
  year    = {2023}
}

@article{KPGW10Chromatic,
  author  = {Graeme Kemkes and Xavier P{\'{e}}rez-Gim{\'{e}}nez and Nicholas Wormald},
  title   = {On the chromatic number of random $d$\nobreakdash-regular graphs},
  journal = {Advances in Mathematics},
  year    = {2010},
  month   = jan,
  volume  = {223},
  number  = {1},
  pages   = {300\nobreakdash--328}
}

@inproceedings{KPX24UltraSparse,
  author    = {Kothari, Pravesh K. and Potechin, Aaron and Xu, Jeff},
  title     = {Sum-of-Squares Lower Bounds for Independent Set on Ultra-Sparse Random Graphs},
  year      = {2024},
  booktitle = {Proceedings of the 56th Annual ACM Symposium on Theory of Computing (STOC~'24)},
  pages     = {192--1934},
  numpages  = {12}
}

@inproceedings{KTY24BetterColoring,
  author    = {Kawarabayashi, Ken-ichi and Thorup, Mikkel and Yoneda, Hirotaka},
  title     = {Better Coloring of $3$-Colorable Graphs},
  year      = {2024},
  booktitle = {Proceedings of the 56th Annual ACM Symposium on Theory of Computing (STOC~'24)},
  pages     = {331--339},
  numpages  = {9}
}

@article{Lazebnik96Systems,
  author  = {Lazebnik, Felix},
  journal = {Mathematics Magazine},
  number  = {4},
  pages   = {261--266},
  title   = {On Systems of Linear {D}iophantine Equations},
  volume  = {69},
  year    = {1996}
}

@inproceedings{LP24Limitations,
  author    = {Lichter, Moritz and Pago, Benedikt},
  title     = {Limitations of Affine Integer Relaxations for Solving Constraint Satisfaction Problems},
  booktitle = {Proceedings of the 52nd International Colloquium on Automata, Languages, and Programming (ICALP~'25)},
  pages     = {166:1--166:17},
  series    = {Leibniz International Proceedings in Informatics (LIPIcs)},
  year      = {2025},
  volume    = {334}
}

@article{MM08Existence,
  author  = {Mar\'oti, Mikl\'os and McKenzie, Ralph},
  title   = {Existence theorems for weakly symmetric operations},
  journal = {Algebra Universalis},
  volume  = {59},
  year    = {2008},
  number  = {3-4},
  pages   = {463--489}
}

@article{MN24Generalized,
  author     = {Mik\v{s}a, Mladen and Nordstr\"{o}m, Jakob},
  title      = {A Generalized Method for Proving Polynomial Calculus Degree Lower Bounds},
  year       = {2024},
  issue_date = {December 2024},
  publisher  = {Association for Computing Machinery},
  address    = {New York, NY, USA},
  volume     = {71},
  number     = {6},
  journal    = {Journal of the ACM},
  month      = nov,
  articleno  = {37},
  numpages   = {43},
  note       = {Preliminary version in \emph{{CCC}~'15}} 
}

@inproceedings{OConghaile22Cohomology,
  author    = {{\'{O}} Conghaile, Adam},
  title     = {Cohomology in Constraint Satisfaction and Structure Isomorphism},
  booktitle = {Proceedings of the 47th International Symposium on Mathematical Foundations of Computer Science ({MFCS~'22})},
  pages     = {75:1--75:16},
  series    = {Leibniz International Proceedings in Informatics (LIPIcs)},
  year      = {2022},
  volume    = {241}
}

@inproceedings{Pang21Exact,
  author    = {Pang, Shuo},
  title     = {{SOS} Lower Bound for Exact Planted Clique},
  booktitle = {Proceedings of the 36th Annual Computational Complexity Conference (CCC~'21)},
  pages     = {26:1--26:63},
  series    = {Leibniz International Proceedings in Informatics (LIPIcs)},
  year      = {2021},
  volume    = {200}
}

@inproceedings{PX25SoSColoring,
  author    = {Potechin, Aaron and Xu, Jeff},
  title     = {Sum-of-Squares Lower Bounds for Coloring Random Graphs},
  year      = {2025},
  booktitle = {Proceedings of the 57th Annual ACM Symposium on Theory of Computing (STOC~'25)},
  pages     = {84--95},
  numpages  = {12}
}

@inproceedings{Schaefer78Dichotomy,
  author    = {Schaefer, Thomas J.},
  title     = {The complexity of satisfiability problems},
  year      = {1978},
  booktitle = {Proceedings of the 10th Annual ACM Symposium on Theory of Computing (STOC~'78)},
  pages     = {216--226},
  numpages  = {11}
}

@article{Zhuk20Dichotomy,
  author     = {Zhuk, Dmitriy},
  title      = {A Proof of the {CSP} Dichotomy Conjecture},
  year       = {2020},
  journal    = {Journal of the ACM},
  issue_date = {October 2020},
  volume     = {67},
  number     = {5},
  month      = aug,
  articleno  = {30},
  numpages   = {78},
  note       = {Preliminary version in \emph{{FOCS~'17}}}
}

@article{CZ25CrystalHollow,
author = {Ciardo, Lorenzo and {{\v{Z}}}ivn{{\'{y}}}, Stanislav},
title = {Approximate Graph Coloring and the Crystal with a Hollow Shadow},
journal = {SIAM Journal on Computing},
volume = {54},
number = {4},
pages = {1138--1192},
year = {2025},
note = {Combined full version of two papers published in \emph{STOC~'23} and \emph{SODA~'23}}
}

@article{CZ25Semidefinite,
author = {Ciardo, Lorenzo and {{\v{Z}}}ivn{{\'{y}}}, Stanislav},
title = {Semidefinite Programming and Linear Equations vs. Homomorphism Problems},
journal = {SIAM Journal on Computing},
volume = {54},
number = {3},
pages = {545--584},
year = {2025},
note = {Preliminary version in \emph{STOC~'24}}
}

@article {ARS02Shattering,
    AUTHOR = {Anstee, R. P. and R\'onyai, Lajos and Sali, Attila},
     TITLE = {Shattering news},
  JOURNAL = {Graphs and Combinatorics},
    VOLUME = {18},
      YEAR = {2002},
    NUMBER = {1},
     PAGES = {59--73}
}

@InProceedings{ODonnell17SOSNotAutomatizable,
  author =	{O'Donnell, Ryan},
  title =	{{SOS} Is Not Obviously Automatizable, Even Approximately},
  booktitle =	{Proceedings of the 8th Innovations in Theoretical Computer Science Conference (ITCS~'17)},
  pages =	{59:1--59:10},
  series =	{Leibniz International Proceedings in Informatics (LIPIcs)},
  year =	{2017},
  volume =	{67}
}
}
\end{document}